\newcommand{\bra}[1]{\left\langle{#1}\right\vert}
\newcommand{\ket}[1]{\left\vert{#1}\right\rangle}
\newcommand{\qw}[1][-1]{\ar @{-} [0,#1]}
\newcommand{\qwx}[1][-1]{\ar @{-} [#1,0]}
\newcommand{\cw}[1][-1]{\ar @{=} [0,#1]}
\newcommand{\cwx}[1][-1]{\ar @{=} [#1,0]}
\newcommand{\gate}[1]{*{\xy *+<.6em>{#1};p\save+LU;+RU **\dir{-}\restore\save+RU;+RD **\dir{-}\restore\save+RD;+LD **\dir{-}\restore\POS+LD;+LU **\dir{-}\endxy} \qw}
\newcommand{\meter}{\gate{\xy *!<0em,1.1em>h\cir<1.1em>{ur_dr},!U-<0em,.4em>;p+<.5em,.9em> **h\dir{-} \POS <-.6em,.4em> *{},<.6em,-.4em> *{} \endxy}}
\newcommand{\measureD}[1]{*{\xy*+=+<.5em>{\vphantom{\rule{0em}{.1em}#1}}*\cir{r_l};p\save*!R{#1} \restore\save+UC;+UC-<.5em,0em>*!R{\hphantom{#1}}+L **\dir{-} \restore\save+DC;+DC-<.5em,0em>*!R{\hphantom{#1}}+L **\dir{-} \restore\POS+UC-<.5em,0em>*!R{\hphantom{#1}}+L;+DC-<.5em,0em>*!R{\hphantom{#1}}+L **\dir{-} \endxy} \qw}
\newcommand{\control}{*!<0em,.025em>-=-{\bullet}}
\newcommand{\controlo}{*-<.21em,.21em>{\xy *=<.59em>!<0em,-.02em>[o][F]{}\POS!C\endxy}}
\newcommand{\ctrl}[1]{\control \qwx[#1] \qw}
\newcommand{\lstick}[1]{*!R!<.5em,0em>=<0em>{#1}}
\newcommand{\Qcircuit}[1][0em]{\xymatrix @*[o] @*=<#1>}
\newcommand{\proj}[2]{\ket{#1}\!\!\bra{#2}}
\newcommand{\braket}[2]{\left \langle #1 | #2 \right \rangle}
\newcommand{\tr}[2][ ]{\text{Tr}_{#1}\!\left( #2 \right)}
\newtheorem{definition}{Definition}[chapter]
\newtheorem{theorem}{Theorem}[chapter]
\newtheorem{lemma}{Lemma}[chapter]
\newtheorem{corollary}{Corollary}[chapter]
\newtheorem{assumption}{Assumption}[chapter]
\newtheorem*{futurework}{Future work}
\newtheorem{conjecture}{Conjecture}[chapter]
	\renewcommand{\cite}[1]{[??]}
\begin{document}

\ifthenelse{\boolean{shortdraft}}{\tableofcontents
}{
\pagestyle{empty}
\pagenumbering{roman}

\begin{titlepage}
        \begin{center}
        \vspace*{1.0cm}

        \Huge
        {\bf Quantum Information Processing with Adversarial Devices}

        \vspace*{1.0cm}

        \normalsize
        by \\

        \vspace*{1.0cm}

        \Large
       Matthew McKague \\

        \vspace*{3.0cm}

        \normalsize
        A thesis \\
        presented to the University of Waterloo \\ 
        in fulfillment of the \\
        thesis requirement for the degree of \\
        Doctor of Philosophy \\
        in \\
        Combinatorics \& Optimization\\

        \vspace*{2.0cm}

        Waterloo, Ontario, Canada, 2010 \\

        \vspace*{1.0cm}

        \copyright\ Matthew McKague 2010 \\
        \end{center}
\end{titlepage}

\pagestyle{plain}
\setcounter{page}{2}


  \noindent
  I hereby declare that I am the sole author of this thesis.  This is a true copy of the thesis, including any required final revisions, as accepted by my examiners.

  \bigskip
  
  \noindent
  I understand that my thesis may be made electronically available to the public.
  
  \newpage


\begin{center}\textbf{Abstract}\end{center}

We consider several applications in black-box quantum computation in which untrusted physical quantum devices are connected together to produce an experiment.  By examining the outcome statistics of such an experiment, and comparing them against the desired experiment, we may hope to certify that the physical experiment is implementing the desired experiment.  This is useful in order to verify that a calculation has been performed correctly, that measurement outcomes are secure, or that the devices are producing the desired state.

First, we introduce constructions for a family of simulations, which duplicate the outcome statistics of an experiment but are not exactly the same as the desired experiment.  This places limitations on how strict we may be with the requirements we place on the physical devices.  We identify many simulations, and consider their implications for quantum foundations as well as security related applications. 

The most general application of black-box quantum computing is self-testing circuits, in which a generic physical circuit may be tested against a given circuit.  Earlier results were restricted to circuits described on a real Hilbert space.  We give new proofs for earlier results and begin work extending them to circuits on a complex Hilbert space with a test that verifies complex measurements.

For security applications of black-box quantum computing, we consider device independent quantum key distribution (DIQKD).  We may consider DIQKD as an extension of QKD (quantum key distribution) in which the model of the physical measurement devices is replaced with an adversarial model.  This introduces many technical problems, such as unbounded dimension, but promises increased security since the many complexities hidden by traditional models are implicitly considered.  We extend earlier work by proving security with fewer assumptions.

Finally, we consider the case of black-box state characterization.  Here the emphasis is placed on providing robust results with operationally meaningful measures.  The goal is to certify that a black box device is producing high quality maximally entangled pairs of qubits using only untrusted measurements and a single statistic, the CHSH value, defined using correlations of outcomes from the two parts of the system.  We present several measures of quality and prove bounds for them.

\newpage


\begin{center}\textbf{Acknowledgements}\end{center}
Thanks to my supervisor, Michele Mosca, for guidance throughout my studies.

I am also grateful to my co-authors:  Valerio Scarani, Serge Massar, Timothy Liew, Charles-Edouard Bardyn, and Nicolas Gisin in addition to Michele Mosca.  

During my studies I had interesting discussions about my work with many people, most notably Douglas Stebila, Jamie Sikora, Bill Wootters, Frederic Magniez, and Sandu Popescu.  Thanks to each of them for providing new perspectives, questions, and a patient ear.  Also, thanks to my \LaTeX  guru, Niel de Beaudrap.

During my studies I was financially supported by NSERC, QuantumWorks, Ontario Centres of Excellence, MITACS, CIFAR, ORF, the Government of Canada, and Ontario-MRI.

Finally, thanks to Douglas Stebila, Lana Sheridan, Paul Dickinson, Zach Olesh and David Clark for their friendship during my graduate studies.

\newpage


\begin{center}\textbf{Dedication}\end{center}
To my beautiful wife, for her never-failing support during my studies.

\newpage


\tableofcontents
\newpage


\addcontentsline{toc}{section}{List of Figures}
\listoffigures
\newpage


\pagenumbering{arabic}

}

\chapter{Introduction}
\minitoc
\section{States and statistics}
The main thrust behind the development of quantum formalism, and its main usefulness, arises from its ability to predict the outcomes of experiments.  Indeed, the kinds of predictions that quantum formalism makes has revolutionized physics.  In this thesis, however, we take a much different approach.  Instead of using quantum states and operations to determine the outcomes of an experiment, we will use the outcomes of an experiment (or rather the distribution of outcomes) to determine the quantum state and operations.

Why would we want to do this?  As quantum formalism moves from the role of a theoretical tool in describing the functioning of the universe to an integral part of technological developments it becomes more important that physical devices are operating as we believe them to be.  This is no more apparent than in the case of quantum key distribution in which we posit the existence of an adversary who actively subverts the functioning of physical devices and modifies quantum states.  In experiments intended for academic applications we may rely on academic integrity and repeatability to establish that results are correct, but in the world of security we must be sure that each time we use a device it actually behaves as we believe it to behave, without the benefit of time, repetition, and expert opinion.  Indeed, we may have little to go on besides the assurances of the device manufacturer.

Into this context we introduce the concepts of self-testing, device independent quantum key distribution and black box state characterization.  The aim of all of these techniques is to replace assumptions about a physical devices with a test.  The test will rely solely on the classical data available about the devices:  how are the devices connected to each other?  What are the measurement setting?  What are the outcomes?  When we look at the probability distributions for these data we may be able to certify that the devices are behaving properly, or raise a red flag if the functioning of the devices stray from the ideal.

\section{Terminology}
\subsection{\index{black box model}Black box model}

\subsubsection{Devices}
In this thesis we have a particular model in mind for black box computing.  In particular, each black box device is a physical device with some combination of quantum and classical inputs and outputs.  Devices do not communicate unless we allow them to by physically connecting them together.  Each device is labeled with its intended function, but this may bear no resemblance to what the device actually does.

We will concern ourselves with three main types of black box devices
 \begin{itemize}
	\item \emph{Sources}:  The only sources we will use are bipartite state sources with two quantum outputs.  They are labeled with the state they produce (always $\frac{1}{\sqrt{2}}(\ket{00} + \ket{11})$ in this thesis.)
	\item \emph{Gates}:  Gates will have an equal number of quantum inputs and outputs.  Gates will be labeled with a matrix corresponding to the unitary they are supposed to apply.
	\item \emph{Measurement devices}:  We will be concerned with single system measurements in a small number of bases.  They have one quantum input and one classical output (one bit in all the cases we consider).  We may consider the different measurement bases to be implemented in different physical devices, or there may be a classical input which specifies the basis in which to measure.  Measurement devices are labeled with the basis (or bases) in which they measure.  We will model measurements as Hermitian observables.  We may easily translate the more general POVM formalism into such a description using Naimark's theorem \cite{Naimark:1940:Spectral-functi}.  Since the dimension of the Hilbert space is not fixed this poses no problems.
\end{itemize}

\subsubsection{Device interaction}
We need the devices to be able to interact in order to form circuits, but not in an unlimited fashion.  If unlimited communication were allowed, then the various devices could operate as a single device with a classical conspiracy.  An ideal interaction, from the verifier's point of view, would allow only one-way communication.  This would greatly simplify models.  Two way communication could lead to a situation equivalent unlimited communication depending on which devices were connected simultaneously.  For example, if all devices were simultaneously connected and the graph of their connections were connected then any two devices could communicate through intermediate devices.

One way to enforce one-way communication would be to use a number of identity gate devices and pairwise communication.  If we wish for device $D_{1}$ to send a state to $D_{2}$ we first interact $D_{1}$ with an identity device $I$ and the interact $I$ with $D_{2}$.  This guarantees that no communication occurs from $D_{2}$ back to $D_{1}$.

\subsubsection{State preparation}
With black box devices we have no control over what state the device has at its disposal.  We might interact a device with a source, but the device may disregard the source and use a state that it already had in memory.  In this way, gate and measurement devices may share an arbitrary amount of entanglement.  However, this does not give the adversary any more power since we can replace a state in memory with one prepared in the source.  For this reason, we will generally speak of the state as being prepared by the source device without loss of generality.

\subsection{Reference experiments and simulations}

\subsubsection{Experiments}
The usual way of arranging quantum gates is into a quantum circuit, which consists of a state preparation followed by several unitary gates and finally a measurement.  For our purposes this will not be sufficient.  In particular we will need to measure in several different bases chosen classically by an agent outside the circuit.  We will refer to this type of quantum apparatus as an \emph{\index{experiment}experiment}.  An important distinction to be made is between the experiment we wish to implement, and the experiment that is actually implemented by the quantum devices.  The first we call the \emph{\index{reference experiment}reference experiment}, and the second the \emph{\index{physical experiment}physical experiment}.  

Each experiment will have several classical inputs which determine measurements to be made, which we call the \emph{\index{measurement settings}measurement settings}.  The \emph{\index{statistics generated by experiment}statistics generated by an experiment} is the probability distribution of the measurement outcomes, conditioned on the measurement setting.  Another important aspect of an experiment is the \emph{\index{topology}topology}.  This is the division of the experiment into distinct elements, each implemented by a single device, and the connections between these elements.  A typical topology consists of a bipartite state preparation device connected to two measurement devices.  Importantly, we assume that no signalling occurs between devices that are not explicitly connected in the topology, and if there is some physical channel then this cannot be used to communicate backwards.
\begin{assumption}
No communication occurs between quantum devices that are not explicitly connected.  Further, quantum communication from the output of one device to the input of another device is one-way.
\end{assumption}

\subsubsection{Simulations}
The verifier has two pieces of information about an experiment.  First, the topology is known, since the verifier is responsible for it.  Second, the verifier may estimate the statistics generated by the experiment, provided the experiment behaves the same each time it is used.
\begin{assumption}
The quantum devices behave the same each time they are used.\footnote{We will consider less restrictive behaviour in Chapter~\ref{chapter:diqkd}.}
\end{assumption}
The verifier has a particular reference experiment in mind and wishes to implement it.  The verifier then builds a physical experiment with the same topology as the reference experiment using quantum devices and estimates the statistics it generates.  Then the verifier may check if the statistics match those generated by the reference experiment.
\begin{definition}
If a physical experiment generates the same statistics as a reference experiment with the same topology, then the physical experiment \emph{\index{simulates}simulates} the reference experiment.
\end{definition}

We will wish to compare the physical and reference systems in various ways.  Our goal will always be to show that they are the same in some way or another (or determine they are not and abort.)  The raw data that we obtain from any interaction with the physical system will be in the form of classical outcomes from measurements.  These classical outcomes are obtained from probability distributions that are determined by the physical system.

Another important concept that we will sometimes refer to is that of a \index{conspiracy}conspiracy.  A conspiracy is any behaviour of the black box devices that attempts to defeat a test without actually implementing the reference circuit.  One example is for a circuit to classically calculate statistics for a circuit, rather than implementing the circuit.  This classical conspiracy immediately implies that we must use non-locality in our testing, since any circuit that is implemented locally has a classical conspiracy.

\section{Summary of new results}

\subsubsection{Simulations}
In chapter 2 we consider simulations of circuits, which are other circuits which produce the same outcome statistics as a reference circuit.  The first main result of this section will be to develop a means of transforming a reference circuit into a simulation circuit which is described using only real numbers.  Although such a simulation was known in limited contexts, the simulation we develop works for multi-partite systems general operations such as completely positive maps, POVMs (positive operator valued measure), and Hamiltonians.  The second main result is to describe a family of simulations that generalizes the real simulation.  

The real simulation developed is interesting from the perspective of quantum foundations.  In particular, the simulation proves that there are no experiments which distinguish between quantum physics over real Hilbert spaces and quantum physics over complex Hilbert spaces (which we will from now on refer to as real quantum physics, and complex quantum physics, respectively.)  From the point of view of quantum foundations this eliminates the need to specify which of these two fields to use.  In this spirit we also consider other number systems that might be used, specifically  quaternions.  We show that if we use the quaternions it is possible to implement a non-local box, which allows for stronger non-local effects than can be achieved within complex quantum physics.  In principle this allows for an experiment which can distinguish quaternionic quantum physics from complex quantum physics (although not the reverse, since complex quantum physics is contained within quaternionic quantum physics.)

Finally, in chapter 2 we consider the security of cryptographic protocols the are implemented using simulations from the continuum described above.  We show that for a limited class of protocols (where all operations of the honest parties are measurements) security is not compromised as compared to the reference protocol.

\subsubsection{Self-testing}
Self-testing is the workhorse of black box computing.  It allows us to test generic quantum circuits, and can be adapted to use with cryptographic protocols such as QKD.  In this area our major contribution is new proofs for the major results (state and measurement testing, and gate testing.)  The new proofs are simpler and much easier to understand.  In the case of gate testing, we provide new proofs for two technical lemmas whose original proofs relied on incorrect assumptions.  

Besides new proofs, we also introduce a new test for states and measurements.  The previously known test used only measurements with real entries, whereas our new test introduces extra measurements which have complex entries.  This paves the way for gate testing of arbitrary gates, where the previous test was only applicable to gates with real entries.

\subsubsection{Device independent quantum key distribution}
Device independent quantum key distribution represents a new approach to QKD security proofs by using an adversarial device model for the participants' quantum devices.  The goal is to provide higher security by removing untestable assumptions and replacing them with physical tests.  Previous work in this area identified the main problems, introduced a usable protocol and produced a limited security proof (analogous to security against collective attacks in the usual QKD model.)  Our contribution is to push the security boundary further by proving security against a larger class of attacks.  These attacks allow arbitrary states and relax the assumptions on the devices.  We also introduce new proof techniques to DIQKD, adapting security proofs used for traditional QKD using the quantum de Finetti theorem.

\subsubsection{Black box state characterization}
We finally consider black box state characterization.  The goal is very much similar to that of state and measurement testing we discuss in chapter 3:  to test a state preparation device using only untrusted measurements.  However, the focus is placed on finding a robust result, measuring the quality of the state with an operationally meaningful definition (in this case, using fidelity.)  Our contributions are to find suitable definitions for measures of quality (we introduce several) and to prove bounds using these definitions.  We also discuss the relationship between the various definitions.


\chapter{Simulations}\index{simulation}\label{chap:simulations}
\minitoc

\section{Introduction}
Suppose that a physical experiment simulates a reference experiment.  What can we conclude?  As we shall see, the answer to this question will be quite sensitive to the topology and specific construction of the reference experiment and will vary from nothing to quite strong statements about the structure of the physical experiment.  Before we consider this question, however, we look at the complementary question:  what can we never hope to conclude?

We will have to be quite specific about the topologies that we consider.  For example, take a topology with only one part (i.e. a single-partite state) with a simple reference experiment that implements a search by iterating through a list.  This experiment can be simulated by a Grover search, or a procedure that sorts the list and performs a binary search.  The structures of these experiments are quite different.  We will not be interested in these types of specific simulations.  Instead, we wish to find general procedures for producing a simulation of a given circuit.

\subsubsection{Contributions}
In this chapter we make many original contributions.  First, we extend the well known real circuit simulation (described below) to allow POVMs, mixed states, CP maps, and continuous time evolution, completing the suite of general tools used in quantum formalism.  Next we show that the real simulation may be applied in the case of multipartite evolution through the use of entangled states.

The real simulation is then generalized to a large family of simulations.  The simulations share a common construction for operators and differ in their states.  Roughly, the simulations are formed as a mixture of the reference experiment and its complex conjugate.

Later, in chapter~\ref{chap:selftesting}, we will discuss how we can ensure that a physical experiment is one of the generalized simulations.  We may wish to use the physical experiment in a cryptographic protocol, in which case the security properties of the experiment become important.  We show that, for a restricted set of protocols, the security properties if implemented by a generalized simulation are identical to if implemented by the reference experiment.

Finally, we consider some issues in quantum foundations.  Specifically, we discuss how the real simulation implies that the quantum formalism produces identical predictions whether the real numbers or complex numbers are used as the underlying number system.  As well, we briefly consider the possibility of using the quaternions instead of the real or complex numbers and show that there exists an experiment for which the quantum formalism over the quaternions allows for different predictions than are possible with the usual quantum formalism.

Some of the original material in this chapter was presented in \cite{Matthew-McKague:2007:Simulating-Quan} and published in \cite{McKague:2009:Simulating-Quan}.  Further material is available in \cite{McKague:2009:Quaternionic-qu} and is to appear in \cite{McKague:2010:Generalized-sel}.

\section{Literature review}
The concept of a simulation is widespread in the field of quantum computing, although the notion is usually more general, requiring only that the statistics of the measurements are the same as, or approximately the same as, the reference system.  As an example, Aharonov et al. \cite{Aharonov:2004:Adiabatic-Quant} showed that adiabatic quantum computing is equivalent to the circuit model by describing how to construct an adiabatic simulation for any given quantum circuit.  Another family of examples is the various constructions for universal sets of quantum gates (see Nielsen and Chuang \cite{Michael-A.-Nielsen:2000:Quantum-Computa}) where a new circuit containing only gates from a restricted set is constructed from a given circuit.  In some of these constructions the outcome statistics are only approximately the same as in the given circuit, although the error can be made arbitrarily small.  In all these constructions the form of the system is changed, and only the outcome statistics are the same.  However, despite the difference in the system the final state is approximately the same as in the reference system (with an ancilla, in the case of adiabatic computing).

The simple real circuit simulation, described below in section~\ref{sec:simplerealsimulation}, generalizes work by Rudolph and Grover \cite{Rudolph:2002:A-2-rebit-gate-}.  Their simulation construction, like those for universal gate sets, has a restricted set of gates described using only real numbers, but the final state is in general not the same as in the reference circuit.  The construction is well known outside of Rudolph and Grover's work.  It also appears, for example, in work by Stueckelberg \cite{Stueckelberg:1960:Quantum-theory-}.

The simulation constructions considered so far have not aimed at preserving any of the topology of the reference system.  This makes them unsuitable in situations where the division into multiple physical systems is important, as in the case of Bell inequalities.  However, P\'{a}l and V\'{e}rtesi \cite{Pal:2008:Efficiency-of-h} considered exactly this situation.  In particular, they considered the scenario of two physical systems with local measurements.  Their result, derived independently of the results presented here, provided a simulation construction that duplicated the outcome statistics with measurement observables and states described using only real numbers.  Importantly, the measurements in the simulation are local, so that the division into two physical systems is respected in the simulation.

\section{\index{unitarily equivalent simulations}Unitarily equivalent simulations}
We first explore various simple modifications on the reference system that do not modify the outcome statistics.  We call the results of such modifications \index{unitarily equivalent simulations}\emph{unitarily equivalent simulations}.  The idea we wish to convey with this terminology is that the experiment has been modified in a way that is analogous to a change of basis, or changing the description of the Hilbert space.

One obvious and simple modification of a circuit is to make a change of basis.  We must be careful in order to respect the division of the circuit into multiple systems.  For this reason we consider only local changes of basis.  We apply the change of basis (i.e. a unitary) to the initial state and conjugate all unitaries and measurement observables in the circuit by the change of basis operation.  The outcome statistics are unchanged and the circuit is not the same.

A local change of basis may also be made between operations.  The idea here is that a wire between two unitaries represents a quantum channel that carries the state from one physical device to another.  This quantum channel may be anything so long as it faithfully preserves the state.  In particular, the channel may apply some arbitrary change of basis at the beginning, and then reverse the change of basis at the end.  We may incorporate this change of basis into the two unitaries, as shown in figure~\ref{fig:change_of_basis_between_gates}.  As an extension of this principle, if multiple wires leave one gate and enter another, we may apply a change of basis to all the wires simultaneously.

 \begin{figure}
\[ \Qcircuit {
\ket{\psi} & & \gate{A} & \gate{B} & \measureD{M}
 }
\]
 \[ = \]
  \[\Qcircuit{
  U\ket{\psi} & & \gate{UAU^{\dagger}} & \gate{UBU^{\dagger}} & \measureD{UMU^{\dagger}}
  }
\]
\caption{Local change of basis on a wire}
\label{fig:wire_change_of_basis}
\end{figure}
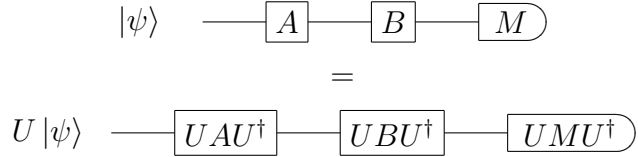

 \begin{figure}
\[ \Qcircuit {
  & \gate{A}& \qw & \qw & \qw & \gate{B} & \qw \\
 }
\]
\[ = \]
\[ \Qcircuit {
  & \gate{A}& \gate{U} & \qw & \gate{U^{\dagger}} & \gate{B} & \qw \\
 }
\]
\[ = \]
\[ \Qcircuit {
  & \gate{UA}& \gate{BU^{\dagger}} & \qw \\
 }
\]
\caption{Local change of basis between gates}
\label{fig:change_of_basis_between_gates}
\end{figure}
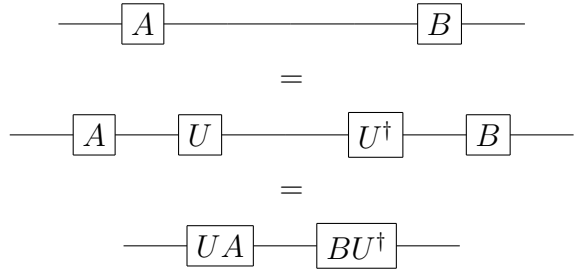

An experiment may be described on a particular Hilbert space, but the entire space may not be necessary to describe the state (i.e. if the support of the state is not the entire Hilbert space.)  In this case the operation of the gates outside the support of the state may be changed without modifying the functioning of the experiment.  In addition, the Hilbert space may be changed, either embedded in a larger Hilbert space or reduced to only a subspace without modifying the functionality of the experiment.  The idea here is that our model of the system need not consider dimensions that the state does not visit, or our model may be changed arbitrarily on those dimensions without changing how the experiment operates.  

Finally, we consider ancillas.  The presence of ancillas does not change the outcome statistics provided that the unitaries operate independently on the ancillas and the measurements do not operate on the ancillas.  The state of the ancilla does not matter, so we may consider ancillas added to each system in the experiment, prepared in any state.

Of course, any combination of the modifications would not change the outcomes, and so would produce a simulation.  After many such modifications the simulation may not look very simple, but the underlying operation of the experiment is essentially untouched.  Rather, the way that we are describing it has changed since the choice of basis, division into systems and ancillas, etc. are all a product of our description of the experiment rather than inherent in the experiment itself.

One final modification of a experiment is to take the complex conjugate of every state and operation in the circuit.  Obviously this does not change the outcome statistics, and the experiment will not be the same unless every element in the circuit has only real entries (in the same basis in which we apply the complex conjugation.)  However, there is no unitary operation that implements complex conjugation, so we will not consider it as unitarily equivalent.  We will consider it more fully later in section~\ref{sec:generalsimulations}.

\section{\index{real simulation}Real simulation}\label{sec:real_simulation}

\subsection{Simple real circuit simulation}\label{sec:simplerealsimulation}
We first show a way of arriving at a real simulation of a given reference circuit.\footnote{This simulation has been present in the folklore for some time.}  This is based on the observation that the complex numbers are a two dimensional real vector space.  We can thus, at least formally, represent an arbitrary quantum state as a real vector as follows
 \[
 \sum_{x} \left(a^{R}_{x} + i a^{I}_{x}\right) \ket{x} \mapsto \sum_{x} \left(a^{R}_{x} \ket{0} + a^{I}_{x} \ket{1}\right) \ket{x}.
\]
($a^{R}_{x}$ and $a^{I}_{x}$ are real numbers representing the real and imaginary part of the complex number $a_{x}$.)  Here we have simply replaced the two dimensional vector space spanned by $1$ and $i$ by a qubit.  The new state has the correct norm as a consequence of the formula
 \[
 |a_{x}|^{2} = \left(a^{R}_{x}\right)^{2} + \left(a^{I}_{x}\right)^{2}.
\]
Summing over $x$ on the left side gives us the norm of the reference state, while summing over $x$ on the right side gives us the norm of the simulation state.  This also shows that we can measure the $\ket{x}$ register on the simulation state in the computational basis and obtain outcome $x$ with the same probability as in the reference state.

So far we have shown how to simulate states and measurements in the computational basis.  Now we show how to simulate unitaries.  This is more complicated since we need to simulate, not only the norm of complex numbers, but the multiplication of complex numbers. Intuitively we can make a unitary work by making sure that the multiplication involved with each of the entries in the unitary matrix works.

Suppose that we have two complex numbers $a$, and $b$ that we want to multiply together.  We represent $a$ by the vector
$
\left(
	\begin{matrix}
	a^{R} \\ a^{I}
	\end{matrix}
\right)$.  
We want to multiply this vector by a suitable matrix derived from $b$ so that the result is the vector
 \[
 \left(
	\begin{matrix}
	(ab)^{R} \\ (ab)^{I}
	\end{matrix}
\right).
\]
To get the real part we need $a^{R}b^{R} - a^{I} b^{I}$, so the first row of the matrix should be $(b^{R}, -b^{I})$.  For the imaginary part we want $a^{R}b^{I} + a^{I} b^{R}$, so the second row should be $(b^{I}, b^{R})$.  Thus the matrix we want is
 \[
 \left(
	\begin{matrix}
	b^{R} & -b^{I} \\
	b^{I} & b^{R} 
	\end{matrix}
\right).
\]
To transform a matrix of complex numbers, we will replace each complex number by a $2 \times 2$ submatrix corresponding to the complex number as above.  It is easily shown that this transformation takes unitary matrices to unitary matrices and, as intuition suggests, the resulting unitary maps the simulation state to correctly track the evolution of the reference state.

\subsection{Complex numbers as matrices}

The intuition in the previous section can be made much more rigorous by using a simple field isomorphism.

\begin{lemma}
Define $R : \mathbb{C} \rightarrow M_{2}(\mathbb{R})$ by
 \[
R(a) = a^{R} I + a^{I} XZ
\]
where 
 \[
 XZ = 
 \left(
	\begin{matrix}
	0 & -1 \\
	1 & 0 \\
	\end{matrix}
\right)
\]
is the product of the Pauli matrices $X$ and $Z$.  Then $R$ is an isomorphism between $\mathbb{C}$ and its image under $R$.
\end{lemma}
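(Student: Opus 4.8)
The plan is to verify directly that $R$ is a ring homomorphism and then observe it is injective, so that it restricts to an isomorphism onto its image. First I would record the matrix form $R(a) = \begin{pmatrix} a^R & -a^I \\ a^I & a^R \end{pmatrix}$, obtained by substituting the given matrix for $XZ$; this is exactly the $2\times 2$ block from the previous section, which makes the connection to the intuition explicit. From this explicit form, additivity $R(a+b) = R(a) + R(b)$ is immediate since the real and imaginary parts add componentwise, and $R(1) = I$ is clear.

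The one computation worth doing is multiplicativity. I would compute $R(a)R(b)$ as a product of $2\times 2$ matrices and check that the result equals $R(ab)$, where $(ab)^R = a^R b^R - a^I b^I$ and $(ab)^I = a^R b^I + a^I b^R$. Equivalently, and perhaps more cleanly, I would note $R(a) = a^R I + a^I J$ where $J = XZ$ satisfies $J^2 = -I$, so that the image of $R$ is spanned by $I$ and $J$ with the same multiplication table as $1$ and $i$; multiplicativity then follows from expanding $(a^R I + a^I J)(b^R I + b^I J)$ and using $J^2 = -I$. Either route is a short, routine calculation.

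For injectivity, I would observe that $R(a) = 0$ forces $a^R = a^I = 0$ (read off the diagonal and off-diagonal entries), so $\ker R = \{0\}$; since $R$ is a ring homomorphism of fields, or simply since it is an injective ring homomorphism, it is an isomorphism onto $R(\mathbb{C}) \subseteq M_2(\mathbb{R})$. I would also remark that $R(\mathbb{C})$ is closed under the relevant operations — it is a subring, in fact a subfield isomorphic to $\mathbb{C}$ — which is what justifies calling $R$ an isomorphism onto its image.

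There is no real obstacle here; the statement is essentially a well-known fact about the matrix representation of complex numbers. The only thing requiring minor care is bookkeeping in the multiplication check and being explicit that ``isomorphism between $\mathbb{C}$ and its image'' means a field isomorphism $\mathbb{C} \to R(\mathbb{C})$, i.e. that $R(\mathbb{C})$ inherits a field structure for which $R$ is bijective and respects addition and multiplication. I would state the matrix form as an intermediate display and then dispatch the three homomorphism properties and injectivity in a couple of lines each.
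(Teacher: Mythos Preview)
Your proposal is correct and is exactly the standard verification. The paper itself does not supply a proof of this lemma at all; it simply states the result and moves on, treating the matrix representation of $\mathbb{C}$ as well known, so your write-up would serve as the missing routine check.
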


The map $R$ naturally extends to matrices and vectors, mapping $n \times n$ matrices to $2n \times 2n$ matrices, and $n$ vectors to $2n \times 2$ matrices.  To be more rigorous, $R$ maps $n \times n$ matrices over the complex numbers to $n \times n$ matrices over $2 \times 2$ matrices, which is naturally mapped to $2n \times 2n$ matrices.  
 \[
 R\left(
 \left(
	\begin{matrix}
	a & b \\
	c & d \\
	\end{matrix}
\right)
\right)
= 
\left(
	\begin{matrix}
	\left(
	\begin{matrix}
	a^{R} & -a^{I} \\
	a^{I} & a^{R}
	\end{matrix}
	\right)
	&
	\left(
	\begin{matrix}
	b^{R} & -b^{I} \\
	b^{I} & b^{R}
	\end{matrix}
	\right) \\
	\left(
	\begin{matrix}
	c^{R} & -c^{I} \\
	c^{I} & c^{R}
	\end{matrix}
	\right)
	&
	\left(
	\begin{matrix}
	d^{R} & -d^{I} \\
	d^{I} & d^{R}
	\end{matrix}
	\right)
	\end{matrix}
\right)
\mapsto
\left(
	\begin{matrix}
	a^{R} & -a^{I} & b^{R} & -b^{R} \\
	a^{I} & a^{R} & b^{I} & b^{R} \\
	c^{R} & -c^{I} & d^{R} & -d^{R} \\
	c^{I} & c^{R} & d^{I} & d^{R} \\
	\end{matrix}
\right)
\]

It will frequently be useful to have the matrices output from $R$ to be defined over a tensor product space.  Specifically, if we are dealing with the Hilbert space $\mathcal{H}$ then we consider the output of $R$ to be $\mathcal{H}_{2} \otimes \mathcal{H}$.  Using Dirac notation we can then describe $R$ by

 \[
 R\left(\sum_{x,y} u_{xy} \proj{x}{y}\right) = \sum_{x,y} \left(u_{xy}^{R} I + u_{xy}^{I} XZ\right) \otimes \proj{x}{y}.
\]

We now give some properties of $R$, which are easily verified.

\begin{lemma}
Let operators $M$ and $N$ (which may have 1 column) be given.  Then
\begin{enumerate}
	\item $R(M^{\dagger}) = R(M)^{T} = R(M)^{\dagger}$
	\item $R(A + B) = R(A) + R(B)$
	\item $R(AB) = R(A)R(B)$
	\item If $M$ is normal and has (non necessarily distinct) eigenvalues $\lambda_{k}$ then $R(M)$ has eigenvalues $\lambda_{k}$ and $\lambda^{*}_{k}$.
	\item If $M$ is positive semi-definite, then $R(M)$ is positive semi-definite
	\item If $M$ is unitary, then $R(M)$ is unitary
	\item If $M$ is Hermitian then $R(M)$ is Hermitian and $\tr{R(M)} = 2 \tr{M}$
\end{enumerate}
\end{lemma}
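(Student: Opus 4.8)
The plan is to reduce the whole Lemma to two elementary facts about the scalar map $R$ and then propagate them through the componentwise extension. The two facts are $(XZ)^{2} = -I$ and $(XZ)^{T} = (XZ)^{\dagger} = -XZ$. From the first, a one-line expansion gives $R(a)R(b) = (a^{R}b^{R} - a^{I}b^{I})I + (a^{R}b^{I} + a^{I}b^{R})XZ = R(ab)$; additivity $R(a)+R(b) = R(a+b)$ is immediate; and from the second, $R(a)^{T} = a^{R}I - a^{I}XZ = R(\bar a)$, which equals $R(a)^{\dagger}$ since $R(a)$ has real entries. So on $\mathbb{C}$ the map $R$ is an injective $*$-homomorphism, which is exactly the content of the preceding Lemma.

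Next I would lift to matrices via the Dirac form $R\!\left(\sum_{xy} u_{xy}\proj{x}{y}\right) = \sum_{xy} R(u_{xy})\otimes\proj{x}{y}$ on $\mathcal{H}_{2}\otimes\mathcal{H}$, letting a single column index handle the ``may have 1 column'' case. Part 2 is then additivity entry by entry. For part 3, multiplying the two sums and using $\proj{x}{z}\proj{z'}{y} = \delta_{zz'}\proj{x}{y}$ collapses $R(A)R(B)$ to $\sum_{xy}\big(\sum_{z} R(A_{xz})R(B_{zy})\big)\otimes\proj{x}{y}$; applying scalar multiplicativity and additivity inside the bracket rewrites $\sum_z R(A_{xz})R(B_{zy})$ as $R\big(\sum_z A_{xz}B_{zy}\big) = R\big((AB)_{xy}\big)$, so $R(A)R(B) = R(AB)$. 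For part 1, the $(x,y)$ block of $R(M^{\dagger})$ is $R(\overline{M_{yx}}) = R(M_{yx})^{T}$, which is precisely the $(x,y)$ block of $R(M)^{T}$; and $R(M)^{T} = R(M)^{\dagger}$ because $R(M)$ is real.

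Parts 4--7 then follow formally. For part 4, normality lets us write $M = UDU^{\dagger}$ with $U$ unitary and $D = \mathrm{diag}(\lambda_k)$; by parts 1 and 3, $R(U)R(U)^{\dagger} = R(UU^{\dagger}) = R(I) = I$, so $R(U)$ is unitary and $R(M) = R(U)R(D)R(U)^{\dagger}$ is similar to the block-diagonal $R(D)$, whose $k$-th block $\left(\begin{smallmatrix}\lambda_k^{R} & -\lambda_k^{I}\\ \lambda_k^{I} & \lambda_k^{R}\end{smallmatrix}\right)$ has eigenvalues $\lambda_k^{R} \pm i\lambda_k^{I} = \lambda_k, \lambda_k^{*}$. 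Part 6 is the same computation with $M$ in place of $U$. Part 5: a positive semidefinite $M$ is normal with $\lambda_k \geq 0$, so by part 4 $R(M)$ has nonnegative spectrum, and $M = M^{\dagger}$ gives $R(M) = R(M)^{\dagger}$ by part 1, hence $R(M) \succeq 0$ (alternatively, $M = N^{\dagger}N$ yields $R(M) = R(N)^{\dagger}R(N)$ directly). Part 7: $M = M^{\dagger}$ gives $R(M) = R(M^{\dagger}) = R(M)^{\dagger}$, and since $\mathrm{Tr}\,R(a) = a^{R}\,\mathrm{Tr}\,I + a^{I}\,\mathrm{Tr}\,(XZ) = 2a^{R}$, summing over the diagonal blocks gives $\tr{R(M)} = 2\sum_x \mathrm{Re}(M_{xx}) = 2\,\tr{M}$, the last step because $\tr{M}$ is real for Hermitian $M$.

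I do not expect a genuine obstacle: the whole Lemma is bookkeeping resting on ``$R$ is a $*$-homomorphism'' plus ``conjugation doubles the spectrum.'' The one place to be careful is the index and tensor bookkeeping in parts 1--3 — tracking which index gets transposed and verifying that the $2\times 2$ blocks compose exactly as the scalar relations predict — together with noting in part 4 that it is normality, not mere diagonalizability, that licenses the unitary conjugation used to read off the eigenvalues.
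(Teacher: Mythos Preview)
Your proposal is correct and follows essentially the same route as the paper: items 1--3 are verified directly from the scalar $*$-homomorphism properties (the paper simply declares them ``easily verified''), item 6 comes from $R(U)R(U)^{\dagger}=R(I)=I$, item 4 is obtained by unitarily diagonalizing $M$ and computing the eigenvalues of each $2\times 2$ block $R(\lambda_k)$, and items 5 and 7 are read off from items 1 and 4. Your write-up is somewhat more explicit than the paper's, but the logical structure is identical.
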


\begin{proof}
Items 1 to 3 are easily verified.  We prove item 6 first.  If $M$ is unitary then $M M^{\dagger} = I$.  Applying item 3 we find 
\begin{equation}
R(M) R(M)^{\dagger} = R(I) = I \otimes I
\end{equation}
so $R(M)$ is also unitary.

We now prove item 4.  The remaining items follow immediately.  Diagonalize $M$ as $UMU^{\dagger} = D$.  Then 
\begin{equation}
R(U) R(M) R(U)^{\dagger} = R(D)
\end{equation}
Since $R(U)$ is unitary the eigenvalues of $R(M)$ are the same as the eigenvalues of $R(D)$.  $R(D)$ is $2 \times 2$-block diagonal, with the $k$th block equal to $R(\lambda_{k})$.  We may calculate the eigenvalues of $R(\lambda_{k})$ directly, using the characteristic equation
\begin{equation}
(\lambda_{k}^{R} - \lambda)^{2} + (\lambda_{k}^{I})^{2} = 0
\end{equation}
which has roots $\lambda = \lambda_{k}, \lambda_{k}^{*}$.  Hence these are the eigenvalue of $R(M)$.
\end{proof}

\subsection{Real simulation states and measurements}

The map $R$ lets us, at least formally, describe everything quantum using matrices instead of complex numbers.  We now show that we can modify $R$ slightly so that it maps quantum states to quantum states, quantum evolution (both discrete and continuous) to valid quantum evolution on the simulation states, and quantum measurements to valid measurements that output the correct statistics from the simulation states.

\subsubsection{Density Matrices}
Let $\rho$ be a density matrix.  Then $\rho^{\prime} = R(\rho)$ is positive.  We now come across a small problem, since the trace of $\rho^{\prime}$ is 2 instead of 1.  We can easily deal with this by multiplying $\rho^{\prime}$ by $1/2$.  Below we will show that this is necessary in order to obtain the proper outcome statistics for measurements.

\subsubsection{POVMs}
Let $\{P_{k}\}$ be a collection of positive matrices with $\sum_{k} P_{k} = I_{\mathcal{H}}$.  Then the matrices $\{ P^{\prime}_{k} = R(P_{k})\}$ are all positive and
 \[
\sum_{k} P^{\prime}_{k} = \sum_{k}R(P_{k}) =  R(I_{\mathcal{H}}) = I_{\mathcal{H}_{2}} \otimes I_{\mathcal{H}}.
\]
Thus $\{P^{\prime}_{k}\}$ is a valid POVM.

\subsubsection{Measurement statistics}
We will verify the measurement statistics for POVMs, since any measurement can be expressed as such.  Let $\rho$ be a density matrix and  $\{P_{k}\}$ be a POVM.  As previously discussed, our simulation state will be given by the density matrix $\rho^{\prime} = R(\rho) /2$ and the simulation POVM will be given by $\{ P^{\prime}_{k} = R(P_{k})\}$.  In the reference system the probability of outcome $k$ is given by
 \[
\tr{\rho P_{k}} = p_{k}
\]
Applying $R$, we obtain
 \[
p_{k} = \tr{\rho P_{k}} = \frac{1}{2}\tr{ R(\rho) P^{\prime}_{k}} =  \tr{ \rho^{\prime}P^{\prime}_{k}}.
\]

\subsubsection{Pure states}
We began our discussion of real simulations by giving a transformation that takes pure states to pure simulation states.  However, our transformation $R$ does not have this property, since $R$ takes vectors to matrices with two columns.  Nevertheless we can define a pure simulation state for each pure reference state.

Let $\ket{\psi}$ be a pure reference state.  We then have
$\braket{\psi}{\psi} = 1$.  Transforming by $R$ we obtain
 \[
 R(\ket{\psi})^{T} R(\ket{\psi}) = I_{\mathcal{H}_{2}}
\]
Thus the two columns of $R(\ket{\psi})$, which we will denote $u$ and $v$, are orthogonal and norm 1 and each represents a valid pure state.  In fact, these two vectors span the 1 eigenspace of $R(\proj{\psi}{\psi})$.  Now consider a reference POVM element $P_{k}$.  The probability of outcome $k$ is $\bra{\psi}P_{k}\ket{\psi} = p_{k}$.  Transforming by $R$ we obtain
 \[
\tr[\mathcal{H}]{R(\ket{\psi})^{T}R(P_{k})R(\ket{\psi})} = p_{k}I_{\mathcal{H}_{2}}
\]
This tells us that 
 \[
 u^{\dagger}R(P_{k})u = p_{k}, v^{\dagger}R(P_{k}) v = p_{k}
\]
so the two pure states $u$ and $v$ give the same probability for outcome $k$ as the reference experiment.  Now if we want our simulation to use a pure state instead of the rank 2 density matrix, we can choose either $u$, $v$, or in fact any normalized linear combination of the two and still obtain the correct outcome statistics.

This curious splitting of a pure state into two pure simulation states has a nice interpretation.   We return briefly to our original pure simulation state derived from the reference state given by
 \[
 \sum_{x} (a^{R}_{x} + i a^{I}_{x}) \ket{x} \rightarrow \sum_{x} (a^{R}_{x} \ket{0} + a^{I}_{x} \ket{1}) \ket{x}.
\]
We now multiply our reference state by the global phase $i$ and apply the transformation, giving the state
 \[
 \sum_{x} \left(ia^{R}_{x} - i a^{I}_{x}\right) \ket{x} \rightarrow \sum_{x} \left(a^{R}_{x} \ket{1} - a^{I}_{x} \ket{0}\right) \ket{x}.
\]
A quick calculation shows that this state is orthogonal to the previous state.  In fact, we can multiply by any global phase and obtain a linear combination of these two states which are exactly the two columns of $R(\ket{\psi})$.  This means that we can consider the ambiguity in this 2-dimensional subspace to be equivalent to the ambiguity in assigning a global phase.

\subsection{State evolution}\label{sec:real_sim_state_evolution}
We have already shown that $R(\cdot)$ maps unitary matrices to unitary matrices.  We thus turn our attention to completely positive maps and continuous time evolution.

\subsubsection{Completely positive maps}
Let $\Phi(\rho) = \sum_{k} M_{k}\rho M_{k}^{\dagger}$ be a completely positive trace preserving map.  Then
 \[
\sum_{k}M^{\dagger}_{k}M_{k} = I.
\]
Let $\Phi^{\prime}(\rho) = \sum_{k} R(M_{k}) \rho R(M_{k})^{\dagger}$.  Then $\Phi^{\prime}$ is completely positive, by its form.  It is also trace preserving, since
 \[
\sum_{k}R(M_{k})^{\dagger}R(M_{k}) = R\left(\sum_{k}M^{\dagger}_{k}M_{k} \right) =  R(I_{\mathcal{H}}) = I_{\mathcal{H}_{2}} \otimes I_{\mathcal{H}} 
\]
where $I_{\mathcal{H}_{2}}$ is the identity operating on the qubit added by $R$.

\subsubsection{Hamiltonians}
For Hamiltonians, as with density matrices, we must make a small departure.  Recall that a quantum state evolves according to
 \[
 \ket{\psi(t)} = e^{-iHt} \ket{\psi(t_{0})}.
\]
In order to maintain proper normalization, $e^{-iHt}$ must be unitary.  We ensure this by requiring that $H$ be Hermitian.  Thus $-iH$ has imaginary eigenvalues and $e^{-iHt}$ will have eigenvalues which have absolute value $1$.

We now apply $R(-iH) = R(-i) R(H)$ to obtain
 \[
\ket{\psi^{\prime}(t)} = e^{-R(i) R(H)t} \ket{\psi^{\prime}(t_{0})}.
\]
Here we have replaced $R(t) = tI$ with $t$.  When considering everything over the field $R(\mathbb{C})$ the value $R(-i) = XZ$ is a scalar.  Here we can replace this with the matrix $-XZ \otimes I_{\mathcal{H}}$ to obtain
 \[
\ket{\psi^{\prime}(t)} = e^{-XZ \otimes I_{\mathcal{H}} R(H)t} \ket{\psi^{\prime}(t_{0})}.
\]
By the properties of $R$ we see that $-XZ \otimes I_{\mathcal{H}} R(H)$ will have imaginary eigenvalues and thus $e^{-XZ \otimes I_{\mathcal{H}} R(H)t}$ will be unitary.  We can also use the Taylor expansion and see that each term has only real entries.  Thus $e^{-XZ \otimes I_{\mathcal{H}} R(H)t}$ will have only real entries for each $t$.

\subsubsection{Choi-Jamiolkowski representation}
Interestingly, despite the success so far, $R$ does not correctly transform the Choi-Jamiolkowski representation $J(\cdot)$ of a superoperator, which we discuss in section~\ref{sec:selftestingtechnicalbackground}.  The salient feature is that $J(\cdot)$ is rank 1 if the superoperator is unitary.  Since $R(J(U))$ for some unitary $U$ will have double the rank of $J(U)$, $R(J(U))$ does not represent a unitary operation despite the fact that $R(U)$ is unitary.  Thus $R(J(U)) \neq J(R(U))$.  This is analogous to the fact that pure density matrices are mapped by $R$ to rank 2 matrices.

\subsubsection{Other encoded operations}
A curious result of the real simulation is that it allows us to perform encoded operations which are not possible on the reference system.  For example, we can apply $X$ to the extra qubit and effect a complex conjugation!  In this way we can apply any anti-unitary.  We can also perform global phase rotations and measurements by suitably manipulating the extra qubit.  So even though we cannot apply these operations to the reference system, we can simulate what the outcome would be if we could.

\subsection{Real simulations and locality}
The simulation presented in the previous section deals with single systems only.  If we are dealing with a multi-part system, then $R$ will not generally map a local operation to a local operation.  This is because we added a qubit in order to store the phase information and this qubit must be stored somewhere.  Any operation involving complex numbers will be mapped to an operation that acts non-trivially on this extra qubit.

We now present a solution to this problem.  We suppose that the reference system has $k$ subsystems.  The real simulation will consist of $k$ subsystems, each of which corresponds to a reference subsystem with one added qubit.

A simple idea would be to add an extra qubit to each subsystem and perform the real simulation as though each subsystem were isolated.  In this situation, a local operation on a particular subsystem would be mapped using $R$ as in the previous section to an operation on the subsystem combined with its extra qubit.  This idea quickly fails since the state would be free to move about the entire $2^{k}$ dimensional space of the $k$ extra qubits.  The isomorphism with $\mathbb{C}$ no longer works because it only has 2 dimensions as a real vector space.

We can fix the naive solution above by constraining the extra qubits to be in a suitable 2-dimensional subspace.  It turns out that we can do this merely by choosing a suitable initial state for the qubits.  The operators will be as described in the naive solution.

Consider a phase change of $i$ applied to one of the subsystems.  We require that this phase change combined another phase change of $i$ to a different subsystem results in an overall phase change of $-1$.  This means that we need the state to be a $+1$ eigenvector of $-(XZ)_{m} (XZ)_{n}$ where $m$ and $n$ are two different subsystems. Each of these operators can be generated using operators of the form $-(XZ)_{1}(XZ)_{m}$ for different $m$, so there are $k-1$ independent operators.  Thus there is a 2-dimensional subspace stabilized by these operators.  The space is spanned by the vectors
 \[
\ket{\overline{0}} = \frac{1}{\sqrt{2^{k}}}\sum_{h(x) \text{even}} (-1)^{\frac{h(x)}{2}} \ket{x}
\]
 \[
\ket{\overline{1}} = \frac{1}{\sqrt{2^{k}}} \sum_{h(x) \text{odd}} (-1)^{\frac{h(x) - 1}{2}}\ket{x}
\]
where $x$ ranges over all $k$ bit strings.

With a bit of accounting it can be seen that if we apply $XZ$ to any one of the $k$ qubits, the effect is to take $\ket{\overline{0}}$ to $\ket{\overline{1}}$ and $\ket{\overline{1}}$ to $-\ket{\overline{0}}$.  Thus these non-local states, together with the local $XZ$ operations, behave exactly as a qubit with its $XZ$ operation.  We can now create local simulation operators from local reference operations by replacing $(XZ)_{m}$ for $XZ$ in our definitions of $R$.

\section{General simulations}\label{sec:generalsimulations}
The real simulation revealed that there exist non-unitarily equivalent simulations, but is this the only non-unitarily equivalent simulation?  The answer to this is no.  In this section we will develop a large family of simulations.  As we shall see, in section~\ref{sec:extmayersyao}, there exist experiments for which the general simulations described here (along with the unitarily equivalent simulations) are all possible simulations.

Interestingly, if we confine ourself to experiments on a real Hilbert space, then the general simulations all collapse to unitarily equivalent simulations.  This allows the self-testing Theorems to be successful for such experiments, as we shall see in chapter~\ref{chap:selftesting}.

\subsection{States and measurements}

Consider a reference state $\ket{\psi}$ measured according to a reference POVM $\{ P_{k}\}$.\footnote{We may consider mixed states as well, but it is not necessary for our discussion since we may consider the purification of a mixed state.}  We may duplicate the statistics of this experiment using the complex conjugate state $\ket{\psi^{*}}$ and POVM $\{P_{k}^{*}\}$.  In addition, we could do some combination of the two.  We may add an additional qubit register which records which of the two experiments to perform:  $\ket{0}$ for the reference experiment, and $\ket{1}$ for the complex conjugate.  This qubit may be in any state, and not necessarily pure.  We then arrive at new state
\begin{equation}\label{eq:contsimstate}
\rho^{\prime} = a \proj{0}{0} \otimes \proj{\psi}{\psi} + (1-a) \proj{1}{1} \otimes \proj{\psi^{*}}{\psi^{*}} + c \proj{0}{1}\otimes \proj{\psi}{\psi^{*}} + c^{*} \proj{1}{0} \otimes \proj{\psi^{*}}{\psi}
\end{equation}
with $a \geq 0$ and $|c| \leq \sqrt{a(1-a)}$.  The important feature is that when we project onto $\proj{0}{0}$ or $\proj{1}{1}$ we get either $\ket{\psi}$ or $\ket{\psi^{*}}$, respectively.  For the measurement, we form the POVM
 \begin{equation}\label{eq:cont_sim_povm}
\left\{ \proj{0}{0} \otimes P_{k} + \proj{1}{1} \otimes P^{*}_{k}\right\}.
\end{equation}
This POVM measurement is equivalent to measuring the added qubit, collapsing the state into either $\ket{\psi}$ or $\ket{\psi^{*}}$ and then measuring either $\{P_{k}\}$ or $\{P_{k}^{*}\}$ as appropriate; thus the statistics of the experiment are preserved.

\subsection{Operators}
We can extend the measurement operator defined in \ref{eq:cont_sim_povm} to arbitrary operators.  We define
\begin{equation}\label{eq:cont_sim_op}
C(M) =  \proj{0}{0} \otimes M + \proj{1}{1} \otimes M^{*}.
\end{equation}
Note that $C(M)$ can be expressed differently as
\begin{equation}\label{eq:cont_sim_op2}
C(M) = I \otimes Re(M) + iZ \otimes Im(M)
\end{equation}
where $Re(M)$ and $Im(M)$ are the real and imaginary parts of $M$ (both real matrices).

We summarize some of the properties of $C(M)$ here

\begin{lemma}
Let $M$ and $N$ be matrices.  Then we have the following:
\begin{enumerate}
	\item $C(MN) = C(M)C(N)$.
	\item $C(M + N) = C(M) + C(N)$.
	\item Let $a$ be a real number, then $C(aM) = aC(M)$.
	\item If $\ket{\psi}$ is an eigenvector of $M$ (normal) with eigenvalue $\lambda$, then $\ket{0}\ket{\psi}$ and $\ket{1}\ket{\psi^{*}}$ are eigenvectors of $C(M)$ with eigenvalues $\lambda$ and $\lambda^{*}$, respectively.
	\item $C(M)$ is Hermitian if and only if $M$ is.
	\item $C(M)$ is unitary if and only if $M$ is.
	\item $C(M)$ is positive semi-definite if and only if $M$ is.
	\item When $M$ is Hermitian, $\tr{C(M)} = 2 \tr{M}$.
\end{enumerate}
\end{lemma}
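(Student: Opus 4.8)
The plan is to exploit the block-diagonal structure of $C(M)$: in the ancilla basis $\{\ket{0},\ket{1}\}$ it is block diagonal with blocks $M$ and $M^{*}$, and entrywise conjugation $M\mapsto M^{*}$ is a unital ring homomorphism on matrices that is real-linear and compatible with the adjoint, in the sense that $(MN)^{*}=M^{*}N^{*}$, $(M+N)^{*}=M^{*}+N^{*}$, $(aM)^{*}=aM^{*}$ for real $a$, and $(M^{\dagger})^{*}=(M^{*})^{\dagger}$. Items 1, 2, 3 then follow immediately by comparing the two diagonal blocks of each side: the $(0,0)$ block of $C(MN)$ is $MN$, the product of the $(0,0)$ blocks of $C(M)$ and $C(N)$, and the $(1,1)$ block is $(MN)^{*}=M^{*}N^{*}$; sums and real scalar multiples are identical. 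Along the way I record two facts used repeatedly below: $C$ is injective (read off the $(0,0)$ block) and $C(I)=I\otimes I$.

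For item 4, conjugating $M\ket{\psi}=\lambda\ket{\psi}$ entrywise gives $M^{*}\ket{\psi^{*}}=\lambda^{*}\ket{\psi^{*}}$, and since $C(M)$ preserves each ancilla block, $\ket{0}\ket{\psi}$ and $\ket{1}\ket{\psi^{*}}$ are eigenvectors with eigenvalues $\lambda$ and $\lambda^{*}$ (normality is not needed for this implication, but it is what guarantees these vectors span, as $\{\ket{\psi_{k}}\}$ and $\{\ket{\psi_{k}^{*}}\}$ each span $\mathcal{H}$). For item 5 I first establish the adjoint identity $C(M)^{\dagger}=C(M^{\dagger})$: the adjoint of a block-diagonal operator adjoints each block, and $(M^{*})^{\dagger}=(M^{\dagger})^{*}$. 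Then $C(M)$ Hermitian means $C(M^{\dagger})=C(M)$, which by injectivity of $C$ forces $M^{\dagger}=M$; the converse is immediate.

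Items 6 and 7 become corollaries of items 1 and 5. If $M$ is unitary then $C(M)C(M)^{\dagger}=C(M)C(M^{\dagger})=C(MM^{\dagger})=C(I)=I\otimes I$, and likewise with $M^{\dagger}M$; conversely $C(M)$ unitary gives $C(M^{\dagger}M)=C(I)$, hence $M^{\dagger}M=I$. For item 7, if $M\geq 0$ write $M=N^{\dagger}N$, so $C(M)=C(N)^{\dagger}C(N)\geq 0$; conversely a positive semidefinite $C(M)$ is Hermitian, so $M$ is Hermitian by item 5, and restricting the quadratic form to ancilla state $\ket{0}$ gives $\bra{\phi}M\ket{\phi}=(\bra{0}\otimes\bra{\phi})\,C(M)\,(\ket{0}\otimes\ket{\phi})\geq 0$ for every $\ket{\phi}$. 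Finally, item 8 follows from $\tr{C(M)}=\tr{M}+\tr{M^{*}}=\tr{M}+\overline{\tr{M}}$, which equals $2\tr{M}$ precisely when $\tr{M}$ is real, in particular when $M$ is Hermitian.

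I do not expect a genuine obstacle; the lemma is routine verification once the block-diagonal picture is in hand. The only point needing care is the ``only if'' direction of items 5--7, where one must remember to invoke injectivity of $C$, or to restrict the operator (or its quadratic form) to the $\ket{0}$ block of the ancilla — a structural observation rather than a computation. I would also note, for economy, that items 5--7 could alternatively be deduced from item 4, since Hermiticity, unitarity, and positive semidefiniteness all force normality; but the block-diagonal argument is uniform and spares us from separately arguing completeness of the eigenbasis.
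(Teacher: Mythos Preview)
Your proof is correct and is precisely the ``easy derivation'' the paper alludes to: the paper does not spell out a proof, merely stating that the properties ``can be derived easily'' and remarking that $R(\cdot)$ and $C(\cdot)$ are related by a unitary transformation (established later in section~\ref{sec:real_sim_in_cont_sim}). Your block-diagonal verification is the natural reading of the first clause. The second clause points to an alternative route you did not take: since $C(M)=(U\otimes I)\,R(M)\,(U^{\dagger}\otimes I)$ for a fixed ancilla unitary $U$, all eight items could instead be transported from the already-proved properties of $R$ (conjugation by a fixed unitary preserves products, sums, adjoints, spectra, and trace). Your approach is self-contained and arguably cleaner, since it does not rely on the forward reference; the paper's remark buys economy only if one is willing to defer to the later section.
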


These properties can be derived easily.  In fact, $R(\cdot)$ and $C(\cdot)$ are related by a unitary transformation, as will be seen in section~\ref{sec:real_sim_in_cont_sim}.

\subsubsection{Discrete time evolution}
The properties of $C(\cdot)$ allow us to easily determine how the simulation states in the continuum evolve.  Let $U$ and $\ket{\psi}$ be a reference unitary operation and state and let $\rho^{\prime}$ be as in equation~\ref{eq:contsimstate}.  By the form of $C(U)$ we have

 \[
C(U)\rho^{\prime}C(U)^{\dagger} = 
a \proj{0}{0} \otimes U\proj{\psi}{\psi}U^{\dagger} + (1-a) \proj{1}{1} \otimes U^{*}\proj{\psi^{*}}{\psi^{*}}U^{T} +\]
\[ c \proj{0}{1}\otimes U\proj{\psi}{\psi^{*}}U^{T} + c^{*} \proj{1}{0} \otimes U^{*}\proj{\psi^{*}}{\psi}U^{\dagger}
 .
\]
But this is the simulation state for $U\ket{\psi}$, and hence $C(U)$ evolves the simulation state $\ket{\psi^{\prime}}$ to produce a new simulation state corresponding to $U\ket{\psi}$.  Compositions of unitaries will also evolve the state correctly so that the measurement statistics at the end of a circuit will be identical to that of the reference circuit.

General quantum operations may be mapped similarly.  It is easy to verify that in Krauss representation a completely positive map is mapped to a completely positive map if we apply $C(\cdot)$ to each of the Kraus operators.  The trace preserving property is also preserved.  Applying the properties of $C(\cdot)$ we see that
\begin{equation}
\frac{1}{2} \sum_{j} C(K_{j}) C(\rho) C(K_{j})^{\dagger} = \frac{1}{2}C\left(\sum_{j} K_{j} \rho K^{\dagger}_{j} \right).
\end{equation}

\subsubsection{Continuous time evolution}
We begin with a Hamiltonian $H$.  Instead of applying $C(\cdot)$ to obtain the simulation Hamiltonian, we make a small modification, introducing a $-1$ on the complex conjugated part, thus
\begin{equation}
H^{\prime} = \proj{0}{0} \otimes H - \proj{1}{1} \otimes H^{*}.
\end{equation}
This reflects the fact that complex conjugating a Hamiltonian corresponds to time reversal.  The evolution of the state will be according to the Schr\"{o}dinger equation
\begin{equation}
U(t) = e^{-iH^{\prime}t}
\end{equation}
We may take advantage of a property of the exponential function, namely $\exp(A + B) = \exp(A) +  \exp(B) - I$ when $AB = 0 = BA$, which may be easily verified by examining the Taylor expansion.  Since $\proj{0}{0} \proj{1}{1} = 0$ we obtain
\begin{equation}
e^{-iH^{\prime}t} = e^{-i \proj{0}{0} \otimes H t} + e^{i \proj{1}{1} \otimes H^{*}t} - I.
\end{equation}
We now use another property of the exponential function:  $\exp\left(P \otimes A\right) = P \otimes \exp( A) - P \otimes I + I$ when $P^{2} = P$.  We obtain
\begin{equation}
e^{-iH^{\prime}t}Ê= \proj{0}{0} \otimes e^{-iHt}  + \proj{1}{1} \otimes e^{iH^{*}t}.
\end{equation}
Finally, we use the Taylor expansion and the fact that $a^{*}b^{*} = (ab)^{*}$ to see
\begin{equation}
e^{iH^{*}t} = (e^{(iH^{*}t)^{*}})^{*} = \left(e^{-iHt}\right)^{*}.
\end{equation}
Thus
\begin{equation}
e^{-iH^{\prime}t} = C(e^{-iHt})
\end{equation}
and the simulation evolution tracks that of the reference system.

Another way to think of this is similar to that used in the real simulation in section~\ref{sec:real_simulation}.  There, rather than considering the Hamiltonian alone, the whole matrix in the exponent, $-iHt$, was considered.  Applying $C(\cdot)$ to this matrix we obtain
\begin{equation}
\proj{0}{0} \otimes (-iHt) + \proj{1}{1} \otimes (-iHt)^{*} =  i\left(\proj{0}{0} \otimes H - \proj{1}{1} \otimes H^{*} \right)t. 
\end{equation}
Here the fact that $a^{*}b^{*} = (ab)^{*}$ means $(-iH)^{*} = iH^{*}$ and the $-1$ factor is explained.  

\subsection{Non-local computations}
The above family of simulations suffer from the same problem as the real simulation:  every operator must have access to the extra qubit added in the simulation.  This makes the simulations unsuitable for multi-party computations.  This problem can be solved in the same manner as for the real simulation.  There an extra qubit was added for each party in the computation and an entangled state was prepared across these qubits.  New operators were then defined that interact with these qubits locally in order to have the same effect as the single qubit in the original real simulation.

In order to perform the simulation correctly, each party needs to know whether to apply the original operator or the complex conjugate.  For classical mixtures this is no problem, since this information can be stored in a local classical bit for each party.  For quantum mixtures, the extra qubit is encoded in a logical qubit which is stored across a number of qubits, each located with a different party.  More specifically, each party has a qubit and the combined state is prepared in the GHZ-like state
\begin{equation}\label{eq:cont_sim_multiparty_state}
\alpha\ket{00\dots0}\ket{\psi} + \beta\ket{11\dots1}\ket{\psi^{\prime}}.
\end{equation}
Each party can perform a desired unitary by applying the operator specified in equation \ref{eq:cont_sim_op2} with the $Z$ operator acting on their local qubit.  We see that the $Z$ operator introduces a $-1$ phase on the imaginary part of the operator when the local qubit is in the state $1$, thus applying the complex conjugate of the operator.

\subsection{Real simulation in the continuum}\label{sec:real_sim_in_cont_sim}

The real simulation (section~\ref{sec:real_simulation})  can be expressed as a simulation in the family defined above through a change of basis.  Starting with the state defined for $\alpha = \beta = \frac{1}{\sqrt{2}}$ we have
 \[
\ket{\psi^{\prime}} = \frac{1}{\sqrt{2}} \ket{0}\ket{\psi} + \frac{1}{\sqrt{2}} \ket{1}\ket{\psi^{*}}.
\]
We next apply a Hadamard gate followed by the relative phase rotation
 \[
\left(
	\begin{matrix}
	1 & 0 \\
	0 & -i \\
	\end{matrix}
\right)
\]
to the extra qubit.  This is the same as applying the unitary
\begin{equation}
U = \left(
	\begin{matrix}
	1  & 1 \\
	-i & i \\
	\end{matrix}
\right).
\end{equation}
The resulting state is
 \[
\frac{1}{2}\ket{0}(\ket{\psi} + \ket{\psi^{*}}) - \frac{i}{2}\ket{1}(\ket{\psi} - \ket{\psi^{*}})
\]
which can be rewritten as
 \[
\ket{0}Re(\ket{\psi}) + \ket{1}Im(\ket{\psi})
\]
and the real simulation is recovered.

Operators are transformed quite easily.  For operator $M$ we conjugate $C(M)$ by $U \otimes I$.  From \ref{eq:cont_sim_op2} we see that the resulting operator is 
\begin{equation}
(U \otimes I)\, C(M)\, (U^{\dagger} \otimes I) = I \otimes Re(M) + XZ \otimes Im(M)
\end{equation}
which is exactly the operator used in the real simulation for $M$.

For the multiparty real simulation the extra qubits added for each party must also be transformed correctly.  We start with the state
 \[
\frac{1}{\sqrt{2}}\ket{00\dots0}\ket{\psi} + \frac{1}{\sqrt{2}}\ket{11\dots1}\ket{\psi^{*}}.
\]
We now apply a Hadamard gate to \emph{each} extra qubit, resulting in the state
 \[
\frac{1}{\sqrt{2^{k+1}}} \sum_{x} \ket{x}\ket{\psi} + \frac{1}{\sqrt{2^{k+1}}} \sum_{x} (-1)^{h(x)} \ket{x}\ket{\psi^{*}},
\]
where $h(x)$ is the Hamming weight (number of 1s) of the bit string $x$ and $k$ is the number of parties.  Collecting terms gives
 \[
\frac{1}{\sqrt{2^{k+1}}}\sum_{h(x)\, \text{even}}\ket{x}(\ket{\psi} + \ket{\psi^{*}}) + \sum_{h(x)\, \text{odd}}\ket{x}(\ket{\psi} - \ket{\psi^{*}}).
\]
We now apply the relative phase rotation as above to each extra qubit, resulting in the state
\[
\frac{1}{\sqrt{2^{k+1}}}\sum_{h(x)\, \text{even}}(-1)^{\frac{h(x)}{2}}\ket{x}(Re(\ket{\psi}) + \sum_{h(x)\, \text{odd}}(-1)^{\frac{h(x)-1}{2}}\ket{x}Im(\ket{\psi})
\]
which is the correct state for the multi-party real simulation.

Another way to see this latter transformation is in terms of stabilizers.  In \cite{McKague:2009:Simulating-Quan} it is noted that the entangled states used in the simulation are stabilized by $Y_{s}\otimes Y_{t}$ for distinct $s,t$.  Also note that the states used in the simulations defined here are stabilized by $Z_{s}\otimes Z_{t}$ for distinct $s,t$.  The qubit-wise transformation applied transforms $Z$ into $Y$.  Collecting terms for the real and imaginary parts completes the transformation.

\section{Simulations in a cryptographic setting}\label{sec:selftestcrypto}

Suppose that two or more parties are engaged in a cryptographic protocol using self-tested apparatus.  Later, in section~\ref{sec:extmayersyao}, we will develop the extended Mayers-Yao test which allows them to determine that the devices are implementing a simulation from the family of simulations described in section~\ref{sec:generalsimulations}.  Suppose further that the adversary, Eve, knows how the devices are implemented (she provides them) and controls the preparation of the state.  The honest parties only perform operations as specified for the simulation.  Eve, on the other hand, is free to interact with the extra qubits in the simulation in any way she likes.  Does this give any advantage to Eve?

Eve can potentially perform many operations, including entangling a qubit of her own with the extra simulation qubits allowing her to perform simulation operations.  She may also interact in complex ways with the extra simulation qubits along with the original register, including performing encoded anti-unitary operations.  Despite this, we are able to prove that Eve can gain no advantage for some protocols.

We explore a restricted class of protocols that are especially easy to analyze.  These are protocols where the only operation that an honest party will do is a Pauli measurement.  This class includes the six-state quantum key distribution protocol (implemented as an entanglement based protocol) (see \cite{Bennett:1984:Eavesdrop-detec}, \cite{Bruss:1998:Optimal-Eavesdr}).  Briefly, the 6-state QKD protocol uses three measurements, $X$, $Y$, and $Z$ instead of only $X$ and $Z$ as in BB84.  We will demonstrate that these protocols do not leak any more information when implemented using one of the simulations.

The proof is a series of security reductions to protocols in which each reduction only increases Eve's power.  We will show that the final protocol in the reduction is just as secure as the reference protocol (without the simulation applied), hence the simulation protocol is also just as secure as the reference protocol.

For the first reduction we suppose that the participants in the protocol measure their simulation qubit in the $Z$ eigenbasis after the protocol is completed, and transmit the result to Eve.  This does not interfere with the intended protocol and only increases Eve's information.  Since the $Z$ measurement commutes with all simulation operations, the participants could just as well have performed the measurement before the protocol began.  If Eve is the one who prepares the initial state for the simulation (in other cases Eve has strictly less power) then Eve could also perform this measurement herself.  This measurement would collapse the state to an eigenvector of the $Z$ measurements and Eve's strategy would be a mixture of different strategies with the states each an eigenvector of the $Z$ measurements.

Let us examine the result of Eve choosing one of these eigenvector states.  Each of the parties will receive their extra qubit prepared in a $Z$ eigenvector.  The effect of this on their operations is either to perform the protocol's original operation (in the case of a $\ket{0}$) or the complex conjugate (in the case of a $\ket{1}$.)  For Pauli measurements, only the $Y$ measurement is affected: the output bit is flipped in the case of the complex conjugate.

If every party receives the same eigenvector in their extra qubit, then the protocol reduces to either the original or the complex conjugate.  In either case the security is identical to the original protocol.  If the extra qubits are not in the same eigenvector then some $Y$ measurements outcomes will be flipped and some will not.  This does not affect Eve's information since she controls which outcomes are flipped and can undo the flips in her reckoning of the final classical information.  Note that the bit flips may introduce errors into the protocol.  If the protocol does not explicitly check for such errors (as does the 6-state protocol) a test for these errors may be required, however the lack of such a test does not leak information to Eve.  Thus the reduced protocol is as secure as the reference protocol, and so is the simulated protocol.

\section{Quantum formalism over other rings}
In the previous sections we have shown that quantum physics over the complex numbers is indistinguishable from quantum physics over the real numbers.  That is to say, there is no experiment that can be designed where the predictions of complex quantum physics are different from those of real quantum physics.  The natural question to ask is whether the same is true if we move to quantum physics over the \index{quaternions}quaternions or other division rings.

\subsection{Why other rings?}

Although the amplitudes for quantum states are usually taken to be complex numbers, we might imagine other objects could be used.  Of course these objects must have certain properties in order to be suitable.  We will argue that the only objects that can be used are the real numbers, complex numbers, or quaternions.  The argument follows that of Adler \cite{Adler:1995:Quaternionic-Qu}.

Suppose that the amplitudes are drawn from a set $V$.  Adler argues that, since all measurable quantities will reduce to real numbers, we may take $V$ as a general division algebra over the real numbers.  In order to define normalized states, and to use the square rule for deriving probabilities we must have a modulus function, $N$, which assigns a non-negative ``size'' to each element.  That is, $N : V \mapsto \mathbb{R}^{+}$.  We would like $N$ to have the properties
\begin{enumerate}
	\item $N(r\phi) = |r|N(\phi)$
	\item $N(\phi + \theta) \leq N(\phi) + N(\theta)$
	\item $N(\phi) > 0$ if $\phi \neq 0$.
\end{enumerate}
These properties are required for $N$ to be a norm on $V$.  Adler also imposes one more constraint.  We begin by considering the inner product between two states, $\braket{a}{b} \in V$.  Inserting a resolution of the identity, we obtain
\begin{equation}
\braket{a}{b} = \sum_{c}\braket{a}{c}\braket{c}{b}
\end{equation}
where $\ket{c}$ runs over some basis (not necessarily including $\ket{a}$ or $\ket{b}$).  If there is only one non-zero term in the sum then we obtain
\begin{equation}
\braket{a}{b} = \braket{a}{c}\braket{c}{b}.
\end{equation}
Adler now imposes the restriction
\begin{equation}
N(\braket{a}{b})^{2} = N(\braket{a}{c})^{2}N(\braket{c}{b})^{2}
\end{equation}
which recovers classical probability superposition.  Taking the square root everywhere, we obtain the property
\begin{equation}
N(\phi\theta) = N(\phi)N(\theta).
\end{equation}

We now apply a result due to Albert (\cite{Albert:1947:Absolute-Valued}):  the only division algebras over the reals for which an $N$ exists with the properties above are the real numbers, the complex numbers, and the quaternions.  This means that the only number systems that can underly the quantum formalism are these three.  (Unless, of course, we modify the formalism in some important way.)

We have established a short list of number systems to consider.  Why should we consider anything other than the complex numbers?  From a foundations perspective we would like to eliminate all possibilities that are not the usual quantum formalism.  If we can argue that using something other than the complex numbers would result in a theory that is not viable for some reason, then we do not need to assume the use of the complex numbers as an axiom.  A particularly attractive outcome would be an experiment which eliminates other number systems.

\subsection{Bell inequalities and real quantum formalism}

In \cite{Gisin:2007:Bell-inequaliti}, Gisin notes that all known Bell inequalities can be maximally violated using states and measurements described using only real numbers.  Gisin then asks whether this is true for all possible Bell inequalities, or whether there exists a Bell inequality for which a higher violation can be achieved if complex numbers are used.  If the latter is the case then such an inequality, and the accompanying physical experiment, could experimentally prove that complex numbers are required.  P\'{a}l and V\'{e}rtesi prove in \cite{Pal:2008:Efficiency-of-h} that the former is true.  The results in this section provide a different and independent proof, since their construction is different.  Moreover, the present work applies much more generally.  Indeed, there is no experiment whatsoever whose outcomes cannot be duplicated by another experiment described using only real numbers.

More generally, the real simulation shows that quantum formalism over the real numbers is the same as over the complex numbers in the following sense:  there are no experimental outcomes predicted by complex quantum formalism that cannot be predicted by real quantum formalism.  The reference experiment in the complex formalism translates into the real formalism via the real simulation, giving the same predictions.  From a foundations perspective, then, the distinction is not problematic.  We can use either number system.  Of course there may be more practical reasons for preferring one over the other.  With the real numbers, calculations may be more cumbersome, since a larger Hilbert space will in general be necessary.  Also, many operations are permitted by the real quantum formalism, such as an encoded complex conjugation, that do not seem to occur in nature, so the complex formalism may offer a better fit\footnote{Thanks to Bill Wootters for pointing this out.}.  This need not be a problem, however, as the complex formalism also allows operations, such as energy non-conserving operations, that do not occur in nature as well.

\subsection{Quaternionic quantum physics}

\subsubsection{Quaternions}

The Quaternions ($\mathbb{H}$), first described by William Hamilton \cite{Hamilton:1844:On-quaternions-}, are a division ring formed by adjoining new elements, $i$, $j$, and $k$ to the real numbers $\mathbb{R}$.  Thus a quaternion looks like
\begin{equation}
q = a + ib + jc + kd.
\end{equation}
The new elements have the properties
\begin{equation}
i^{2} = j^{2} = k^{2} = ijk = -1.
\end{equation}
The multiplication of the elements $i, j, k$ is summarized in figure~\ref{fig:quatmult}.  When multiplying two elements along an arrow (eg. $ij$) the result is the third element in the cycle.  When multiplying backwards along an arrow (eg. $ji$) a $-1$ factor is added.  So $ij = k$ and $ji = -k$.  The elements $1$ and $-1$, of course, commute with the other elements.

 \begin{figure}
\[ \xymatrix {
 & i \ar@(r,ur)[ddr]& \\
  & & \\
 k \ar@(ul,l)[uur] & & \ar@(dl, dr)[ll]j \\}
\]
\caption{Multiplication in the quaternion group}
\label{fig:quatmult}
\end{figure}
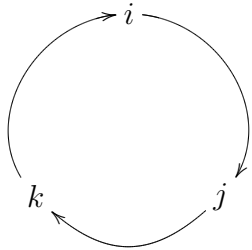

Just as complex numbers have a real and imaginary part, quaternions have a scalar and vector part.  The scalar part is the part which lies on the real axis.  We denote it by $Re(q)$.  The vector part, also called the pure imaginary part, is everything else, and in general is a vector in $\mathbb{R}^{3}$.  We denote it by $Im(q)$.  The scalar and vector parts of $q$, defined above, are $R(q) = a$ and $Im(q) = ib + jc + kd$.

Like in the complex numbers, we may define  the \emph{conjugate} of a quaternion, which multiplies each of the non-real parts by -1.  Thus the conjugate of $q$ is
\begin{equation}
q^{*} := a - ib - jc - kd.
\end{equation}
The norm on the quaternions is analogous to that for the complex numbers, i.e.
\begin{equation}
||q|| = \sqrt{q q^{*}}.
\end{equation}

The most important difference between the complex numbers and the quaternions is that the quaternions do not form a commutative algebra.  This property will be the basis for the rest of our discussion.

\subsubsection{Quaternionic quantum mechanics}
Quaternionic quantum mechanics is formed, roughly speaking, by replacing every complex number in the usual quantum mechanics by a quaternion.  Thus states are vectors over the quaternions, so amplitudes are now quaternions instead of complex numbers.  The usual norm-squared rule applies for deriving outcome probabilities, and discrete time evolution is described by unitary matrices $U$ over the quaternions with the usual property $U U^{\dagger} = I$.  Now the Hermitian conjugation ${(\cdot)}^{\dagger}$ is the matrix transpose, followed by quaternionic conjugation.  

Although many more aspects of quantum mechanics, such as continuous time evolution, may be considered, these few properties will suffice for our discussion.  For a comprehensive treatment of quaternionic quantum mechanics, see Stephen Adler's book \cite{Adler:1995:Quaternionic-Qu}.

\subsubsection{The tensor product problem}

The non-commutative nature of the quaternions introduces many new properties into quaternionic quantum mechanics.  The one we are most interested in here is the nature of multi-partite systems.

Consider a bipartite system in the state $\frac{1}{\sqrt{2}}\left(\ket{00} + \ket{11} \right)$.  Define the unitary matrices $R_{i}$ and $R_{j}$ as
\begin{equation}
R_{i} = \left(
	\begin{matrix}
	1 & 0 \\ 0 & i \\
	\end{matrix}
\right)
\end{equation}
\begin{equation}
R_{j} = \left(
	\begin{matrix}
	1 & 0 \\ 0 & j \\
	\end{matrix}
\right)
\end{equation}
We consider different ways that we may apply these matrices.  First we apply $R_{i}$ to the first subsystem, obtaining  $\frac{1}{\sqrt{2}}\left(\ket{00} + i\ket{11} \right)$.  Next we apply $R_{j}$ to the second subsytem, obtaining  
\begin{equation}\label{eq:stateij}
\frac{1}{\sqrt{2}}\left(\ket{00} - k\ket{11} \right).
\end{equation}  
Now consider the same operations, but applied in the reverse order.   We apply $R_{j}$ to the second subsystem, obtaining $\frac{1}{\sqrt{2}}\left(\ket{00} + j\ket{11} \right)$, followed by $R_{i}$ applied to the first subsystem, obtaining
\begin{equation}\label{eq:stateji}
\frac{1}{\sqrt{2}}\left(\ket{00} + k\ket{11} \right).
\end{equation}
Here we see the non-commutativity of $\mathbb{H}$ in action.  The two states in equations~\ref{eq:stateij} and~\ref{eq:stateji} are orthogonal, but all we have changed is the time-ordering of two local operations on separate subsystems.

The above problem may be stated as follows:  $R_{i} \otimes I$ and $I \otimes R_{j}$ do not commute.  This extends to the tensor product problem:  How do we define $R_{i} \otimes R_{j}$?  Evidently the evolution of subsystems cannot be considered without considering the context of the system as a whole.  Adler \cite{Adler:1995:Quaternionic-Qu} considers the same problem in the context of continuous evolution:

\begin{quote}
We conclude, then, that in quaternionic quantum mechanics, a sum of $N \geq 2$ 
one-body Hamiltonians gives a many-body Hamiltonian that does not describe 
N independent particles; the particle motions are coupled through the 
noncommutativity of the quaternion algebra.  (Adler \cite{Adler:1995:Quaternionic-Qu}, p. 245)
\end{quote}

What does this mean for locality?  Is there such thing as a local transformation?  Is it possible for Alice and Bob to actually perform the operations $R_{i} \otimes I$ and $I \otimes R_{j}$?  The formalism does not answer this question.  However, we may address this problem in another way.  If Alice and Bob can locally perform these operations, then they can implement a non-local box.

\subsection{Quaternionic non-local boxes}
The non-local box, first defined by Popescu and Rohrlich in \cite{Popescu:1994:Quantum-nonloca}, is an imaginary device which produces non-local correlations between data in the following way:  Two distant parties, Alice and Bob, each have half of the box.  They have one bit of input each, $a$ and $b$, and input their bit into their half of the box.  Each half of the box produces one bit of output, $x$ and $y$, obeying the property
\begin{equation}
x \oplus y = ab.
\end{equation}
The content of the famous CHSH inequality \cite{Clauser:1969:Proposed-Experi} is that this condition cannot be satisfied by a non-signalling classical local hidden variable theory with probability better than $0.75$ when $a$ and $b$ are chosen uniformly at random.  Quantumly, we can do better, but are bounded above by $\cos^{2} \pi / 8 \approx 0.85$, the Cirel'son bound \cite{Cirelson:1980:Quantum-general}.

Now we consider the case of quaternionic quantum mechanics.  Evidently it has stronger non-local behaviour than complex quantum mechanics, but how strong?  Clearly we can at least achieve the Cirel'son bound since any strategy in complex quantum mechanics also exists in quaternionic quantum mechanics, but can we do better?  The answer is that we simulate the non-local box perfectly.

Consider the two parties, Alice and Bob, as before.  Before receiving their inputs they synchronize clocks and choose times $t_{1} < t_{2}< t_{3}< t_{4} < t_{5}$ such that $t_{1}$ is after they receive their inputs and $t_{5}$ is before they require the outputs (we may arrange it so that the time elapsed is too short to allow signalling by moving Alice and Bob far enough away from each other.)  They also share the state $\frac{1}{\sqrt{2}}\left(\ket{00} + k\ket{11} \right)$.

Alice does the following:

\begin{enumerate}
	\item Receive input $a$.
	\item If $a = 0$ then apply operation $R_{i}$ at time $t_{1}$.
	\item If $a = 1$ then apply operation $R_{i}$ at time $t_{3}$.
	\item At time $t_{5}$ measure in the basis $\ket{+} / \ket{-}$ and output the result as $x$.
\end{enumerate}

Meanwhile, Bob does the following:

\begin{enumerate}
	\item Receive input $b$.
	\item If $b = 0$ then apply operation $R_{j}$ at time $t_{4}$.
	\item If $b = 1$ then apply operation $R_{j}$ at time $t_{2}$.
	\item At time $t_{5}$ measure in the basis $\ket{+} / \ket{-}$ and output the result as $y$.
\end{enumerate}

The protocol is depicted in figure~\ref{fig:quaternionic_nonlocal_box}

Roughly what is happening here is that Alice applies her operation before Bob in all cases except when both of their inputs are 1.  Alice and Bob then detect this event using local measurements that are correlated except when Bob goes first, in which case they are anti-correlated.

Consider the outputs that Alice and Bob generate.  First note that the final state before measuring will be $\ket{00} \pm \ket{11}$.  If Alice and Bob both measure in the basis $\ket{+} / \ket{-}$ their outcomes will be the same if the relative phase was $+$ and opposite if the relative phase was $-$.

Suppose that Bob's input is $0$.  He will wait until $t_{4}$ before applying $R_{j}$.  Regardless of her input, Alice will apply $R_{i}$ before Bob applies $R_{j}$.  Thus the combined effect on $\ket{\psi}$ is a relative phase change of $-k$, taking the state to $\ket{\phi_{+}}$.  Then Alice and Bob's measurements will agree and $x \oplus y = 0 = ab$.

Meanwhile, if Bob's input is $1$ the situation changes.  If Alice receive the input 0 then she applies $R_{i}$ at time $t_{1}$, Bob applies $R_{j}$ at time $t_{2}$ and the situation is the same as above.  However, if Alice receives the input 1 then she applies $R_{i}$ at time $t_{3}$, \emph{after} Bob applies $R_{j}$ at time $t_{2}$.  In this case the effect on $\ket{\psi}$ is a relative phase change of $k$, taking the state to $\ket{\phi_{-}}$.  In this case Alice and Bob's measurements will disagree and $x \oplus y = 1 = ab$.

 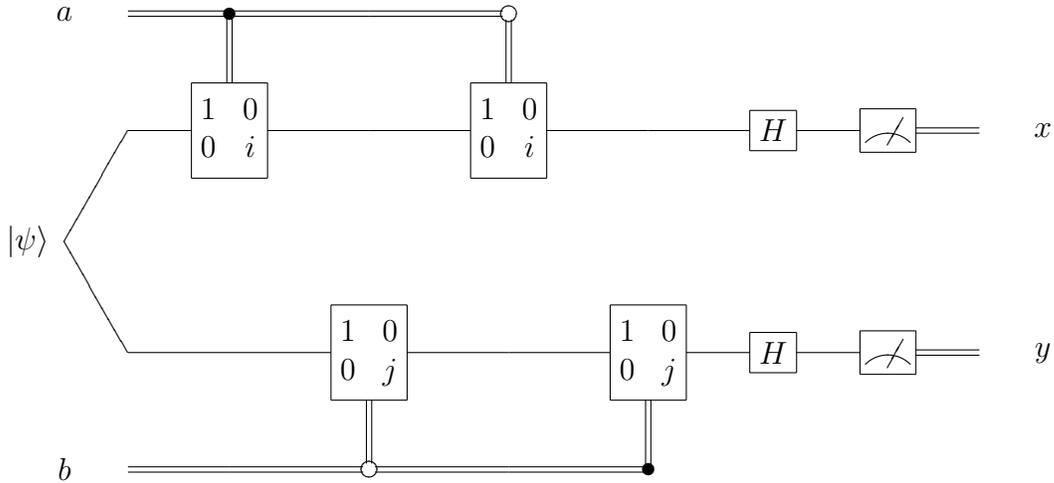
\begin{figure}
\[ \Qcircuit {
a  & &  \controlÊ\cw \cwx[1]  & \cw & \controlo \cw \cwx[1] & \\
  & & \gate{\begin{matrix}
			1 & 0 \\ 0 & i\\
			\end{matrix}}
& \qw &  \gate{ \begin{matrix}
			1 & 0 \\ 0 & i\\
			\end{matrix}}
& \qw & \gate{H} & \meter & \cw & x \\
\lstick{\ket{\psi}}   \ar@{-}[ur] \ar@{-}[dr]\\
  & & \qw & \gate{\begin{matrix}
  		1 & 0 \\ 0 & j \\
		\end{matrix}}
 & \qw & \gate{\begin{matrix}
 		1 & 0 \\ 0 & j\\
		\end{matrix}}
& \gate{H} & \meter & \cw & y \\
b & & \cw & \controlo \cw \cwx & \cw & \control \cw \cwx \\
 }
\]
\caption{Quaternionic non-local box}
\label{fig:quaternionic_nonlocal_box}
\end{figure}

\subsubsection{Communication complexity and information causality}

We now briefly consider \emph{communication complexity}.  Suppose two parties, Alice and Bob, receive two inputs, $a$ and $b$.  They wish to compute the value of some function $f(a,b)$.  How much communication is necessary between Alice and Bob? (for simplicity, we suppose that Alice receives the final answer.)  Here we are interested in boolean functions, whose output is a single classical bit.  It has been shown by van Dam \cite{Dam:2005:Implausible-Con} that the communication complexity of all Boolean functions is trivial if non-local boxes are available.  This means that Bob needs to send only one bit of information to Alice, and Alice does not have to send anything to Bob.  However, there exist Boolean functions, such as the inner product between two strings, for which the communication complexity in either a classical or quantum setting is maximal (i.e. the optimal strategy is for Bob to transmit his entire input to Alice) \cite{Brassard:2006:Limit-on-Nonloc}.  Coupled with the current result we find that within quaternionic quantum mechanics the communication complexity of all boolean functions is trivial.

Later Brassard et al. \cite{Brassard:2006:Limit-on-Nonloc} turn van Dam's result around, saying that if there is a non-trivial bound on the communication complexity of boolean functions, then non-local boxes do not exist.  They also made this result robust by introducing a notion of probabilistic communication complexity and showing that if a non-local box can be approximated with probability better than $\approx .906$ then every boolean function has trivial probabilistic communication complexity.  Linden et al \cite{Linden:2007:Quantum-Nonloca} finally showed that for a particular boolean function (AND of two 2-bit strings) if a non-local box can be approximated with probability better than $\cos^{2} \pi/8$ (the quantum upper bound) then no communication is required and the function can be approximated better than the classical (and quantum) bound of $0.75$.  Turning this result around, if the communication complexity of AND of 2-bit strings is non-trivial, then the non-local box cannot be approximated any better than what is achievable by quantum mechanics.  In particular, if there is a non-trivial bound on communication complexity then quaternionic quantum mechanics is not a viable theory.

Following this work, Pawlowski et al. \cite{Pawlowski:2009:A-new-physical-} developed the notion of \emph{information causality} which can be seen as a generalization of no-signalling.  Both classical and quantum theories obey information causality.  Pawlowski et al. were able to show that any physical theory which obeys information causality also obeys the Cirel'son bound \cite{Cirelson:1980:Quantum-general}.  Thus this is another way of excluding quaternionic quantum mechanics as a viable physical theory.

\chapter{Self-testing}\label{chap:selftesting}
\minitoc
\section{Introduction}
Self-testing is a solution to the follow problem:  perform a quantum calculation using only untrusted quantum devices and be sure that the result is correct.  For many tasks, such as factoring, this may be trivial since the result can be checked quickly using only classical computation.  However, it is not known whether all problems efficiently solvable by a quantum computer can be checked in this fashion (i.e. it is not known whether $BQP \subseteq NP$).  Thus there are potentially problems for which a classical check is inefficient.  In this chapter we present and extend the main self-testing concepts and constructions, as developed in  \cite{Mayers:2004:Self-testing-qu} and \cite{Magniez:2006:Self-testing-of}.

\section{Literature review}

Self-testing was introduced by Mayers and Yao in \cite{Mayers:1998:Quantum-Cryptog}.  The initial application was to \index{quantum key distribution}quantum key distribution.  In prepare and measure QKD protocols, it is important that the photon source used does not contain side channels that leak information on the basis choices, although this may be difficult to establish for particular physical implementations.  Mayers and Yao proposed to solve this problem by using a self-checking photon source.  The idea is that the manufacturer who provides the photon source would also provide several measurement devices.  The devices would measure the signal from the photon source and generate some statistics.  Checking the classical statistics, the participants in the QKD protocol could then verify that the photon source is operating correctly before continuing the protocol.

An important consideration in this work is that the photon source is implemented as a source of EPR pairs.  One qubit from the EPR pair is measured, and the other is sent out as a signal.  The self-check consists of establishing that the initial state is indeed an EPR pair, in which case the signal does not contain any information about the basis choice since this is only made on the other photon in the pair.  This reliance on EPR pairs is a feature that is present throughout the remaining self-testing literature.  Another important feature is that the dimension of the Hilbert space of the source is not known beforehand, allowing the result to be applicable in situations where almost nothing is known about the physical implementation of the devices.

Mayers and Yao improved their result in \cite{Mayers:2004:Self-testing-qu}.  In particular, they broadened the scope of their self-test to a generic setting of a source of EPR pairs and two sets of measurements (one for each half of the state.)  Again the result shows that the source produces EPR pairs, but in addition the measurements are also characterized.  This is remarkable since they begin without any trusted apparatus, and end up with both a characterized source and measurement devices.  

In a parallel development, van Dam et al. \cite{van-Dam:1999:Self-Testing-of} investigated self-testing in the context of quantum circuits.  In particular, they developed a series of tests in which a verifier interacts classically with a quantum apparatus by preparing states and measuring in the computational basis and establishes that a gate is operating correctly.  (The gate comes with an attached specification.)  This work extends self-testing in to the realm of general computing, but relies on some important assumptions.  In particular, the Hilbert space of the state is assumed to be known and the devices are used several times with the assumption that it operates identically each time.  Also, the computational state preparation and measurement is trusted, and finally the self-tests are only applicable to a particular set of gates (which is sufficient for universal computation.)

The two lines of research were merged in \cite{Magniez:2006:Self-testing-of} by Magniez et al..  There, the reliance on a particular Hilbert space and trusted computational bases preparation and measurement is removed.  This is done by preparing states in a variety of bases; EPR pairs are prepared and one half is measured in three different bases.  This initial step is self-tested using the Mayers-Yao result.  As well, the Mayers-Yao result is invoked to characterize the measurement devices.  Once these two steps are accomplished, the gate is the only remaining untrusted element, and it may be tested using the now trusted state preparation and measurement.  Besides introducing the circuit test, Magniez et al. improved the Mayers-Yao result by making it robust, allowing their circuit test to be robust as well.

\section{Definitions and main Theorems}

\subsection{Self-testing concepts}
The context for self-testing includes a \emph{\index{verifier}verifier} and several black-box quantum devices.  Ideally, no assumptions are made about the devices that cannot be verified in some way.  For example, we wish not to make any assumptions about the Hilbert space that states live in and gates operate on.  

The verifier interacts with the quantum devices in three ways.  First, the verifier arranges the quantum devices into a circuit by connecting quantum inputs and outputs of various devices.  Second, the verifier provides classical inputs (\emph{\index{measurement settings}measurement settings}) to the devices.  Finally, the verifier obtains classical outputs from the devices (\emph{outcomes}).  Importantly, we assume that no signalling occurs between devices that are not explicitly connected in the topology, and if there is some physical channel then this cannot be used to communicate backwards.
\begin{assumption}
No communication occurs between quantum devices that are not explicitly connected.  Further, quantum communication from the output of one device to the input of another device is one-way.
\end{assumption}
There are two pieces of information about an experiment available to the verifier.  First, the topology is known, since the verifier is responsible for it.  Second, the verifier may estimate the statistics generated by the experiment, provided the experiment behaves the same each time it is used.
\begin{assumption}\label{assumption:devicesbehavesame}
The quantum devices behave the same each time they are used.
\end{assumption}
The statistics will allow the verifier to determine if the physical experiment simulates the reference experiment.  In the case that it does not, we offer no conclusions.  If it does, however, then we can make some non-trivial conclusions about the structure of the physical experiment, which will be the subject of this chapter.

\subsection{Equivalence}
For general reference experiments, if the verifier is able to calculate the statistics generated in order to compare with a physical experiment, then there is little value in performing the physical experiment; the verifier may do the calculation themselves.  The strength of self-testing is in choosing experiments for which the statistics are easy to calculate - many such experiments are performed, each one adding more gates - and combining the conclusions gained from each experiment to conclude that the final experiment simulates the desired reference experiment.

In order to combine results about individual experiments and make the final conclusion, we need a notion stronger than simulation.  Consider the case of factoring.  A physical experiment that outputs the factors of a large number may perform the calculation in any number of ways, from Shor's algorithm to brute force search.  All of these methods simulate each other, but clearly they are different in important ways.  For this reason we introduce the concept of \emph{\index{equivalence}equivalence}.  We intend to capture the idea that, to the greatest extent we can possibly conclude, the physical experiment is \emph{the same as} the reference experiment.

When defining a notion of equivalence in this setting we must first consider how me might change the reference experiment in a way that preserves the statistics of the outcomes.  Any such change is invisible from the perspective of the verifier and hence we cannot rule them out.  Here is a list of such changes

\begin{enumerate}
	\item Local changes of basis
	\item Adding ancillae to physical systems, prepared in any joint state
	\item Changing the action of the observables outside the support of the state
	\item Locally embedding the state and operators in a larger (or smaller) Hilbert space
\end{enumerate}

In order to accommodate these various changes we define equivalence as follows:

\begin{definition}\label{def:equivalence}
 A {\it reference experiment} is described by a $n$-partite state $\ket{\psi}$ on Hilbert space $\mathcal{X} = \mathcal{X}_{1} \otimes \dots \mathcal{X}_{n}$ and local measurement observables (acting on a single part) $M_{m}$ for various $m$.  Further, consider a physical experiment described by a $n$-partite state $\ket{\psi^{\prime}}$ on Hilbert space $\mathcal{Y} = \mathcal{Y}_{1}\otimes \dots \otimes \mathcal{Y}_{n}$ and local measurement observables $M^{\prime}_{m}$ for various $m$.  We say that the physical experiment is \emph{equivalent} to the reference experiment (and the physical state and measurement observables are equivalent to the reference state and measurement observables) if there exists a local isometry
\begin{equation}
\Phi = \Phi_{1} \otimes \dots \Phi_{n}, \, \, \, \,
\Phi_{j} : \mathcal{Y}_{j} \mapsto \mathcal{Y}_{j} \otimes \mathcal{X}_{j}
\end{equation}
 such that
\begin{eqnarray}
\Phi(\ket{\psi^{\prime}}) & = & \ket{junk}_{\mathcal{Y}} \otimes \ket{\psi}_{\mathcal{X}}\\
\Phi(M^{\prime}_{m}\ket{\psi^{\prime}}) & = & \ket{junk}_{\mathcal{Y}} \otimes M_{m}\ket{\psi}_{\mathcal{X}}.
\end{eqnarray}
\end{definition}

When we are performing gate testing, we will be concerned with several experiments, and several equivalences at once.  The particular isometries and junk state $\ket{junk}_{\mathcal{Y}}$ used will be important.  In this case we will specify them.

The isometry $\Phi$ may be constructed by attaching ancillae in some product state $\ket{00\dots 0}_{\mathcal{X}}$ and applying local unitaries to the subsystems.  Note that if we make any finite number of changes from the list above then we may construct a suitable local isometry and show that the experiment is equivalent to the reference experiment.  Also, any experiment that is equivalent to the reference experiment may be constructed by applying changes from the list above:  one simply attaches ancillae in the state $\ket{junk}$ and performs a suitable change of basis.  Equivalence is thus exactly the notion we need to take these changes into account. 

\subsection{Results and contributions}

The central idea behind self-testing is that, for certain carefully chosen experiments, simulation implies equivalence.  Furthermore, experiments may be grouped together to strengthen the conclusions.  This allows us to construct self-tests for
\begin{itemize}
	\item EPR pairs along with a particular set of measurements
	\item Real unitaries on single qubits and $\text{CTRL-}Z$ gates
	\item Arbitrary circuits composed of the above gates
\end{itemize}
These results are already present in \cite{Mayers:2004:Self-testing-qu} and \cite{Magniez:2006:Self-testing-of}, however we make several contributions:
\begin{itemize}
	\item Streamlined definitions and notation
	\item New (and often simplified) proofs for all results
	\item Corrected proofs for certain technical lemmas
	\item Explanation of the restriction to real unitaries
\end{itemize}
We also extend the definition of equivalence, taking into account the results of chapter~\ref{chap:simulations}, and extend the Mayers and Yao test for EPR pairs to include complex measurements.  This lays the foundation for self-testing of complex gates.

The extended Mayers and Yao test is to be published in \cite{McKague:2010:Generalized-sel}.

\section{State and measurement testing}\label{sec:mayersyao}
The backbone of self-testing is in testing states and measurements.  Since we cannot test these individually without relying on some trusted apparatus, we test them simultaneously.  The test that we use here was developed by Mayers and Yao \cite{Mayers:2004:Self-testing-qu}.

\begin{theorem}[Mayers and Yao \cite{Mayers:2004:Self-testing-qu}]\label{theorem:mayers-yao}
Suppose a physical experiment has the same topology and generates the same statistics as the reference experiment described in section~\ref{sec:myreftest}.  Then the physical experiment is equivalent to the reference experiment.
\end{theorem}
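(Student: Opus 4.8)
The plan is to follow the standard \emph{swap isometry} strategy used for this kind of result. First I would translate the matching statistics into algebraic relations among the physical measurement observables when acting on the physical state $\ket{\psi'}$. Write $X_A, Z_A$ (and the diagonal setting $D_A$) for the physical observables on Alice's part and $X_B, Z_B, D_B$ for Bob's; each squares to the identity since it has outcomes $\pm 1$ and each is Hermitian. The perfect correlations of the reference experiment in section~\ref{sec:myreftest} force relations such as $X_A\ket{\psi'} = X_B\ket{\psi'}$ and $Z_A\ket{\psi'} = Z_B\ket{\psi'}$, and — from the correlations involving the diagonal settings — the crucial anticommutation-type identity $(X_A Z_A + Z_A X_A)\ket{\psi'} = 0$ together with its mirror on Bob's side. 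I would obtain these by writing the relevant expectation values, using that the statistics are exactly those produced by $\frac{1}{\sqrt 2}(\ket{00}+\ket{11})$ with the reference observables, and noting that an identity of the form $\bra{\psi'} O^\dagger O \ket{\psi'} = 0$ upgrades to $O\ket{\psi'} = 0$.

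Next I would build the local isometry. On each part $j \in \{A,B\}$ I attach an ancilla qubit $\ket{0}_{\mathcal X_j}$ and let $\Phi_j$ be the partial-SWAP gadget assembled from the physical observables: a Hadamard on the ancilla, a controlled-$Z_j$ (ancilla controlling, physical register as target), another Hadamard, then a controlled-$X_j$. Set $\Phi = \Phi_A \otimes \Phi_B$, which is a local isometry of the form required by Definition~\ref{def:equivalence}. Using only the on-state relations from the first step — Hermiticity, squaring to the identity, correlation with the partner observable, and the anticommutation identity — I would expand $\Phi(\ket{\psi'})$ term by term and watch it collapse, exactly as in the proofs of Mayers--Yao and Magniez et al., to $\ket{junk}_{\mathcal Y} \otimes \frac{1}{\sqrt 2}(\ket{00}+\ket{11})_{\mathcal X}$, where $\ket{junk}_{\mathcal Y}$ is an explicitly produced residual vector on the physical Hilbert spaces.

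Then I would check the measurement condition of Definition~\ref{def:equivalence}: for each reference observable $M_m$ I need $\Phi(M'_m\ket{\psi'}) = \ket{junk}_{\mathcal Y} \otimes M_m\ket{\psi}_{\mathcal X}$. This is the same computation with one physical observable inserted just before the isometry; commuting $M'_m$ past the controlled operations with the help of the (anti)commutation relations converts it into the corresponding Pauli acting on the extracted reference qubit, which is what we want. I would carry this out for the $X$ and $Z$ settings on each part, and the diagonal setting then follows by linearity; I would also verify that the same $\ket{junk}_{\mathcal Y}$ appears in every one of these equations so that the single isometry $\Phi$ works simultaneously for the state and all observables.

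The main obstacle I anticipate is the bookkeeping of which identities hold only after hitting $\ket{\psi'}$ versus globally: the physical observables need not anticommute as operators, only on the state, so every manipulation inside the isometry computation must be arranged so that the ``bad'' operator eventually acts on $\ket{\psi'}$ (or on a vector obtained from it by operations preserving the relevant relation). A secondary subtlety is that nothing is assumed about the dimensions of the $\mathcal Y_j$, so no Jordan-block or normal-form structure may be invoked — the whole argument has to run through the algebraic identities alone. Getting the junk state to factor out cleanly and to match across the state equation and all the measurement equations is where the real work lies.
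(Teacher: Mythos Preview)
Your proposal is correct and follows essentially the same route as the paper: derive the state equalities and the on-state anticommutation $(X_AZ_A+Z_AX_A)\ket{\psi'}=0$ from the $D$-statistics, build the swap isometry out of Hadamards and controlled $Z_j$, $X_j$ exactly as you describe, and then verify the state and observable conditions term by term with $D$ handled by linearity. The paper packages your ``on-state only'' bookkeeping worry into a short lemma (Lemma~\ref{lemma:anticommute_support}) showing that the anticommutation actually holds on the support of $\ket{\psi'}$ on each side, which streamlines the commutation-past-controlled-$Z$ step, but otherwise the argument is the same.
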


\subsection{State and measurement self-test reference experiment}\label{sec:myreftest}
A general schematic for the Mayers-Yao reference experiment is shown in figure~\ref{eprtest}.  A bipartite state $\ket{\psi}$ is distributed to a pair of measurement devices.  The two measurement devices take classical inputs $a$ and $b$, which each take one of three values.  The devices then output classical bits, $x$ and $y$.

 \begin{figure}
\[ \Qcircuit {
 & & a \\
 & & \cwx[1] \\
  & & \meter & \cw & x \\
\lstick{\ket{\psi}}   \ar@{-}[ur] \ar@{-}[dr]\\
  & & \meter & \cw & y \\
  & & \cwx[-1] \\
  & & b \\
 }
\]
\caption{Mayers-Yao self-test circuit}
\label{eprtest}
\end{figure}
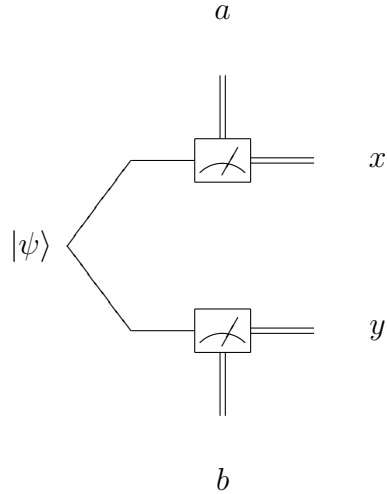

The reference state is an EPR pair $\ket{\phi_{+}} = \frac{1}{\sqrt{2}}\left(\ket{00} + \ket{11} \right)$ and the reference measurement observables are $X, Z, \frac{X + Z}{\sqrt{2}}$ for each side of the EPR pair.  For brevity we label $\frac{X + Z}{\sqrt{2}} = D$.  For physical devices we will have to derive this relationship, so there the separate label $D$ is required.

\subsection{Proof of Theorem~\ref{theorem:mayers-yao}}

\subsubsection{Proof Overview}
The main advantages of the following new proof for the Mayers-Yao self-test is that it is shorter, clearer, and more naturally extends to the more general test given in this paper.

The proof has two distinct parts.  The first part establishes some equations on the state and observables based on the observed statistics.  These are straightforward and are a direct result of the statistics observed.  Next we use these equations to show that the $X$ and $Z$ observables on each side anti-commute on the support of the state.  The second part uses the anti-commuting observables to construct local isometries that take the state and observables to the reference state and observables.

\subsubsection{Statistics}

In the reference experiment the marginals for each observable are all 0.  That is,
\[
	\bra{\phi_{+}}M \otimes I \ket{\phi_{+}} = 0
\]
for $M \in \{X, Z, D \}$.  (Swapping the systems in this and the following equations gives the same result since $\ket{\phi_{+}}$ is symmetric.)  Measuring the same observable on both sides always give identical outcomes.  Thus
\[
\bra{\phi_{+}}M \otimes M\ket{\phi_{+}} = 1.
\]
Additionally, $X$ and $Z$ measurements are uncorrelated.
 \[
\bra{\phi_{+}} X \otimes Z \ket{\phi_{+}}  = 0.
\]
The interesting part comes when we measure $X$ or $Z$ on one side and $D$ on the other.
\[
\bra{\phi_{+}} X \otimes D \ket{\phi_{+}} = \bra{\phi_{+}} Z \otimes D \ket{\phi_{+}} = \frac{1}{\sqrt{2}}
\]

\subsubsection{State equalities}

Using the equations on the measurement outcomes from above and the fact that $\ket{\psi}$ is normalized gives us the following equations.

\begin{eqnarray}
\ket{\psi} & = & X_{A} \otimes X_{B} \ket{\psi} \\
& = & Z_{A} \otimes Z_{B} \ket{\psi} \\
& = & D_{A} \otimes D_{B} \ket{\psi} \\
X_{A} \otimes I \ket{\psi} & = & I \otimes X_{B} \ket{\psi} \label{eq:xAxB} \\
Z_{A} \otimes I \ket{\psi} & = & I \otimes Z_{B} \ket{\psi} \label{eq:zAzB} \\
D_{A} \otimes I \ket{\psi} & = & I \otimes D_{B} \ket{\psi} \\
X_{A}Z_{A} \otimes I \ket{\psi} & = & I \otimes Z_{B}X_{B} \ket{\psi} \\
Z_{A}X_{A} \otimes I \ket{\psi} & = & I \otimes X_{B}Z_{B} \ket{\psi} \\
X_{A}Z_{A} \otimes I \ket{\psi} & = & X_{A} \otimes Z_{B} \ket{\psi} \\
Z_{A}X_{A} \otimes I \ket{\psi} & = & Z_{A} \otimes X_{B} \ket{\psi}
\end{eqnarray}

We can also establish some orthogonality relationships between various vectors.  In particular the vectors $\ket{\psi}, X_{A} \otimes I \ket{\psi}, Z_{A} \otimes I \ket{\psi}, X_{A}Z_{A} \otimes I \ket{\psi}$ are pairwise orthogonal.

Our goal for the remainder of the proof is to show that any state for which these equations hold must be equivalent to $\ket{\phi_{+}}$.

\subsubsection{Anti-commuting observables}\label{sec:my_anticommuting}

We now move to more salient matters.  First, we note that  $D_{A} \otimes I \ket{\psi}$ must be in the space spanned by $X_{A} \otimes I \ket{\psi}$ and $Z_{A} \otimes I \ket{\psi}$ because it has overlap $\frac{1}{\sqrt{2}}$ with each of these orthogonal vectors, and it has norm 1.  Thus
 \[
 D_{A} \otimes I \ket{\psi} = \frac{X_{A} + Z_{A}}{\sqrt{2}} \otimes I \ket{\psi}
\]
and analogously for $I \otimes D_{B} \ket{\psi}$.  This allows us to make the following deductions.

\begin{eqnarray*}
\ket{\psi} & = & D_{A} \otimes D_{B} \ket{\psi} \\
& = & \frac{1}{2} (X_{A} + Z_{A}) \otimes (X_{B} + Z_{B}) \ket{\psi} \\
& = & \ket{\psi} + (X_{A}\otimes Z_{B} + Z_{A} \otimes X_{B}) \ket{\psi}
\end{eqnarray*}

\noindent
Applying equations~\ref{eq:xAxB} and~\ref{eq:zAzB} we obtain

\begin{equation}
(X_{A}Z_{A} + Z_{A} X_{A}) \otimes I \ket{\psi} = 0.
\end{equation}
By Lemma~\ref{lemma:anticommute_support}, below,  it follows that $X_{A}$ and $Z_{A}$ anti-commute on the support of $\ket{\psi}$ on $A$.  Similarly, the observables $X_{B}$ and $Z_{B}$ anti-commute on support of $\ket{\psi}$ on $B$.

\begin{lemma}\label{lemma:anticommute_support}
Let $X_{A}$ and $Z_{A}$ be operators and $\ket{\psi}_{AB}$ a bipartite state such that
\begin{equation}
X_{A}Z_{A}\otimes I_{B} \ket{\psi}_{AB} = - Z_{A}X_{A} \otimes I_{B} \ket{\psi}_{AB}.
\end{equation}
then $X_{A}Z_{A} \ket{\phi} = - Z_{A}X_{A} \ket{\phi}$ for any $\ket{\phi}$ in the support of $\ket{\psi}_{AB}$ on $A$.
\end{lemma}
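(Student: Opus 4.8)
The plan is to pass from the global relation to a relation on the $A$-support by way of the Schmidt decomposition of $\ket{\psi}_{AB}$. Set $W_A := X_AZ_A + Z_AX_A$, so the hypothesis is exactly $W_A \otimes I_B \ket{\psi}_{AB} = 0$. Write a Schmidt decomposition
\[
\ket{\psi}_{AB} = \sum_i \sqrt{p_i}\,\ket{a_i}_A \otimes \ket{b_i}_B,
\]
where $p_i > 0$ for all $i$ and the $\ket{b_i}_B$ are orthonormal (hence linearly independent). Then
\[
0 = W_A \otimes I_B \ket{\psi}_{AB} = \sum_i \sqrt{p_i}\,\bigl(W_A\ket{a_i}_A\bigr)\otimes \ket{b_i}_B,
\]
and linear independence of the $\ket{b_i}_B$ forces each coefficient to vanish: $\sqrt{p_i}\,W_A\ket{a_i}_A = 0$, so $W_A\ket{a_i}_A = 0$ for every $i$ since $p_i > 0$.

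Now recall that the support of $\ket{\psi}_{AB}$ on $A$ — i.e.\ the support of $\rho_A = \tr[B]{\proj{\psi}{\psi}}$ — is precisely $\mathrm{span}\{\ket{a_i}_A\}$. Therefore any $\ket{\phi}$ in that support is a linear combination of the $\ket{a_i}_A$, and by linearity $W_A\ket{\phi} = 0$, which is just $X_AZ_A\ket{\phi} = -Z_AX_A\ket{\phi}$, as desired.

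I expect no serious obstacle here: the whole content is the Schmidt-decomposition step, and the only points requiring a word of care are (i) that the $A$-support of $\ket{\psi}_{AB}$ coincides with the span of the Schmidt vectors on $A$ (a standard fact), and (ii) that no Hermiticity or normality hypothesis on $X_A$ or $Z_A$ is needed — the argument uses only linearity and the orthogonality of the Schmidt vectors on $B$. (If one prefers to avoid Schmidt decompositions entirely, the same conclusion follows by letting $P_A$ be the projector onto $\mathrm{supp}(\rho_A)$, noting $\ket{\psi}_{AB} = (P_A\otimes I_B)\ket{\psi}_{AB}$, and observing that $\tr[B]{(W_A\otimes I_B)\proj{\psi}{\psi}(W_A^\dagger\otimes I_B)} = 0$ is a positive operator supported in $\mathrm{supp}(\rho_A)$, forcing $W_A P_A = 0$; but the Schmidt version is shorter.)
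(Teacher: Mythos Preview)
Your proof is correct and follows essentially the same route as the paper: both use the Schmidt decomposition and the orthonormality of the $B$-side Schmidt vectors to reduce the global relation to one on the $A$-support. Your formulation is in fact slightly cleaner---you directly deduce $W_A\ket{a_i}_A=0$ from linear independence of the $\ket{b_i}_B$, whereas the paper takes inner products with $\ket{k}_A\ket{k}_B$ and only explicitly records the diagonal matrix elements.
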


\begin{proof}
Let
\begin{equation}
\ket{\psi} = \sum_{j} \lambda_{j} \ket{j}_{A} \ket{j}_{B}
\end{equation}
be the singular value decomposition of $\ket{\psi}$.  We then have
\begin{equation}
X_{A}Z_{A}\otimes I_{B} \sum_{j} \lambda_{j} \ket{j}_{A} \ket{j}_{B} = - Z_{A}X_{A} \otimes I_{B} \sum_{j} \lambda_{j} \ket{j}_{A} \ket{j}_{B}.
\end{equation}
We now take the inner product with $\ket{k}_{A}\ket{k}_{B}$ for some $k$, to obtain
\begin{equation}
\lambda_{k}\bra{k}_{A}X_{A}Z_{A}\ket{k}_{A} = -\lambda_{k}\bra{k}_{A}X_{A}Z_{A}\ket{k}_{A}.
\end{equation}
When we restrict to the subspace to the subspace spanned by the $\ket{k}_{A}$ for which $\lambda_{k} \neq 0$ (i.e. on the support of $\ket{\psi}$ on $A$) we find that $X_{A}Z_{A} = - Z_{A}X_{A}$.

\end{proof}

\subsubsection{Construction of the local isometry}

Now we can easily build the local unitaries required to extract the EPR pair.  We use the circuit shown in figure~\ref{fig:epr_local_unitary_circuit}.  There the outer $\ket{0}$ states are added while the two inner wires carry the two halves of the bipartite state $\ket{\psi}$.  This circuit essentially builds a SWAP gate out of two CNOT gates (the usual third gate is not necessary since we initialize with $\ket{0}$.)  The SWAP gate extracts the entanglement out of $\ket{\psi}$ and swaps in a product state.

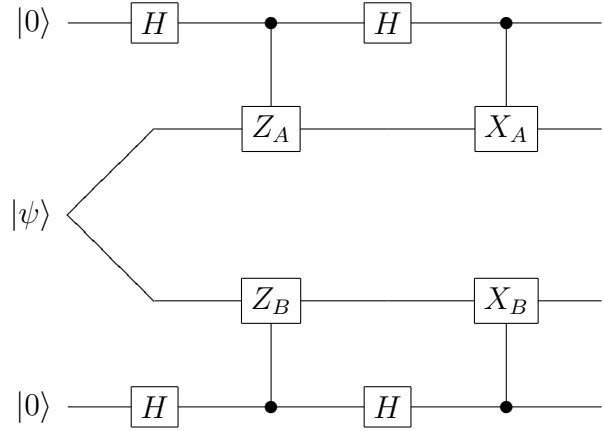
\begin{figure}
\[
\Qcircuit {
\lstick{\ket{0}}  & \gate{H} & \ctrl{1}  & \gate{H} & \ctrl{1} &\qw \\
 &  & \gate{Z_{A}} & \qw & \gate{X_{A}} & \qw\\
 \lstick{\ket{\psi}} \ar@{-}[ur] \ar@{-}[dr] & & & & & \\
              &    & \gate{Z_{B}} & \qw & \gate{X_{B}} & \qw\\
\lstick{\ket{0}}     & \gate{H} & \ctrl{-1} & \gate{H} & \ctrl{-1} & \qw
}
\]
\caption{Circuit for $\Phi$ showing equivalence of physical circuit to reference circuit in Mayers-Yao test}
\label{fig:epr_local_unitary_circuit}
\end{figure}

The circuit gives two isometries, one for each wire in EPR test circuit, which we denote $\Phi_{A}$ and $\Phi_{B}$.

\subsubsection{Isometry applied to state}

After applying this circuit the resulting state is
 \begin{eqnarray*}
 \Phi_{A} \otimes \Phi_{B} ( \ket{\psi}) & = &\frac{1}{4} (I + Z_{A}) \otimes (I + Z_{B}) \ket{\psi}\ket{00} \\
 & + & \frac{1}{4}(I + Z_{A}) \otimes X_{B}(I - Z_{B}) \ket{\psi}\ket{01} \\
 & + & \frac{1}{4}X_{A}(I- Z_{A}) \otimes (I + Z_{B}) \ket{\psi}\ket{10}\\
 & + & \frac{1}{4} X_{A} (I - Z_{A}) \otimes X_{B} (I-Z_{B}) \ket{\psi}\ket{11}. \\
\end{eqnarray*}
Applying some equations and the anti-commuting result from the previous section we find that this is equal to
 \[
  \Phi_{A} \otimes \Phi_{B} ( \ket{\psi}) = \frac{1}{4}(I+Z_{A}) \otimes (I + Z_{B}) \ket{\psi}\left(\ket{00} + \ket{11} \right) +
 \]
 \[
 (I + Z_{A})(I - Z_{A})\otimes X_{B}\ket{\psi}\ket{01} +
  X_{A}\otimes (I + Z_{B})(I - Z_{B})\ket{\psi}\ket{10}
\]
 \begin{equation}
  =  \frac{1}{\sqrt{2}} (I \otimes I + I \otimes Z_{B}) \ket{\psi} \ket{\phi_{+}}.
  \end{equation}
This may look curious since $I + Z_{A}$ and $I + Z_{B}$ are not unitary.  In fact it is easy to show that the final state still has the correct norm.  To give some intuition, note that in the reference case we want to extract $\ket{\phi_{+}}$ and swap in $\ket{00} = \frac{1}{2\sqrt{2}}(I+Z) \otimes (I+Z)\ket{\phi_{+}}$.

\subsubsection{Isometry and measurement operators}

We start with $X_{A}$ (the result for $X_{B}$ follows analogously).  Applying $X_{A}$ to $\ket{\psi}$ before applying the circuit is the same as applying it at the end, with a $-1$ phase introduced by anti-commuting past the controlled $Z_{A}$ operation (recall from section~\ref{sec:my_anticommuting} that $X_{A}$ and $Z_{A}$ anti-commute on the relevant subspace).  The resulting state is
 \begin{eqnarray*}
  \Phi_{A} \otimes \Phi_{B} ( X_{A} \otimes I_{B}\ket{\psi})& =  &\frac{1}{4} X_{A}(I - Z_{A}) \otimes (I + Z_{B}) \ket{\psi}\ket{00} \\
 & + & \frac{1}{4}X_{A}(I - Z_{A}) \otimes X_{B}(I - Z_{B}) \ket{\psi}\ket{01} \\
 & + & \frac{1}{4}(I+ Z_{A}) \otimes (I + Z_{B}) \ket{\psi}\ket{10}\\
 & + & \frac{1}{4}  (I + Z_{A}) \otimes X_{B} (I-Z_{B}) \ket{\psi}\ket{11}. \\
\end{eqnarray*}
Following the same logic as used in the state equivalence, we find that the final state is
 \[
 \Phi_{A} \otimes \Phi_{B} ( X_{A} \otimes I \ket{\psi})  = \frac{1}{\sqrt{2}} (I \otimes I + I \otimes Z_{B}) \ket{\psi}(X \otimes I) \ket{\phi_{+}}.
 \]

For the $Z_{A}$ operation, we see that the effect is a $-1$ phase kicked back through the final controlled $X_{A}$ operation.  This phase appears on the terms with $\ket{1}$ in the qubit, exactly as if a $Z$ operation had been applied to the qubit.  The equivalence for the $D$ operators results from the fact that $D = \frac{X + Z}{\sqrt{2}}$ on the relevant subspace, and linearity.

This concludes the proof of Theorem~\ref{theorem:mayers-yao} .

\section{Gate self-test}\label{sec:selftestgatetest}

\subsection{Main result}

Unlike state and measurement testing, gate testing will require three different experiments.  The first two are state and measurement tests, and the last verifies the operation of the gate.  We wish to test a gate $T \in U(\mathcal{X})$ with the restriction that $T$ have all real matrix entries.  For our purposes we will consider $\mathcal{X} = \mathcal{H}_{2}$ with $T$ any real unitary, and $\mathcal{X} = \mathcal{H}_{2}^{2}$ with $T = \text{CTRL-}Z$.  Note that this set of gates is sufficient to simulate universal quantum computation using the real simulation given in section~\ref{sec:real_simulation}.  The reference state $\ket{\phi}$ will either be an EPR pair, $\ket{\phi_{+}}$, or a pair of EPR pairs.  Measurements $M_{a}$ and $N_{b}$ will be the tensor products of the real operators $\{I, X, Z, \frac{X+Z}{\sqrt{2}}\}$.

The experiments are as follows:  First we test the state $\ket{\phi}$ and the measurements, as in figure~\ref{fig:epr_test} (experiment 1).  This establishes that the state is correct, and that the measurements applied directly to the state are also correct.  Next we apply $T \otimes T$ to $\ket{\phi}$ and then measure, as in figure~\ref{fig:epr_test_with_gates}, which tests whether the measurements are still operating correctly \emph{after} the gate is applied (experiment 2). (Note that for $T$ real and our chosen state, $T \otimes T\ket{\phi} = \ket{\phi}$.)  Finally, we apply $T \otimes I$ to $\ket{\phi}$ and measure, as in figure~\ref{fig:gate_test} (experiment 3).

We can think of the final test in the following way:  Measuring the $B$ side of the maximally entangled state is equivalent to preparing a state on the $A$ side.  Critically, there is no information about the basis $A$ side.  We then apply the gate to this prepared state and measure afterwards in a variety of bases to establish that the gate is working.  Also, since there is no way to distinguish the second and third tests from the $A$ side alone, either the measurement is working correctly or will be detected as faulty.

 \begin{figure}
\[ \Qcircuit {
  & & \meter \\
\lstick{\ket{\psi}}   \ar@{-}[ur] \ar@{-}[dr]\\
  & & \meter \\
 }
\]
\caption{EPR test}
\label{fig:epr_test}
\end{figure}

\begin{figure}
 \[
 \Qcircuit {
  & & \gate{T} & \meter \\
\lstick{\ket{\psi}}   \ar@{-}[ur] \ar@{-}[dr]\\
  & & \gate{T} & \meter \\
 }
\]
\caption{EPR test after gates applied}
\label{fig:epr_test_with_gates}
\end{figure}

\begin{figure}
 \[
 \Qcircuit {
  & & \gate{T} &  \meter \\
\lstick{\ket{\psi}}   \ar@{-}[ur] \ar@{-}[dr]\\
  & & \meter \\
 }
 \]
\caption{Testing a gate}
\label{fig:gate_test}
\end{figure}
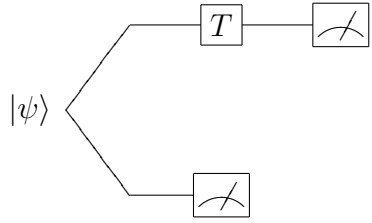

Recall Assumption~\ref{assumption:devicesbehavesame}, that the devices always operate the same.  Also, we assume that there is one way communication along the quantum channels and no side channels.  Thus the devices cannot determine which experiment is being performed.  The $B$ side measurements devices cannot distinguish between experiments 1 and 3, and so the $B$ side measurements on experiment 3 are implicitly tested by experiment 1.  Likewise, the $A$ side measurement devices cannot distinguish between experiments 2 and 3, and so are implicitly verified by experiment 2 to be working correctly in experiment 3.

One potential concern is the fact that a physical device may hold the state in memory rather than receiving it from the source device.  In fact this is not a concern.  If this were the case, then we simply take $\ket{\psi}$ to be the state held in memory across all devices.  There remains the possibility that the maximally entangled pairs found in experiment 1 and 2 are not the same pairs.  However, they must in fact be the same pair since the statistics generated by experiment 3 are only possible for a maximally entangled state, with one measurement the same as in experiment 1 and the other the same as in experiment 2.  The net result is that we do not have to concern ourselves with this possibility since the Theorem takes care of this case as well.

\begin{theorem}\label{theorem:gatetesting}
Suppose that a collection of physical devices are arranged into three physical experiments:
\begin{itemize}
	\item (experiment 1) $\ket{\psi}$ with measurements $M^{\prime}_{a} \otimes N^{\prime}_{b}$ 
	\item (experiment 2) $G^{\prime} \otimes H^{\prime} \ket{\psi}$ with measurements $M^{\prime\prime}_{a} \otimes N^{\prime\prime}_{b}$
	\item (experiment 3) $G^{\prime} \otimes I_{\mathcal{Y}} \ket{\psi}$ with measurements $M^{\prime\prime}_{a} \otimes N^{\prime}_{b}$.
\end{itemize}
Further suppose that each physical experiment simulates the corresponding reference experiment described above.  Then there exist unitaries $U_{A}, U_{B}, V_{A}, V_{B}$ and state $\ket{junk}_{\mathcal{Y}}$ such that
\begin{itemize}
	\item physical experiment 1 is equivalent to reference experiment 1 with unitaries $U_{A} \otimes U_{B}$
	\item physical experiment 2 is equivalent to reference experiment 2 with unitaries $V_{A} \otimes V_{B}$
	\item physical experiment 3 is equivalent to reference experiment 3 with unitaries $V_{A} \otimes U_{B}$
\end{itemize}
with junk state $\ket{junk}_{\mathcal{Y}}$ in each case.  Moreover,
\begin{equation}
V_{A} G^{\prime} \otimes I_{\mathcal{X}} U_{A}^{\dagger} = I_{\mathcal{Y}} \otimes T
\end{equation}
on the support of $\ket{junk}_{\mathcal{Y}^{2}}$ on $\mathcal{Y}_{A}$.
\end{theorem}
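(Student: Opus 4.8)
The plan is to obtain the equivalences for experiments 1 and 2 directly from Theorem~\ref{theorem:mayers-yao}, to glue the resulting one-sided isometries into a single isometry for experiment 3, and then to read off the gate relation. First I would note that the two reference experiments 1 and 2 coincide: since $T$ is real and $\ket{\phi}$ is a (tensor power of the) maximally entangled pair, $(A\otimes I)\ket{\phi}=(I\otimes A^{T})\ket{\phi}$ gives $T\otimes T\ket{\phi}=\ket{\phi}$, so both are the state $\ket{\phi}$ measured by $M_{a}\otimes N_{b}$ (for the $\text{CTRL-}Z$ case one applies Theorem~\ref{theorem:mayers-yao} and Lemma~\ref{lemma:anticommute_support} to each pair, or an appropriately tensored version). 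Applying Theorem~\ref{theorem:mayers-yao} to physical experiment~1 yields a local isometry $U_{A}\otimes U_{B}$ (built as in Figure~\ref{fig:epr_local_unitary_circuit} from the $X$- and $Z$-type observables among $\{M'_{a}\}$ on $A$ and $\{N'_{b}\}$ on $B$) and a state with
\begin{equation}
(U_{A}\otimes U_{B})\ket{\psi}=\ket{junk}_{\mathcal{Y}}\ket{\phi}_{\mathcal{X}},\qquad (U_{A}\otimes U_{B})\bigl((I\otimes N'_{b})\ket{\psi}\bigr)=\ket{junk}_{\mathcal{Y}}(I\otimes N_{b})\ket{\phi},
\end{equation}
and applying it to physical experiment~2 (whose state $G'\otimes H'\ket{\psi}$ simulates $\ket{\phi}$) yields $V_{A}\otimes V_{B}$ with
\begin{equation}
(V_{A}\otimes V_{B})\bigl(G'\otimes H'\ket{\psi}\bigr)=\ket{junk'}_{\mathcal{Y}}\ket{\phi}_{\mathcal{X}},\qquad (V_{A}\otimes V_{B})\bigl((M''_{a}\otimes I)G'\otimes H'\ket{\psi}\bigr)=\ket{junk'}_{\mathcal{Y}}(M_{a}\otimes I)\ket{\phi}.
\end{equation}

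Next I would treat experiment~3, with state $\ket{\psi_{3}}=(G'\otimes I)\ket{\psi}$. The structural point is that $G'$ acts only on $\mathcal{Y}_{A}$ and $H'$ only on $\mathcal{Y}_{B}$. Hence the $B$-side reduced state of experiment~3 equals $\tr[A]{\proj{\psi}{\psi}}$ (experiment~1's), and the $B$-device is the one tested in experiment~1, so by Assumption~\ref{assumption:devicesbehavesame} it implements the same $N'_{b}$; since $U_{B}$ also commutes with $G'\otimes I$, the $B$-side data above transfers verbatim to $\ket{\psi_{3}}$. Likewise the $A$-side reduced state of experiment~3 is $G'\bigl(\tr[B]{\proj{\psi}{\psi}}\bigr)G'^{\dagger}$, which is also the $A$-side reduced state of experiment~2, and the $A$-device is the one tested (post-gate) in experiment~2, so the $A$-side data transfers to $\ket{\psi_{3}}$. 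Setting $\Phi^{(3)}:=V_{A}\otimes U_{B}$ and using $U_{A}^{\dagger}U_{A}=I_{\mathcal{Y}_{A}}$ together with the commutation of $U_{B}$ with $G'\otimes I$,
\begin{equation}
\Phi^{(3)}(\ket{\psi_{3}})=(V_{A}\otimes U_{B})(G'\otimes I)\ket{\psi}=\bigl(V_{A}G'U_{A}^{\dagger}\otimes I\bigr)(U_{A}\otimes U_{B})\ket{\psi}=\bigl(K\otimes I\bigr)\bigl(\ket{junk}_{\mathcal{Y}}\ket{\phi}_{\mathcal{X}}\bigr),
\end{equation}
where $K:=V_{A}G'U_{A}^{\dagger}$ acts on $\mathcal{Y}_{A}\otimes\mathcal{X}_{A}$; an identical manipulation inserting $I\otimes N'_{b}$ or $M''_{a}\otimes I$ produces $\bigl(K\otimes \widetilde N_{b}\bigr)(\ket{junk}\ket{\phi})$ with $\widetilde N_{b}=U_{B}N'_{b}U_{B}^{\dagger}$ acting as $I_{\mathcal{Y}_{B}}\otimes N_{b}$, and similarly on the $A$ side.

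The heart of the argument is to show $K=I_{\mathcal{Y}_{A}}\otimes T$ on $(\mathrm{supp}\,\ket{junk}\text{ on }\mathcal{Y}_{A})\otimes\mathcal{X}_{A}$. Here I would combine: the $B$-side characterization from experiment~1, which through the above shows the extracted $B$-observables act as the reference $N_{b}$; the $A$-side characterization imported from experiment~2, which shows the extracted $A$-observables act as $M_{a}$; and the hypothesis that experiment~3 simulates reference experiment~3, i.e. the raw statistics are $\bra{\phi}(T^{\dagger}M_{a}T\otimes N_{b})\ket{\phi}$. Because the reference observables include the $X$- and $Z$-type Paulis on both sides, these constraints determine the extracted state up to the junk factor, and matching against $(T\otimes I)\ket{\phi}$ (a finite check for the real single-qubit unitaries and for $\text{CTRL-}Z$, using $T\otimes T\ket{\phi}=\ket{\phi}$) forces $K=I_{\mathcal{Y}_{A}}\otimes T$ on the relevant support. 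This simultaneously gives $\Phi^{(3)}(\ket{\psi_{3}})=\ket{junk}_{\mathcal{Y}}(T\otimes I)\ket{\phi}_{\mathcal{X}}$ — so experiment~3 is equivalent via $V_{A}\otimes U_{B}$ with the \emph{same} junk state — the companion identities with $M''_{a}\otimes I$ and $I\otimes N'_{b}$ inserted, and the displayed gate relation, which is literally $K=I_{\mathcal{Y}_{A}}\otimes T$ on $\mathrm{supp}\,\ket{junk}$; equivalently one substitutes $\ket{\psi}=(U_{A}^{\dagger}\otimes U_{B}^{\dagger})\ket{junk}\ket{\phi}$ into the experiment~3 equivalence, cancels the $U_{B}$'s, and uses that the $\mathcal{X}_{A}$-Schmidt vectors of $\ket{\phi}$ span $\mathcal{X}_{A}$ to strip off $\ket{junk}$ and the $\mathcal{X}_{B}$ register.

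I expect the main obstacle to be precisely this reconciliation: $V_{A}$ and $U_{B}$ come from two independent applications of Theorem~\ref{theorem:mayers-yao} with a priori unrelated junk states, so one must argue that they can be taken to coincide with a single $\ket{junk}_{\mathcal{Y}}$ and that the $A$-side characterization of experiment~2 legitimately applies inside experiment~3. This rests on the no-signalling and identical-behaviour assumptions (so the shared $A$- and $B$-devices implement the same observables across the indistinguishable experiments) together with uniqueness of purification (equal reduced states are related by an isometry on the complementary system, which here is exactly $G'$, respectively $H'$). Once this is in place, the remaining steps are linear-algebraic and follow the pattern of the proof of Theorem~\ref{theorem:mayers-yao}.
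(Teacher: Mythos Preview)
Your overall plan matches the paper's: obtain isometries for experiments~1 and~2 via the Mayers--Yao theorem, glue the one-sided pieces into $V_A\otimes U_B$ for experiment~3, and read off the gate relation $K:=V_A(G'\otimes I_{\mathcal X})U_A^\dagger=I_{\mathcal Y}\otimes T$. You also correctly flag the junk-state reconciliation as the bookkeeping obstacle; the paper handles it exactly by noting that $\ket{junk}$ and $\ket{junk_2}$ have the same Schmidt coefficients (both arise from $\ket{\psi}\ket{00}$ up to local unitaries, since $G',H'$ are local unitaries), hence differ by a local unitary on $\mathcal Y_A\otimes\mathcal Y_B$ that is absorbed into $V_A\otimes V_B$.

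The genuine gap is in the paragraph you call ``the heart of the argument'': you describe showing $K=I_{\mathcal Y}\otimes T$ as ``a finite check'' and ``matching against $(T\otimes I)\ket{\phi}$''. This is precisely where the paper does real work that your sketch does not supply. The reference observables $M_a,N_b$ lie in $\{I,X,Z,(X+Z)/\sqrt2\}$, and tensor products of these do \emph{not} span the Hermitian operators on two (or four) qubits --- all Paulis containing a $Y$ are missing. So agreement of all the accessible expectations with those of $(T\otimes I)\ket{\phi}$ is not, by linear algebra alone, enough to pin down the state. The paper closes this with two ingredients you do not invoke:
\begin{itemize}
\item Lemma~\ref{lemma:equivmeasidentonjunk}, which upgrades the state-level equivalence to the operator identity $U_A(M'_a\otimes I_{\mathcal X})U_A^\dagger=I_{\mathcal Y}\otimes M_a$ (and the analogue on $B$) on the support of $\ket{junk}$; this is what lets you translate the physical statistics of experiment~3 into expectations on the $\mathcal X$ register alone.
\item The Choi--Jamio{\l}kowski representation together with Lemmas~\ref{lemma:qubitrealunitarychoi}--\ref{lemma:qubitrealunitarychoi3}: defining the channel $\Phi$ on $\mathcal X$ by attaching $\ket{junk}$, applying $K$, and tracing out $\mathcal Y^2$, one gets $J(\Phi)=\dim(\mathcal X)\,\tr_{\mathcal Y^2}\proj{\theta}{\theta}$ and, from the simulated statistics of experiment~3, $\tr{J(\Phi)\,M_a\otimes N_b}=\tr{J(T)\,M_a\otimes N_b}$ for all $M_a,N_b\in\{I,X,Z\}$. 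The lemmas then force $J(\Phi)=J(T)$, whence $K=W\otimes T$ with $W=I_{\mathcal Y}$ on the support.
\end{itemize}
These lemmas are not routine; the paper explicitly notes that the proofs in Magniez et al.\ relied on the false claim that real density matrices lie in the span of tensor products of $\{I,X,Z\}$, and provides corrected proofs. Your ``finite check'' is exactly where this nontrivial content lives, so as written the proposal has the right architecture but is missing the key lemma that makes the argument go through.
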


The theorem compares the reference gate $T$ with the physical gate $G$, and establishes, roughly, that if $G$ simulates $T$ then it is equivalent to $T$.

\subsection{Technical background for proof}\label{sec:selftestingtechnicalbackground}

\subsubsection{Choi-Jamiolkowski representation}
We first define the Choi-Jamiolkowski representation of a linear map on operators \cite{Jamiolkowski1972275} \cite{Choi:1975:Completely-posi} and discuss its properties.  We define it as follows

\begin{definition}
Let $\Phi: L(X) \rightarrow L(Y)$ be a linear operator.  Then the \emph{Choi-Jamiolkowski representation} of $\Phi$ is given by the operator
 \[
 J(\Phi) = \sum_{x,y} \Phi(\proj{x}{y}) \otimes \proj{x}{y}
\]
operating on the space $L(Y) \otimes L(X)$
\end{definition}

An equivalent definition is
 \[
 J(\Phi)  = \Phi \otimes I_{L(X)} \left( \sum_{x, x^{\prime}} \proj{xx}{x^{\prime}x^{\prime}}\right).
\]
Thus $J(\Phi)$ can be found by applying $\Phi$ to one half of a maximally entangled state on $X \otimes X$, finding the density matrix, and multiplying by a suitable scaling factor (the dimension of $X$).

We will use the following properties of $J(\Phi)$.  

\begin{theorem}
Let $\Phi: L(X) \rightarrow L(Y)$ and $J(\Phi)$ be given.  Then
\begin{itemize}
	\item $\Phi$ is completely positive if and only if $J(\Phi)$ is positive semi-definite.
	\item $\Phi$ is trace preserving if and only if $\text{Tr}_{Y}J(\Phi) = I_{X}$.
	\item $\Phi$ is unitary if and only if $J(\Phi)$ is rank 1 and the above two conditions hold.
\end{itemize}
\end{theorem}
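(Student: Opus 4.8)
The plan is to route all three items through a single computational device: the correspondence between a linear map and its action on a maximally entangled vector. Write $\ket{\Omega} = \sum_{x}\ket{xx}$ for the unnormalized maximally entangled vector on $X \otimes X$, so that the second form of the definition reads $J(\Phi) = (\Phi \otimes I_{L(X)})(\proj{\Omega}{\Omega})$. Two facts will drive everything. First, the vectorization correspondence: every $\ket{v} \in Y \otimes X$ equals $(M \otimes I)\ket{\Omega}$ for a unique operator $M : X \to Y$, determined by $\bra{y}M\ket{x} = \braket{yx}{v}$. Second, the recovery formula $\Phi(\rho) = \tr[X]{J(\Phi)(I_{Y} \otimes \rho^{T})}$, which I would verify by expanding $J(\Phi) = \sum_{x,y}\Phi(\proj{x}{y}) \otimes \proj{x}{y}$ and checking the identity on the basis $\rho = \proj{x}{y}$, where the partial trace pairs $\proj{x}{y}$ against $\rho^{T}$ to return the coefficient $\rho_{xy}$.

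For item 1 the forward direction is immediate: if $\Phi$ is completely positive then $\Phi \otimes I_{L(X)}$ is a positive map, and since $\proj{\Omega}{\Omega}$ is positive semi-definite, so is its image $J(\Phi)$. For the converse I would invoke the spectral decomposition $J(\Phi) = \sum_{k}\proj{v_{k}}{v_{k}}$, available precisely because $J(\Phi) \geq 0$, and then use the vectorization correspondence to define operators $M_{k} : X \to Y$ by $\ket{v_{k}} = (M_{k} \otimes I)\ket{\Omega}$. Substituting this decomposition into the recovery formula and carrying out the partial trace over $X$ collapses the expression to $\Phi(\rho) = \sum_{k} M_{k}\rho M_{k}^{\dagger}$, a Kraus form that is manifestly completely positive. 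This converse is the main obstacle, as it carries the structural content of the equivalence; the forward direction and items 2 and 3 are comparatively routine once the recovery formula is in hand.

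For item 2 I would compute $\tr[Y]{J(\Phi)} = \sum_{x,y}\tr{\Phi(\proj{x}{y})}\,\proj{x}{y}$. By linearity, $\Phi$ is trace preserving if and only if $\tr{\Phi(\proj{x}{y})} = \delta_{xy}$ for all $x,y$; substituting this yields $\sum_{x}\proj{x}{x} = I_{X}$, and conversely matching the coefficient of each $\proj{x}{y}$ in the equation $\tr[Y]{J(\Phi)} = I_{X}$ recovers $\tr{\Phi(\proj{x}{y})} = \delta_{xy}$, hence trace preservation.

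For item 3, assume items 1 and 2, so $J(\Phi)$ is positive semi-definite with $\tr[Y]{J(\Phi)} = I_{X}$. If $\Phi(\rho) = U\rho U^{\dagger}$ for a unitary $U$, then $J(\Phi) = (U \otimes I)\proj{\Omega}{\Omega}(U^{\dagger} \otimes I)$, a rank-one positive operator, so $J(\Phi)$ has rank $1$. Conversely, if $J(\Phi)$ is rank $1$ and positive semi-definite, then $J(\Phi) = \proj{v}{v}$ is a single term, so the Kraus reconstruction of item 1 produces a single operator $M$ with $\Phi(\rho) = M\rho M^{\dagger}$. Trace preservation then forces $M^{\dagger}M = I_{X}$, so $M$ is an isometry; when $\dim X = \dim Y$, the setting in which the word ``unitary'' is meaningful, an isometry is unitary, and $\Phi$ is a unitary channel. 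I would flag that this final step is exactly where the equal-dimension hypothesis is used.
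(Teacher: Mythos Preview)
Your proof is correct and follows the standard route via vectorization and Kraus reconstruction. The paper itself does not give a formal proof of this theorem: it offers only brief intuition for the forward directions (CP implies PSD since $\Phi \otimes I$ preserves positivity; trace preservation means the reduced state stays maximally mixed; unitaries send pure states to pure states) and explicitly defers the converses to Watrous' lecture notes. Your argument is essentially what one finds there, so you have supplied what the paper omits rather than taken a different route.
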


For more details and rigorous proofs, see Watrous' lecture notes \cite{Watrous:2008:Characterizatio}.  We do not give a formal proof here, but instead offer some intuition.  If we interpret $J(\Phi)$ as the output from the completely positive map $\Phi \otimes I$, then it should be PSD whenever the input is.  For the second property, we start with a maximally entangled state, so tracing out one side should leave us with the completely mixed state.  For the last property, we are inputing a pure state, so the output should be pure if the map is unitary.  The converses of these statements are, of course, more difficult and we leave out the proofs.

\subsubsection{Technical Lemmas}
The following Lemma is a restriction of Lemma 5 appearing in \cite{Magniez:2006:Self-testing-of}, arXiv version.  The proof given by Magniez et al. contains an error, since real density matrices in general are \emph{not} in the span of the tensor products of $\{I, X, Z\}$.  As a concrete example, $\proj{\phi_{+}}{\phi_{+}}$ contains $Y \otimes Y$, a real matrix, in its decomposition into Paulis.  Here we give correct proofs for the case of two qubits, the case of four qubits with $T = \text{CTRL-}Z$, and the case of a tensor product of such gates on any number of qubits.  The proof for the second case could be adapted to work with other two qubit Clifford gates as well. 

\begin{lemma}\label{lemma:qubitrealunitarychoi}

Let $\sigma = \proj{\psi}{\psi}$ be a two qubit state with $\ket{\psi} = T \otimes I \ket{\Phi+}$ and $T$ a unitary having real coefficients.  Further, suppose that
\begin{equation}
\tr{\rho M \otimes N} = \tr{\sigma M \otimes N}
\end{equation}
for $M,N \in \{I, X, Z\}$.  Then $\rho = \sigma$.
\end{lemma}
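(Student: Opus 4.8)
The plan is to exploit the stabilizer structure of $\sigma$ rather than expanding $\rho$ and $\sigma$ in the Pauli basis and invoking reality (the step in \cite{Magniez:2006:Self-testing-of} that fails, precisely because a real operator such as $Y\otimes Y$ need not lie in the span of tensor products of $\{I,X,Z\}$). The key point will be that although $\sigma$ itself has a $Y\otimes Y$ component, a commuting pair of $\pm1$-observables that stabilizes $\sigma$ can be chosen from within the ``allowed'' set $\{M\otimes N : M,N\in\{I,X,Z\}\}$, and matching expectations on that pair already pins down $\rho$.

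First I would write $\ket{\psi} = (T\otimes I)\ket{\Phi+}$ and recall that $\ket{\Phi+}$ is the unique common $(+1,+1)$-eigenvector of the commuting observables $X\otimes X$ and $Z\otimes Z$ (the Bell basis is their joint eigenbasis, and the four eigenvalue pairs are distinct). Conjugating by $T\otimes I$, the state $\ket{\psi}$ is then the unique common $(+1,+1)$-eigenvector of $S_1 := (TXT^{\dagger})\otimes X$ and $S_2 := (TZT^{\dagger})\otimes Z$, which still commute and still have spectrum contained in $\{-1,+1\}$. The observation that makes the argument work is that since $T$ has real entries, $TXT^{\dagger}$ and $TZT^{\dagger}$ are real symmetric traceless $2\times 2$ matrices, hence real linear combinations of $X$ and $Z$; therefore $S_1$ is a real combination of $X\otimes X$ and $Z\otimes X$, and $S_2$ a real combination of $X\otimes Z$ and $Z\otimes Z$ --- all of these being among the operators $M\otimes N$ appearing in the hypothesis. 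In particular $\tr{\rho S_1} = \tr{\sigma S_1}$ and $\tr{\rho S_2} = \tr{\sigma S_2}$ by linearity.

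Next I would take $M=N=I$ in the hypothesis to get $\tr{\rho} = 1$, and note that $\tr{\sigma S_i} = \braket{\psi}{\psi} = 1$ since $S_i\ket{\psi}=\ket{\psi}$. Hence $\tr{\rho(I - S_i)} = 0$ with $\rho \geq 0$ and $I - S_i \geq 0$, which forces $\rho$ to be supported on the $+1$-eigenspace of $S_i$ (the standard argument: $\rho^{1/2}(I - S_i)\rho^{1/2} = 0$, so $(I-S_i)\rho^{1/2} = 0$, so $S_i\rho = \rho$). Applying this for $i=1,2$ and using that $S_1, S_2$ commute, $\rho$ is supported on the intersection of the two $+1$-eigenspaces, which is the one-dimensional span of $\ket{\psi}$; a density operator supported on a one-dimensional subspace is the rank-one projector onto it, so $\rho = \proj{\psi}{\psi} = \sigma$.

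I do not expect a serious obstacle. The two facts that need a line of care are (i) that conjugation by the real unitary $T$ sends $X$ and $Z$ into $\mathrm{span}_{\mathbb{R}}\{X,Z\}$, because it preserves the subspace of real symmetric traceless matrices, and (ii) that $X\otimes X$ and $Z\otimes Z$ have the Bell basis as a common eigenbasis with distinct eigenvalue pairs, so their joint $(+1,+1)$-eigenspace is one-dimensional; everything else is the elementary ``a state attaining the maximal eigenvalue of a $\pm1$ observable lies in its top eigenspace'' fact. It is worth flagging explicitly that the statement presumes $\rho$ is a genuine density operator: positivity is essential, since otherwise one could add any multiple of $Y\otimes Y$ to $\sigma$ without changing the listed expectation values.
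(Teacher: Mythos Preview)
Your argument is correct and is genuinely different from the paper's proof. The paper expands both $\rho$ and $\sigma$ in the two-qubit Pauli basis, uses the positivity inequality $\tr{\rho\, U\sigma U^{\dagger}}\ge 0$ with the specific choice $U=Y\otimes I$ to force $\rho_{Y\otimes Y}=\sigma_{Y\otimes Y}$, and then finishes with a purity count $\tr{\rho^2}\le 1$ to kill the remaining coefficients. Your route instead identifies two commuting $\pm1$ observables $S_1=(TXT^{\dagger})\otimes X$ and $S_2=(TZT^{\dagger})\otimes Z$ that stabilize $\ket{\psi}$, observes that reality of $T$ keeps each $S_i$ in $\mathrm{span}_{\mathbb R}\{M\otimes N:M,N\in\{X,Z\}\}$, and then uses the elementary ``$\tr{\rho(I-S_i)}=0$ with $\rho\ge0$ and $I-S_i\ge0$'' step to trap $\rho$ in the one-dimensional joint $(+1,+1)$ eigenspace.

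Your stabilizer approach is shorter, makes the role of the reality hypothesis completely transparent (it is exactly what keeps $TXT^{\dagger},TZT^{\dagger}$ inside $\mathrm{span}\{X,Z\}$), and avoids tracking all sixteen Pauli coefficients. It also generalizes cleanly to the $\text{CTRL-}Z$ case treated in the paper's next lemma: the conjugated stabilizers $(\text{CTRL-}Z)\,X_j\,(\text{CTRL-}Z)$ and $(\text{CTRL-}Z)\,Z_j\,(\text{CTRL-}Z)$ are again products of $I,X,Z$, so four commuting stabilizers in the allowed span pin down the four-qubit state by the same argument. The paper's method, by contrast, handles that case by a longer explicit computation with several well-chosen $U$'s. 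Your closing remark that positivity of $\rho$ is essential is apt; the paper uses it as well, though without flagging it.
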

\begin{proof}
We first fix some notation.  Let U and V be 2-qubit Pauli operators.  Let $U \cdot V$ be defined by
\begin{equation}
U \cdot V = 
\begin{cases}
	1 & U\, \text{and}\, V\, \text{commute} \\
	-1 & U\, \text{and}\, V\, \text{anti-commute.}\\
\end{cases}
\end{equation}
Also, for density operator $\rho$ (and analogously for $\sigma$ and other density operators) define $\rho_{U}$ by
\begin{equation}
\rho_{U} = \tr{\rho U}
\end{equation}
and analogously for other density operators.  Note that $|\rho_{U}| \leq 1$.

We will make use of the following observations about $\sigma$, which can be easily verified from the fact that $T$ is unitary and real.  For $M$ and $N$ Pauli operators we find
\begin{itemize}
	\item $\displaystyle\sum_{M, N \in{I,X,Y,Z}} \sigma_{M \otimes N}^{2} = 4$
	\item $\sigma_{Y \otimes M} = \sigma_{M \otimes Y}= 0$ when $M \neq Y$
	\item $|\sigma_{Y\otimes Y}| = 1$
	\item $\sigma_{M \otimes I} = \sigma_{I \otimes M} = 0$ for all $M \neq I$.
\end{itemize}

Since $\sigma$ and $\rho$ are positive semi-definite, we have
 \[
 \tr{\rho U \sigma U^{\dagger}} \geq 0
\]
for any unitary $U$.  We may write this, using the notation above, as
 \[
 \sum_{P \in \{I,X,Y,Z\}^{\otimes 2}}\rho_{P} \sigma_{P} (U \cdot P) \geq 0.
\]
With the choice $U = Y \otimes I\, $Êwe find
 \[
 1 - \sum_{M,N \in \{X,Z\}} \rho_{M \otimes N} \sigma_{M \otimes N}  + \rho_{Y\otimes Y} \sigma_{Y \otimes Y} = -1 + \rho_{Y \otimes Y} \sigma_{Y \otimes Y}\geq 0.
\]
This is obtained by removing 0 terms and noting that $\sum_{M,N \in \{X,Z\}} \sigma^{2}_{M \otimes N} = 2$.  The implication is that $\sigma_{Y \otimes Y} = \rho_{Y \otimes Y}$ and hence $\sigma_{M \otimes N} = \rho_{M \otimes N}$ whenever $\sigma_{M \otimes N}$ is not $0$.

Let $S$ be the set of Pauli operators $M \otimes N$ for which $\sigma_{M \otimes N} \neq 0$.  Then, since $\sigma_{M \otimes N} = \rho_{M \otimes N}$ for all $M \otimes N \in S$ we have
 \[
 \tr{\rho^{2}} = 1 + \frac{1}{4}\sum_{M \otimes N \notin S} \rho^{2}_{M \otimes N} \leq 1
\]
since $\rho$ has trace 1.  This immediately implies that $\rho_{M \otimes N} = 0$ for $M \otimes N \notin S$ and hence $\rho_{M \otimes N} = \sigma_{M \otimes N}$ for all $M \otimes N$.  Recall that the 2-fold tensor products of the Pauli operators form a basis for the space of 2-qubit states.  Thus $\rho = \sigma$.
\end{proof}
\begin{lemma}\label{lemma:qubitrealunitarychoi2}

Let $\sigma = \proj{\psi}{\psi}$ be a four qubit state with 
\begin{equation}
\ket{\psi} = \text{CTRL-}Z \otimes I_{B} \left(\frac{1}{2}\sum_{x \in \{0,1\}^{2}}\ket{x}_{A}\ket{x}_{B}\right)
\end{equation}
Further, suppose that
\begin{equation}
\tr{\rho M \otimes N} = \tr{\sigma M \otimes N}
\end{equation}
for $M,N \in \{I, X, Z\}$.  Then $\rho = \sigma$.
\end{lemma}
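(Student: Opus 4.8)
The plan is to mimic the structure of the proof of Lemma~\ref{lemma:qubitrealunitarychoi}. First I would write out the Pauli expansion of $\sigma$. Since $\ket{\psi}$ is obtained by applying the Clifford gate $\text{CTRL-}Z\otimes I$ to the canonical maximally entangled state $\tfrac12\sum_{x\in\{0,1\}^2}\ket{x}_{A}\ket{x}_{B}$, it is a four-qubit stabilizer state: exactly $16$ of the $256$ Pauli coefficients $\sigma_{P}=\tr{\sigma P}$ are non-zero, each equal to $\pm1$, and they are precisely the elements of the stabilizer group $\mathcal{S}$ of $\ket{\psi}$ (the conjugate by $\text{CTRL-}Z$ of the stabilizer group of the maximally entangled state). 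A short computation --- or the remark that a real stabilizer state can have no stabilizer containing an odd number of $Y$'s --- shows that $8$ of these $16$ stabilizers lie in $\{I,X,Z\}^{\otimes 4}$ while the other $8$ each contain exactly two $Y$'s. The hypothesis $\tr{\rho P}=\tr{\sigma P}$ for $P\in\{I,X,Z\}^{\otimes 4}$ therefore already gives $\rho_{P}=\sigma_{P}$ for the $8$ $Y$-free stabilizers (and $\rho_{P}=0=\sigma_{P}$ for every $Y$-free Pauli not in $\mathcal{S}$); all that remains is to pin down the eight coefficients $\rho_{P}$ at the two-$Y$ stabilizers.

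For this I would reuse the device of Lemma~\ref{lemma:qubitrealunitarychoi}: for a Pauli $U$, positivity of $\rho$ and $\sigma$ gives $0\le\tr{\rho\,U\sigma U^{\dagger}}=\tfrac{1}{16}\sum_{g\in\mathcal{S}}(U\cdot g)\,\sigma_{g}\rho_{g}$, and substituting the known values on the $Y$-free part of $\mathcal{S}$ turns this into a linear inequality in the eight unknowns. In the two-qubit case a single $U=Y\otimes I$ sufficed, because $YY$ was the only $Y$-containing stabilizer; here I would choose, for each two-$Y$ stabilizer $R$, a Pauli $U_{R}$ that commutes with $R$ while anti-commuting with as large a part of the $Y$-free stabilizer set $S_{0}$ as possible, so that after cancellation the inequality forces $\sigma_{R}\rho_{R}=1$, hence $\rho_{R}=\sigma_{R}$ since $|\rho_{R}|\le1$ and $\sigma_{R}=\pm1$. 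If no single inequality is rigid enough, one can instead pin the coefficients in a good order --- each pinned coefficient joins the ``known'' set, so later inequalities have fewer free variables --- or add several of the inequalities $\tr{\rho\,U_{i}\sigma U_{i}^{\dagger}}\ge 0$ with a compatible choice of signs. In every version the bookkeeping reduces to analysing the commutation characters $g\mapsto U\cdot g$ on the abelian group $\mathcal{S}\cong\mathbb{Z}_{2}^{4}$.

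Once $\rho_{g}=\sigma_{g}$ for all $16$ stabilizers, the proof closes exactly as in Lemma~\ref{lemma:qubitrealunitarychoi}: $\tr{\rho^{2}}=\tfrac{1}{16}\sum_{P}\rho_{P}^{2}=1+\tfrac{1}{16}\sum_{P\notin\mathcal{S}}\rho_{P}^{2}\le1$ since $\rho$ is a density matrix, so $\rho_{P}=0=\sigma_{P}$ for every Pauli $P\notin\mathcal{S}$, and therefore $\rho=\sigma$.

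I expect the middle step to be the real obstacle. Unlike the two-qubit case, the eight $Y$-free stabilizers do not form a subgroup; they satisfy several relations $g_{1}g_{2}=g_{3}$ with all three $Y$-free, and consequently no single Pauli can anti-commute with more than five of them (every index-two subgroup of $\mathcal{S}$ meets $S_{0}$ in at least three elements). This caps how negative the $\sum_{g\in S_{0}}(U\cdot g)$ contribution can be, so a single inequality is generally not rigid enough and the eight two-$Y$ coefficients have to be pinned by a carefully chosen family of $U$'s (or a careful ordering). Exhibiting such a family is the $\text{CTRL-}Z$-specific computation that the lemma is really about --- and, as the surrounding text notes, the part that would have to be redone for a different two-qubit Clifford gate.
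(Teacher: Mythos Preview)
Your plan is correct and is essentially the paper's proof. The paper carries out exactly the two fallback moves you anticipate: it first \emph{sums} the four inequalities coming from $U\in\{(Y\otimes I)_{A},(I\otimes Y)_{A},(Y\otimes I)_{B},(I\otimes Y)_{B}\}$ to force four of the two-$Y$ coefficients to $1$ simultaneously, and then, with those now known, a single further choice $U=(Z\otimes Z)_{A}$ pins the remaining four; the $\tr{\rho^{2}}\le 1$ argument closes it just as you describe.
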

\begin{proof}
The proof follows the same plan as for the two qubit version.  To begin, we examine the effect of $\text{CTRL-}Z$ on the two qubit Paulis.  Define $C(P)$ to be $\text{CTRL-}Z( P) \text{CTRL-}Z$ for two qubit Pauli $P$.  This is summarized in the following table:
\begin{equation}
\begin{array}{c|cccc}
 C(\cdot)& I & X & Y & Z \\
\hline \\
I & II & ZX & ZY & IZ \\
X & XZ & YY & YX & XI \\
Y & YZ & XY & XX & YI \\
Z & ZI & IX & IY & ZZÊ\\
\end{array}
\end{equation}
(It may benefit the reader to note that the table is symmetric and $C(\cdot)$ is its own inverse, since $\text{CTRL-}Z$ has these properties.)  The state $\frac{1}{2}\sum_{x \in \{0,1\}^{2}}\ket{x}_{A}\ket{x}_{B}$, written as a density operator, is
\begin{equation}
\frac{1}{16}\sum_{M, N \in \{I,X,Y,Z\}} (-1)^{\delta_{M,Y} + \delta_{N,Y}}(M \otimes N)_{A} \otimes (M \otimes N)_{B}.
\end{equation}
Note the coefficient, which is $-1$ when exactly one of $M$ and $N$ is $Y$.  We find that $\sigma$ is given by
\begin{equation}
\sigma = \frac{1}{16}\sum_{M, N \in \{I,X,Y,Z\}} (-1)^{\delta_{M,Y} + \delta_{N,Y}}C(M \otimes N)_{A} \otimes (M \otimes N)_{B}
\end{equation}
and hence
\begin{equation}
|\sigma_{(M^{\prime} \otimes N^{\prime})_{A} \otimes (M \otimes N)_{B}}| = \delta_{ M^{\prime} \otimes N^{\prime},C(M \otimes N)}.
\end{equation}
As for the two qubit case, we have
\[
 \tr{\rho U \sigma U^{\dagger}} \geq 0
\]
for any unitary $U$.  Dropping the zero terms and subbing in, we may write 
 \[
 \sum_{M \otimes N} (-1)^{\delta_{M,Y} + \delta_{N,Y}}\rho_{C(M \otimes N)_{A} \otimes (M \otimes N)_{B}} \geq 0.
\]
Define
\begin{equation}
R(M \otimes N) = (-1)^{\delta_{M,Y} + \delta_{N,Y}}\rho_{C(M \otimes N)_{A} \otimes (M \otimes N)_{B}}
\end{equation}
in which case we find
\begin{equation}
\sum_{M,N} R(M \otimes N) (U \cdot C(M \otimes N) \otimes (M \otimes N)) \geq 0.
\end{equation}
We will prove that $R(M \otimes N) = 1$, in which case we have $\rho_{(M \otimes N)_{A} \otimes C(M \otimes N)_{B}} = \sigma_{C(M \otimes N)_{A} \otimes (M \otimes N)_{B}}$.  Note that this is given by the conditions of the Lemma in the cases where $C(M \otimes N) \otimes (M \otimes N)$ is a tensor product of $\{I, X, Z\}$.  This occurs for $M \otimes N \in \{ I\otimes I, I\otimes X, X\otimes I, Z\otimes I, I\otimes Z, Z\otimes Z, X\otimes Z, Z\otimes X\}$.  

We find the inequalities with $U = (Y \otimes I)_{A}$, $U = (I \otimes Y)_{A}$, $U = (Y \otimes I)_{B}$, and $U = (I \otimes Y)_{B}$ and sum them, dividing by 4.  We obtain
\begin{equation}
R(I \otimes Y) + R(Y \otimes I) + R(Y \otimes Z) + R(Z \otimes Y) \geq 4.
\end{equation}
Since all four values $R$ values appearing cannot exceed $1$, we find that they are all equal to 1.  Next we find the inequality for $U = (Z \otimes Z)_{A}$.  For the four remaining unknown $R$ values we obtain
\begin{equation}
R(X\otimes X) + R(Y \otimes Y) + R(X \otimes Y) + R(Y \otimes X) \geq 4
\end{equation}
and again all four $R$ values must be 1.

We now have $\sigma_{C(M \otimes N)_{A} \otimes (M \otimes N)_{B}} =\rho_{C(M \otimes N)_{A} \otimes (M \otimes N)_{B}}$.  There are 16 such terms.  By examining the trace of the squares of $\sigma$ and $\rho$, and following the same reasoning as in the two qubit case, we conclude that the remaining terms must all be 0.  Thus $\sigma = \rho$.

\end{proof}
\begin{lemma}\label{lemma:qubitrealunitarychoi3}

Let $\sigma = \proj{\psi}{\psi}$ be a $2n$ qubit state with $\ket{\psi} = T_{A} \otimes I_{B} \ket{\Phi+}^{\otimes n}$ and $T$ a tensor product of gates that are either single qubit real unitaries or $\text{CTRL-}Z$.  Further, suppose that
\begin{equation}
\tr{\rho M \otimes N} = \tr{\sigma M \otimes N}
\end{equation}
where $M,N \in \{I, X, Z\}^{\otimes n}$ are one of the following
\begin{itemize}
	\item For a single qubit gate in the tensor product, $M$ and $N$ measure that qubit and its pair
	\item For a two qubit gate in the tensor product, $M$ and $N$ measure those qubits and their pairs.
\end{itemize}
Note that $M$ or $N$ may be the identity (measuring marginals).  Then $\rho = \sigma$.
\end{lemma}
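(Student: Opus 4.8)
The plan is to factor everything across the blocks of $T$ and reduce to Lemmas~\ref{lemma:qubitrealunitarychoi} and~\ref{lemma:qubitrealunitarychoi2}. Write $T = T^{(1)} \otimes \dots \otimes T^{(m)}$, where each $T^{(i)}$ is a single-qubit real unitary or a $\text{CTRL-}Z$, and split the $2n$ qubits into blocks $B_{1}, \dots, B_{m}$, with $B_{i}$ consisting of the qubit(s) acted on by $T^{(i)}$ together with their EPR partner(s) on the $B$ side. Because $\ket{\Phi+}^{\otimes n}$ is already a product across the $B_{i}$ and $T_{A} \otimes I_{B}$ acts within each block, $\ket{\psi} = \bigotimes_{i=1}^{m} \ket{\psi^{(i)}}$, where $\ket{\psi^{(i)}}$ is either $T^{(i)} \otimes I \ket{\Phi+}$ (two qubits) or $\text{CTRL-}Z \otimes I \cdot \tfrac{1}{2}\sum_{x}\ket{x}\ket{x}$ (four qubits). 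Hence $\sigma = \bigotimes_{i} \sigma^{(i)}$ with each $\sigma^{(i)} = \proj{\psi^{(i)}}{\psi^{(i)}}$ \emph{pure}.

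First I would identify the block marginals of $\rho$. Fix $i$ and let $\rho^{(i)}$ be the reduced state of $\rho$ on $B_{i}$. The measurements allowed in the hypothesis include, for this block, every operator of the form $(M' \otimes N') \otimes I$, where $M' \otimes N'$ runs over the product-of-$\{I,X,Z\}$ measurements on $B_{i}$ used in the relevant earlier lemma and the identity sits on all other blocks. For such an operator $\tr{\rho\,(M'\otimes N')\otimes I} = \tr{\rho^{(i)}(M'\otimes N')}$, and since $\sigma$ is a tensor product of normalized states also $\tr{\sigma\,(M'\otimes N')\otimes I} = \tr{\sigma^{(i)}(M'\otimes N')}$. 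Thus $\tr{\rho^{(i)}(M'\otimes N')} = \tr{\sigma^{(i)}(M'\otimes N')}$ for all the measurements required by Lemma~\ref{lemma:qubitrealunitarychoi} (single-qubit block) or Lemma~\ref{lemma:qubitrealunitarychoi2} ($\text{CTRL-}Z$ block), and that lemma gives $\rho^{(i)} = \sigma^{(i)}$.

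Finally I would promote ``every block marginal of $\rho$ equals that of $\sigma$'' to $\rho = \sigma$ using purity. Since $\rho^{(1)} = \sigma^{(1)}$ is pure, $\rho$ is unentangled across the cut $B_{1} \,|\, B_{2}\dots B_{m}$ (a state whose reduced state on a subsystem is pure is a tensor product across that subsystem), so $\rho = \sigma^{(1)} \otimes \tilde{\rho}$ with $\tilde{\rho}$ the marginal of $\rho$ on $B_{2}\dots B_{m}$. The marginal of $\tilde{\rho}$ on $B_{2}$ is still $\sigma^{(2)}$, again pure, so $\tilde{\rho} = \sigma^{(2)} \otimes (\text{marginal on } B_{3}\dots B_{m})$; iterating, $\rho = \bigotimes_{i}\sigma^{(i)} = \sigma$.

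The genuinely load-bearing inputs are the two earlier lemmas and the elementary ``pure marginal $\Rightarrow$ product'' fact (immediate from purifying $\rho$). What needs care, rather than being hard, is the bookkeeping: checking that the EPR pairing respects the block structure so that $\ket{\psi}$ really factors, and that each sub-lemma's measurement set, padded with identities on the other blocks, lies inside the measurement set named in the hypothesis. The mildly delicate point is phrasing the final induction; once the product form of $\sigma$ and the purity of the $\sigma^{(i)}$ are established it is routine.
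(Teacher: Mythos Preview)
Your proposal is correct and is essentially the same argument as the paper's: partition the $2n$ qubits into blocks according to the tensor factors of $T$, trace out to get block marginals, apply Lemma~\ref{lemma:qubitrealunitarychoi} or Lemma~\ref{lemma:qubitrealunitarychoi2} to each marginal, and then use purity of the $\sigma^{(i)}$ to conclude $\rho$ is the tensor product of its block marginals. Your write-up is simply more explicit about the bookkeeping (the EPR pairing respecting the block structure, and the inductive ``pure marginal $\Rightarrow$ product'' step) than the paper's two-sentence proof.
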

Note that the EPR pairs are each divided into $A$ and $B$ sides and $T$ is applied to the $n$ $A$ side qubits.
\begin{proof}
Divide up $\rho$ and $\sigma$ into pairs or fours of qubits, depending on whether the gate is a single qubit gate or a $\text{CTRL-}Z$.  For each piece trace out the remaining qubits and apply the appropriate two or four qubit version of the Lemma.  Then, since the reduced qubit on each piece is pure, $\rho$ and $\sigma$ are both the tensor product of the reduced qubits.
\end{proof}
\begin{futurework}
Extend the previous Lemma to two-qubit Clifford gates, and ultimately all real gates.
\end{futurework}

Like the previous Lemma, a similar Lemma to the following appears in \cite{Magniez:2006:Self-testing-of} (arXiv version) as Lemma 6.  Also like the previous Lemma, the proof given by Magniez et al. relies on the claim that the real density matrices are in the span of the tensor products of $\{I, X, Z\}$.  We give a correct proof that uses the Choi-Jamiolkowski representation.  The technique used will be applied several times in the remainder of this chapter.

\begin{lemma}\label{lemma:equivmeasidentonjunk}
Suppose that a physical experiment with bipartite state $\ket{\psi} \in \mathcal{Y}^{2}$ and with measurements $M^{\prime}_{a} \otimes N^{\prime}_{b}$ ($M_{0} = N_{0} = I_{\mathcal{Y}}$) is equivalent to the reference experiment with a maximally entangled state $\ket{\phi} \in \mathcal{X}^{2}$ and measurements $M_{a}Ê\otimes N_{b}$ ($M_{0} = N_{0} = I_{\mathcal{X}}$).  Then there exist unitaries $U_{A}, U_{B} \in U(\mathcal{Y} \otimes \mathcal{X})$ such that
\begin{equation}\label{eq:lemmaequivmeasidentonjunk1}
U_{A} \otimes U_{B} \left( M^{\prime}_{a} \otimes N^{\prime}_{b}\right) \ket{\psi}_{\mathcal{Y}} \ket{00}_{\mathcal{X}}
= \ket{junk}_{\mathcal{Y}^{2}} M_{a} \otimes N_{b} \ket{\phi}_{\mathcal{X}}
\end{equation}
and
\begin{equation}\label{eq:lemmaequivmeasidentonjunk2}
U_{A} \left(M^{\prime}_{a} \otimes I_{\mathcal{X}}\right) U^{\dagger}_{A} = I_{\mathcal{Y}} \otimes M_{a}
\end{equation}
when confined to the support of $\ket{junk}_{\mathcal{Y}^{2}}$ on $\mathcal{Y}_{A}$, and analogously for $N^{\prime}_{b}$.
\end{lemma}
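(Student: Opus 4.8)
The plan is to unpack equivalence into concrete operator data, extract the operator identity~(\ref{eq:lemmaequivmeasidentonjunk2}) first, and then use it to assemble the joint state identity~(\ref{eq:lemmaequivmeasidentonjunk1}); the only substantive step is a Choi--Jamiolkowski-style argument that converts a state identity into an operator identity, and it is there that maximal entanglement of the reference state is used. By Definition~\ref{def:equivalence} there is a local isometry $\Phi = \Phi_{A} \otimes \Phi_{B}$ with $\Phi_{j} : \mathcal{Y}_{j} \to \mathcal{Y}_{j} \otimes \mathcal{X}_{j}$; as noted before the definition each $\Phi_{j}$ may be written $\Phi_{j}(\,\cdot\,) = U_{j}(\,\cdot\, \otimes \ket{0}_{\mathcal{X}_{j}})$ for a unitary $U_{j}$ on $\mathcal{Y}_{j} \otimes \mathcal{X}_{j}$ (any isometry into $\mathcal{Y}_{j} \otimes \mathcal{X}_{j}$ differs from $\,\cdot\, \otimes \ket{0}$ by a unitary on the target), and these $U_{A}, U_{B}$ are the unitaries of the statement. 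Taking $M^{\prime}_{0} = N^{\prime}_{0} = I_{\mathcal{Y}}$, the equivalence relations read $(U_{A} \otimes U_{B})\ket{\psi}_{\mathcal{Y}}\ket{00}_{\mathcal{X}} = \ket{junk}_{\mathcal{Y}^{2}}\ket{\phi}_{\mathcal{X}}$, $(U_{A} \otimes U_{B})(M^{\prime}_{a} \otimes I_{\mathcal{Y}_{B}})\ket{\psi}_{\mathcal{Y}}\ket{00}_{\mathcal{X}} = \ket{junk}_{\mathcal{Y}^{2}}(M_{a})_{\mathcal{X}_{A}}\ket{\phi}_{\mathcal{X}}$, and the mirror image with $N^{\prime}_{b}$ on $\mathcal{Y}_{B}$.

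For~(\ref{eq:lemmaequivmeasidentonjunk2}), conjugate the $A$-side relation by $U_{A} \otimes U_{B}$ and substitute the state relation: the operator $O := U_{A}(M^{\prime}_{a} \otimes I_{\mathcal{X}_{A}})U_{A}^{\dagger} - I_{\mathcal{Y}_{A}} \otimes (M_{a})_{\mathcal{X}_{A}}$ on $\mathcal{Y}_{A} \otimes \mathcal{X}_{A}$ annihilates $\ket{junk}_{\mathcal{Y}^{2}}\ket{\phi}_{\mathcal{X}}$. Hence $0 = \| O\ket{junk}_{\mathcal{Y}^{2}}\ket{\phi}_{\mathcal{X}} \|^{2} = \tr{O^{\dagger}O\,\rho}$ where $\rho$ is the reduced state of $\ket{junk}_{\mathcal{Y}^{2}}\ket{\phi}_{\mathcal{X}}$ on $\mathcal{Y}_{A} \otimes \mathcal{X}_{A}$; since $O^{\dagger}O \geq 0$ and $\rho \geq 0$ this forces $O$ to vanish on the support of $\rho$. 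Because $\ket{junk}$ and $\ket{\phi}$ occupy disjoint registers, $\rho$ factors as $\rho^{junk}_{\mathcal{Y}_{A}} \otimes \tr[\mathcal{X}_{B}]{\proj{\phi}{\phi}}$, and as $\ket{\phi}$ is maximally entangled its $\mathcal{X}_{A}$-marginal is full rank, so the support of $\rho$ is precisely (the support of $\ket{junk}_{\mathcal{Y}^{2}}$ on $\mathcal{Y}_{A}$) $\otimes\, \mathcal{X}_{A}$. This is exactly~(\ref{eq:lemmaequivmeasidentonjunk2}), and the argument on the $B$ side is identical.

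For~(\ref{eq:lemmaequivmeasidentonjunk1}), apply $U_{A} \otimes U_{B}$ to $(M^{\prime}_{a} \otimes N^{\prime}_{b})\ket{\psi}_{\mathcal{Y}}\ket{00}_{\mathcal{X}}$ and conjugate as above to rewrite it as $\big[U_{A}(M^{\prime}_{a} \otimes I_{\mathcal{X}_{A}})U_{A}^{\dagger}\big]\big[U_{B}(N^{\prime}_{b} \otimes I_{\mathcal{X}_{B}})U_{B}^{\dagger}\big]\ket{junk}_{\mathcal{Y}^{2}}\ket{\phi}_{\mathcal{X}}$. The $\mathcal{Y}_{B} \otimes \mathcal{X}_{B}$-marginal of $\ket{junk}_{\mathcal{Y}^{2}}\ket{\phi}_{\mathcal{X}}$ lies in (support of junk on $\mathcal{Y}_{B}$) $\otimes\, \mathcal{X}_{B}$, so by~(\ref{eq:lemmaequivmeasidentonjunk2}) the $B$-factor acts there as $I_{\mathcal{Y}_{B}} \otimes (N_{b})_{\mathcal{X}_{B}}$; this commutes past the $A$-factor (disjoint registers), and since applying $(N_{b})_{\mathcal{X}_{B}}$ cannot enlarge the $\mathcal{Y}_{A} \otimes \mathcal{X}_{A}$-marginal, (\ref{eq:lemmaequivmeasidentonjunk2}) again lets the $A$-factor be replaced by $I_{\mathcal{Y}_{A}} \otimes (M_{a})_{\mathcal{X}_{A}}$. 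What is left is $\ket{junk}_{\mathcal{Y}^{2}}(M_{a} \otimes N_{b})\ket{\phi}_{\mathcal{X}}$, which is the claim.

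The one thing to be careful about is the support bookkeeping: an identity between a physical observable and a reference observable holds only after restriction to the support of the junk on the relevant factor, so throughout I would track which subspace is in play and use repeatedly the elementary fact that acting with an operator on one subsystem never enlarges the reduced support on a disjoint subsystem. This is tedious but not deep; the mathematical weight sits entirely in the short Choi--Jamiolkowski step of the second paragraph, and everything else is conjugation by $U_{A} \otimes U_{B}$ and reordering of commuting operators. I do not foresee a genuine obstacle beyond keeping the tensor-factor accounting honest.
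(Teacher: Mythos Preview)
Your proof is correct and rests on the same core idea as the paper's: the maximally entangled reference state has a full-rank marginal on $\mathcal{X}_{A}$, so the state identity $\big(U_{A}(M'_{a}\otimes I_{\mathcal{X}_{A}})U_{A}^{\dagger}\big)\ket{junk}\ket{\phi} = (I_{\mathcal{Y}_{A}}\otimes M_{a})\ket{junk}\ket{\phi}$ upgrades to an operator identity on the support of the junk. The paper phrases this step explicitly in Choi--Jamiolkowski language (defining a channel on $\mathcal{X}$, observing its Choi matrix coincides with $J(M_{a})$, and reading off the tensor form $W\otimes M_{a}$ with $W=I$ on the junk support), whereas you use the equivalent but more elementary $\tr{O^{\dagger}O\rho}=0 \Rightarrow O|_{\mathrm{supp}\,\rho}=0$ argument; these are the same mechanism.

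The one structural difference is order: the paper asserts that~(\ref{eq:lemmaequivmeasidentonjunk1}) for the \emph{product} $M'_{a}\otimes N'_{b}$ follows directly from Definition~\ref{def:equivalence} and then derives~(\ref{eq:lemmaequivmeasidentonjunk2}) from it, while you establish~(\ref{eq:lemmaequivmeasidentonjunk2}) first from the single-sided equivalence relations and then assemble the product identity~(\ref{eq:lemmaequivmeasidentonjunk1}) from two applications of~(\ref{eq:lemmaequivmeasidentonjunk2}). Since Definition~\ref{def:equivalence} is stated for observables acting on a single part, your route is arguably the more careful one for the product case; the paper's ``follows directly'' is terse on exactly this point. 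Either way the mathematical content is the same, and your support-bookkeeping in the last paragraph is the right thing to track.
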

\begin{proof}
Equation~\ref{eq:lemmaequivmeasidentonjunk1} follows directly from the definition of equivalence and a straightforward extension of the local isometry.  From this fact we make the following observations:
\begin{equation}
\ket{\psi}\ket{00} = (M^{\prime}_{a} \otimes I_{\mathcal{X}^2})^{\dagger} U_{A}^{\dagger}Ê\otimes U_{B}^{\dagger} \ket{junk}_{\mathcal{Y}^{2}} M_{a} \otimes I_{\mathcal{X}_B} \ket{\phi}_{\mathcal{X}}.
\end{equation}
Applying equation~\ref{eq:lemmaequivmeasidentonjunk1} with $M_{0}$ and $N_{0}$, we find
\begin{equation}
\ket{junk}_{\mathcal{Y}^{2}}\ket{\phi}_{\mathcal{X}}
= U_{A} \otimes U_{B} (M^{\prime}_{a} \otimes I_{\mathcal{Y}_{B}} \otimes I_{\mathcal{X}^2} )^{\dagger} U_{A}^{\dagger}Ê\otimes U_{B}^{\dagger} \ket{junk}_{\mathcal{Y}^{2}} T_{a} \otimes I_{\mathcal{X}_B} \ket{\phi}_{\mathcal{X}}
\end{equation}
and hence
\begin{equation}
U_{A}\left(M^{\prime}_{a} \otimes I_{\mathcal{X}}\right) U_{A}^{\dagger} \ket{junk}_{\mathcal{Y}^{2}}\ket{\phi}_{\mathcal{X}}
= \ket{junk}_{\mathcal{Y}^{2}} M_{a} \otimes I_{\mathcal{X}} \ket{\phi}_{\mathcal{X}}.
\end{equation}
We now introduce a technique which will be quite useful.  Let $\Phi$ be the quantum operation on $\mathcal{X}$ defined by adding ancilla in the state $\ket{junk}_{\mathcal{Y}^{2}}$, applying $U_{A}\left(M^{\prime}_{a} \otimes I_{\mathcal{X}}\right) U_{A}^{\dagger}$ and finally tracing out the $\mathcal{Y}^{2}$ register.  Note that the above equation has (after tracing out the $\mathcal{Y}^{2}$ register) left side equal to $J(\Phi)/ \dim(\mathcal{X})$ and the right side (after tracing out the $\mathcal{Y}^{2}$ register) is $J(M_{a}) / \dim (\mathcal{X})$ (abusing notation a little).  From this we may conclude that $\Phi$ equates to simply applying $T_{a}$ and is hence unitary.  The operator $U_{A}\left(M^{\prime}_{a} \otimes I_{\mathcal{X}}\right) U_{A}^{\dagger}$ must then have the form $W_{\mathcal{Y}} \otimes M_{a}$ (when restricted to the support of $\ket{junk}_{\mathcal{Y}^{2}}$) and since the $\mathcal{Y}$ register remains in the same state, we conclude that $W$ is the identity.   

We apply the same reasoning and obtain the analogous result for $N_{b}$.
\end{proof}
\subsection{Proof of main Theorem}

We begin by applying Lemma~\ref{lemma:equivmeasidentonjunk} twice, to obtain 
\begin{equation}
U_{A} \otimes U_{B} M^{\prime}_{a} \otimes N_{b}^{\prime} \ket{\psi} \ket{00} = \ket{junk}_{\mathcal{Y}^{2}}M_{a} \otimes N_{b} \ket{\phi}
\end{equation}
\begin{equation}
U_{A} (M^{\prime}_{a}\otimes I_{\mathcal{Y}}) U_{A}^{\dagger} = I_{\mathcal{Y}} \otimes M_{a},\,\,  
U_{B} (N^{\prime}_{b}\otimes I_{\mathcal{Y}}) U_{B}^{\dagger} = I_{\mathcal{Y}} \otimes N_{b}
\end{equation}
when confined to the support of $\ket{junk}_{\mathcal{Y}^{2}}$ on the appropriate side, and
\begin{equation}
V_{A} \otimes V_{B} M^{\prime\prime}_{a} \otimes N_{b}^{\prime\prime} G^{\prime} \otimes H^{\prime}\ket{\psi} \ket{00} = \ket{junk_{2}}_{\mathcal{Y}^{2}}M_{a} \otimes N_{b} \ket{\phi}
\end{equation}
\begin{equation}\label{eq:selftestingmeasaftergate3}
V_{A} (M^{\prime\prime}_{a}\otimes I_{\mathcal{Y}}) V_{A}^{\dagger} = I_{\mathcal{Y}} \otimes M_{a},\,\,  
V_{B} (N^{\prime\prime}_{b}\otimes I_{\mathcal{Y}}) V_{B}^{\dagger} = I_{\mathcal{Y}} \otimes N_{b}
\end{equation}
when confined to the support of $\ket{junk_{2}}_{\mathcal{Y}^{2}}$ on the appropriate side.  By virtue of these equations, we find that there must be local unitaries operating on $\mathcal{Y}_{A}$ and $\mathcal{Y}_{B}$ that take $\ket{junk_{2}}_{\mathcal{Y}^{2}}$ to $\ket{junk}_{\mathcal{Y}^{2}}$.  We may absorb these operations into $V_{A} \otimes V_{B}$ and take $\ket{junk_{2}}_{\mathcal{Y}^{2}} = \ket{junk}_{\mathcal{Y}^{2}}$.  Note that this does not disturb equation~\ref{eq:selftestingmeasaftergate3} since the redefinition of $V_{A} \otimes V_{B}$ amounts to conjugating the $\mathcal{Y}^{2}$ register on the right sides.  Thus we find
\begin{equation}
V_{A} \otimes V_{B} M^{\prime\prime}_{a} \otimes N_{b}^{\prime\prime} G^{\prime} \otimes H^{\prime}\ket{\psi} \ket{00} = \ket{junk}_{\mathcal{Y}^{2}}M_{a} \otimes N_{b} \ket{\phi}.
\end{equation}

Now we consider the case when $G^{\prime}$ is applied and $H^{\prime}$ is not applied.  We wish to see what this means if we convert to the reference experiment, so we consider the following state:
\begin{equation}
\ket{\theta} = \left(V_{A} \otimes U_{B}\right)\left( G^{\prime} \otimes I_{\mathcal{Y}} \otimes I_{\mathcal{X}^{2}} \right)\ket{\psi}\ket{00}.
\end{equation}
Applying our equations from above we find that this is equal to 
\begin{equation}
\left(V_{A} \otimes U_{B} \right)\left(G^{\prime} \otimes I_{\mathcal{Y}} \otimes I_{\mathcal{X}^{2}}\right)\left( U_{A}^{\dagger} \otimesÊU_{B}^{\dagger}\right) \ket{junk}_{\mathcal{Y}^{2}} \ket{\phi}
\end{equation}
\begin{equation}
= \left(V_{A} (G^{\prime} \otimes I_{X}) U_{A}^{\dagger}\right) \otimes I_{B} \ket{junk}_{\mathcal{Y}^{2}} \ket{\phi}.
\end{equation}
Define the quantum operation $\Phi$ on $\mathcal{X}$ by attaching an ancilla in the state $\ket{junk}_{\mathcal{Y}^{2}}$, applying $ \left(V_{A} (G^{\prime} \otimes I_{X}) U_{A}^{\dagger}\right)$ and tracing out the $\mathcal{Y}^{2}$ register.  Then $J(\Phi) = \dim(\mathcal{X})\tr[\mathcal{Y}^{2}]{\proj{\theta}{\theta}}$.

We wish to characterize $J(\Phi)$.  To this end we note that $\frac{1}{\dim(\mathcal{X})}\tr{M_{a} \otimes N_{b} J(\Phi)} = \bra{\theta} I_{\mathcal{Y}^{2}} \otimes M_{a} \otimes N_{b} \ket{\theta}$.  Note that
\begin{equation}
V_{B}(H^{\prime}\otimes I_{\mathcal{X}_{B}}) U_{B}^{\dagger}\ket{\theta} = V_{A} \otimes V_{B} G^{\prime} \otimes H^{\prime} \otimes I_{\mathcal{X}^{2}} \ket{\psi}\ket{00} = \ket{junk}_{\mathcal{Y}^{2}}\ket{\phi}_{\mathcal{X}}
\end{equation}
and since this differs from $\ket{\theta}$ only on the $B$ side, the support of $\ket{\theta}$ on $\mathcal{Y}_{B}$ is the same as the support of $\ket{junk}_{\mathcal{Y}^{2}}$ on $\mathcal{Y}_{B}$.  We obtain 

\begin{equation}
\frac{1}{\dim(\mathcal{X})}\tr{M_{a} \otimes N_{b} J(\Phi)} = \bra{\psi} \left((G^{\prime})^{\dagger} \otimes I_{\mathcal{Y}}\right) \left(M^{\prime\prime}_{a}Ê\otimes N^{\prime}_{b}\right)\left( G^{\prime} \otimes I_{\mathcal{Y}} \right)\ket{\psi}.
\end{equation}
Since $G^{\prime} \otimes I_{\mathcal{Y}} \ket{\psi}$ with measurements $M^{\prime\prime}_{a} \otimes N^{\prime}_{b}$ simulates $T \otimes I \ket{\phi}$ with measurements $M_{a} \otimes N_{b}$ we have
 \[
 \tr{J(\Phi) M_{a} \otimes N_{b}} = \tr{ J(T) M_{a} \otimes N_{b}}.
\]
Lemma~\ref{lemma:qubitrealunitarychoi} then implies $J(\Phi) = J(T)$.  From this we conclude that
\begin{equation}
V_{a}Ê\left(G^{\prime} \otimes I_{\mathcal{X}}\right) U_{a}^{\dagger} = W \otimes T
\end{equation}
for some $W \in U(\mathcal{Y})$.  Since $W \otimes T$ preserves $\ket{junk}_{\mathcal{Y}^{2}}$ we must have $W = I_{\mathcal{Y}}$ on the support of $\ket{junk}_{\mathcal{Y}^{2}}$ on $\mathcal{Y}_{A}$.

\subsection{A conspiracy against self-testing complex gates}
The restriction to real gates in the reference experiments may seem a bit curious, but has an easy explanation.  If the gate had complex entries, then there would be no way of distinguishing the reference experiment from an otherwise identical experiment that had the complex conjugate applied.  That is to say, the complex conjugate of an experiment simulates it.  However, the definition of equivalence requires the existence of a unitary operation that takes the physical experiment to the reference experiment, but complex conjugation is anti-unitary.  Thus there can be no extension of the gate testing Theorem to complex gates with the current definition of equivalence.  Real gates are acceptable because complex conjugation does not affect them.

Complex conjugation does not pose very great a problem, since applying the complex conjugate can be seen as a change in convention.  Nothing changes in the structure of the experiment.  However, the simulations described in chapter~\ref{chap:simulations} introduce many new physical experiments that simulate the reference experiment, and none of them are equivalent to the reference experiment.  Further, these simulations are defined on different Hilbert spaces from the reference experiment and hence something more complex than a change in convention is happening.  In the case of real gates, however, the simulations all reduce to the reference experiment, allowing the gate testing Theorem to go through.

Later, in section~\ref{sec:extmayersyao}, we introduce a new definition of equivalence that takes the simulations into account.  We then extend the Mayers and Yao result to include complex measurements.  This paves the way for a gate testing Theorem allowing complex gates.

\section{Circuit testing}

\subsection{Overview}

We now turn our attention to testing an entire circuit.  We begin with a given circuit composed of single qubit real gates and $\text{CTRL-}Z$ gates.  We may divide this up into a sequence of unitaries operating on all qubits.  Each of these unitaries is a tensor product of single qubit and $\text{CTRL-}Z$ gates and thus is efficiently testable, as discussed in section~\ref{sec:selftestgatetest}.  Each of these gates is tested against the reference gate determined by the given circuit.  As explained below, this allows us to conclude that the entire physical circuit is equivalent to the reference circuit when both are considered as one large unitary.  Finally, we perform the circuit by measuring on one side in the computational basis and performing the physical circuit on the other side, as shown in figure~\ref{fig:selftestedcircuit}.

\subsection{Composabiltiy}
In order to test a sequence of gates we perform gate tests after each gate is added.  This is illustrated in figures~\ref{fig:epr_test_with_2gates} and~\ref{fig:2gate_test}.  After each gate is tested, the gate is applied to the other half of the state, taking the state back to (a state equivalent to) the maximally entangled state $\ket{\phi}$.  This allows us to test the next gate in the sequence as though it were applied by itself.  The following Lemma, applied inductively, allows us to conclude that the sequence of gates, applied together, is equivalent to the sequence of reference gates.

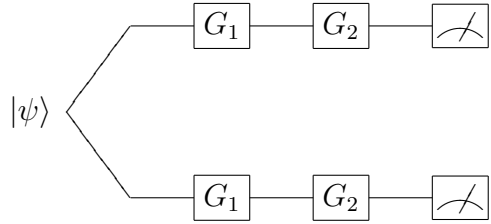
\begin{figure}
 \[
 \Qcircuit {
  & & \gate{G_{1}} & \gate{G_{2}}& \meter \\
\lstick{\ket{\psi}}   \ar@{-}[ur] \ar@{-}[dr]\\
  & & \gate{G_{1}} & \gate{G_{2}} & \meter \\
 }
\]
\caption{EPR test after two gates applied}
\label{fig:epr_test_with_2gates}
\end{figure}

\begin{figure}
 \[
 \Qcircuit {
  & & \gate{G_{1}} &\gate{G_{2}}  &\meter \\
\lstick{\ket{\psi}}   \ar@{-}[ur] \ar@{-}[dr]\\
  & & \gate{G_{1}} & \meter \\
 }
 \]
\caption{Testing a second gate}
\label{fig:2gate_test}
\end{figure}
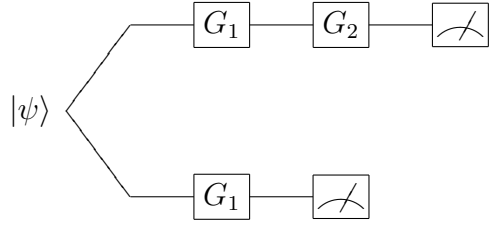

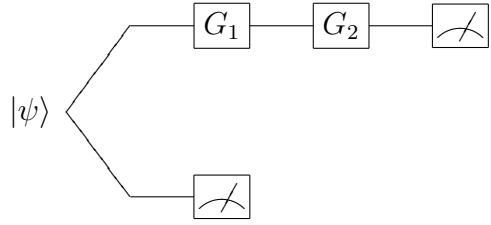
\begin{figure}
 \[
 \Qcircuit {
  & & \gate{G_{1}} &\gate{G_{2}}  &\meter \\
\lstick{\ket{\psi}}   \ar@{-}[ur] \ar@{-}[dr]\\
  & &  \meter \\
 }
 \]
\caption{Performing a self-testing circuit}
\label{fig:selftestedcircuit}
\end{figure}

\begin{lemma}
Suppose we have two sets of experiments satisfying the conditions of Theorem~\ref{theorem:gatetesting}, the first testing $G^{\prime}_{1}$ against unitaries $T_{1}$ and the second testing $G_{2}^{\prime}$ against $T_{2}$ such that experiment 1 of the first set coincides with experiment 2 of the second set.  The conclusions of the Theorem hold with $G^{\prime}_{2}G^{\prime}_{1}$ tested against $T_{2}T_{1}$.
\end{lemma}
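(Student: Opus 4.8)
The plan is to chain the two gate-testing results together by carefully matching up the isometries and junk states at the shared experiment. Suppose the first set of experiments tests $G'_1$ against $T_1$: by Theorem~\ref{theorem:gatetesting} there are unitaries $U_A,U_B,V_A,V_B$ and a junk state $\ket{junk}_{\mathcal{Y}^2}$ such that physical experiment 1 (the bare state $\ket{\psi}$ with measurements $M'_a\otimes N'_b$) is equivalent to reference experiment 1 via $U_A\otimes U_B$, physical experiment 2 (with $G'_1\otimes H'_1$ applied) is equivalent via $V_A\otimes V_B$, and $V_A(G'_1\otimes I_{\mathcal{X}})U_A^\dagger = I_{\mathcal{Y}}\otimes T_1$ on the support of $\ket{junk}_{\mathcal{Y}^2}$ on $\mathcal{Y}_A$. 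By hypothesis, experiment 1 of the second set coincides with experiment 2 of the first set: the physical state there is $G'_1\otimes H'_1\ket{\psi}$ and it is tested with the same measurements. So applying Theorem~\ref{theorem:gatetesting} to the second set, I get unitaries $\widetilde U_A,\widetilde U_B, W_A, W_B$ and a junk state $\ket{junk'}_{\mathcal{Y}^2}$ witnessing: experiment 1 of the second set equivalent via $\widetilde U_A\otimes \widetilde U_B$, experiment 2 of the second set (with $G'_2\otimes H'_2$ further applied) equivalent via $W_A\otimes W_B$, and $W_A(G'_2\otimes I_{\mathcal{X}})\widetilde U_A^\dagger = I_{\mathcal{Y}}\otimes T_2$ on the relevant support.

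First I would reconcile the two descriptions of the shared experiment. Both $V_A\otimes V_B$ and $\widetilde U_A\otimes \widetilde U_B$ send the shared physical experiment ($G'_1\otimes H'_1\ket{\psi}$ with measurements $M''_a\otimes N''_b$) to the reference experiment, with junk states $\ket{junk}_{\mathcal{Y}^2}$ and $\ket{junk'}_{\mathcal{Y}^2}$ respectively. As in the proof of Theorem~\ref{theorem:gatetesting}, where one absorbs a junk-changing local unitary into $V_A\otimes V_B$, I would argue that $(\widetilde U_A V_A^{-1})\otimes(\widetilde U_B V_B^{-1})$ acts as a local unitary taking $\ket{junk}_{\mathcal{Y}^2}$ to $\ket{junk'}_{\mathcal{Y}^2}$ (modulo phase) and fixing the reference part; hence I may absorb these corrections into $W_A, W_B$ and assume without loss of generality $\widetilde U_A = V_A$, $\widetilde U_B = V_B$, and $\ket{junk'}_{\mathcal{Y}^2} = \ket{junk}_{\mathcal{Y}^2}$. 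With this identification in place, the composite follows by multiplying the two gate relations: on the support of $\ket{junk}_{\mathcal{Y}^2}$ on $\mathcal{Y}_A$,
\begin{equation}
W_A (G'_2 G'_1 \otimes I_{\mathcal{X}}) U_A^\dagger = W_A (G'_2\otimes I_{\mathcal{X}}) V_A^\dagger \cdot V_A (G'_1\otimes I_{\mathcal{X}}) U_A^\dagger = (I_{\mathcal{Y}}\otimes T_2)(I_{\mathcal{Y}}\otimes T_1) = I_{\mathcal{Y}}\otimes T_2 T_1.
\end{equation}
The equivalences of the two outer experiments (bare $\ket{\psi}$ via $U_A\otimes U_B$, and the doubly-evolved state via $W_A\otimes W_B$) are inherited directly, and the mixed experiment $G'_2 G'_1\otimes I_{\mathcal{Y}}\ket{\psi}$ with measurements $M''_a\otimes N'_b$ gets the isometry $W_A\otimes U_B$ — exactly as in the statement of Theorem~\ref{theorem:gatetesting}, with $G'_2 G'_1$ in the role of $G'$ and $T_2 T_1$ in the role of $T$.

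The main obstacle I anticipate is the bookkeeping of junk states and the ``restricted to the support'' qualifiers: I must be sure that after absorbing the correction unitaries into $W_A, W_B$, the gate relation from the second set still holds on the support of the common junk state (not some other one), and that the support conditions from the two sets are compatible on $\mathcal{Y}_A$. This is the same maneuver used in the proof of Theorem~\ref{theorem:gatetesting} when passing from $\ket{junk_2}$ to $\ket{junk}$, so it should go through, but it requires checking that conjugating by the correction unitary on the $\mathcal{Y}^2$ register does not disturb the key gate equation — which it does not, since that equation is an identity of operators on $\mathcal{Y}_A\otimes\mathcal{X}_A$ and the correction acts only to relabel the junk register. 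One mild subtlety worth flagging explicitly is the global phase that may appear when matching $\ket{junk}$ to $\ket{junk'}$; as usual this is harmless since it can be pushed into one of the unitaries, and I would note this in passing rather than belabor it.
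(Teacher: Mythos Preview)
Your proposal is correct and follows essentially the same route as the paper: apply Theorem~\ref{theorem:gatetesting} to each set, identify the isometries and junk states at the shared experiment (the paper simply notes ``we can use the same junk state since experiment 2 from the first use is the same as experiment 1 from the second usage of the Theorem''), and then multiply the two gate relations $W_A(G'_2\otimes I_{\mathcal{X}})V_A^\dagger = I_{\mathcal{Y}}\otimes T_2$ and $V_A(G'_1\otimes I_{\mathcal{X}})U_A^\dagger = I_{\mathcal{Y}}\otimes T_1$ to obtain $W_A(G'_2G'_1\otimes I_{\mathcal{X}})U_A^\dagger = I_{\mathcal{Y}}\otimes T_2T_1$. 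If anything, you are more explicit than the paper about absorbing the correction unitaries $(\widetilde U_A V_A^{-1})\otimes(\widetilde U_B V_B^{-1})$ into $W_A,W_B$ to align the junk states; the paper also records the measurement relations via Lemma~\ref{lemma:equivmeasidentonjunk}, but this is bookkeeping rather than a different idea.
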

\begin{proof}
From Lemma~\ref{lemma:equivmeasidentonjunk} and Theorem~\ref{theorem:gatetesting} we find unitaries $U_{A}, U_{B}, V_{A}, V_{B}, W_{A}, W_{B}$ such that
\begin{equation}
U_{A} (M^{\prime}_{a}\otimes I_{\mathcal{Y}}) U_{A}^{\dagger} = I_{\mathcal{Y}} \otimes M_{a},\,\,  
U_{B} (N^{\prime}_{b}\otimes I_{\mathcal{Y}}) U_{B}^{\dagger} = I_{\mathcal{Y}} \otimes N_{b}
\end{equation}
\begin{equation}\label{eq:selftestingmeasaftergate}
V_{A} (M^{\prime\prime}_{a}\otimes I_{\mathcal{Y}}) V_{A}^{\dagger} = I_{\mathcal{Y}} \otimes M_{a},\,\,  
V_{B} (N^{\prime\prime}_{b}\otimes I_{\mathcal{Y}}) V_{B}^{\dagger} = I_{\mathcal{Y}} \otimes N_{b}
\end{equation}
\begin{equation}\label{eq:selftestingmeasaftergate2}
V_{A} (M^{\prime\prime\prime}_{a}\otimes I_{\mathcal{Y}}) V_{A}^{\dagger} = I_{\mathcal{Y}} \otimes M_{a},\,\,  
V_{B} (N^{\prime\prime\prime}_{b}\otimes I_{\mathcal{Y}}) V_{B}^{\dagger} = I_{\mathcal{Y}} \otimes N_{b}
\end{equation}
\begin{equation}
V_{a}Ê\left(G_{1}^{\prime} \otimes I_{\mathcal{X}}\right) U_{a}^{\dagger} = I \otimes T_{1}
\end{equation}
\begin{equation}
W_{a}Ê\left(G_{2}^{\prime} \otimes I_{\mathcal{X}}\right) V_{a}^{\dagger} = I \otimes T_{2}
\end{equation}
all for the support of $\ket{junk}_{\mathcal{Y}^{2}}$ on $\mathcal{Y}_{A}$.  Note that although we have used the gate testing Theorem twice, we can use the same junk state since experiment 2 from the first use is the same as experiment 1 from the second usage of the Theorem.  We also find
\begin{equation}
U_{A} \otimes U_{B} M^{\prime}_{a} \otimes N_{b}^{\prime} \ket{\psi} \ket{00} = \ket{junk}_{\mathcal{Y}}M_{a} \otimes N_{b} \ket{\phi}
\end{equation}
Applying these results we see that
\begin{equation}
(W_{A}\otimes U_{B})\left(M^{\prime\prime\prime}_{a} \otimes N^{\prime}_{b}\right)\left( G^{\prime}_{2}G^{\prime}_{1} \otimes I_{\mathcal{Y}}\right) \ket{\psi}\ket{00} = \ket{junk}_{\mathcal{Y}^{2}} \left(M_{A} \otimes N_{B}\right) T_{2}T_{1} \otimes I_{\mathcal{X}} \ket{\phi}
\end{equation}
and 
\begin{equation}
W_{a}Ê\left(G_{2}^{\prime}G^{\prime}_{1} \otimes I_{\mathcal{X}}\right) U_{A}^{\dagger} = I \otimes T_{2}T_{1}.
\end{equation}
Thus the conclusions of Theorem~\ref{theorem:gatetesting} hold for the gate $G^{\prime}_{2}G^{\prime}_{1}$.
\end{proof}
%

\subsection{Testing on a particular input}
Since we usually do not want to perform a circuit on a random input, we need some control over what the input will be.  The solution proposed in \cite{Magniez:2006:Self-testing-of} is to measure one side of the EPR pairs in the computational basis and then use the results to place $X$ gates as necessary on the other half of each pair to correct the result to the desired input.  The $X$ gates placed this way are then incorporated into the definition of the circuit to be tested.  Note that for this solution to work the calculation must come first.  With Assumption~\ref{assumption:devicesbehavesame} in place this is not a problem, but if we hope to adopt less restrictive assumptions then this method could easily be defeated since the devices could easily subvert the calculation and  perform correctly on the subsequent tests.

Another solution to this problem assumes that the EPR pairs may be manipulated individually.  Suppose we have $n$ EPR pairs.  Measure half of each EPR in the computational basis.  If the result was the desired result, then keep the other half.  If not, then discard the other half and prepare a new EPR pair.  Repeat until $n$ EPR pairs have been prepared.  Since the EPR pairs are all prepared individually there is no problem with exponential blowup as $n$ increases.  A variation on this solution is to prepare $N > n$ EPR pairs and use only those pairs for which the measurement gives the correct result.  If $N$ is made sufficiently large then success is expected with high probability.

\subsubsection{Blind state preparation}
Although conceived with the notion of circuit testing in mind, self-testing could find application in QKD or other security related areas.  In this context the state preparation method used has some advantages.  For example, by measuring half of an EPR pair to prepare a state on the other half there can be no side channel information about the basis used.  One may wish to choose a particular state, however, so corrections might need to be made.  If one is not careful, the correction may reveal the basis.  For example if the bases are chosen from the eigenbase of $X$ and $Z$, then a $Z$ or $X$ correction, respectively, may be necessary and the correction depends on the basis.  There are several possibilities for defusing the situation.  The first is to use $Y$ as a correction, but this is not currently self-testable.  A second possibility would be to always apply one of $Z$ or $X$.  If the $X$ eigenbasis were chosen and a correction is needed, then $Z$ is applied, otherwise $X$ is applied which does not affect the state.  Other possibilities exist, but the main idea is always to choose an operation that may be a correction for one basis, or not affect the other basis.

\section{Extending the Mayers and Yao self-test}\label{sec:extmayersyao}

The original Mayers and Yao EPR test provided only a small set of measurements.  Conspicuously missing is anything with complex coefficients.  An important consequence of this is that the circuit test is not able to test gates with complex coefficients; only gates with real coefficients can be tested.

In fact the Mayers and Yao test cannot be directly extended to include any measurements with complex coefficients.  This is a result of the notion of equivalence that they use.  Suppose that we wish to include the $Y$ measurement in the set of reference measurements.  The devices could just as easily implement $-Y$, the complex conjugate.  So long as all complex measurements were complex conjugated it would be impossible to tell.  Although this does not present an immediate problem - such a transformation is internally consistent and produces the correct outcome statistics - we cannot transform such a circuit back into the reference circuit using unitary transformations.  Anti-unitary transformations are required.

If this were the whole story we could simply require that the physical circuit be transformable into either the reference circuit or its complex conjugate.  However, the real simulation, and now the family of simulations, defined in section~\ref{sec:generalsimulations}, are also indistinguishable from the reference circuit and not unitarily transformable into the reference circuit.

We have one encouraging fact:  all of the known simulations are equivalent to a simulation from the family of simulations (or a classical mixture of them).  We now prove that we can extend the Mayers-Yao experiments such that these are the only simulations.  Hence we may extend our notion of equivalence to include these simulations and obtain a new self-testing Theorem.

\begin{theorem}\label{theorem:extmayersyao}
Suppose a physical experiment duplicates the statistics generated by the reference experiment described in section~\ref{sec:extmayersyaoref}.  Then the physical experiment is equivalent to one of the simulations of the reference experiment described in section~\ref{sec:generalsimulations}.
\end{theorem}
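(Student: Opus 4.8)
The plan is to imitate the proof of Theorem~\ref{theorem:mayers-yao}, carrying out the same three movements: read algebraic relations off the statistics, use them to pin down the $*$-algebra generated by the observables on the support of the physical state, and then build a local isometry. The real reference measurements $\{X,Z,D\}$ appearing in the experiment of section~\ref{sec:extmayersyaoref} are exactly those of the Mayers--Yao test, so the analysis of section~\ref{sec:my_anticommuting} applies verbatim: we obtain the state equalities, the pairwise orthogonality of $\ket{\psi}$, $X_A\otimes I\ket{\psi}$, $Z_A\otimes I\ket{\psi}$, $X_AZ_A\otimes I\ket{\psi}$, and, via Lemma~\ref{lemma:anticommute_support}, that $X_A$ and $Z_A$ anticommute on the support of $\ket{\psi}$ on $A$ (and likewise $X_B,Z_B$ on $B$). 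Hence on this support the algebra generated by $X_A,Z_A$ is a qubit matrix algebra with some multiplicity space $\mathcal{K}_A$, in a basis for which $X_A = X\otimes I_{\mathcal{K}_A}$ and $Z_A = Z\otimes I_{\mathcal{K}_A}$, and $iX_AZ_A$ is a well-defined Hermitian operator squaring to the identity there.

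The new ingredient is the treatment of the complex measurement $Y$. First, $\bra{\psi}Y_A\otimes Y_B\ket{\psi}=-1$ forces $Y_A\otimes I\ket{\psi} = -\,I\otimes Y_B\ket{\psi}$. Next, exactly as $D_A\ket{\psi}=\tfrac{1}{\sqrt2}(X_A+Z_A)\ket{\psi}$ was obtained, I would derive from the diagonal complex reference measurements (e.g.\ $(X{+}Y)/\sqrt2$ and $(Z{+}Y)/\sqrt2$) the corresponding relations for their physical counterparts on $\ket{\psi}$, and combine them to get $(X_AY_A+Y_AX_A)\otimes I\ket{\psi}=0$ and $(Z_AY_A+Y_AZ_A)\otimes I\ket{\psi}=0$. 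Two applications of Lemma~\ref{lemma:anticommute_support} then show $Y_A$ anticommutes with both $X_A$ and $Z_A$ on the support of $\ket{\psi}$ on $A$. An operator anticommuting with both generators of a qubit algebra is, relative to the multiplicity space, a scalar multiple of the third Pauli; together with $Y_A$ Hermitian and $Y_A^2=I$ on the support, this gives $Y_A = iX_AZ_A\otimes N_A$ there, where $N_A$ is a Hermitian involution on $\mathcal{K}_A$ --- the ``which--conjugate'' observable of section~\ref{sec:generalsimulations}. Similarly $Y_B = \pm\,iX_BZ_B\otimes N_B$; and feeding $Y_A = iX_AZ_A\otimes N_A$, $Y_B = \pm iX_BZ_B\otimes N_B$ together with $X_AZ_A\otimes I\ket{\psi} = -I\otimes X_BZ_B\ket{\psi}$ into $Y_A\otimes I\ket{\psi} = -I\otimes Y_B\ket{\psi}$ yields $(N_A\otimes I)\ket{\psi} = \pm(I\otimes N_B)\ket{\psi}$, so the two which--conjugate observables are perfectly correlated on $\ket{\psi}$, matching the GHZ-like form of~(\ref{eq:cont_sim_multiparty_state}).

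For the isometry I would use the Mayers--Yao swap circuit $\Phi_A\otimes\Phi_B$ of figure~\ref{fig:epr_local_unitary_circuit}, enlarged on each side by appending a fresh qubit register and coherently recording into it which $N_A$-eigenspace (resp.\ $N_B$-eigenspace) the state lies in; this extra step is a controlled operation that commutes with every controlled-$X_A$, controlled-$Z_A$ gate of $\Phi_A$, precisely because $N_A$ commutes with $X_A$ and $Z_A$ on the support. Since the unenlarged isometry already gives $\Phi(\ket{\psi})=\ket{junk}\ket{\phi_+}$, $\Phi(X_A\ket{\psi})=\ket{junk}(X\otimes I)\ket{\phi_+}$, $\Phi(Z_A\ket{\psi})=\ket{junk}(Z\otimes I)\ket{\phi_+}$ and, by the same push-through computation, $\Phi(X_AZ_A\ket{\psi})=\ket{junk}(XZ\otimes I)\ket{\phi_+}$, the enlarged isometry sends $\ket{\psi}$ to a junk state tensored with the which--conjugate register (in the correlated state recorded above) tensored with $\ket{\phi_+}$, sends $X_A\ket{\psi}$, $Z_A\ket{\psi}$ to the corresponding simulation vectors, and sends $Y_A\ket{\psi}=iX_AZ_AN_A\ket{\psi}$ to the junk state tensored with $(Z\otimes Y)$ --- $Z$ on the new which--conjugate qubit, $Y$ on the extracted qubit --- applied to the simulation state, which is exactly the simulation observable of~(\ref{eq:cont_sim_op2}). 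The $B$-side is symmetric and the diagonal observables follow by linearity, exhibiting the physical experiment as equivalent to the member of the family of section~\ref{sec:generalsimulations} determined by the which--conjugate statistics.

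The main obstacle is this middle movement: extracting the two anticommutators $X_AY_A+Y_AX_A$ and $Z_AY_A+Y_AZ_A$ annihilating $\ket{\psi}$ from the correlation data. Unlike $D$, the physical $Y_A$ is a priori an arbitrary $\pm1$-observable with no presumed relation to $X_A,Z_A$, so one must identify exactly which linear combinations of the measured correlations isolate these relations, and with which signs --- recalling that $Y^T=-Y$ flips signs in the reference correlations, so that, just as $D$ had to be chosen as $(X+Z)/\sqrt2$ and not an arbitrary combination, the complex reference measurements of section~\ref{sec:extmayersyaoref} must be chosen so that the relevant norm/overlap ($2\times2$ Gram) arguments close exactly.
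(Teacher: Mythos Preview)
Your plan matches the paper's proof in all essentials: apply Mayers--Yao to $(X,Z,D)$; use the other two triples to get $Y_A$ anticommuting with both $X_A$ and $Z_A$ on the support; conclude $Y_A=Y\otimes(\text{Hermitian involution on the junk register})$; split off a which--conjugate qubit from that involution; and read off the $Z\otimes Z$ correlation on those extra qubits from the $Y_A,Y_B$ correlation. The only cosmetic differences are that the paper builds the Mayers--Yao isometry \emph{first} and then decomposes the existing junk register by the eigenspaces of the involution (rather than appending a fresh ancilla as you do), and that your ``main obstacle'' is a non-issue --- section~\ref{sec:extmayersyaoref} is arranged precisely so that each of $(X,Z,D)$, $(X,Y,E)$, $(Y,Z,F)$ is itself a Mayers--Yao test up to a local unitary, so the two anticommutators drop out of two more runs of the argument in section~\ref{sec:my_anticommuting} with no new work. (One sign slip: with the paper's convention $Y_B=-Y$ on Bob's side, the reference value is $\bra{\phi_+}Y_A\otimes Y_B\ket{\phi_+}=+1$, not $-1$; this only flips a convention downstream.)
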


With the extended state and measurement testing in place there exists the possibility of testing complex gates as well.

\begin{futurework}
Extend gate testing to complex gates using extended definition of equivalence.
\end{futurework}
%

\subsection{Extended Mayers-Yao self-test reference experiment}\label{sec:extmayersyaoref}
The extended Mayers-Yao test will consist of three regular Mayers-Yao tests, performed together.  Alice and Bob will perform the Mayers-Yao test with measurement settings (labelled with subscript $A$ when used by Alice, and subscript $B$ when used by Bob):
\begin{enumerate}
                \item $X$, $Z$, and $D$
                \item $X$, $Y$, and $E$
                \item $Y$, $Z$, and $F$
\end{enumerate}

In the reference experiment the measurement settings $X$, $Y$ and $Z$ are realized by the Pauli operators, with $Y_B=-Y$ and otherwise
$X_A = X_B = X$, $Y_A = Y$, $Z_A  = Z_B = Z$.  The other settings are realized by $D_A = \frac{X+Z}{\sqrt{2}}$, $E_A = \frac{X+Y}{\sqrt{2}}$, $F_A=\frac{Y+Z}{\sqrt{2}}$ on Alice's side and $D_B = \frac{X+Z}{\sqrt{2}}$, $E_B = \frac{X-Y}{\sqrt{2}}$, $F_B=\frac{Z-Y}{\sqrt{2}}$ on Bob's side.  Bob's $Y_B$ measurements all carry the $-1$ phase since measuring the state $\ket{\phi_{+}}$ with the operator $Y\otimes Y$ produces $-1$ instead of $1$ as in the Mayers-Yao reference experiment.  The reference state is again $\ket{\phi_{+}}$.

\subsection{Proof of Theorem~\ref{theorem:extmayersyao}}
We start by assuming that the states are all pure as in the Mayers-Yao test.  Again we may incorporate the purification of a mixed state into either Alice or Bob's state by adding an ancilla.

First we apply the Mayers-Yao result with the measurements $X$, $Z$ and $D$.  We find that we may apply a suitable local isometry $\Phi$ to take the $X_{A}$, $Z_{A}$, $X_{B}$ and $Z_{B}$ measurements to $X_{Q_{A}} \otimes I_{R_{A}}$, $Z_{Q_{B}} \otimes I_{R_{A}}$, $X_{Q_{B}} \otimes I_{R_{B}}$ and $Z_{Q_{B}} \otimes I_{R_{B}}$ where $R_{A}$ and $R_{B}$ are the junk registers.  Meanwhile the state has the form $\ket{\phi_{+}}_{Q_{A}Q_{B}} \otimes \ket{junk}_{R_{A} R_{B}}$, where $Q_{A}$ and $Q_{B}$ are the qubit registers that the measurements act on.

We now consider the remaining measurements.  The reference circuits for these measurements can be transformed using local unitaries into the usual Mayers-Yao reference circuit.  Thus we may apply the result.  However, we stop short of using the full result.  In section~\ref{sec:my_anticommuting} we note that the measurement observables $X_{A}$ and $Z_{A}$ anti-commute on the support of the state, as do $X_{B}$ and $Z_{B}$.  When we apply this result to the remaining measurements in the extended test, we find that $X_{A}$ and $Y_{A}$ anti-commute on the support of the state, as do $X_{B}$ and $Y_{B}$, $Z_{A}$ and $Y_{A}$ and $Z_{B}$ and $Y_{B}$.  For the remaining discussion we will limit ourselves to the support of the state.

Consider the $A$ side measurements first.  We may express $Y_{A}$ as
 \[
Y_{A} = \sum_{P,k} y_{P,k} P_{Q_{A}} \otimes E_{k, R_{A}}
\]
where the $P$s are Pauli operators and the $E_{k}$s are other operators (i.e. pick a basis for Hermitian matrices consisting of Pauli matrices tensor product with something else).  Since $Y_{A}$ anti-commutes with $X_{Q_{A}} \otimes I_{R_{A}}$ all the terms with $P=X$ must be 0.  Indeed, since $-Y_{A} = (X_{Q_{A}} \otimes I_{R_{A}}) Y_{A} (X_{Q_{A}} \otimes I_{R_{A}})$ we have
 \[
-\sum_{P,k} y_{P,k} P_{Q_{A}} \otimes E_{k, R_{A}}
 = \sum_{P\in\{I,X\}, k}y_{P,k} P_{Q_{A}} \otimes E_{k, R_{A}} -  \sum_{P \in\{Y,Z\},k} y_{P,k} P_{Q_{A}} \otimes E_{k, R_{A}}
\]
where on the right hand side we have separated out the terms that commute with $X_{Q_{A}} \otimes I_{R_{A}}$ and those that anti-commute.  We see that we must have $y_{X,k} = -y_{X,k} = 0$ and $y_{I,k} = -y_{I,k} = 0$ for all $k$.

Applying similar reasoning and the test with $Y$ and $Z$ we find that $y_{Z,k} = 0$ for all $k$.  Thus $Y_{A} = Y_{Q_{A}} \otimes M_{R_{A}}$ for some Hermitian and unitary $M_{R_{A}}$.  We consider the two eigenspaces of $M_{R_{A}}$.  If they are not the same dimension (or if there is only one eigenvalue), we may construct an isomorphism that adds extra dimensions to $R_{A}$ and extend $M_{R_{A}}$ onto the new dimensions so that both eigenspaces have the same dimension.  Next we construct an isomorphism that maps the space to a tensor product between a qubit and a space with half the dimension of $R_{A}$.   We construct it so that the +1 eigenspace gets mapped to the subspace spanned by states of the form $\ket{0} \ket{\chi}$ and the -1 eigenspace gets mapped to the subspace spanned by $\ket{1} \ket{\chi}$.  Then $M_{R_{A}}$ gets mapped to $Z \otimes I$.  Let $P_{A}$ be the qubit register, and $R^{\prime}_{A}$ the remaining register.  We obtain
\begin{equation}
Y_{A} \mapsto Y_{Q_{A}} \otimes Z_{P_{A}} \otimes I_{R^{\prime}_{B}}
\end{equation}
under the isomorphism described above.  Importantly this isomorphism does not disturb $X_{A}$ or $Z_{A}$ since it only operates on the junk register.  Thus we may modify $\Phi$ obtained from Theorem~\ref{theorem:mayers-yao} to additionally perform the isomorphism just described.

The above process can be repeated for Bob's side, with analogous conclusions.  In order to be consistent with the reference experiment, we may construct our isomorphism so that $Y_{B} \mapsto -Y_{Q_{B}} \otimes Z_{P_{B}} \otimes I_{R^{\prime}_{B}}$.

Now we turn our attention to the state.  From the Mayers-Yao test on $X$ and $Z$ we know that the state on $Q_{A} \otimes Q_{B}$ (after applying $\Phi$) is $\ket{\phi_{+}}$.  We next consider the state on the remaining registers, which we denote $\ket{\theta}$.  We may express this in the singular value decomposition, split between $P_{AB}$ and $R^{\prime}_{AB}$:
\begin{equation}
\ket{\theta} = \sum_{j} \lambda_{j} \ket{j}_{P_{AB}} \ket{j}_{R^{\prime}_{AB}}
\end{equation}
with $\lambda_{j} > 0$.
Since the $Y$ measurement setting gives correlated results (recall we introduced a -1 factor on the $B$ side measurement observable) and the form of $Y_{A}$ and $Y_{B}$, the states $\ket{j}_{P_{AB}}$ must all be $+1$ eigenvectors of $Z_{P_{A}}\otimes Z_{P_{B}}$.  If this were not the case then a $-1$ phase would be introduced and the measurement results would be incorrect at least some of the time.  Thus the only possible states for $\ket{j}_{P_{AB}}$ are superpositions of $\ket{00}$ and $\ket{11}$.  We do some relabelling and arrive at
 \begin{equation}
\ket{\psi} = \ket{\phi_{+}}_{Q_{AB}} \otimes \left(\alpha \ket{00}_{P_{AB}} \ket{\theta_{00}}_{R^{\prime}_{AB}} + \beta \ket{11}_{P_{AB}} \ket{\theta_{11}}_{R^{\prime}_{AB}}\right)
\end{equation}
with $\ket{\theta_{00}}$ and $\ket{\theta_{11}}$ not necessarily orthogonal.  Note that tracing out the $R^{\prime}_{AB}$ ancillae results in a state of the form in equation~\ref{eq:contsimstate}.  Thus we have demonstrated that the physical experiment is equivalent to the reference experiment, and completed the proof of Theorem~\ref{theorem:extmayersyao}.

\section{Robustness and assumptions for implementations}

The results in this chapter are concerned with probability distributions which exactly match.  However, if there is any hope for a physical implementation then this requirement must be relaxed and the Theorems made robust.  Robustness was established for the Mayers and Yao test, as well as the gate test in \cite{Magniez:2006:Self-testing-of}.  These robustness results show that error in the statistics translates to a polynomial sized error in the equivalence.  This is measured in terms of the $2$-norm on states and the operator norm on gates.  These are not the preferred measures of error since they do not have a straightforward operational interpretation.

\begin{futurework}
Determine robustness of the tests using operationally meaningful measures such as the trace norm on states and diamond norm on operations.
\end{futurework}

Another consideration is that the robustness results relied on technical Lemmas with incorrect proofs.  Also, the robustness of the extended Mayers and Yao test has yet to be determined.

\begin{futurework}
Determine robustness of Lemmas~\ref{lemma:qubitrealunitarychoi}, \ref{lemma:qubitrealunitarychoi2}, \ref{lemma:qubitrealunitarychoi3}, \ref{lemma:equivmeasidentonjunk} and Theorem~\ref{theorem:extmayersyao}.
\end{futurework}

Another important consideration for potential physical implementations is the assumptions necessary to gather statistics.  Here the physical experiment defines a probability distribution on outcomes, but in order to estimate this probability distribution many trials must be made.  For this to make sense, we must make some repeatability assumptions:

\begin{assumption}
The physical devices have no memory and always operate identically and the state for each trial is unentangled with other trials.
\end{assumption}

This allows us to take many samples from a single physical experiment and estimate the probability distribution.  In some situations this assumption may not be reasonable.  One potential means of relaxing this assumption would be to use techniques from QKD proofs, such as the quantum de Finetti Theorem, to allow arbitrary states. 

\begin{futurework}
Relax assumptions on state and devices.
\end{futurework}
%

\section{Authenticated quantum computing}
A recent development with goals very similar to self-testing is blind quantum computing, introduced by Broadbent et al. in~\cite{Broadbent:2008:Universal-blind}.  We are specifically interested in authenticated quantum computing, which is an extension of blind quantum computing.  Authenticated quantum computing involves a semi-quantum verifier (able to only prepare qubits in a finite number of states) and a quantum prover.  Using measurement based quantum computing \cite{Raussendorf:2001:A-One-Way-Quant} and fault-tolerant quantum computing techniques the verifier sends the prover several qubits prepared in random states known only to the verifier and then interacts classically with the prover.  The goal is for the verifier to have the prover perform a quantum circuit and be able to certify, through classical interaction only, that the correct circuit has been performed.  An important side effect of the process is that the prover does not know what the circuit is, even at the end of the protocol (hence \emph{blind} computing.)  

The goal is similar to that of self-testing, but the verifier requires some quantum capacity.  In an extension of their result, the authors claim that the verifier can interact with \emph{two} isolated provers (who are entangled) and eliminate the requirement for state preparation by the verifier.  The idea is to begin with the two provers sharing a number of EPR pairs.  The verifier first interacts with one prover, using the authenticated blind quantum computing protocol (without first sending qubits) to implement a circuit that simply measures half of each EPR pair in a randomly chosen bases, emulating the verifier's state preparation.  Next, the verifier interacts with the second prover, again using the authenticated blind quantum computing protocol, performing the desired circuit with the other half of each EPR pair.  A similar result was shown by Aharonov et al. in \cite{Aharonov:2008:Interactive-Pro}.

The claim is that the verifier cannot distinguish between errors occurring in each prover, and so we can assume that all errors happen in the second prover.  Then the authenticated blind quantum computing protocol used with the second prover will catch all errors.  Hence a pair of cheating provers will be caught (with high probability).

From the perspective of black-box quantum computing we may identify two important problems with this argument.  The first is that the argument is not sound.  The simulations in Chapter~\ref{chap:simulations} indicated that it is possible for the two provers to perform a conspiracy that produces the correct outcome statistics, but does not implement the reference circuit.  In particular, the measurement based quantum computing model used in the protocol requires operations that are complex.  Thus the general simulations are not unitarily equivalent to the reference circuit.  We may view the conspiracy as ``errors'', in which case the claim that all errors will be caught during the interaction with the second prover does not hold:  there are conspiracies in which \emph{both} provers perform ``errors'' that cannot be caught at all.

One counterargument to the above criticism is that all the general simulations would still provide the correct classical outcome, regardless, and hence the protocol is still sound.  To be clear, we do not claim that the protocol is not sound, only that the proof is not sufficient.  As well, this simply highlights the second problem, which is a lack of a rigorous claim.  The final claim is that any language in the complexity class BQP has an interactive proof with a BPP verifier and two non-interacting BQP provers.  However, this is not immediate since the authenticated blind quantum computing protocol is not about recognizing languages in BQP, but performing quantum circuits.  The implied mediating claim is that the two prover protocol, for any desired reference circuit, certifies that the reference circuit was performed, hence any language in BQP may be recognized by the protocol.  

Here it is unclear how to precisely say that the reference circuit was performed.  The obvious interpretation, that the initial state was correct and each gate and measurement in the circuit was performed, is clearly not sufficient, since the general simulations (and even the unitarily equivalent simulations, if we are to be precise) defeat this claim.  From the perspective of self-testing we may offer a more suitable claim:  that the physical experiment was unitarily equivalent to either the reference experiment or one of the general simulations of it.  As our counterargument above shows, this is the most that can be established.  Fortunately, it is sufficient to imply the desired final claim:  that all languages in BQP have classical interactive proofs with two non-interacting BQP provers.  Unfortunately, there is currently no proof of such a claim.

\begin{futurework}
Show that the authenticated blind quantum computing protocol, with two entangled provers, certifies that a general simulation of the desired reference circuit was implemented by the provers.
\end{futurework}
%

\chapter{Device independent quantum key distribution}\label{chapter:diqkd}
\minitoc

\section{Introduction}
Traditional \index{quantum key distribution}quantum key distribution protocols, such as BB84 \cite{Bennett:1984:Quantum-cryptog} and Ekert91 \cite{Ekert:1991:Quantum-cryptog} rely on a model of the physical devices being used in order to determine a secure key rate.  In prepare and measure protocols, for example, a model of the source is used to determine to what extent Eve may differentiate between the various states sent and in all protocols a model of the measurements performed is incorporated into the parameter estimation portion of the protocols, deriving estimates of the states received.

In contrast, \index{device independent quantum key distribution (DIQKD)}device independent quantum key distribution (DIQKD) aims to provide security without relying on a particular physical device model.  The intent is to provide a higher level of security.  Physical devices used for implementing QKD protocols are vastly more complicated than the simple physical models used in security proofs, allowing for a mismatch between theory and reality.  If the security models are not conservative enough this may lead to an insecure physical implementation of a theoretically secure protocol.

In this chapter we describe the device model used in DIQKD and discuss the existing literature on the subject.  In particular, we will be interested in the line of inquiry leading to the AMP06 protocol \cite{Acin:2006:Efficient-quant} and subsequent reinterpretation in the DIQKD framework.  We then consider the previous security models and provide some partial results extending security to a more general model.

The original material in this chapter is published in \cite{McKague:2009:Device-independ}.

\section{Literature review}
For the current work we are interested in four different lines of research.  The first line of research is that of non-signalling based key distribution, from which the AMP06 protocol is drawn.  The second is a pair of articles that introduce the notion of DIQKD and give a partial proof of security of the AMP06 protocol within the DIQKD framework.  For our expanded proof we require some techniques from the literature on QKD security proofs, particularly those of Renner.  Finally, it will be useful to review some concepts from the literature on Bell inequalities.

\subsection{\index{non-signalling key distribution}Non-signalling key distribution}

The AMP06 protocol was introduced in \cite{Acin:2006:Efficient-quant}, which belongs to the literature on non-signalling based key distribution protocols.  These protocols do not rely on quantum mechanics being correct for their security.  Rather, they consider a wider context of probability distributions which are limited by being non-signalling.  

\subsubsection{\index{non-signalling distribution}Non-signalling distributions}
Consider a probability distribution $P(x,y | a,b)$ which assigns a probability to outcomes $x$ and $y$ for each inputs $a$ and $b$.  The inputs $a$ and $b$ are analogous to measurement settings in the usual QKD framework, with $x$ and $y$ the measurement outcomes.  We associate the variables $x,a$ with one location, controlled by Alice, and $y,b$ with another location, controlled by Bob.

We define a probability distribution of this form to be non-signalling according to the following definition

\begin{definition}\label{def:nonsignallingdistribution}
A probability distribution $P(x,y|a,b)$ is \emph{\index{non-signalling distribution}non-signalling} if for every $a, b, x, y$
\begin{enumerate}
	\item $P(x|a,b) = P(x|a)$
	\item $P(y|a,b) = P(y|b)$.
\end{enumerate}

\end{definition}

The \index{marginal distributions}marginal distributions above are found by summing over the possible values of the variables not mentioned.  For example
\begin{equation}
P(x|a) = \sum_{y^{\prime},b^{\prime}} P(x,y=y^{\prime}|a,b=b^{\prime}).
\end{equation}
The definition of a \index{non-signalling distribution}non-signalling distribution says that the distribution of one outcome is not dependent on the measurement setting in the other location.  Such distributions are consistent with general relativity and cannot be used to transmit information.

\index{non-signalling key distribution}Non-signalling key distribution makes use of different principles for security than does QKD.  While QKD uses measurements to perform some quantum state estimation (typically measuring deviation from a maximally entangled state), non-signalling key distribution uses correlation functions on probability distributions (measuring deviation from a non-local probability distribution).  

Often, \index{Bell inequality}Bell inequalities are central in the discussion of non-signalling key distribution, but the security of such schemes usually depends, not on the inequality itself, but on an analysis of the correlation functions that Bell inequalities bound.  The Bell inequality provides a lower bound on the strength of correlations necessary for secure key generation, since local distributions cannot generate secure key.  However, the bound may not be tight, as in the CHSH based protocol considered by Masanes \cite{Masanes:2008:Universally-com} where even quantum correlations give a zero secure key rate.  

\subsubsection{Non-signalling literature}
Non-signalling key distribution was anticipated by Barrett et al. \cite{Barrett:2005:Nonlocal-correl} with a study on the value of \index{non-signalling distribution}non-signalling distributions as information theoretic resources that may be converted between one another much like how different entangled quantum states may be  converted to one another using LOCC operations.  Later, Barrett et al. \cite{Barrett:2005:No-Signaling-an} introduced a proof-of-concept protocol which uses many trials to estimate the expected value of a correlation function and produces a single bit of secure key.  Their protocol is not robust against noise and is inefficient in the use of the channel.  However, this early work opened up the area to further research.

Theoretical work in this area continued with results by Barrett et al. on monogamy of maximally entangled quantum states \cite{Barrett:2006:Maximally-Nonlo}.  They introduce a correlation function similar to that used in the chaining inequality \cite{Braunstein:1990:Wringing-out-be} and show that if two parties in a 3-partite non-signalling distribution produces the same correlations  achieved by a $d \times d$-dimensional maximally entangled quantum state (measured by the new correlation function) then the third party can have no information about the measurement outcomes of the first two parties.  This opens up the potential for a higher dimensional non-signalling key distribution protocol implementable by quantum apparatus.

New protocols were introduced in \cite{Acin:2006:From-Bells-Theo} and \cite{Acin:2006:Efficient-quant}.  Both of these protocols are based on the correlation function in the CHSH inequality \cite{Clauser:1969:Proposed-Experi} (see section~\ref{sec:chshinequality}). The latter is the AMP06 protocol which we will consider in detail in this chapter, and is a refinement of the protocol in  \cite{Acin:2006:From-Bells-Theo}.   

All security proofs up to this point were concerned with individual or collective attacks only.  Proofs of security against general attacks were developed in \cite{Masanes:2006:Security-of-key}, with universal composability achieved in \cite{Masanes:2008:Universally-com}.  However, the proof makes some impractical assumptions.  In particular, the probability distribution is assumed to be $n+1$-fold non-signalling, with one party controlling $n$ parts (corresponding to $n$ trials) and the other controlling $1$ part (the $n$ trials need not be non-signalling for the second party.)

\subsection{DIQKD}
Non-signalling protocols and security proofs perform poorly in a quantum context.  As an example, we consider the two protocols analyzed by Masanes in \cite{Masanes:2008:Universally-com}.  The first protocol analyzed is that presented in \cite{Acin:2006:Efficient-quant}, which is also the first (and so far only) DIQKD protocol.  This protocol has secure key rate 0 in the non-signalling framework when implemented by quantum devices.  The non-local correlations required for key generation cannot be generated by quantum devices.  The second protocol relies on a larger number of measurements.  For a particular \index{Bell inequality}Bell inequality (the Braunstein-Caves inequality \cite{Braunstein:1990:Wringing-out-be}) quantum devices may (asymptotically in the number of measurements) achieve the same correlations as arbitrary non-signalling distributions.  This allows a quantum implementation that achieves the full non-signalling secure key rate, but practical implementation is problematic due to the large number of measurements required.

Non-signalling based QKD is likely limited to a theoretical context because of its impracticality in a quantum setting, but its reliance on a smaller set of assumptions than traditional QKD is appealing.  For this reason Acin et al. \cite{Acin:2007:Device-Independ} \cite{Pironio:2009:Device-independ} reinterpreted the protocol of \cite{Acin:2006:Efficient-quant} in a quantum setting.  The result is DIQKD.  The initial work in this area consists of a proof of security in the quantum setting against a limited class of attacks.  This class of attacks is analogous to the collective attack model in QKD and bears the same name.  In this model the state is a tensor product $\rho^{\otimes n}$ of $n$ identical states.  This is measured by measurement devices with a fixed (but unknown) operation.  That is, the measurement operators for each measurement setting are fixed.

\subsection{Security proofs in QKD}
The results in this chapter rely on Renner's PhD thesis on the security of traditional QKD \cite{Renner:2005:Security-of-Qua}.  Renner's work provides a robust framework for security proofs of QKD protocols against general attacks.  Renner's work is notable for several reasons.  First, he adopts a \index{composable security}composable security definition, which means that the final key is secure for any application \cite{Konig:2007:Small-Accessibl}.  Second, he develops and uses a finite version of the quantum de Finetti Theorem \cite{Renner:2007:Symmetry-of-lar}, which allows the security proofs to be applicable to general attacks in which the combined state across all measurements is arbitrary.  Another important contribution is Renner's development of smooth min- and max-entropies, which play a major role in his security proofs and allow for finite, rather than asymptotic, analysis of security.

In order to use Renner's framework with a particular QKD protocol one must consider the states and measurements used in the protocol and determine two things.  The first is the set of states that pass the parameter estimation phase.  These are found by considering the measurements used during parameter estimation along with various security parameters.  Once this set of states is found one must determine the minimum secure key rate (found by calculating conditional entropies on the state) over all states that pass the parameter estimation phase.

\subsection{\index{CHSH inequality}CHSH inequality}\label{sec:chshinequality}
The main idea for the AMP06 protocol is foreshadowed in Ekert's work on entanglement based QKD protocols \cite{Ekert:1991:Quantum-cryptog}.  Ekert's protocol, Ekert91, used the CHSH inequality \cite{Clauser:1969:Proposed-Experi} (or rather the correlation function that the CHSH inequality bounds) to estimate how close the measured state is to a pair of maximally entangled qubits.  This estimate was then used to prove that a secure key may be extracted.  However, the state estimate is determined assuming that the measurement devices exactly implement the Pauli $X$ and $Z$ basis measurements.  The AMP06 protocol retains the use of the CHSH inequality but uses the black box device model.

The \index{CHSH inequality}CHSH inequality is a \index{Bell inequality}Bell inequality utilizing two measurement settings and two measurement outcomes for two parties.  The two parties, Alice and Bob, each randomly apply one of the two measurement operators to a bipartite state $\rho$ and compare outcomes.  The measurement operators are $A_{a}$ and $B_{b}$, where $a,b \in \{0,1\}$ are the measurement settings for Alice and Bob, respectively.  The operators $A_{a}$ and $B_{b}$ are Hermitian with eigenvalues 1 and -1.  The CHSH operator is a non-local measurement defined by
\begin{equation}\label{eq:chshoperator}
 CHSH = \sum_{a,b = 0,1}A_{a} \otimes B_{b} (-1)^{ab}.
\end{equation}
The CHSH inequality may be expressed as 
\begin{equation}
  S = \tr{CHSH \rho} = \sum_{a,b = 0,1}\tr{A_{a} \otimes B_{b} \rho} (-1)^{ab} \leq 2
\end{equation}
for local classical strategies, with an upper bound of $2 \sqrt{2}$ for quantum strategies.  Equivalently, we may use uniformly distributed random variables $a,b \in \{0,1\}$ for the measurement settings and random variables $x, y \in \{0,1\}$ for measurement outcomes, and derive the inequality
\begin{equation}
p =  P\left(x \oplus y = ab \right) \leq 0.75
\end{equation}
for local classical strategies, with an upper bound of $\cos^{2} \frac{\pi}{8} \approx 0.85$ for quantum strategies.  We say that a trial is successful if $x \oplus y = ab$.  In this notation the scenario may be described as a binary XOR game in which a referee supplies uniformly distributed queries $a$ and $b$ and receives replies $x$ and $y$.  Alice and Bob win the game if $x \oplus y = ab$.

The values $p$ and $S$ are related by
\begin{equation}
 S = 8p  - 4.
\end{equation}
Both of these quantities will be useful in this paper.  We will be interested in the maximum value of $S$ or $p$ achievable by a state $\rho$, maximized over all possible measurements.  We denote these values by $S_{max}(\rho)$ and $p_{max}(\rho)$.

Later, we need to determine $S_{max}(\rho)$ for a pair of qubits.  For this, we turn to Horodecki et al. \cite{Horodecki:1995:Violating-Bell-} who did exactly the required calculation.  Later, Verstraete and Wolf \cite{Verstraete:2002:Entanglement-ve} gave a different presentation of this calculation, which we will use here.

We begin by writing
\begin{equation}
\rho = \sum_{U,V \in \{I, X, Y, Z\}} \frac{R_{U,V}}{4} U \otimes V
\end{equation}
with
\begin{equation}
R_{U, V} = \tr{\rho UÊ\otimes V}.
\end{equation}
We define a matrix $R^{\prime}$, with rows and columns indexed by $X,Y,Z$ and entries $R_{U,V}$.  Meanwhile, we may write the measurement operators as
\begin{equation}
A_{a} = \sum_{U \in \{X,Y,Z\}} s_{a,U} U
\end{equation}
\begin{equation}
B_{b} = \sum_{V \in \{X,Y,Z\}} t_{b,V} V.
\end{equation}
We further define a matrix $M$ by
\begin{equation}
M = \left(
	\begin{matrix}
	s_{0} & s_{1}
	\end{matrix}
\right)
\left(
	\begin{matrix}
	1 & 1 \\
	1 & -1 \\
	\end{matrix}
\right)
\left(
	\begin{matrix}
	t^{T}_{0} \\ t^{T}_{1} \\
	\end{matrix}
\right),
\end{equation}
which is constrained by having $\tr{M^{T}M} = 4$ and $\text{Rank}(M) = 2$.  Then the value of the CHSH operator may be written as $\tr{R^{\prime}M}$.  Using standard optimization techniques (least squares approximation) we find the maximum to be $2 \sqrt{u^{2} + v^{2}}$ where $u$ and $v$ are the largest singular values of $R^{\prime}$ (or square roots of the eigenvalues of $R^{\prime}(R^{\prime})^{T}$.  In the case where $R^{\prime}$ is diagonal, $u$ and $v$ are the two largest (in absolute value) of the diagonal entries.

\subsection{The AMP06 protocol}
The AMP06 protocol was originally described in \cite{Acin:2006:Efficient-quant} and shown to be secure against collective quantum attacks in \cite{Acin:2007:Device-Independ} and \cite{Pironio:2009:Device-independ}.  Two parties, Alice and Bob, share a small amount of secret key and wish to expand this into a larger key.  They have access to an uncharacterized device which emits bipartite states, connected by quantum channels to a pair of uncharacterized measurement devices.  Alice's measurement device has three settings, while Bob's has two.  Finally, they have access to an insecure classical channel.  They use some secret key to authenticate data sent on the classical channel.

\begin{enumerate}
	\item Before beginning, Alice randomly chooses a list of $m$ trials to be used for parameter estimation which she sends to Bob encrypted, using some private key bits.
	\item For each trial, Alice and Bob request a state from the source.  If the trial is to be used for parameter estimation, Alice and Bob choose their measurement settings uniformly at random from $\{0,1\}$.  Otherwise Alice chooses setting 2 and Bob chooses setting 0.
	\item After all trials are completed, Alice and Bob announce their measurement settings.
	\item Alice and Bob publicly announce a permutation and reorder their trials according to this permutation.
	\item Alice and Bob estimate $S$, the CHSH value, from the parameter estimation trials.
	\item Alice and Bob perform error correction on the remaining trials, correcting Alice's outcomes to correspond with Bob's, resulting in the raw key.
	\item Alice and Bob perform privacy amplification on the raw key according to the secure key rate predicted by $S$.
\end{enumerate}

The above protocol could be efficiently implemented using quantum apparatus by a source of qubit pairs in the state $\ket{\phi_{+}} = \frac{1}{\sqrt{2}}\ket{00} + \frac{1}{\sqrt{2}}\ket{11}$, with Alice's measurements given by the operators $X$, $Y$, and $\frac{X + Y}{\sqrt{2}}$.  Bob's measurement operators are $\frac{X + Y}{\sqrt{2}}$ and $\frac{X - Y}{\sqrt{2}}$.  The security comes from the fact that in order to achieve a high value of $S$, the state that Alice and Bob measure must be close to $\ket{\phi_{+}}$ and hence Bob's measurements are uncorrelated with Eve.  The efficiency of the protocol comes from the fact that Alice can align her measurement with Bob's a significant amount of the time and obtain strongly correlated results, so long as she chooses the other measurements often enough to detect any deviation in the state from $\ket{\phi_{+}}$.

Instead of choosing which trials to use for parameter estimation in advance, Alice and Bob may choose their settings independently, saving some key.  This introduces trials which are unusable (when Alice chooses 2 and Bob chooses 1) and unless Bob chooses 0 and 1 uniformly, there will be some parameter estimation settings that occur more than others.  Conceptually it is easier to suppose that the parameter estimation trials are first chosen and then the settings are chosen uniformly.

In \cite{Acin:2007:Device-Independ} and \cite{Pironio:2009:Device-independ} the protocol requires that Alice and Bob symmetrize their data by flipping their outcomes according to a random string which is publicly broadcast.  This simplifies the analysis by introducing symmetries in the quantum state.  However, the symmetrization procedure need not be done in practice since it does not change the amount of information leaked to an adversary;  Eve may account for the symmetrization in her own analysis after observing the public random string.  Here we omit the symmetrization.

\section{Security models}

\subsection{\index{DIQKD black box model}Black box model}
DIQKD uses a black box model of quantum devices as in self-testing.  The devices are considered to be adversarial, always operating in such a way as to maximize the information leakage to Eve.  Of course we must place some restrictions on the devices, otherwise they may simply transmit all their information to Eve.  We require:

\begin{assumption}\label{assumption:diqkdblackbox}
	The measurement devices do not leak any information to Eve.
\end{assumption}

The black box measurement devices have a quantum input, a classical input (measurement setting) and a classical output (measurement outcomes).  Alice (or Bob for his device) has exclusive control of the classical input and output, and there are no other side channels.  The quantum input is strictly input only.  There can be no quantum or classical states ``leaking'' from the quantum input.  We may model each device as a quantum channel that has two input registers, one the quantum input and the other the measurement setting, and one output register.  Alice possesses both classical registers.  

The device model for DIQKD is sometimes described as ``Eve provides the measurement devices.''  However, this depiction is only applicable if we can for some reason trust Eve not to build the device in such a way that it leaks information.  Instead, the model should be understood as a theoretical tool which eliminates the dependance on a particular physical model.  In a secure physical implementation there must still be a physical model which makes assumption~\ref{assumption:diqkdblackbox} reasonable.

\subsection{\index{collective attack model}Collective attack model and security}

As described above, the protocol could be performed using the same devices over and over.  Pironio et al. (\cite{Pironio:2009:Device-independ}) originally considered security against collective attacks, which relies on the assumption that the devices operate identically each time, and have no memory of the previous trials.  For the source this means that state emitted over $n$ trials has the form $\rho^{\otimes n}$.  A physical implementation using devices that are used repeatedly must meet the following assumptions

\begin{assumption}[DIQKD Collective attacks]

\noindent
\begin{itemize}
	\item On each trial the source emits $\rho$.
	\item The combined state that the source emits is $\rho^{\otimes n}$.
	\item The measurement devices have no memory.
\end{itemize}
\end{assumption}

 Pironio et al. proved security in this model with a secure key rate depending on $S$ and the error rate between the measurement outcomes for setting $2$ on Alice's side and setting $0$ on Bob's side.
 
\begin{theorem}[Pironio et al. \cite{Pironio:2009:Device-independ}]\label{theorem:diqkdcollectivesecurity}
The AMP06 protocol is secure against collective quantum attacks with secure key rate
\begin{equation}
1 - h \left(\frac{1 + \sqrt{(S/2)^{2} - 1}}{2} \right) - h(q).
\end{equation}
where $S$ is the CHSH value and $q$ is the bit error rate.
\end{theorem}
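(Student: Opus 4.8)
The plan is to follow the standard route for security against collective attacks: exploit the i.i.d.\ structure to reduce to a single round, apply the Devetak--Winter rate formula, and then lower bound Eve's uncertainty about the raw key using only the CHSH value $S$, by decomposing the Hilbert space into qubit blocks.

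First I would invoke the collective-attack assumptions. Since the source emits $\rho^{\otimes n}$ and the measurement devices are memoryless, the $n$ key-generating rounds (Alice's setting $2$, Bob's setting $0$) are i.i.d.\ uses of fixed measurements on a fixed state. Purifying $\rho$ to $\rho_{ABE}$ with Eve holding the purifying register, the achievable asymptotic one-way secret-key rate for reconciliation from Alice to Bob is the Devetak--Winter quantity $r = H(A|E) - H(A|B)$, evaluated on the single-round classical-quantum state obtained after Alice and Bob record their raw bits $A,B$. After the symmetrisation step (which, as noted in the protocol description, does not affect Eve's information), $A$ and $B$ are related by a binary symmetric channel with crossover probability $q$ and have uniform marginals, so $H(A|B) = h(q)$. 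Thus the theorem reduces to proving
\begin{equation}
H(A|E) \;\geq\; 1 - h\!\left(\tfrac{1 + \sqrt{(S/2)^{2} - 1}}{2}\right)
\end{equation}
for every state and every choice of $\pm 1$ observables consistent with the observed CHSH value $S$.

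Next I would reduce the bound on $H(A|E)$ to a two-qubit problem. Alice's two CHSH observables $A_{0},A_{1}$ are Hermitian with eigenvalues $\pm 1$; by Jordan's lemma there is a basis in which both are simultaneously block diagonal with blocks of dimension at most two, and likewise for Bob's $B_{0},B_{1}$. Hence $\mathcal{H}_{A}\otimes\mathcal{H}_{B}$ decomposes as an orthogonal direct sum of qubit-pair sectors, Eve may be taken to hold the (classical) sector label $j$ together with a purification of the state inside each sector, and both quantities of interest split: $S = \sum_{j} p_{j} S_{j}$ and $H(A|E) = \sum_{j} p_{j}\,H(A|E)_{j}$, where sector $j$ is a two-qubit problem with CHSH value $S_{j}$. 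It therefore suffices to establish the per-sector bound $H(A|E)_{j} \geq 1 - h\!\left(\tfrac{1}{2}(1+\sqrt{(S_{j}/2)^{2}-1})\right)$ and then to recombine the sectors.

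For a single qubit sector I would pass to the phase basis complementary to Alice's raw-key measurement: a high CHSH value $S_{j}$ forces a correspondingly strong correlation of the complementary outcomes, which bounds the phase error rate by $e_{p} \leq \tfrac{1}{2}\bigl(1+\sqrt{(S_{j}/2)^{2}-1}\bigr)$, and an entropic-uncertainty / Shor--Preskill type argument then gives $H(A|E)_{j} \geq 1 - h(e_{p})$. (Equivalently one can use the Horodecki et al.\ characterisation of $S_{\max}(\rho)$ via the singular values of the correlation matrix recalled in Section~\ref{sec:chshinequality} to reduce to Bell-diagonal states and optimise Eve's purification explicitly.) Finally I would verify that $g(x) := h\!\left(\tfrac{1}{2}(1+\sqrt{(x/2)^{2}-1})\right)$ is concave on $[2,2\sqrt{2}]$; Jensen's inequality then gives $\sum_{j} p_{j} g(S_{j}) \leq g(\sum_{j} p_{j} S_{j}) = g(S)$, hence $H(A|E) = \sum_{j} p_{j}(1 - g(S_{j})) \geq 1 - g(S)$, which together with $H(A|B) = h(q)$ and Devetak--Winter yields the claimed rate. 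The main obstacle is the per-sector estimate --- showing that the single number $S$, rather than full tomography of the two-qubit state, already pins down Eve's information and extracting the exact constant $\tfrac{1}{2}(1+\sqrt{(S/2)^{2}-1})$; the concavity of $g$ is a secondary but genuinely necessary analytic check, and some care is also needed to confirm that Eve gains nothing by correlating the (classical) sector label with her quantum side information.
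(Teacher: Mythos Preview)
Your overall architecture --- Devetak--Winter for i.i.d.\ rounds, Jordan's lemma to reduce to qubit sectors, a per-sector entropy bound, and concavity of $g$ to recombine --- is exactly the route Pironio et al.\ take and which the paper reproduces in Section~\ref{sec:diqkdestent}. Two points, however.

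First, there is a genuine gap in your reduction. You bound $H(A|E)$, with $A$ the outcome of Alice's setting-$2$ measurement. But Alice's setting~$2$ is \emph{not} one of her CHSH observables $A_0,A_1$, so it has no reason to be block-diagonal in the Jordan decomposition you perform on $A_0,A_1$; the splitting $H(A|E)=\sum_j p_j\,H(A|E)_j$ then fails. The protocol is arranged precisely so that the raw key is Bob's setting-$0$ outcome, which \emph{is} one of his CHSH observables and hence respects the blocks; the paper accordingly bounds $H(Y|E)$ for $Y$ Bob's outcome (Alice error-corrects to Bob) and minimises over Bob's measurement within each sector. Swapping $A$ for $B$ repairs your argument.

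Second, your primary per-sector route --- a phase-error/Shor--Preskill argument with the asserted bound $e_p\le\tfrac12\bigl(1+\sqrt{(S_j/2)^2-1}\bigr)$ --- is not what the paper does and is not justified as stated: in a black-box qubit sector there is no measured complementary basis, so extracting a phase-error rate from $S_j$ alone would itself require a proof. What the paper (following Pironio et al.) actually does is your parenthetical alternative: reduce to Bell-diagonal states by $Y\otimes Y$ symmetrisation and complex conjugation (Lemma~\ref{lemma:diqkdbelldiagsuff}), express $S_{\max}$ in the Bell eigenvalues via the Horodecki characterisation (Lemma~\ref{lemma:diqkdbelldiagchsh}), compute the eigenvalues of $\rho_{YE}$ explicitly, and then invoke an entropy inequality (Pironio et al., Lemma~6) to obtain $H(Y|E)\ge 1-h\bigl(\tfrac12(1+\sqrt{(S/2)^2-1})\bigr)$ as in Lemma~\ref{lemma:diqkdhye}.
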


\subsection{\index{memoryless device attack model}Memoryless device attack model}\label{sec:memorylessattackmodel}
In this chapter we will give a partial result on security in a more relaxed model than the collective attack model.  We may describe it in two different ways:  in a serial or parallel fashion, which are equivalent given certain assumptions.  In order to illustrate the difference and how the current model differs from the collective attack model we give two descriptions of the latter.  In the collective attack model there are several trials, and each trial is identical; the states are identical and independent, and the measurement applied for each trial (for a given measurement setting) is the same.  In a physical implementation we may consider two means of achieving this scenario.  The first is a serial model where a source emits identical and independent copies of a particular state, which is measured by a measurement device which operates identically (for a given setting) each time it is used.  In a parallel model, by contrast, each trial is implemented by a separate physical system (prepared in identical states) which is measured by separate devices (each operating identically).  Clearly for the collective attack model these two physical depictions carry no theoretical difference.  Physical implementation is of course easier with a serial model, while security proofs for QKD protocols typically rely on the parallel model.

We now move to the memoryless device attack model.  The goal is to achieve as general a security proof as possible.  We consider first the parallel model.  The most general attack model would be to allow any state and any measurement.  Taken to the extreme, one may consider a single large POVM, dependent on the measurement settings for all trials, outputting the results for all trials simultaneously.  A more restrictive model would have the measurement for each trial arbitrary, but operating on a separate physical system.  The separation could be enforced by some type of shielding, which is already necessary to obtain Assumption~\ref{assumption:diqkdblackbox}.

Clearly a large number of space-like separated measurements is not practical.  A practical implementation could be made with single devices used serially, with only the following assumption:

\begin{assumption}[DIQKD Global attacks with memoryless devices]

\noindent
\begin{itemize}
	\item The measurement devices have no memory.
\end{itemize}
\end{assumption}

\noindent  Suppose we operate the memoryless measurement devices in a lockstep fashion with the measurement settings so that the next measurement setting is only given to the device once the result of the previous trial has been given.  In this case, since the devices have no memory, the various trials are completely independent and the measurements on each trial commute with all other measurements.  Thus this model is equivalent to the parallel model with measurements operating on separate physical systems.

Thus we arrive at the memoryless device attack model.  The source may emit any type of state, which may include a complete specification on how the measurement devices are to operate on a particular trial, and the state may be entangled between trials.  There is no restriction on the dimension of the state or on the form of the measurement operators.  However, the measurement devices have no memory.

\section{Security of AMP06}

The main result in this chapter is to give a partial result showing that the AMP06 protocol described in \cite{Acin:2007:Device-Independ} is secure in the memoryless device attack model.  Unfortunately we are not able to give a full proof, and instead give a partial result which depends on a conjecture.  To be precise, we give a proof of the following Theorem:

\begin{theorem}
The AMP06 protocol is secure against qubit strategies with a symmetric state and memoryless measurement devices with secure key rate  
\begin{equation}
1 - h \left(\frac{1 + \sqrt{(S/2)^{2} - 1}}{2} \right) - h(q).
\end{equation}
where $S$ is the CHSH value and $q$ is the bit error rate.
\end{theorem}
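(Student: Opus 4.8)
The plan is to reduce the memoryless-device, symmetric-state, qubit model to the collective-attack model already analysed by Pironio et al., and then to quote Theorem~\ref{theorem:diqkdcollectivesecurity}, which gives exactly the claimed rate $1 - h\left(\frac{1 + \sqrt{(S/2)^{2}-1}}{2}\right) - h(q)$. The bridge between the two models is Renner's finite (exponential) quantum de Finetti theorem, and this is precisely why the hypotheses are needed. ``Memoryless'' lets us treat the $n$-round measurement as a tensor product of identical single-round POVMs, one per setting, so that the measurements on distinct rounds commute; the public permutation step of the AMP06 protocol together with the ``symmetric state'' assumption makes the global $AB$ state permutation invariant; and ``qubit strategies'' keeps each party's local Hilbert space two-dimensional, which is what controls the error term in the de Finetti estimate. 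Without a dimension bound the de Finetti theorem is vacuous, so this restriction is essential rather than cosmetic.

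First I would set up the reduction. Write $\rho_{AB}$ for the permutation-invariant state on the $n$-fold tensor product of Alice's and Bob's qubit spaces, purified into Eve's register. Renner's exponential de Finetti theorem then says that the reduced state on all but a sublinear number of rounds is $\varepsilon$-close in trace distance to a mixture $\int d\mu(\sigma)\, \sigma_{AB}^{\otimes m}$ of (nearly) i.i.d. two-qubit states, each still purified by Eve, with $\varepsilon \to 0$ as $n \to \infty$; here the bounded local dimension is exactly what makes $\varepsilon$ small. Since the memoryless measurements act round by round, applying them commutes with this decomposition, so the resulting classical--quantum state is, up to $\varepsilon$, a $\mu$-mixture of i.i.d. components, each of which is an instance of the collective-attack model.

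Next I would analyse a single i.i.d. component $\sigma_{AB}$. Passing the CHSH test pins down $S = \tr{CHSH\, \sigma_{AB}}$ up to statistical fluctuation, and the bit-error test pins down $q$. Using the Horodecki--Verstraete--Wolf characterisation recalled in Section~\ref{sec:chshinequality}, $S \leq 2\sqrt{u^{2}+v^{2}}$ where $u,v$ are the two largest singular values of the $3\times 3$ correlation matrix of $\sigma_{AB}$; this constrains the spectrum of $\sigma_{AB}$ and forces it close to $\proj{\phi_{+}}{\phi_{+}}$, which in turn upper-bounds Eve's knowledge of Alice's setting-$2$ outcome. Quantitatively this is the one-round bound $H(A|E) \geq 1 - h\left(\frac{1+\sqrt{(S/2)^{2}-1}}{2}\right)$ of Pironio et al., while the error-correction leakage is $H(A|B) = h(q)$, so the one-round key rate of the component is the difference $H(A|E) - H(A|B)$, which is the claimed expression. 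Averaging over $\mu$ and folding in the vanishing de Finetti corrections through Renner's smooth min-entropy calculus then yields the asymptotic rate.

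The hard part, and the reason the statement is conditional, is making the final averaging step rigorous in the device-independent setting: one needs, roughly, a concentration/convexity statement to the effect that a $\mu$-mixture whose components all pass parameter estimation is, for the purpose of privacy amplification, no worse than the single worst component consistent with the observed statistics $(S,q)$. Establishing this without extra structural hypotheses is what is isolated as the conjecture below; everything upstream of it --- the de Finetti reduction and the per-component bound, the latter being Theorem~\ref{theorem:diqkdcollectivesecurity} --- goes through unconditionally.
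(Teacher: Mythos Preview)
Your reduction has a real gap at the point where you claim that ``memoryless lets us treat the $n$-round measurement as a tensor product of \emph{identical} single-round POVMs, one per setting.'' In the device-independent model this is false: memorylessness guarantees only that the operators on distinct rounds commute (act on disjoint subsystems), not that they are equal. The paper says this explicitly in its discussion of symmetrization: in the black-box scenario ``the measurements may differ for each trial.'' Consequently, after de Finetti you do get the state close to a mixture of $\sigma^{\otimes m}$, but each such component is still being measured with potentially different qubit observables on each round. That is \emph{not} an instance of the collective-attack model of Theorem~\ref{theorem:diqkdcollectivesecurity}, which assumes fixed measurement operators, so you cannot simply quote that theorem per component. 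Relatedly, your line ``passing the CHSH test pins down $S=\tr{CHSH\,\sigma_{AB}}$'' is ill-posed here, because the CHSH operator depends on the round-by-round measurements, which are not a single fixed quadruple.

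The paper handles exactly this difficulty, and this is where the actual work lies. First, it replaces the usual parameter-estimation step by Lemma~\ref{lemma:diqkdparameterestimation}, which bounds the observed CHSH success frequency against $p_{\max}(\ket{\phi})$, a property of the state alone, valid for \emph{any} measurement strategy on the $m$ test subsystems; the point is that suboptimal or varying measurements can only lower the observed value, so the estimate of $S_{\max}$ is one-sided but in the right direction. Second, for the key-generation rounds it rederives the single-round bound $H(Y|E)\geq 1-h\bigl(\tfrac{1+\sqrt{(S/2)^2-1}}{2}\bigr)$ (Lemma~\ref{lemma:diqkdhye}) by explicitly minimizing over Bob's qubit observable, after the Bell-diagonal reduction of Lemma~\ref{lemma:diqkdbelldiagsuff}. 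Only then does it feed everything into Renner's Corollary~6.5.2. So the architecture is de Finetti $+$ new black-box parameter estimation $+$ measurement-minimized entropy bound, not de Finetti $+$ Theorem~\ref{theorem:diqkdcollectivesecurity}.

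Finally, this theorem is \emph{not} conditional. Conjecture~\ref{conjecture:permutesym} is needed only to pass from an arbitrary qubit strategy to one with a symmetric state; here symmetry is a hypothesis, and the proof goes through unconditionally. The ``averaging over $\mu$'' step you flag as the hard part is in fact the standard part, handled by Renner's smooth-min-entropy machinery once the per-$\ket{\phi}$ parameter estimation and entropy bounds are in place.
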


We also outline how this can might be extended to qubit strategies with an arbitrary state and memoryless measurement devices.  Unfortunately, the proof is incomplete:
\begin{conjecture}\label{conjecture:permutesym}
Permuting trials of a qubit strategy is equivalent to a qubit strategy with a symmetric state and a lower $S_{max}$.
\end{conjecture}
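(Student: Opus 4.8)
\medskip
\noindent\textbf{Proof strategy.}\quad The plan is to exploit the public random permutation in step~4 of the AMP06 protocol as a symmetrisation that reduces an arbitrary qubit strategy to the symmetric case already handled, while keeping control of the CHSH value. Model a qubit strategy as a state $\rho$ on the $2n$ qubits $A_1B_1\cdots A_nB_n$, purified by a register $E$ held by Eve, together with the per-trial qubit observables, and write $\sigma_i$ for the reduced two-qubit state of trial $i$ and $S_i$ for the CHSH value realised on $\sigma_i$. In the serial realisation of the memoryless model the same device is used on every trial, so the observables are trial-independent, say $A^a$ ($a\in\{0,1,2\}$) and $B^b$ ($b\in\{0,1\}$); I treat this case first. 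The goal is to show that this strategy leaks no less to Eve than the strategy obtained by averaging over trial permutations, and that the latter has CHSH value no larger than $\frac1n\sum_iS_i$.

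First I would observe that relabelling the trials by a uniformly random $\pi\in S_n$ after the measurements, as the protocol prescribes, leaves Alice's and Bob's raw data and Eve's final state with exactly the joint distribution they would have if the same protocol (without relabelling) were run on the permutation-averaged state $\bar\rho=\frac1{n!}\sum_{\pi}U_\pi\,\rho\,U_\pi^\dagger$, where $U_\pi$ permutes the $A_iB_i$ blocks and Eve additionally records $\pi$; this uses that the observables are identical on every trial, so $U_\pi$ can be commuted past the measurement. The state $\bar\rho$ is invariant under the $S_n$-action permuting the $A_iB_iE_i$ systems, so $(\bar\rho,A^a,B^b)$ is a symmetric qubit strategy and permuting the trials of the original strategy is equivalent to it in the required sense. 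Its single-trial reduced state is $\bar\sigma=\frac1n\sum_i\sigma_i$.

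Next I would control $S_{max}$. Since $S_{max}(\tau)=\sup\,\mathrm{Tr}[\mathrm{CHSH}\,\tau]$, the supremum over choices of qubit observables, is a supremum of linear functionals it is convex, whence
\begin{equation}
S_{max}(\bar\sigma)\ \le\ \frac1n\sum_{i=1}^{n}S_{max}(\sigma_i),
\end{equation}
so the symmetric strategy's single-trial CHSH capability is at most the trial-average of the original's. At the same time the CHSH value \emph{estimated} during parameter estimation is the empirical mean of the $S_i$ over the random test subset, which by a standard tail bound lies within $o(1)$ of $\frac1n\sum_iS_i$ with overwhelming probability for both strategies, so the two present the \emph{same} observed $S$ while the symmetric one has the smaller $S_{max}$ --- the conservative direction for the monotone key-rate function of Theorem~\ref{theorem:diqkdcollectivesecurity}. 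Feeding $\bar\rho$ into the exponential quantum de~Finetti theorem (each trial has fixed dimension $4$, so the overhead is benign) reduces it to a mixture of i.i.d.\ two-qubit strategies; bounding, via the test statistic, the weight of components whose two-qubit CHSH value falls well below the observed $S$, and then applying the symmetric-state result together with Theorem~\ref{theorem:diqkdcollectivesecurity} to the surviving component, delivers the stated rate.

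The hard part, and the reason this remains conjectural, is the first step: that the symmetrised strategy is \emph{no better for Eve} than the original. Because the permutation is announced publicly, Eve may have tailored her preparation to it, so substituting $\bar\rho$ for $\rho$ in the analysis is legitimate only if one shows that the protocol's output --- in particular Eve's guessing advantage against the raw key --- depends on the state solely through $\bar\rho$; making this rigorous in the black-box setting, where the relevant quantity is a CHSH-conditioned smooth min-entropy rather than a linear functional, is delicate. A second obstacle is that the sketch assumes the devices act identically on every trial; in the fully general memoryless model, where distinct memoryless devices may be used on distinct trials, $U_\pi$ no longer commutes past the measurement, and one needs a device-symmetrisation argument showing that trial-dependent devices give Eve no more power, at a fixed observed $S$, than a single fixed device --- precisely the point that resists a clean proof. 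Finally, even granting symmetry, matching the de~Finetti smoothing error against the parameter-estimation tail tightly enough to recover the \emph{exact} rate $1-h\!\left(\frac{1+\sqrt{(S/2)^2-1}}{2}\right)-h(q)$ rather than a degraded one needs care, for instance a postselection-type argument.
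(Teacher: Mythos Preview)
The statement is a \emph{conjecture}; the paper does not prove it, and explicitly says it cannot close the gap. So the comparison is between two outlines of an approach, not two proofs.

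Your sketch and the paper's sketch diverge at the very first modelling step. You take the serial picture literally and assume the qubit observables are \emph{trial-independent} (``the same device is used on every trial, so the observables are trial-independent''). In that case permuting outcomes does commute with measurement, $\bar\rho$ is symmetric, and the observed $S$ is achieved on the single-trial marginal $\bar\sigma$ by the very same observables, so $S_{max}(\bar\sigma)\ge S$ trivially and the standard Renner argument runs. But this is exactly the case the paper says is \emph{not} the one at issue: after the Jordan block-diagonalisation (Lemma~\ref{lemma:blockdiagonalization}) the qubit observables $A_j^{a,z_j}$, $B_j^{b,w_j}$ depend on the trial $j$ through the block label, and the paper stresses that ``in the black box scenario \dots\ the measurements may differ for each trial''. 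You acknowledge this as your ``second obstacle'', but it is not an obstacle to finishing the proof --- it is the entire content of the conjecture.

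The paper's proposed route is genuinely different from yours. It does not assume identical observables; instead it exploits the freedom in choosing the local qubit frame for each trial. One fixes the basis so that $B_j^0=Z$ for every $j$; then permuting the state \emph{does} commute with the key-generating measurement $B^0$, so the raw key on $\bar\rho$ agrees with the permuted raw key on $\rho$. The remaining observables $A_j^0,A_j^1,B_j^1$ stay trial-dependent, but after putting them in the $X,Z$ plane only three real parameters of $R'$ per trial matter. The paper's obstacle is then to \emph{lower}-bound $S_{max}(\bar\sigma)$ by (some function of) the trial-averaged $S_{max}(\sigma_i)$, and it admits it cannot do this.

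This brings out a concrete error in your write-up: your convexity inequality $S_{max}(\bar\sigma)\le\frac1n\sum_i S_{max}(\sigma_i)$ is correct but points the wrong way. What the argument needs, in order to feed $\bar\rho$ into the parameter-estimation lemma and then the key-rate bound, is a \emph{lower} bound on $S_{max}(\bar\sigma)$ in terms of something you can estimate from the data on $\rho$. An upper bound on $S_{max}(\bar\sigma)$ does not certify anything; your sentence ``the symmetric one has the smaller $S_{max}$ --- the conservative direction for the monotone key-rate function'' conflates the monotonicity of the rate in $S$ with the direction of the certification. In the trial-independent case you never needed convexity at all (the same observables witness $S$ on $\bar\sigma$); in the trial-dependent case, which is the actual conjecture, convexity gives you nothing.
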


Applying techniques from \cite{Pironio:2009:Device-independ} the result can then be extended to strategies with a state of any dimension.

\begin{theorem}\label{theorem:diqkdmemorylesssecurity}
If conjecture~\ref{conjecture:permutesym} is true, then the AMP06 protocol is secure against quantum attacks with memoryless measurement devices. 
\end{theorem}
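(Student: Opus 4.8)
The plan is to reduce an arbitrary quantum attack with memoryless measurement devices, in two steps, to the situation covered by the security theorem for symmetric qubit strategies proved above, and then to invoke that theorem. The two reductions are a symmetrization step, which uses the public random permutation that is already part of the AMP06 protocol together with Conjecture~\ref{conjecture:permutesym}, and a dimension reduction step, which follows the techniques of Pironio et al.\ \cite{Pironio:2009:Device-independ}. Since the devices are memoryless, the measurement applied on each trial acts on a separate subsystem and commutes with the measurements of all other trials, so the only thing that prevents an immediate application of a collective attack analysis is that the joint state across the trials (and with Eve) may be arbitrary rather than of the form $\rho^{\otimes n}$; the symmetrization step is exactly what repairs this.

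First I would treat a qubit strategy with an arbitrary, not necessarily symmetric, state. Here I would use step~4 of the protocol, in which Alice and Bob publicly announce and apply a random permutation of their trials. Because the permutation is announced, Eve can always incorporate it into her own processing, so it suffices to prove security for the strategy obtained by averaging the joint $AB$ state over all permutations; this is the standard symmetrization device used in QKD security proofs based on the quantum de Finetti theorem (cf.\ Renner \cite{Renner:2005:Security-of-Qua}). By Conjecture~\ref{conjecture:permutesym}, this permutation-averaged qubit strategy is equivalent, from the point of view of the protocol statistics, to a qubit strategy with a symmetric state and with $S_{max}$ no larger; in particular it reproduces the same observed CHSH value $S$ and error rate $q$. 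The symmetric qubit security theorem proved above then applies and yields the key rate $1 - h\!\left(\tfrac{1+\sqrt{(S/2)^{2}-1}}{2}\right) - h(q)$ for this case.

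Next I would lift from qubit strategies to strategies of arbitrary dimension, following Pironio et al.\ \cite{Pironio:2009:Device-independ}. Applying Jordan's lemma to the observables of each memoryless device, the per-trial Hilbert space of each party decomposes into a direct sum of blocks of dimension at most two, and---because the devices are memoryless---this decomposition is the same for every trial. Recording the classical block labels in a register and granting it to Eve in the worst case turns the attack into a convex combination of qubit strategies, each secure by the previous step. The one point needing care is that different blocks realize different CHSH values while the protocol estimates a single averaged $S$; one resolves this exactly as in \cite{Pironio:2009:Device-independ}, using convexity of the key rate bound as a function of $S$ to conclude that the averaged strategy is secure with the rate determined by the observed $S$. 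Finite-sample estimation of $S$ and $q$ is handled as in \cite{Acin:2007:Device-Independ, Pironio:2009:Device-independ} and Renner's framework.

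Since Conjecture~\ref{conjecture:permutesym} is granted, the real work lies in the two steps just flagged. I expect the main obstacle to be making the symmetrization genuinely rigorous in the device-independent setting: unlike in ordinary QKD, we may assume nothing about the state or the measurement operators, so the claim that averaging over the public permutation loses no security must be phrased purely in terms of the induced classical statistics and of Eve's total information, and one must check that the reordering interacts correctly with the a priori choice of parameter estimation trials. The convexity argument needed to merge the Jordan lemma blocks is a second, less severe, obstacle; it is an analogue of a step in the collective attack proof of \cite{Pironio:2009:Device-independ} but must be re-examined now that the across-trial state is arbitrary.
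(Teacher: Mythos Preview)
Your overall architecture---Jordan-type dimension reduction to qubits, then the conjecture to get symmetry, then the symmetric-qubit theorem---matches the paper's plan. But there is a genuine gap in your dimension-reduction step.

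You write that, ``because the devices are memoryless, this decomposition is the same for every trial.'' That is false in the memoryless device model used here. Memorylessness only guarantees that the $j$th measurement acts on the $j$th subsystem and commutes with the others; it does \emph{not} force the observables $A_j^{0},A_j^{1}$ to be the same for every $j$. As the paper explicitly allows, the source may encode per-trial instructions in the state, so the two observables on trial $j$ can be completely different from those on trial $j'$. Consequently, Jordan's lemma must be applied \emph{per trial}, producing a block index $z_j$ (and $w_j$) for each $j$; the resulting qubit strategies are indexed by the full vector $(z_1,\dots,z_n,w_1,\dots,w_n)$, and even after conditioning on this vector the qubit observables $A_j^{a,z_j}$ still vary with $j$. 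This breaks the analogy with the collective-attack analysis of Pironio et al., where the block index is a single $(z,w)$ shared by all trials and the per-trial observables within a block are identical. Your proposed ``convexity of the key-rate bound in $S$'' step therefore does not go through as stated: you do not get an i.i.d.\ mixture over a single block label that you can average against.

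This same misunderstanding also undercuts your first step. You describe the symmetrization as ``the standard symmetrization device used in QKD security proofs based on the quantum de Finetti theorem.'' The paper's whole point is that the standard device fails here: permuting outcomes commutes with the measurements only when the measurements are identical across trials, which they are not. That is precisely why Conjecture~\ref{conjecture:permutesym} is needed---not as an add-on to the standard argument, but as a replacement for it. The paper's route is: first do the per-trial Jordan reduction to obtain a (non-symmetric) qubit strategy on $(\mathcal{H}_2\otimes\mathcal{H}_2)^{\otimes n}$ with trial-dependent qubit observables; then invoke the conjecture to pass to a symmetric qubit strategy with no larger $S_{max}$; then apply the finite de~Finetti theorem and a black-box parameter-estimation lemma tailored to the fact that the CHSH measurements are uncontrolled. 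Your sketch should be reorganized along these lines, and in particular should drop the claim of a trial-independent Jordan decomposition.
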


\subsection{Proof overview}

We will make extensive use of Renner's framework for QKD security proofs (\cite{Renner:2005:Security-of-Qua} chapter 6), but it will require adaptation in order to be applicable within the black-box model.  In particular the parameter estimation in Renner's framework assumes that the same measurement (for a given setting) is applied for every trial, which cannot be assumed within the black-box model.  Also, the finite de Finetti Theorem is sensitive to the dimension of the Hilbert space.  Since in the black-box model the Hilbert space is unknown, we cannot use the finite de Finetti Theorem directly to obtain any bounds.

In addition to solving the above problems, we must also characterize the set of states that pass the parameter estimation phase (in this case, a Bell inequality) and determine the minimum key rate for these states.  For this stage we will make use of state parameterization and entropy bounds from \cite{Pironio:2009:Device-independ}.

\subsubsection{Overview of Renner's security proof}
Since our proof of Theorem~\ref{theorem:diqkdmemorylesssecurity} is an adaptation of Renner's security proof, we sketch the steps in that proof here:

\begin{enumerate}
	\item Permute trials to obtain a symmetric state
	\item Apply the finite de Finetti Theorem
	\item Measure $m$ trials and apply parameter estimation Lemma
	\item Measure remaining trials to obtain the raw key
	\item Estimate the min-entropy of the raw key
	\item Apply classical post-processing to obtain final key
\end{enumerate} 

The final security claim consists of an estimate of the trace distance (induced by the $1$-norm) between the processed measurement outcomes and a uniformly random key independent of Eve.  In order to obtain this estimate we make use of two tools: the triangle inequality and the fact that trace distance is non-increasing under quantum operations.  This produces a chain of inequalities finally ending with the security claim.

\subsubsection{Outline of proof of Theorem~\ref{theorem:diqkdmemorylesssecurity}}
Our proof follows the same sketch as that of Renner's security proof, but we must make adaptations at all but the last step:

\begin{enumerate}
	\item Reduce arbitrary strategies to qubit strategies
	\item Show permuting trial outcomes is sufficient in the black box model
	\item Prove new parameter estimation Lemma for black box model and CHSH value
	\item Minimize min-entropy over possible measurements
	\item Estimate min-entropy from CHSH value estimate
	\item Apply classical post-processing to obtain final key
\end{enumerate}

The physical model we use in the proof is that of many parallel trials where the state for each trial is contained in a pair of subsystems (one for Alice and one for Bob), as introduced in section~\ref{sec:memorylessattackmodel}.  Some careful thought will show that all the procedures used can be either serialized or performed once all the quantum systems have been measured solely using the classical data.

The remainder of this section is divided into subsections devoted to each of the above steps. 

\subsection{Reduction to qubit strategies}

Before we can use Renner's QKD proof framework we must first fix the dimension of the subsystems.  This is because the finite de Finetti Theorem, described below in section~\ref{sec:symmetricstates}, is sensitive to the dimension.  In particular, if the dimension is unbounded then no conclusion may be drawn.  Since we have no \emph{a priori} bound on the dimension, we must make some form of reduction.  Our main tool will be the following Lemma, which is originally due to Jordan \cite{Jordan:1875:Essai-sur-la-ge}, but has been rediscovered many times.  Modern proofs appear in \cite{Masanes:2006:Asymptotic-Viol} and \cite{Pironio:2009:Device-independ}.  We will use the formulation appearing in \cite{Pironio:2009:Device-independ}.

\begin{lemma}[Pironio et al. \cite{Pironio:2009:Device-independ} Lemma 2]\label{lemma:blockdiagonalization}
Let $A^{0}$ and $A^{1}$ be two operators on $\mathcal{H}$ with two eigenvalues.  Then $A^{0}$ and $A^{1}$ can be simultaneously block diagonalized with block sizes $2 \times 2$ and $1 \times 1$.
\end{lemma}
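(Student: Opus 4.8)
The plan is to prove this by the classical argument behind ``Jordan's lemma,'' organised around the unitary $U := A^{0}A^{1}$. First I would normalise: the operators in question are measurement observables, hence Hermitian, with spectrum $\{\lambda_{+},\lambda_{-}\}$, $\lambda_{+}\neq\lambda_{-}$; replacing $A^{i}$ by $\bigl(2A^{i}-(\lambda_{+}+\lambda_{-})I\bigr)/(\lambda_{+}-\lambda_{-})$ makes each $A^{i}$ a Hermitian involution, $(A^{i})^{2}=I$, and does not change which subspaces are simultaneously invariant, since adding a scalar and rescaling commute with any block decomposition. So I assume $A^{0},A^{1}$ are Hermitian with $(A^{0})^{2}=(A^{1})^{2}=I$.

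Next I would record the structure of $U=A^{0}A^{1}$: being a product of unitaries it is unitary, and a one-line computation gives $A^{0}UA^{0}=A^{1}A^{0}=U^{-1}$ and likewise $A^{1}UA^{1}=U^{-1}$, while $A^{0}U=A^{1}$. Thus both $A^{0}$ and $A^{1}$ conjugate $U$ to $U^{-1}$, so each maps the eigenspace $E_{\theta}:=\ker(U-e^{i\theta}I)$ onto $E_{-\theta}$. Decomposing $\mathcal{H}$ into the eigenspaces of $U$ (via the spectral theorem --- this is the only point at which unbounded dimension warrants a comment, and there one falls back either on the finite-dimensional situation actually needed downstream or on the direct-integral form of the spectral theorem), I would then pair $E_{\theta}$ with $E_{-\theta}$ and read off the blocks. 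For $\theta\notin\{0,\pi\}$ the spaces $E_{\theta},E_{-\theta}$ are orthogonal; given an orthonormal basis $\{v_{k}\}$ of $E_{\theta}$, the two-dimensional spaces $W_{k}=\operatorname{span}\{v_{k},A^{0}v_{k}\}$ are mutually orthogonal, span $E_{\theta}\oplus E_{-\theta}$, and are invariant under $A^{0}$ (immediate) and under $A^{1}$ because $A^{1}v_{k}=A^{0}Uv_{k}=e^{i\theta}A^{0}v_{k}$ and $A^{1}A^{0}v_{k}=U^{-1}v_{k}=e^{-i\theta}v_{k}$; these give the $2\times 2$ blocks. On $E_{0}$ one has $U=I$, so $A^{1}=A^{0}U=A^{0}$: both operators restrict to a single Hermitian involution, which I diagonalise into $1\times 1$ blocks. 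On $E_{\pi}$ one has $U=-I$, so $A^{1}=-A^{0}$, and diagonalising $A^{0}$ again yields $1\times 1$ blocks. Collecting the $W_{k}$ and these eigenlines over all $\theta$ gives the common orthogonal decomposition with blocks of sizes $2\times 2$ and $1\times 1$.

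The part I expect to require the most care is the bookkeeping at the fixed values $\theta\in\{0,\pi\}$, where the pairing $E_{\theta}\leftrightarrow E_{-\theta}$ degenerates and one must check that the invariant pieces there are genuinely one-dimensional rather than two-dimensional; the argument above handles this, but it is the easiest place to be sloppy. The reduction to Hermitian involutions and the spectral decomposition of $U$ are otherwise routine.
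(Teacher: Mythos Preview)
Your proof is correct and is precisely the classical Jordan's lemma argument. The paper does not supply its own proof of this statement: it attributes the result to Jordan and cites modern proofs in Masanes and in Pironio et al., stating only the formulation. Your argument via the unitary $U=A^{0}A^{1}$, the intertwining relation $A^{i}UA^{i}=U^{-1}$, and the eigenspace pairing $E_{\theta}\leftrightarrow E_{-\theta}$ is exactly the standard route those references take, and your handling of the degenerate cases $\theta\in\{0,\pi\}$ is clean.
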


\begin{corollary}\label{lemma:qubit_observables_direct_product}
Let $A^{0}$ and $A^{1}$ be two Hermitian operators on $\mathcal{H}$ with dimension $2n$ or $2n - 1$ and eigenvalues 1 and -1, then there exists an isometry $F$ from $\mathcal{H}$ to $\mathcal{H}_{n} \otimes \mathcal{H}_{2}$ and Hermition operators $A^{a,z}$ on $\mathcal{H}_{2}$ with eigenvalues 1 and -1, such that
\begin{equation}
F(A^{a}) = \sum_{z}\proj{z}{z} \otimes A^{a,z}
\end{equation}
with $A^{a,z}$ $2 \times 2$ operators with eigenvalues in $\{1, -1\}$.
\end{corollary}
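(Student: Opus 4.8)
The real content here is Lemma~\ref{lemma:blockdiagonalization}; the corollary only repackages it into a tensor-product form and deals with the parity of $\dim\mathcal{H}$, so I expect the proof to be short. The plan is the following.

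First I would apply Lemma~\ref{lemma:blockdiagonalization} to $A^{0}$ and $A^{1}$ to get a unitary $W$ on $\mathcal{H}$ with $W A^{a} W^{\dagger} = \bigoplus_{k} B^{a}_{k}$, where the direct sum runs over a common family of mutually orthogonal subspaces of $\mathcal{H}$, each of dimension $2$ or $1$ and invariant under both $W A^{0} W^{\dagger}$ and $W A^{1} W^{\dagger}$. Since every eigenvalue of $A^{a}$ lies in $\{1,-1\}$ and each summand acts on an invariant subspace, each $B^{a}_{k}$ is Hermitian with spectrum contained in $\{1,-1\}$; on a one-dimensional block $B^{a}_{k}$ is just a scalar $\pm 1$, and this scalar is common to $A^{0}$ and $A^{1}$ only up to the block location, not its value.

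Next I would do the bookkeeping. Writing $m_{2}$ for the number of two-dimensional blocks and $m_{1}$ for the number of one-dimensional blocks, we have $2m_{2} + m_{1} = \dim\mathcal{H}\in\{2n-1,2n\}$. I would group the blocks into ``logical qubits'': each $2\times 2$ block is one logical qubit; the $1\times 1$ blocks are paired off two at a time, a pair $\{k,k'\}$ forming a logical qubit with diagonal observable $\mathrm{diag}(B^{a}_{k},B^{a}_{k'})$; and if $m_{1}$ is odd (equivalently $\dim\mathcal{H}=2n-1$) the single leftover $1\times 1$ block is enlarged to a logical qubit by adjoining one fresh dimension. A one-line count shows $m_{2}+\lceil m_{1}/2\rceil = n$ in both the even and odd cases, so the logical index $z$ ranges over a set of size $n$, which I identify with the computational basis of $\mathcal{H}_{n}$.

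Finally I would assemble the objects: $F:\mathcal{H}\to\mathcal{H}_{n}\otimes\mathcal{H}_{2}$ is $W$ followed by the identification of the subspace carrying logical qubit $z$ with $\ket{z}\otimes\mathcal{H}_{2}$, where for the at-most-one padded block one uses the isometric embedding $\ket{v}\mapsto\ket{z}\otimes\ket{0}$; and $A^{a,z}$ is the $2\times 2$ matrix on logical qubit $z$, namely $B^{a}_{k}$ for a genuine $2\times 2$ block, $\mathrm{diag}(B^{a}_{k},B^{a}_{k'})$ for a paired block, and $\mathrm{diag}(B^{a}_{k},-B^{a}_{k})$ for the padded block (any Hermitian completion with second eigenvalue in $\{1,-1\}$ will do). Each $A^{a,z}$ is then Hermitian with eigenvalues in $\{1,-1\}$, and the identity $F A^{a} = \bigl(\sum_{z} \proj{z}{z}\otimes A^{a,z}\bigr)F$ can be checked one invariant block at a time, where it is immediate. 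I expect the only step needing genuine care to be the parity count together with the isometric padding of the leftover odd block: this is precisely why $F$ must be allowed to be an isometry rather than a unitary, and why both dimensions $2n$ and $2n-1$ occur in the hypothesis.
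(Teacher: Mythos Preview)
Your proposal is correct and is exactly the natural unpacking of the corollary. The paper does not supply a proof at all: it states the corollary immediately after Lemma~\ref{lemma:blockdiagonalization} and treats it as a direct repackaging of Jordan's block diagonalization, then moves on to its interpretation (projecting to learn $z$ before the measurement setting is fixed). Your handling of the $1\times 1$ blocks and the odd-dimension padding via an isometry is the right way to make this precise, and your allowance for $A^{a,z}\in\{I,-I\}$ is consistent with how the paper subsequently uses the corollary (it explicitly addresses the possibility that some $A^{a,z}$ is $\pm I$ a few paragraphs later).
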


This corollary says that we can think of applying one of these two observables as first applying a projection to learn $z$.  The value of $z$ then simultaneously determines a measurement strategy for either measurement setting.  Importantly, the projection onto $z$ can be applied before learning the measurement setting.  This will allow us to consider an arbitrary strategy as a probabilistic combination of qubit strategies.

Let $A_{j}^{a}$ be the observable for Alice	's mesaurements on the $j$th trial with setting $a$, and analogously for Bob.  We apply corollary \ref{lemma:qubit_observables_direct_product} to pairs of observables  $A_{j}^{0}$ and $A_{j}^{1}$  to obtain isometry $F_{j}$, from the Hilbert space of the original state to $\mathcal{Z}^{A}_{j} \otimes \mathcal{H}_{2}$.  The result is that we can map $A_{j}^{a_{j}}$ to 
\begin{equation}
 \sum_{z_{j}} \Pi_{z_{j}}^{j} \otimes A_{j}^{a_{j},z_{j}}
\end{equation}
with the $\Pi_{z_{j}}^{j}$ commuting for different $j$.  We do the same with observables $B_{j}^{0}$ and $B_{j}^{1}$ and map $B_{j}^{b_{j}}$ to
\begin{equation}
 \sum_{w_{j}} \Pi_{w_{j}}^{j} \otimes B_{j}^{b_{j},w_{j}}
\end{equation}

At this point we may decompose Eve's strategy into qubit strategies, indexed by $z = (z_{1}, \dots, z_{n})$ and $w = (w_{1} \dots w_{n})$.  However, we make a further simplification.  It may be the case that some $A_{j}^{a,z_{j}}$ is either $I$ or $-I$, and the measurement outcome is fixed.  We may replace the state for Alice's $j$th qubit in each strategy $(z,w)$ for this value of $z_{j}$ with $\ket{0}$ or $\ket{1}$ and replace $A_{j}^{a,z_{j}}$ with $Z$.  This new strategy is identical in terms of Eve's information and the outcomes as the previous strategy.  Applying this reduction many times we obtain a strategy in which each measurement operator is 2-outcome.  This is important since our proof of security for qubit strategies will make this assumption.

As a final reduction, we note that different trials may have different sized Hilbert spaces.  We may map the Hilbert space for each trial to the maximum sized Hilbert space, leaving the state intact and extending the measurement operator onto the extra dimensions in any arbitrary fashion.  This makes no change in Eve's information or the measurement outcomes.

We have mapped a strategy of Eve to a strategy with state $\rho$ on Hilbert space $\mathcal{Z} \otimes (\mathcal{H}_{2}^{\otimes n})_{A} \otimes (\mathcal{H}_{2}^{\otimes n})_{B}$ with measurement operators of the form above.  Note that we may perform a projective measurement with projectors $\Pi_{z_{j}}^{j}$ for each $j$ to determine all the $z_{j}$ and analogously for Bob's side to determine the $w_{j}$s before determining the measurement setting without changing anything, since these projectors commute with the measurements $A_{j}^{a,}$ and $B_{j}^{b}$.  Eve loses nothing by performing this measurement herself, so we may assume that she does so and learns $(z,w)$.  We may also suppose that we first project the state down to a block diagonal state since this operation commutes with measuring $(z,w)$.  The result is that any strategy is equivalent to one in which Eve prepares a mixture of qubit strategies.  We may further suppose that Eve holds the purification for each possible qubit strategy and only increase her power.

We have reduced all possible strategies to a mixture of strategies on qubits.

\subsection{Reduction to symmetric qubit strategies}\label{sec:symmetricstates}

\subsubsection{Symmetric states and the de Finetti Theorem}
A \index{symmetric state}symmetric state on $n$ subsystems is a state that is invariant under permutation of the subsystems.  For our purposes the subsystems will correspond to different trials in the DIQKD protocol.  One set of particularly useful symmetric states is the symmetric subspace along a state.
\begin{definition}\label{def:symmetricsubspacealongstate}
The \emph{\index{symmetric subspace along a state}symmetric subspace of $\mathcal{H}^{\otimes n}$ along $\ket{\phi}^{\otimes n-r}$} is the subspace spanned by states 
\begin{equation}
\Pi(\ket{\phi}^{\otimes n-r} \otimes \ket{\phi^{\prime}})
\end{equation}
for any $\ket{\phi^{\prime}}$ on $\mathcal{H}^{\otimes r}$  and operation $\Pi$ which permutes the subsystems.  This subspace is denoted by $Sym(\mathcal{H}, \ket{\phi}^{\otimes n-r})$.
\end{definition} 
This subspace is important because the states in it are very close to symmetric product states, and are hence easy to work with.  The \index{de Finetti Theorem}finite quantum de Finetti Theorem allows us to break symmetric states into a mixture of these near-product states.

\begin{theorem}[Renner \cite{Renner:2007:Symmetry-of-lar} Theorem 4.3.2]\label{theorem:definetti}
Let $\rho \in \mathcal{H}^{\otimes n+k}$ be a pure, permutationally invariant state and let $0 \leq r \leq n$.  There exists a measure $\nu$ on the normalized pure states of $\mathcal{H}$, and for each normalized pure state $\ket{\phi}$ in $\mathcal{H}$ a pure density operator $\rho_{\phi}$ on $Sym(\mathcal{H}, \ket{\phi}^{\otimes n-r})$ such that
\begin{equation}
\left|\left| \tr[k]{\rho} - \int \rho_{\phi}\nu(\phi)  \right|\right|_{1} \leq 2 \exp\left(-\frac{k(r+1)}{2(n+k)} + \frac{1}{2}\dim(\mathcal{H}) \ln k \right)
\end{equation}
\end{theorem}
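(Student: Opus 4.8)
The plan is to follow Renner's original argument, built on the representation theory of the symmetric subspace together with a ``de Finetti measurement''. Write $\rho = \proj{\psi}{\psi}$; since it is pure and permutation invariant it is supported on the symmetric subspace $\text{Sym}^{n+k}(\mathcal{H})$. The starting point is the standard resolution of the identity
\begin{equation}
\int (\proj{\phi}{\phi})^{\otimes N}\, d\phi = \frac{1}{g_{N}}\, P_{\text{Sym}^{N}(\mathcal{H})}, \qquad g_{N} = \dim \text{Sym}^{N}(\mathcal{H}) = \binom{N + \dim \mathcal{H} - 1}{\dim \mathcal{H} - 1},
\end{equation}
where $d\phi$ is the unitarily invariant probability measure on pure states of $\mathcal{H}$. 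Consequently $\{\, M_{\phi}\, d\phi = g_{k}\,(\proj{\phi}{\phi})^{\otimes k}\, d\phi \,\}$ is a valid (continuous) POVM on $\text{Sym}^{k}(\mathcal{H})$.

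I would apply this measurement to $k$ of the $n+k$ subsystems of $\rho$. Outcome $\phi$ occurs with probability density $\nu(\phi) = g_{k}\,\bra{\psi}\, I^{\otimes n}\otimes(\proj{\phi}{\phi})^{\otimes k}\,\ket{\psi}$, and the normalized post-measurement state $\tilde\rho_{\phi}$ of the remaining $n$ subsystems is again pure, being proportional to $(I^{\otimes n}\otimes \bra{\phi}^{\otimes k})\ket{\psi}$. Because performing a measurement on $k$ subsystems and then discarding both those subsystems and the classical outcome coincides with tracing the $k$ subsystems out, we get the \emph{exact} identity $\tr[k]{\rho} = \int \nu(\phi)\,\tilde\rho_{\phi}\, d\phi$. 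Replacing each $\tilde\rho_{\phi}$ by its normalized projection $\rho_{\phi}$ onto $\text{Sym}(\mathcal{H}, \ket{\phi}^{\otimes n-r})$ (still pure) and using a fidelity--trace-distance bound together with Jensen's inequality reduces everything to bounding the average leakage $\varepsilon = \int \nu(\phi)\, \bra{\tilde\psi_{\phi}}\big(I - P_{\text{Sym}(\mathcal{H},\ket{\phi}^{\otimes n-r})}\big)\ket{\tilde\psi_{\phi}}\, d\phi$, after which $\big\|\tr[k]{\rho} - \int \nu(\phi)\rho_{\phi}\, d\phi\big\|_{1} \le 2\sqrt{\varepsilon}$.

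The combinatorial estimate of $\varepsilon$ is the heart of the proof and the main obstacle. Fixing $\phi$ and a basis with $\phi = \ket{0}$, I would expand $\ket{\psi}$ in the type basis of $\text{Sym}^{n+k}(\mathcal{H})$; projecting $k$ systems onto $\ket{0}^{\otimes k}$ and using permutation symmetry shows that a type with $j$ non-$\ket{0}$ labels contributes to $\tilde\rho_{\phi}$ a symmetric state on $n$ systems with exactly $j$ deviations from $\ket{0}$, which lies in $\text{Sym}(\mathcal{H}, \ket{\phi}^{\otimes n-r})$ precisely when $j \le r$. Integrating the ``bad'' part ($j > r$) over $\phi$ and tracking the $g_{k}$ normalization turns $\varepsilon$ into a sum of hypergeometric probabilities of the form ``draw $k$ of $n+k$ systems without replacement and miss a fixed set of $j$ of them'', bounded by $\binom{n+k-j}{k}/\binom{n+k}{k} \le (n/(n+k))^{j} \le e^{-jk/(n+k)}$; summing the geometric tail over $j \ge r+1$ yields the $\exp(-k(r+1)/(n+k))$ factor, and the POVM normalization $g_{k} \approx k^{\dim\mathcal{H}-1}$ supplies the $k^{\dim\mathcal{H}}$ factor, so that $2\sqrt{\varepsilon}$ becomes the claimed bound. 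The delicate points are getting the combinatorial coefficients right when several basis labels coincide and handling the continuous POVM rigorously (via discretization or a direct moment computation); note that the $k^{\dim\mathcal{H}/2}$ blow-up is genuine, which is precisely why the surrounding DIQKD proof must first reduce to qubit strategies before this theorem can be applied.
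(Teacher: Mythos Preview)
The paper does not prove this theorem; it is quoted verbatim from Renner's thesis and used as a black box in the DIQKD security argument, so there is no ``paper's own proof'' to compare against. Your sketch is essentially Renner's original argument: the coherent-state POVM $g_{k}(\proj{\phi}{\phi})^{\otimes k}\,d\phi$ on $k$ subsystems, the exact decomposition $\tr[k]{\rho}=\int\nu(\phi)\tilde\rho_{\phi}\,d\phi$, projection onto $\text{Sym}(\mathcal{H},\ket{\phi}^{\otimes n-r})$, and the hypergeometric tail estimate. The main steps and the intuition are right, and your closing remark about the $k^{\dim\mathcal{H}/2}$ factor forcing the prior reduction to qubits is exactly the point the thesis makes when invoking the theorem.

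One small slip worth flagging: your estimate $g_{k}\approx k^{\dim\mathcal{H}-1}$ would give $2\sqrt{\varepsilon}\lesssim k^{(\dim\mathcal{H}-1)/2}e^{-k(r+1)/(2(n+k))}$, which is actually \emph{stronger} than the stated bound with exponent $\tfrac{1}{2}\dim\mathcal{H}\ln k$; Renner's stated form absorbs constants and the geometric-series prefactor into a slightly looser power of $k$, so there is no error, but you should be aware your bookkeeping does not reproduce the exponent literally. The ``delicate points'' you list (coinciding basis labels, making the continuous POVM rigorous) are handled in Renner's thesis by working directly with the type decomposition of $\text{Sym}^{n+k}(\mathcal{H})$ and Schur--Weyl bounds rather than by discretization.
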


Here $\tr[k]{\cdot}$ means tracing out any $k$ subsystems.

\subsubsection{Does permuting trials imply a symmetric state?}
In the Renner's QKD security proof a symmetric state is implied by the fact that Alice and Bob randomly permute their measurement outcomes.  Although this operation does not operate directly on the state, it commutes with the measurements since the measurements are identical for each trial (for a given measurement setting) and the measurement settings are chosen uniformly at random.  In the black box scenario the former is not true and the measurements may differ for each trial.

For general strategies we may repair this problem by mapping a strategy to a larger Hilbert space by attaching to each trial a variable indicating its position.  Then the measurement operators may all be replaced with a single measurement that reads this new variable and implements the appropriate measurement strategy.  The measurements are then all identical, but the dimension has been multiplied by the number of trials.  We then permute outcomes, which is equivalent to permuting the state.

Unfortunately, this does not help us.  A symmetric strategy does not necessarily reduce to a mixture of symmetric qubit strategies.  Since we aim to apply the quantum de Finetti Theorem at the qubit level (since the Theorem is sensitive to dimension) we must have a symmetric qubit state.

From this point we offer a possible means of providing a symmetric qubit strategy from an arbitrary qubit strategy.  Unfortunately, this method will result in a decrease in the secure key rate as we detail below.  The main idea is similar to the usual argument:  permute the results and show that this is equivalent to a symmetric state, but the symmetric state we obtain will have a different $S_{max}$ than the original state.

To begin with, we identify the critical security constraint:  measuring the original state $\rho$ and permuting the outcomes must be the same as measuring the symmetric state $\rho^{\prime}$.  We may satisfy this constraint as follows.  When permuting $\rho$ to obtain $\rho^{\prime}$ we must identify the basis in which we are working.  Since $\rho$ is provided within a black box device, we have no natural basis to use.  Instead, we will specify one.  The ideal candidate is for the $B$ side basis of each trial to be the eigenbasis of $B_{0}$.  That is to say, we choose the $B$ side basis for each trial so that $B_{0}= Z$.  Then we partially recover the QKD mechanism:  measuring $\rho$ according $B_{0}$ on each trial and permuting the outcomes \emph{is} the same as measuring $\rho^{\prime}$ according to $B_{0}$.

In order to provide a full solution we must also be able to estimate $S_{max}$ for the trials in $\rho^{\prime}$ (it is the same for each trial), but $\rho^{\prime}$ is merely a convenient fiction.  However, $\rho^{\prime}$ is derived from $\rho$ and we may hope to provide a lower bound using the average $S_{max}$ for $\rho$ (averaged over all trials).  In order to do this we must specify the basis for the $A$ side of each trial.

Recall from section~\ref{sec:chshinequality} that we may express the CHSH operator as a matrix $M$ indexed by the Pauli operators $X,Y,Z$.  We may choose our bases so that $A_{0}$ and $A_{1}$ lie on the $X,Z$ plane of the Bloch sphere.  For the $B$ side we have already fixed $B_{0} = Z$.  By choosing an appropriate phase reference we may also have $B_{1}$ on the $X,Z$ plane.  Then we may consider only the entries of $M$ indexed $X$ and $Z$ since the others are all 0.

Ideally, we would choose the bases so that $M$ is diagonal with positive entries, but since $B_{0}$ is already fixed this will not be possible in general.  We still have some freedom in the choice of the $A$ side basis, though, and we may arrange it so that $M$ is symmetric with positive diagonal.  Recall that $S = \tr{R^{\prime}M}$, where $R^{\prime}$ is determined by the state.  Thus the only parameters of $R^{\prime}$ that matter are the entries $R^{\prime}_{XX}$, $R^{\prime}_{ZZ}$, and $R^{\prime}_{XZ}= R^{\prime}_{ZX}$.

The goal at this point is to show that averaging $S_{max}$ over trials puts a lower bound on $S_{max}$ of the symmetrized state.  There are only three parameters in this analysis, so the hope is that it can be accomplished.  Unfortunately, we have not been able to derive the bound.  Thus the result remains a conjecture.

\subsection{Parameter estimation in symmetric qubit strategies}

At this point we need to develop techniques for estimating the CHSH value of states in $Sym(\mathcal{H}, \ket{\phi}^{\otimes n-r})$.  This is analogous to Theorem 4.5.2 in \cite{Renner:2005:Security-of-Qua}.  However, in that case the measurement operations on each subsystem are all known and identical.  In our case the measurements are not in our control, and we may have no description of them.  Fortunately this is not a very important issue.  The CHSH value that can be achieved by a particular state is a property of the state itself.  If the measurements used are not optimal, then the observed CHSH value can only be lower than if the measurements are optimal.  Since we are only interested in lower bounding the CHSH value, this is sufficient.  Any CHSH value that we observe will (leaving statistical fluctuations aside) be a lower bound on the maximum CHSH value achievable by the state.

\begin{lemma}[Parameter estimation]\label{lemma:diqkdparameterestimation}
Let $\ket{\psi} \in Sym(\mathcal{H}_{2} \otimes \mathcal{H}_{2}, \ket{\phi}^{\otimes n + m-r})$ and let $p = p_{max}(\ket{\phi})$ be the maximum expected value for success on the CHSH test on $\ket{\phi}$, optimized over all measurements.  Let $Y$ be the number of successes after conducting the CHSH test on the first $m$ subsystems of $\ket{\psi}$ according to any measurement strategy.  Then for $\mu > 0$
\begin{equation}
P\left(Y/m > p + \mu \right) \leq  \exp{\frac{-2(m\mu - r(1- p))^{2}}{(n-r)cos^{4}\pi/8 } + (n+m)h(\frac{r}{n+m})\ln 2}.
\end{equation}
\end{lemma}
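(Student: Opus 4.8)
The plan is to adapt Renner's parameter-estimation argument (the one referred to above as Theorem~4.5.2 of \cite{Renner:2005:Security-of-Qua}) to the black-box setting, the only genuinely new feature being that the per-trial measurement is the CHSH test rather than a fixed, known observable, and that it is chosen adversarially. First I would collapse the CHSH test on each of the $m$ sampled trials to a single two-outcome POVM: after conditioning on the uniformly random setting string used in those trials, the measurement of trial $j$ is a fixed product measurement whose ``success'' outcome corresponds to a POVM element $P_j$ on $\mathcal{H}_2\otimes\mathcal{H}_2$ with $\bra{\phi}P_j\ket{\phi}\le p_{max}(\ket{\phi})=p$ --- a non-optimal or adversarial choice of measurement can only decrease this value, which is all that is needed since we seek only an upper tail on $Y$. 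Thus $Y=\sum_{j=1}^{m}Y_j$, where $Y_j$ is the outcome of $\{P_j,I-P_j\}$ on the $j$th subsystem; since the bound to be proved does not depend on the setting string, it suffices to prove it conditionally and then average.

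Next I would exploit the defining structure of $Sym(\mathcal{H}_2\otimes\mathcal{H}_2,\ket{\phi}^{\otimes n+m-r})$ from Definition~\ref{def:symmetricsubspacealongstate}: every vector in it is a linear combination of permutations of $\ket{\phi}^{\otimes n+m-r}\otimes\ket{\phi'}$ with $\ket{\phi'}$ arbitrary on $r$ of the $N:=n+m$ pairs. Operationally, for any fixed ``faulty set'' $T\subseteq\{1,\dots,N\}$ with $|T|=r$, the $N-r$ subsystems outside $T$ sit in the state $\ket{\phi}$; the cost of not knowing $T$ is a union bound over the $\binom{N}{r}\le 2^{N h(r/N)}=e^{(n+m)h(r/(n+m))\ln 2}$ possibilities, which is exactly the second term in the exponent. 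Making this precise is the delicate point: one is union-bounding over a superposition, not a mixture, of faulty-set locations, so the argument must be carried out at the level of the projector onto the bad event $\{Y/m>p+\mu\}$ and the subspace $Sym(\cdot)$, mirroring Renner's symmetric-subspace estimate rather than any naive i.i.d.\ argument.

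Conditioned on a fixed $T$, the $m$ measured trials split into at most $F'=|T\cap\{1,\dots,m\}|\le r$ faulty ones, each contributing at most one success, and the remaining $m-F'\ge m-r$ good ones, whose success indicators are, conditioned on the measurement of the complementary subsystems, independent Bernoulli variables with mean at most $p$. A short computation then shows that on the event $\{Y/m>p+\mu\}$ the count of successes among the good measured trials exceeds its mean by more than $m\mu-F'(1-p)\ge m\mu-r(1-p)$. A Hoeffding/Chernoff tail estimate in the sampling-without-replacement form used by Renner --- tracking the $N-r$ good subsystems and using the Tsirelson bound $p\le\cos^{2}\pi/8$ on the per-trial success probability --- then gives a conditional bound of the form $P\big(Y/m>p+\mu\,\big|\,T\big)\le\exp\!\big(-2(m\mu-r(1-p))^{2}/((n-r)\cos^{4}\pi/8)\big)$. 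Combining this with the union bound over $T$ and averaging over the setting string yields the claimed inequality.

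The hard part will be the second and third steps: justifying cleanly that a superposition over faulty-set locations is controlled by the counting factor $\binom{N}{r}$, and propagating the without-replacement sampling so that the improved denominator $(n-r)$ (rather than $m$) appears with the right constant. This is precisely where one must follow Renner's symmetric-subspace concentration estimate rather than a direct Hoeffding bound; the reduction to a two-outcome POVM and the accounting for the $r$ faulty trials are routine by comparison.
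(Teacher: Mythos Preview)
Your proposal is essentially the paper's proof: Hoeffding on the $\ket{\phi}$-copies, a crude $+r$ to absorb the faulty trials, and then Renner's Lemmas~4.1.6 and~4.5.1 to pass from a single permuted term $\Pi(\ket{\phi}^{\otimes n+m-r}\otimes\ket{\phi'})$ to the full superposition with the $2^{(n+m)h(r/(n+m))}$ counting factor. One simplification: the paper applies ordinary Hoeffding directly to the $m-r$ independent $\ket{\phi}$-trials (no sampling-without-replacement argument is needed, since in the product state those subsystems are genuinely i.i.d.), and the proof in fact yields $(m-r)$ rather than $(n-r)$ in the denominator---the $(n-r)$ in the statement is a typo---so you need not labor to reproduce that particular constant.
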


The proof has two main steps and parallels the proof of Renner's Theorem 4.5.2 \cite{Renner:2005:Security-of-Qua}.  The step first is to bound the given probability for states of the form $\Pi(\ket{\phi}^{\otimes m-r} \otimes \ket{\phi^{\prime}})$ for some permutation $\Pi$.  Next we use Lemma 4.1.6 of Renner which that says $\ket{\psi}$ can be expressed as a superposition of a small number of such states and use Lemma 4.5.1 of Renner which bounds how much the probability can change for such superpositions.

\begin{proof}

We now suppose our system is in the state $\ket{\psi^{\prime}}=\ket{\phi}^{\otimes m-r} \otimes \ket{\phi^{\prime}}$ for some $\ket{\phi^{\prime}}$ on $r$ subsystems.  (We may also permute the subsystems without changing the argument.)  Let $X_{j}$ be the random variable corresponding to the success or failure of the CHSH test on the $j$th subsystem for the measurement strategy actually used (which may vary with $j$).  Since the measurement strategy cannot do better than the optimal strategy, we have $ E(X_{j}) < p$ for $1 \leq j \leq m-r$ and $E(X_{j}) < \cos^{2}\frac{\pi}{8} $ for $j > m-r$.  Applying Hoeffding's inequality (\cite{Hoeffding:1963:Probability-Ine}) to the first $m-r$ subsystems, we obtain for $t > 1$
\begin{equation}
 Pr \left( \sum_{j=1}^{m-r}X_{j} > (m-r)(p + t)\right) \leq e^{\frac{-2(m-r)t^{2}}{\cos^{4}\frac{\pi}{8}}}.
\end{equation}
The remaining $r$ subsystems cannot add very much if $r$ is small.  Thus
\begin{equation}
 Pr \left( \sum_{j=1}^{m}X_{j} > m(p + t) + r\left(1 - p - t\right)\right) \leq e^{\frac{-2(m-r)t^{2}}{\cos^{4}\frac{\pi}{8}}}.
\end{equation}
where $m(p+t) + r (1 - p - t) = (m-r)(p+t) + r$ and the additional $r$ upper bounds the value of $\sum_{j=m-r + 1}^{m} X_{j}$.

We now turn our attention back to $\ket{\psi}$.  Let $z$ be an $m$-tuple with $z_{j} = 1$ if the $j$th trial is successful and $z_{j} = 0$ if it is a failure.  We may write the measurement operator for the CHSH tests together as one large projective measurement $\{M_{z}\}$ with $M_{z}$ the projector corresponding to the outcomes of success and failure given according to $z$.  Then the probability of getting the success/failure outcomes according to $z$ is $\bra{\psi}M_{z}\ket{\psi}$.  Note that $M_{z}$ is positive semi-definite.

We are only interested in the number of successful outcomes, which is given by $w(z)$, the Hamming weight of $z$.  We can restate the above result as 
\begin{equation}
\sum_{w(z) > m(p + t) + r\left(1 - p - t\right)}\bra{\psi^{\prime}}M_{z}\ket{\psi^{\prime}}  \leq e^{\frac{-2(m-r)t^{2}}{\cos^{4}\frac{\pi}{8}}}.
\end{equation}

Now suppose that $\ket{\psi}$ is in $Sym(\mathcal{H}, \ket{\phi}^{\otimes n +m- r})$.  We can express $\ket{\psi}$ as a superposition of states of the form $\ket{\phi}^{n+m-r} \otimes \ket{\phi^{\prime}}$ up to permutations of subsystems.  We can apply the above argument to each of these terms in the superposition.  We are only measuring $m$ of the subsystems, so depending on the permutation anywhere between $m-r$ and $m$ of the subsystems may be in the state $\ket{\phi}$.  Note that our bound still applies since the last $r$ subsystems are arbitrary.  The following two Lemmas from \cite{Renner:2005:Security-of-Qua} bound how much error may be introduced by this procedure.

\begin{lemma}[Renner \cite{Renner:2005:Security-of-Qua} Lemma 4.5.1]\label{lemma:renner_psd_superposition}
Let $\ket{\psi} = \sum_{x \in X} \ket{x}$ and let $P$ be a positive semi-definite operator, then
\begin{equation}
\bra{\psi} P \ket{\psi}  \leq |X| \sum_{x \in X} \bra{x} P \ket{x}.
\end{equation}
\end{lemma}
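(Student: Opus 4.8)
The plan is to recognize this as the elementary Hilbert-space inequality that for any finite family of vectors $\{\ket{w_x}\}_{x \in X}$ one has $\left\Vert \sum_{x \in X} \ket{w_x} \right\Vert^2 \le |X| \sum_{x \in X} \Vert \ket{w_x} \Vert^2$, dressed up with a positive operator $P$.

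First I would factor $P = Q^\dagger Q$ (for instance $Q = \sqrt{P}$, which exists since $P \succeq 0$) and set $\ket{w_x} = Q\ket{x}$. Then $\bra{x} P \ket{x} = \Vert \ket{w_x} \Vert^2$ and $\bra{\psi} P \ket{\psi} = \Vert Q \ket{\psi} \Vert^2 = \left\Vert \sum_{x \in X} \ket{w_x} \right\Vert^2$, so the claim reduces precisely to the vector inequality above. To establish that inequality I would expand $\left\Vert \sum_x \ket{w_x} \right\Vert^2 = \sum_{x,y} \langle w_x | w_y \rangle \le \sum_{x,y} \Vert \ket{w_x} \Vert\,\Vert \ket{w_y} \Vert = \bigl( \sum_x \Vert \ket{w_x} \Vert \bigr)^2$ and then apply the Cauchy--Schwarz inequality to the real vector $(\Vert \ket{w_x} \Vert)_{x \in X}$ paired with the all-ones vector, giving $\bigl( \sum_x \Vert \ket{w_x} \Vert \bigr)^2 \le |X| \sum_x \Vert \ket{w_x} \Vert^2$. (Equivalently, this last step is Jensen's inequality for the convex function $t \mapsto t^2$.) Assembling the chain yields the lemma. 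An alternative that avoids the square root altogether: the sesquilinear form $(a,b) \mapsto \bra{a} P \ket{b}$ is a positive semi-definite Hermitian form, hence obeys Cauchy--Schwarz $|\bra{a} P \ket{b}| \le \sqrt{\bra{a} P \ket{a}}\,\sqrt{\bra{b} P \ket{b}}$, and the same two-line computation runs directly on $P$.

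The only (mild) subtlety is that the summands $\ket{x}$ in $\ket{\psi} = \sum_{x} \ket{x}$ are not assumed orthogonal, so one cannot simply discard the cross terms $\bra{x} P \ket{y}$ with $x \ne y$; the whole point of the argument is that those cross terms are bounded by the diagonal ones through the positive-semidefinite Cauchy--Schwarz inequality, and the factor $|X|$ is exactly the cost of the worst case, in which all $\bra{x} P \ket{x}$ are equal. There is no real obstacle beyond this bookkeeping.
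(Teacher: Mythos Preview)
Your proof is correct. Note, however, that the paper does not actually supply its own proof of this lemma: it is quoted verbatim as Lemma~4.5.1 from Renner's thesis \cite{Renner:2005:Security-of-Qua} and used as a black box inside the parameter-estimation argument. So there is no ``paper proof'' to compare against here; your argument via $P=Q^\dagger Q$ and two applications of Cauchy--Schwarz is exactly the standard one and would serve perfectly well as the omitted proof.
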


\begin{lemma}[Renner \cite{Renner:2005:Security-of-Qua} Lemma 4.1.6] \label{lemma:renner_sym_state_basis}
Let $\ket{\psi}$ be a state in $Sym(\mathcal{H}, \ket{\phi}^{\otimes n - r})$.  Then there exist orthogonal vectors $\ket{x}$, which are permutations of $\ket{\phi}^{\otimes n -r} \otimes \ket{\phi_{x}}$ for $x \in X$ such that $\ket{\psi}$ is in the span of the $\ket{x}$ for various $x$, and $|X| \leq 2^{nh(r/n)}$ where $h(\cdot)$ is the binary Shannon entropy. 
\end{lemma}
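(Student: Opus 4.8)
The plan is to give a concrete combinatorial description of $Sym(\mathcal{H},\ket{\phi}^{\otimes n-r})$ and then to split $\ket{\psi}$ according to the pattern of ``excited'' subsystems. First I would fix an orthonormal basis $\{\ket{0},\ket{1},\dots\}$ of $\mathcal{H}$ with $\ket{0}=\ket{\phi}$. Since an operator of the form $U^{\otimes n}$ commutes with every permutation of the $n$ factors, applying $U^{\otimes n}$ for a unitary $U$ with $U\ket{\phi}=\ket{0}$ carries $Sym(\mathcal{H},\ket{\phi}^{\otimes n-r})$ onto $Sym(\mathcal{H},\ket{0}^{\otimes n-r})$ and sends each permutation of $\ket{\phi}^{\otimes n-r}\otimes\ket{\phi'}$ to a permutation of $\ket{0}^{\otimes n-r}\otimes(U^{\otimes r}\ket{\phi'})$, so it suffices to treat the case $\ket{\phi}=\ket{0}$.

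The structural fact I would then prove is that $Sym(\mathcal{H},\ket{0}^{\otimes n-r})$ equals the span of the product basis vectors $\ket{i_1}\cdots\ket{i_n}$ having at most $r$ nonzero indices. For ``$\subseteq$'', expand $\ket{\phi'}$ (a vector on $r$ subsystems) in the product basis and permute: every resulting basis vector has at most $r$ nonzero indices. For ``$\supseteq$'', a product basis vector whose nonzero positions form a set $S$ with $|S|=k\le r$ can be written as a permutation of $\ket{0}^{\otimes n-r}\otimes\ket{\phi'}$, where we pick any $T\supseteq S$ with $|T|=r$, put $\ket{0}$ on the $n-r$ positions outside $T$, and let $\ket{\phi'}$ carry the original entries on $S$ together with $\ket{0}$'s on $T\setminus S$; note that $\ket{\phi'}$ is allowed to contain $\ket{0}$-components.

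With this description, I would decompose $\ket{\psi}$ by support. For each $S\subseteq\{1,\dots,n\}$ let $\ket{w_S}$ denote the component of $\ket{\psi}$ in the span of product basis vectors whose set of nonzero positions is exactly $S$. Product vectors with distinct support sets are orthogonal --- at any position in the symmetric difference, one factor is $\ket{0}$ and the other is orthogonal to $\ket{0}$ --- so the $\ket{w_S}$ are pairwise orthogonal and $\ket{\psi}=\sum_S\ket{w_S}$. Since $\ket{\psi}\in Sym(\mathcal{H},\ket{0}^{\otimes n-r})$, only sets with $|S|\le r$ contribute. As in the ``$\supseteq$'' step above, each nonzero $\ket{w_S}$ is itself a permutation of $\ket{0}^{\otimes n-r}\otimes\ket{\phi_x}$ for a suitable vector $\ket{\phi_x}$ on $r$ subsystems. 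Letting $X$ index the nonzero $\ket{w_S}$ gives orthogonal vectors $\ket{x}$ of the required form whose span contains $\ket{\psi}$, with $|X|\le\sum_{k=0}^r\binom{n}{k}\le 2^{n h(r/n)}$ by the standard entropy bound on binomial sums (in the regime $r\le n/2$, which is the one of interest).

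The only genuinely delicate point --- and the thing I would flag as the main obstacle to getting right --- is that $\ket{\phi_x}$ must be permitted to contain $\ket{0}$-components: without this, one would expect support sets of size exactly $r$ rather than at most $r$, and the support decomposition (which unavoidably produces sets $S$ of all sizes up to $r$, and is what supplies the orthogonality) would not land inside the target family. Everything else --- the two inclusions and the binomial-sum estimate --- is routine.
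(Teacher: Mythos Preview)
The paper does not supply its own proof of this lemma: it is stated and attributed to Renner's thesis (Lemma~4.1.6) and then used as a black box in the parameter-estimation argument. So there is no in-paper proof to compare against.

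Your argument is correct and is essentially the standard one. The reduction to $\ket{\phi}=\ket{0}$ via $U^{\otimes n}$ is fine, the two inclusions identifying $Sym(\mathcal{H},\ket{0}^{\otimes n-r})$ with the span of product basis vectors having at most $r$ nonzero indices are both valid, and the support decomposition $\ket{\psi}=\sum_S\ket{w_S}$ indeed yields pairwise orthogonal pieces, each of which is (after padding $S$ to some $T$ of size $r$) a permutation of $\ket{0}^{\otimes n-r}\otimes\ket{\phi_x}$. The cardinality bound $\sum_{k\le r}\binom{n}{k}\le 2^{nh(r/n)}$ is the standard one for $r\le n/2$, which is exactly the regime in which the de~Finetti machinery is applied. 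Your flagged subtlety---that $\ket{\phi_x}$ must be allowed to contain $\ket{0}$-components---is precisely the point: Definition~\ref{def:symmetricsubspacealongstate} places no restriction on $\ket{\phi'}$, so this is permitted, and without it the decomposition would not close.
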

 
Applying these results we obtain
\begin{equation}
\sum_{w(z) > m(p + t) + r\left(1 - p - t\right)}\bra{\psi}M_{z}\ket{\psi}  \leq e^{\frac{-2(m-r)t^{2}}{\cos^{4}\frac{\pi}{8}}}2^{(n+m)h(\frac{r}{n+m})}.
\end{equation}
Rewriting as a probability, we get
\begin{equation}
 P\left(Y > m(p + t) + r\left(1 - p - t\right) \right) \leq e^{\frac{-2(m-r)t^{2}}{\cos^{4}\frac{\pi}{8}}}2^{(n+m)h(\frac{r}{n+m})}
\end{equation}
or, equivalently
\begin{equation}
 P\left(Y/m > p + \mu \right) \leq \exp\left({\frac{-2(m\mu - r(1- p))^{2}}{(m-r)cos^{4}\pi/8 } + (n+m)h(\frac{r}{n+m})\ln 2}\right).
\end{equation}

\end{proof}

\subsection{Estimating conditional entropies}\label{sec:diqkdestent}

In this section we put some bounds on conditional entropies which will later provide us with the asymptotic key rate.  We begin with a 2 qubit state $\rho_{ABE}$, to which Eve holds the purification, with the property $S_{max}(\rho_{AB}) = S$.  We then measure Bob's system  with an adversarial measurement and estimate $H(X|E)$, the entropy of Bob's outcome conditioned on Eve's system.

There are several tasks.  First, we show that we may take $\rho$ to be a Bell diagonal state on Alice and Bob's qubits.  Second, we estimate $S_{max}(\rho_{AB})$ from the eigenvalues of a Bell diagonal state.  We then use this estimate to bound certain entropies on the eigenvalues of the state.  Finally, we estimate $H(X|E)$, minimized over possible measurements.  Once these tasks are complete we may bound $H(X|E)$ using a function of $S_{max}(\rho_{AB})$.  

The material in this section largely follows Pironio et al.'s argument in \cite{Pironio:2009:Device-independ}.  We indicate where we deviate from their work.

\subsubsection{Bell diagonalization}

We now prove the following Lemma:

\begin{lemma}\label{lemma:diqkdbelldiagsuff}
Let $\rho_{AB}$ be a given two qubit state with measurements $A^{a}$ and $B^{b}$ and consider $H(B|E)$, where $B$ is Bob's system after being measured by $B^{0}$, and $E$ indicates Eve's system, which is a purification of $\rho_{AB}$.  If there exists $f$ such that
\begin{equation}
H(B|E) \leq f(S_{max}(\rho_{AB}))
\end{equation}
for all Bell diagonal $\rho_{AB}$, then the bound also holds for arbitrary $\rho_{AB}$.
\end{lemma}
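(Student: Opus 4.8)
The plan is to reduce an arbitrary two-qubit strategy to a Bell-diagonal one by a single operation — a correlated Pauli twirl — chosen so that it simultaneously (i) produces a Bell-diagonal state, (ii) does not increase $S_{max}$, and (iii) leaves $H(B|E)$ unchanged. The hypothesis applied to the resulting state then yields the claim.

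First I would fix a basis: applying a local unitary to Bob's qubit I may assume $B^{0} = Z$, which changes neither $H(B|E)$ (it merely relabels Bob's outcome) nor $S_{max}(\rho_{AB})$ (which is already a maximum over all measurements). Now apply the twirl $\mathcal{T}(\rho) = \frac{1}{4}\sum_{P\in\{I,X,Y,Z\}} (P\otimes P)\,\rho\,(P\otimes P)$ and set $\tilde{\rho} = \mathcal{T}(\rho)$. A one-line computation using $\mathrm{Tr}\big((P\otimes P)\rho(P\otimes P)\,U\otimes V\big) = (P\cdot U)(P\cdot V)\,\mathrm{Tr}(\rho\,U\otimes V)$ — where $P\cdot U = \pm 1$ according to whether $P$ and $U$ commute — together with the character identity $\frac14\sum_P (P\cdot U)(P\cdot V) = \delta_{UV}$ shows that averaging over $P$ annihilates the local Bloch vectors and every correlator $\mathrm{Tr}(\rho\,U\otimes V)$ with $U\neq V$, while fixing the correlators $\mathrm{Tr}(\rho\,U\otimes U)$. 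Hence $\tilde{\rho}$ is diagonal in the Bell basis.

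Next the two quantitative points. For $S_{max}$: the $3\times 3$ correlation matrix of $\tilde{\rho}$ is the diagonal of the correlation matrix $R'$ of $\rho$, so by the formula $S_{max}(\cdot) = 2\sqrt{u^2 + v^2}$ for the two largest singular values $u,v$ of $R'$ (Section~\ref{sec:chshinequality}), $S_{max}(\tilde{\rho}) = 2\sqrt{t_1^2 + t_2^2}$ with $t_1 \geq t_2$ the two largest $|R'_{ii}|$. Since the sum of the $k$ largest squared diagonal entries of any matrix is at most the sum of its $k$ largest squared singular values (use the coordinate $k$-frame in the variational characterization $\sum_{i\leq k}\sigma_i^2 = \max_{V^{\dagger}V = I_k}\|MV\|_F^2$), we get $S_{max}(\tilde{\rho}) \leq S_{max}(\rho)$. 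For $H(B|E)$: take as purification of $\tilde{\rho}$ the state $\ket{\Psi}_{ABEK} = \frac{1}{2}\sum_{P}(P\otimes P)\ket{\psi}_{ABE}\ket{P}_K$, where $\ket{\psi}$ purifies $\rho$ and $K$ is a classical flag recording $P$; conditioned on $K = P$, the $AB$-state is $\rho$ up to the local unitary $P\otimes P$, and Bob's $Z$-outcome is then the original outcome with a $P$-determined (hence $EK$-known) bit flip, so the conditional value of $H(B|E)$ equals $H(B|E)_{\rho}$ for every value of $K$. Averaging over the classical $K$ gives $H(B|EK)_{\tilde{\rho}} = H(B|E)_{\rho}$.

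Finally, $\tilde{\rho}$ is Bell-diagonal with purifying system $EK$, so the hypothesis gives $H(B|EK)_{\tilde{\rho}} \leq f(S_{max}(\tilde{\rho}))$; combining the three displays and using that the relevant $f$ — the key rate of Theorem~\ref{theorem:diqkdcollectivesecurity} — is nondecreasing on $[2, 2\sqrt{2}]$, we conclude $H(B|E)_{\rho} = H(B|EK)_{\tilde{\rho}} \leq f(S_{max}(\tilde{\rho})) \leq f(S_{max}(\rho))$. The step I expect to be most delicate is arranging that one and the same operation controls all three quantities at once: in particular, checking that the twirl — which destroys all off-diagonal correlations — cannot inflate $S_{max}$ (this is exactly the singular-value/diagonal majorization fact), and recognizing that the twirl may be handed to Eve "for free" as a classical flag so that $H(B|E)$ is preserved exactly rather than merely bounded in one direction.
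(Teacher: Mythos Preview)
Your argument is correct and takes a genuinely different route from the paper. The paper reaches a Bell-diagonal state in three stages: after fixing the given measurements to lie in the $X,Z$ plane it applies a $Y\otimes Y$ twirl (with the flag handed to Eve), then a local rotation about $Y$ chosen so that the surviving off-diagonal Bell-basis entries become purely imaginary, and finally an equal mixture with the complex conjugate to remove those. You instead apply a single full correlated Pauli twirl $\tfrac14\sum_P (P\otimes P)\rho(P\otimes P)$, which kills every off-diagonal Bell entry in one stroke; your purification-with-flag argument for the exact preservation of $H(B|E)$ and your diagonal-versus-singular-value majorization for $S_{max}$ are both clean and correct.

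The trade-off is this: the paper's operations are tailored to keep the CHSH value achieved by the \emph{given} measurements intact (each step either commutes with or sign-flips operators in the $X,Z$ plane), whereas your twirl wipes out all off-diagonal correlators and so only yields $S_{max}(\tilde\rho)\leq S_{max}(\rho)$. You then close the argument by invoking monotonicity of the specific $f$ from Theorem~\ref{theorem:diqkdcollectivesecurity}, an assumption not present in the lemma as written. For the intended application this is harmless --- the relevant $f$ is monotone --- so you have proved exactly what is needed, via a slightly weaker abstract statement but a considerably shorter and more transparent construction.
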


We begin by supposing that all marginals of $\rho_{AB}$ after being measured by Alice or Bob are uniform.  Pironio et al. originally considered a protocol in which Alice and Bob actively symmetrize the marginals of their measurement outcomes by flipping each outcome randomly and announcing whether or not they did so.  This is fine for qubit strategies, but is slightly problematic for arbitrary strategies since a strategy on a larger Hilbert space may have symmetric marginals but decompose into qubit strategies which do not.  Here we offer a different argument that arrives at the same conclusion:  we may take the marginals to be symmetric without compromising security.

Consider any qubit strategy.  We may fix any basis for our discourse, so we choose one so that $A^{0}$ and $A^{1}$ are in the $X,Z$ plane of the Bloch sphere, and analogously for Bob's measurements.  We may produce a new qubit strategy by applying $Y \otimes Y$ to the state.  This simply flips all outcomes, so it has the same $S_{max}$ and error rate, and gives Eve the same information.  Now consider a strategy which is formed by combining the two states with equal probability, with Eve recording which one is performed.  This strategy again has the same $S_{max}$, error rate, and gives Eve the same information.  Thus we may consider this final strategy alone, and if it is secure then so must be the original strategy.

We now return to Pironio et al.'s argument with a presentation of the proof of Lemma 3 from \cite{Pironio:2009:Device-independ}.  We retain the basis above, with the measurements on the $X,Z$ plane and assume that $Y \otimes Y$ has been applied with probability $1/2$.  In the Bell basis we obtain
\begin{equation}
\rho_{AB} = \left(
	\begin{matrix}
	\lambda_{\phi_{+}} & r_{1} e^{i \theta_{1}} & 0 & 0 \\
	r_{1}e^{-i\theta_{1}} & \lambda_{\psi_{-}} & 0 & 0 \\
	0 & 0 & \lambda_{\phi_{-}} & r_{2} e^{i \theta_{2}} \\
	0 & 0 & r_{2} e^{-i \theta_{2}} & \lambda_{\psi_{+}} \\
	\end{matrix}
\right)
\end{equation}
since $\ket{\phi_{+}}$ and $\ket{\psi_{-}}$ are eigenvectors of $Y \otimes Y$ with eigenvalue 1 while the other two Bell basis vectors have eigenvalue -1.

There is still freedom in the choice of basis, and we may make a rotation about $Y$ on both Alice and Bob's side while keeping the measurements on the $X,Z$ plane.  The rotation angles may be chosen (see \cite{Pironio:2009:Device-independ} Lemma 3 for details) to obtain
\begin{equation}
\rho_{AB} = \left(
	\begin{matrix}
	\lambda_{\phi_{+}} & i r_{1} & 0 & 0 \\
	-ir_{1}& \lambda_{\psi_{-}} & 0 & 0 \\
	0 & 0 & \lambda_{\phi_{-}} & i r_{2}  \\
	0 & 0 & -ir_{2}  & \lambda_{\psi_{+}} \\
	\end{matrix}
\right).
\end{equation}

Finally, we apply an argument analogous to the marginal symmetrization and find that an equal mixture of the above state and its complex conjugate gives the same security.  This mixture has no off-diagonal entries in the Bell basis, and thus is Bell diagonal.  This allows us to consider only Bell diagonal states.

\subsubsection{Estimating $S_{max}$}
We prove the following:

\begin{lemma}\label{lemma:diqkdbelldiagchsh}
Let $\rho_{AB}$ be a given two qubit Bell diagonal state.  Then
\begin{equation}
S_{max}(\rho_{AB}) \geq 2 \sqrt{2} \sqrt{\left(\lambda_{\phi_{+}}  - \lambda_{\psi_{-}}\right)^{2} + \left(\lambda_{\phi_{-}} - \lambda_{\psi_{+}}\right)^{2}}.
\end{equation}
\end{lemma}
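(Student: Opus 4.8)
The plan is to invoke the characterization of $S_{max}$ in terms of the correlation matrix $R^{\prime}$ established in section~\ref{sec:chshinequality}, specialized to the Bell--diagonal case where $R^{\prime}$ is diagonal, and then reduce the claimed inequality to a one--line algebraic identity relating the diagonal entries of $R^{\prime}$ to the Bell eigenvalues.

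First I would write $\rho_{AB} = \lambda_{\phi_{+}}\proj{\phi_{+}}{\phi_{+}} + \lambda_{\phi_{-}}\proj{\phi_{-}}{\phi_{-}} + \lambda_{\psi_{+}}\proj{\psi_{+}}{\psi_{+}} + \lambda_{\psi_{-}}\proj{\psi_{-}}{\psi_{-}}$ and compute $R^{\prime}_{UV} = \tr{\rho_{AB}\, U\otimes V}$ for $U,V\in\{X,Y,Z\}$ by summing the eigenvalues $\bra{\beta}U\otimes V\ket{\beta}$ of each Bell state $\ket{\beta}$. The mixed correlators vanish on every Bell state, so $R^{\prime}$ is diagonal, and a direct bookkeeping gives the diagonal entries $t_{1} := R^{\prime}_{XX} = \lambda_{\phi_{+}} - \lambda_{\phi_{-}} + \lambda_{\psi_{+}} - \lambda_{\psi_{-}}$, $t_{2} := R^{\prime}_{YY} = -\lambda_{\phi_{+}} + \lambda_{\phi_{-}} + \lambda_{\psi_{+}} - \lambda_{\psi_{-}}$, and $t_{3} := R^{\prime}_{ZZ} = \lambda_{\phi_{+}} + \lambda_{\phi_{-}} - \lambda_{\psi_{+}} - \lambda_{\psi_{-}}$. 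By the result recalled in section~\ref{sec:chshinequality}, since $R^{\prime}$ is diagonal we have $S_{max}(\rho_{AB}) = 2\sqrt{u^{2}+v^{2}}$, where $u$ and $v$ are the two largest of $|t_{1}|, |t_{2}|, |t_{3}|$.

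The key observation is then that $t_{1}+t_{3} = 2(\lambda_{\phi_{+}} - \lambda_{\psi_{-}})$ and $t_{3}-t_{1} = 2(\lambda_{\phi_{-}} - \lambda_{\psi_{+}})$, so that $(\lambda_{\phi_{+}}-\lambda_{\psi_{-}})^{2} + (\lambda_{\phi_{-}}-\lambda_{\psi_{+}})^{2} = \tfrac{1}{4}\bigl[(t_{1}+t_{3})^{2} + (t_{3}-t_{1})^{2}\bigr] = \tfrac{1}{2}(t_{1}^{2}+t_{3}^{2})$. Hence the right-hand side of the claimed bound is exactly $2\sqrt{t_{1}^{2}+t_{3}^{2}}$, and it suffices to show $u^{2}+v^{2}\geq t_{1}^{2}+t_{3}^{2}$. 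This is immediate: $u^{2}+v^{2}$ is the largest sum of two of the squared diagonal entries, equivalently $u^{2}+v^{2} = t_{1}^{2}+t_{2}^{2}+t_{3}^{2} - \min_{i} t_{i}^{2} \geq t_{1}^{2}+t_{3}^{2}$ because $\min_{i} t_{i}^{2}\leq t_{2}^{2}$. Combining, $S_{max}(\rho_{AB}) = 2\sqrt{u^{2}+v^{2}} \geq 2\sqrt{t_{1}^{2}+t_{3}^{2}} = 2\sqrt{2}\,\sqrt{(\lambda_{\phi_{+}}-\lambda_{\psi_{-}})^{2} + (\lambda_{\phi_{-}}-\lambda_{\psi_{+}})^{2}}$.

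There is no real obstacle here; the only care needed is getting the Bell--state correlators (and the vanishing of the off-diagonal entries of $R^{\prime}$) correct, and noticing that since we only want a lower bound it is irrelevant whether $\{|t_{1}|,|t_{3}|\}$ actually are the two largest entries --- discarding the remaining (smallest) term can only decrease the sum, which is the direction we need. An equivalent, more operational way to phrase the whole argument is that measurements chosen in the $X$--$Z$ plane of the Bloch sphere on both sides already achieve CHSH value $2\sqrt{t_{1}^{2}+t_{3}^{2}}$, so the bound simply records that this particular (not necessarily optimal) choice of measurements does at least this well.
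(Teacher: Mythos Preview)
Your proof is correct and follows essentially the same approach as the paper: compute the diagonal correlation matrix $R^{\prime}$ for a Bell-diagonal state, invoke the Horodecki formula $S_{max}=2\sqrt{u^{2}+v^{2}}$, and observe that the pair $(R^{\prime}_{XX},R^{\prime}_{ZZ})$ already yields the claimed expression as a lower bound. You have simply made explicit the algebraic identity $t_{1}^{2}+t_{3}^{2}=2\bigl[(\lambda_{\phi_{+}}-\lambda_{\psi_{-}})^{2}+(\lambda_{\phi_{-}}-\lambda_{\psi_{+}})^{2}\bigr]$ that the paper leaves to the reader.
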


\begin{proof}
First, we recall the definition of $R^{\prime}$ from section~\ref{sec:chshinequality}.  For Bell diagonal operators we find
\begin{equation}
R^{\prime} = \left(
	\begin{matrix}
	\lambda_{\phi_{+}} - \lambda_{\phi_{-}} + \lambda_{\psi_{+}} - \lambda_{\psi_{-}} & 0  & 0 \\
	0 & 	-\lambda_{\phi_{+}} + \lambda_{\phi_{-}} + \lambda_{\psi_{+}} - \lambda_{\psi_{-}} & 0 \\
	0 & 0 & \lambda_{\phi_{+}} + \lambda_{\phi_{-}} - \lambda_{\psi_{+}} - \lambda_{\psi_{-}} \\
	\end{matrix}
\right).
\end{equation}
Note that $R^{\prime}$ is diagonal.  Using $R_{XX}$ and $R_{ZZ}$ as the largest in absolute value (if this is not then case, then we still get a lower bound) we find exactly the desired lower bound on $S_{max}$.\footnote{Note that in \cite{Pironio:2009:Device-independ}, Lemma 7 $S_{max}$ is found as a maximum over two values.  The second of these values is obtained for measurements in the $(Z,Y)$ plane.}
\end{proof}
\subsubsection{Entropy inequalities} 

We prove:

\begin{lemma}[Pironio et al. \cite{Pironio:2009:Device-independ} Lemma 6]
Let $\rho_{AB}$ be a two qubit state.  Then
\begin{equation}
h(\overline{\lambda}) - h(\lambda_{\Phi_{+}} + \lambda_{\Phi_{-}}) \leq h \left(\frac{1 + \sqrt{(S_{max}(\sigma_{AB})/2)^{2} - 1}}{2} \right).
\end{equation}
where $h$ is the Shannon entropy.
\end{lemma}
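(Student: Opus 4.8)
The plan is to pass to the Bell-diagonal case, rewrite the left-hand side using the grouping (chain) rule for the Shannon entropy, collapse the result to a single binary entropy by concavity, and finally bound the argument of that binary entropy via the explicit form of $R'$ recorded in the proof of Lemma~\ref{lemma:diqkdbelldiagchsh} together with the elementary bound $|R'_{UV}| = |\text{Tr}(\rho\, U \otimes V)| \le 1$ on two-point correlators.

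First I would invoke the reduction established above (Lemma~\ref{lemma:diqkdbelldiagsuff} and the surrounding symmetrisation argument) to assume that $\sigma_{AB}$, equivalently $\rho_{AB}$, is Bell diagonal with eigenvalues $\overline{\lambda} = (\lambda_{\Phi_+}, \lambda_{\Phi_-}, \lambda_{\Psi_+}, \lambda_{\Psi_-})$ summing to $1$; write $a = \lambda_{\Phi_+}$, $b = \lambda_{\Phi_-}$, $c = \lambda_{\Psi_+}$, $d = \lambda_{\Psi_-}$, $p_{\Phi} = a+b$, and $S = S_{max}(\sigma_{AB})$. Grouping the Bell basis as $\{\Phi_+,\Phi_-\} \cup \{\Psi_+,\Psi_-\}$, the chain rule for the Shannon entropy gives
\begin{equation}
h(\overline{\lambda}) - h(p_{\Phi}) = p_{\Phi}\, h\!\left( \frac{a}{p_{\Phi}} \right) + (1 - p_{\Phi})\, h\!\left( \frac{c}{1 - p_{\Phi}} \right).
\end{equation}
Since $h$ is concave and $h(x) = h(1-x)$, I may collapse each convex combination either with the ``$\Psi_+$'' or with the ``$\Psi_-$'' weight of the second group, obtaining the two upper bounds $h(a+c)$ and $h(a+d)$, hence $h(\overline{\lambda}) - h(p_{\Phi}) \le \min\{h(a+c),\, h(a+d)\}$.

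Next, the proof of Lemma~\ref{lemma:diqkdbelldiagchsh} records that for a Bell-diagonal state $R'$ is diagonal with entries $R'_{XX} = a-b+c-d$, $R'_{YY} = -a+b+c-d$, $R'_{ZZ} = a+b-c-d$. Using $a+b+c+d = 1$ this gives $2(a+c) - 1 = R'_{XX}$ and $2(a+d) - 1 = -R'_{YY}$, so by symmetry of the binary entropy the previous bound becomes
\begin{equation}
h(\overline{\lambda}) - h(p_{\Phi}) \le h\!\left( \frac{1 + \max\{\, |R'_{XX}|,\, |R'_{YY}|\, \}}{2} \right).
\end{equation}
Because $t \mapsto h\big((1+t)/2\big)$ is nonincreasing on $[0,1]$, it remains only to show $\max\{|R'_{XX}|,|R'_{YY}|\} \ge \sqrt{(S/2)^2 - 1}$ — and if $(S/2)^2 < 1$ the right-hand side of the Lemma is not real and there is nothing to prove. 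By section~\ref{sec:chshinequality} and the diagonality of $R'$, $(S/2)^2$ is the sum of the two largest of $|R'_{XX}|^2, |R'_{YY}|^2, |R'_{ZZ}|^2$; writing these three numbers in decreasing order as $|R'_{(1)}|^2 \ge |R'_{(2)}|^2 \ge |R'_{(3)}|^2$ we have $(S/2)^2 = |R'_{(1)}|^2 + |R'_{(2)}|^2$. The two-element set $\{|R'_{XX}|^2, |R'_{YY}|^2\}$ must meet the two-element set consisting of the two largest, so its maximum is at least $|R'_{(2)}|^2$; hence, using $|R'_{(1)}| \le 1$,
\begin{equation}
\max\{\, |R'_{XX}|^2,\, |R'_{YY}|^2\, \} \ \ge\ |R'_{(2)}|^2 \ =\ (S/2)^2 - |R'_{(1)}|^2 \ \ge\ (S/2)^2 - 1 ,
\end{equation}
which completes the proof.

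The calculation is short, and the only delicate point is the last step: one must collapse the convex combinations so that $R'_{XX}$ and $R'_{YY}$ — the correlators tied to the CHSH optimum in the $X$-$Z$ plane — appear, rather than $R'_{ZZ}$, and then notice that the larger of those two squared correlators is always at least the second-largest of the three, which is exactly what forces the bound down to $\tfrac{1}{2}\big(1 + \sqrt{(S/2)^2-1}\big)$ instead of a weaker value. A minor bookkeeping caveat is to keep the Bell-basis labelling (and the $\sigma_{AB}$ versus $\rho_{AB}$ convention of the Bell-diagonalisation) consistent with the partition $\{\Phi_+,\Phi_-\}$ induced by the key-generating measurement on Bob's side, so that the entropy-grouping step uses the intended split.
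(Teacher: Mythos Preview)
Your argument is correct and is genuinely different from the route the paper cites. The paper does not give its own proof here; it defers to Pironio et al., describing their method as parameterising the four Bell-diagonal weights by $S$ and two auxiliary variables, expressing $h(\overline{\lambda}) - h(\lambda_{\Phi_+}+\lambda_{\Phi_-})$ in those variables, and then carrying out a direct (and, as the paper says, ``long and technical'') optimisation. You bypass that optimisation entirely: the chain rule plus concavity of the binary entropy collapses the left-hand side to $h\big((1+\max\{|R'_{XX}|,|R'_{YY}|\})/2\big)$, and then a pigeonhole on the three diagonal entries of $R'$ together with $|R'_{(1)}|\le 1$ gives $\max\{|R'_{XX}|,|R'_{YY}|\}\ge |R'_{(2)}| \ge \sqrt{(S/2)^2-1}$. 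This is shorter and more conceptual; in particular it makes transparent \emph{why} the bound involves the specific function $\tfrac12(1+\sqrt{(S/2)^2-1})$, namely because the second-largest correlator is forced to be at least $\sqrt{(S/2)^2-1}$ once the largest is capped at $1$.

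Two small remarks. First, the reduction to Bell-diagonal states you invoke is not literally Lemma~\ref{lemma:diqkdbelldiagsuff} (which is phrased for $H(B|E)$), but rather the symmetrisation argument used there and in the proof of Theorem~\ref{theorem:chshqubitmax}: the operations $\rho\mapsto\rho^{*}$ and $\rho\mapsto Y\otimes Y\,\rho\,Y\otimes Y$ preserve the Bell-diagonal entries and do not decrease $S_{max}$, and since the right-hand side is monotone decreasing in $S$ this is exactly what is needed. Second, your treatment of the case $(S/2)^2<1$ is harmless but could be tightened: your own chain-rule identity already shows the left-hand side is a convex combination of binary entropies and hence $\le 1 = h(1/2)$, which covers $S\le 2$ as well.
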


The proof is long and technical, so we refer the reader to \cite{Pironio:2009:Device-independ} for details.  Briefly, the proof introduces a parameterization of the $\lambda$s in terms of $S$ and two other parameters.  Then the entropies on the left hand side of the bound are written in terms of this parameterization.  Finally, optimization techniques are used to find the maximum, giving the required bound.

\subsubsection{Bounding $H(B|E)$ for Bell diagonal states}

\begin{lemma}
Let $\rho_{AB}$ be a two qubit Bell-diagonal state with eigenvalues $\overline{\lambda}$, purified on system $E$.  Suppose the $B$ system is measured with an observable in the $X,Z$ plane to obtain random variable $Y$.  Then
\begin{equation}
 H(Y|E) \geq 1 + h(\lambda_{\Phi_{+}} + \lambda_{\Phi_{-}}) - h(\overline{\lambda}).
\end{equation}.
\end{lemma}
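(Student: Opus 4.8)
The goal is to lower-bound $H(Y|E)$ for a Bell-diagonal two-qubit state $\rho_{AB}$, purified on $E$, when the $B$ system is measured by an observable in the $X,Z$ plane. The natural strategy is to relate $H(Y|E)$ to $H(Y|A)$ via the purification, and then compute $H(Y|A)$ directly from the Bell-diagonal structure. First I would write $\rho_{AB}$ in the Bell basis with eigenvalues $\lambda_{\phi_+},\lambda_{\phi_-},\lambda_{\psi_+},\lambda_{\psi_-}$ (abbreviated $\overline{\lambda}$), and take the canonical purification $\ket{\Psi}_{ABE} = \sum_k \sqrt{\lambda_k}\ket{\beta_k}_{AB}\ket{k}_E$ where $\ket{\beta_k}$ are the Bell states. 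For the measurement on $B$, since everything is invariant under the symmetrization already performed, I can take $B^0 = Z$ without loss of generality (any observable in the $X,Z$ plane is equivalent by a rotation about $Y$ that preserves the Bell-diagonal form up to relabeling, as in the $S_{max}$ computation).

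**Key steps.** The main tool is the chain rule / data-processing identity for the ccq state after measurement. After Bob measures $Z$ on his qubit, we get a classical register $Y$, and the post-measurement state is a classical-quantum state $\rho_{YAE}$. I would use the identity $H(Y|E) = H(Y) - I(Y:E)$ together with $I(Y:E) \le$ something controlled by how much $E$ learns about $Y$; more cleanly, since $\ket{\Psi}_{ABE}$ is pure, after measuring $Y$ on $B$ the states $AE$ are in a pure-state ensemble, and $H(Y|E) = H(YE) - H(E)$. Computing $H(E) = H(\overline\lambda)$ directly from the purification. Then $H(YE)$: conditioned on outcome $Y=y$ (probability $1/2$ each by symmetrized marginals), the conditional state on $E$ is supported on a $2$-dimensional subspace, and a short calculation shows $H(YE) = 1 + H(\text{something})$. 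The combinatorics of which Bell states share the same $Z_B$ outcome: $\ket{\phi_+},\ket{\phi_-}$ give $Z_A Z_B = +1$ correlation and $\ket{\psi_+},\ket{\psi_-}$ give $-1$; measuring $Z_B = 0$ keeps the $A$ component in the span of $\{\ket{0},\ket{1}\}$ paired appropriately, so the residual entropy on $E$ given $Y=y$ is $h(\lambda_{\phi_+}+\lambda_{\phi_-})$ worth of distinguishing "$\phi$ vs $\psi$" plus the internal mixing, and the net effect after subtracting $H(E)=h(\overline\lambda)$ is the claimed $1 + h(\lambda_{\phi_+}+\lambda_{\phi_-}) - h(\overline\lambda)$. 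I would verify this by an explicit $4\times 4$ computation in the Bell basis, tracking the $Z_B$ measurement as a projection and writing out the two conditional $E$-states.

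**Main obstacle.** The delicate point is handling the fact that $B^0$ need not be exactly $Z$ — it is only some observable in the $X,Z$ plane, possibly a "tilted" measurement chosen adversarially. I expect the bound to be tightest (equality, or the claimed inequality saturated) precisely for $Z$, and for a general $X,Z$-plane observable one must argue that the measurement cannot do better at revealing $Y$ to Eve; this should follow because a rotation about the $Y$ axis on Bob's side is a local unitary that commutes with the Bell-diagonal structure only up to a permutation of the $\{\phi,\psi\}$ labels within each $\pm$ sector, and within a sector the relevant combined quantity $\lambda_{\phi_+}+\lambda_{\phi_-}$ is invariant, while $h(\overline\lambda)$ is invariant under all permutations. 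So the inequality is really an equality under the symmetrization assumptions, and the "$\geq$" leaves room for non-optimal (non-planar, or noisy) measurements. The rest — the two entropy evaluations — is routine once the purification is fixed; I would cite Pironio et al.\ \cite{Pironio:2009:Device-independ} for the parallel computation and simply present the Bell-basis bookkeeping.
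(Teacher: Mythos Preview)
Your overall skeleton --- use $H(Y|E)=H(YE)-H(E)$, compute $H(E)=h(\overline\lambda)$ from the purification, and compute $H(YE)$ from the post-measurement ensemble --- matches the paper exactly. The gap is in the step you flag as the ``main obstacle'': your symmetry argument for reducing an arbitrary $X,Z$-plane observable to $Z$ does not hold.

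A rotation about $Y$ on Bob's side alone does \emph{not} preserve Bell-diagonality, nor does it permute the Bell states. For instance $(I\otimes R_Y(\theta))\ket{\phi_+}=\cos\tfrac{\theta}{2}\,\ket{\phi_+}+\sin\tfrac{\theta}{2}\,\ket{\psi_-}$, a genuine superposition; only at $\theta=\pi$ do you get a permutation, and that one swaps $\phi_+\leftrightarrow\psi_-$, $\phi_-\leftrightarrow\psi_+$ --- across, not within, your ``$\pm$ sectors''. Even a bilateral rotation $R_Y(\theta)\otimes R_Y(\theta)$ fixes $\ket{\phi_+},\ket{\psi_-}$ but rotates inside $\mathrm{span}\{\ket{\phi_-},\ket{\psi_+}\}$, so a Bell-diagonal state with $\lambda_{\phi_-}\neq\lambda_{\psi_+}$ acquires off-diagonal terms. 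Consequently your invariance claim for $\lambda_{\phi_+}+\lambda_{\phi_-}$ fails.

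The paper does not attempt a reduction; it keeps the measurement angle as a free parameter, writing Bob's basis as $q_0\ket{0}+q_1\ket{1}$ with $q_0=\cos\theta$, $q_1=\sin\theta$, computes $\rho_{YE}$ explicitly as a direct sum of two rank-$2$ blocks, and extracts the eigenvalues via $\tr M_y$ and $\tr M_y^2$. The eigenvalues carry a $\cos 4\theta$ term:
\[
\Lambda_\pm=\tfrac{1}{4}\Bigl(1\pm\sqrt{(\lambda_{\phi_+}-\lambda_{\psi_-})^2+(\lambda_{\phi_-}-\lambda_{\psi_+})^2+2\cos 4\theta\,(\lambda_{\phi_+}-\lambda_{\psi_-})(\lambda_{\phi_-}-\lambda_{\psi_+})}\Bigr),
\]
so $H(YE)=1+h(2\Lambda_+)$ genuinely depends on $\theta$. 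Minimising over $\theta$ (the adversarial choice) lands at $\theta=0$ or $\theta=\pi/4$ depending on the sign of the cross term, and only then does one obtain $h(\lambda_{\phi_+}+\lambda_{\phi_-})$. So the inequality in the lemma is not ``really an equality'' --- it is the result of an honest minimisation over the measurement, and you cannot skip that computation.
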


Although this Lemma does not appear in \cite{Pironio:2009:Device-independ}, the proof roughly follows that of Lemma 5 from \cite{Pironio:2009:Device-independ}.  We fill in several details, including the derivation of $\rho_{YE}$ and the calculation of its eigenvalues\footnote{Note that in \cite{Pironio:2009:Device-independ} there is a typo in equation 31.  The final $\sigma$ should appear with $\ket{e_{3}}$ rather than $\ket{e_{4}}$.  Also, the final eigenvalues contain a $\cos 2\phi$ rather than $\cos 4 \theta$ here.  This is due the fact that $\phi$ is an angle in the Bloch sphere and so corresponds to $2 \theta$.}  

We begin with a slight change of notation which will allow more compact formulas below.  We may write the Bell states as
\begin{equation}
\ket{\psi_{st}} = \frac{1}{\sqrt{2}}\sum_{r = 0,1}  (-1)^{rt} \ket{r}\ket{r \oplus s}
\end{equation}
with $\ket{\phi_{+}} = \ket{\phi_{00}}$, $\ket{\phi_{-}} = \ket{\phi_{01}}$, $\ket{\psi_{+}} = \ket{\phi_{10}}$ and $\ket{\psi_{-}} = \ket{\phi_{11}}$.  

We may write a purification of $\rho_{AB}$ as
\begin{equation}
\ket{\Psi_{ABE}} = \frac{1}{\sqrt{2}} \sum_{r,s,t = 0,1} \sqrt{\lambda_{st}} (-1)^{rt} \ket{r}\ket{r\oplus s} \ket{e_{st}}.
\end{equation}
Tracing out the $A$ system may be accomplished by measuring in the $Z$ eigenbasis and ignoring the outcome, obtaining a mixture of the two non-normalized states:
\begin{equation}
\sum_{s,t} \sqrt{\frac{\lambda_{st}}{2}} \ket{s \oplus x} (-1)^{xt}\ket{e_{st}}
\end{equation}
for $x = 0,1$, corresponding to the two reduced states on obtaining outcome $x$.

Next we measure the $B$ system in the basis $q_{0}\ket{0} + q_{1}\ket{1}$ / $q_{1} \ket{0} - q_{0} \ket{1}$ with $q_{0}^{2 } + q_{1}^{2} = 1$.  For outcome $y$ the $E$ system is left in a mixture of the two non-normalized states
\begin{equation}
\sum_{s,t} \sqrt{\frac{\lambda_{st}}{2}} q_{s \oplus x \oplus y}\ket{s \oplus x} (-1)^{xt \oplus ys} \ket{e_{st}}
\end{equation}
for $x = 0,1$.  The state $\rho_{YE}$ is then a direct sum of two blocks.  The blocks may be written as
\begin{equation}
M_{y} = \sum_{s,t,u,v,x} \frac{1}{2}\sqrt{\lambda_{st} \lambda_{uv}} q_{(x \oplus y \oplus s)} q_{(x \oplus y \oplus u)} 
	(-1)^{x(t \oplus v) \oplus y(s \oplus u)} \proj{e_{st}}{e_{uv}}
\end{equation}

Finally we determine the eigenvalues of $\rho_{YE}$, which is the direct sum of the two mixed states above.  For each block (corresponding to a particular outcome $y$) we may obtain the eigenvalues using the following procedure:  determine the trace of the block (which in this case is $1/2$ for each block since either outcome is equally likely for $\rho_{AB}$ Bell diagonal).  Next we determine the trace of the square of the block.  Since each block is a mixture of two states, there will be two eigenvalues, $\Lambda_{w}$ for $w = 0,1$.  Then for block $M$ we have $m = \tr{M_{y}} = \Lambda_{0} + \Lambda_{1}$ and $n = \tr{M_{y}^{2}} = \Lambda_{0}^{2} + \Lambda_{1}^{2}$.  We solve for $\Lambda_{w}$ in terms of $m$ and $n$ by noting that $m^{2} = n + 2\Lambda_{0}\Lambda_{1}$, subbing in $\Lambda_{0} = m - \Lambda_{1}$, and solving the resulting quadratic for $\Lambda_{0}$.  We obtain
\begin{equation}
\Lambda_{w} = \frac{1}{2}\left(\frac{1}{2} + (-1)^{w}  \sqrt{2nÊ- \frac{1}{4}}\right).
\end{equation}
Squaring $M_{y}$ and tracing we obtain
\begin{equation}
\tr{M_{y}} = \sum_{s,t,u,v,x,z} \frac{1}{4}{\lambda_{st} \lambda_{uv}} q_{(x \oplus y \oplus s)} q_{(x \oplus y \oplus u)} q_{z \oplus y \oplus s} q_{z \oplus y \oplus u} (-1)^{(x\oplus z)(t \oplus v)}.
\end{equation}
Fixing $s,t,u,v$ and summing over $x$ and $z$, we find that the $q$s sum to
\begin{equation}
\begin{cases}
 (q_{0}^{2}+q_{1}^{2})^{2} = 1 & t = v, u = s \\
 (q_{0}^{2} - q_{1}^{2})^{2} & t \neq v, u = s \\
 4(q_{0}^{2}q_{1}^{2}) & t = v,  u \neq s \\
 0 & t \neq v, uÊ\neq s.
\end{cases}
\end{equation}
We may write $q_{0} = \cos \theta$, $q_{1} = \sin \theta$, in which case $(q_{0}^{2} - q_{1}^{2})^{2} = \cos^{2} 2\theta = \frac{1 + \cos 4\theta}{2}$, $q_{0}^{2}q_{1}^{2} = \sin^{2} 2\theta = \frac{1 - \cos 4 \theta}{2}$, and  
\begin{equation}
\tr{M_{y}} = \frac{1}{4}\left(\lambda_{00}^{2} + \lambda_{01}^2 + \lambda_{10}^2 + \lambda_{11}^2 + \lambda_{00}\lambda_{01} + \lambda_{00}\lambda_{10} + \lambda_{01}\lambda_{11} + \lambda_{10}\lambda_{11} \right)
\end{equation}
\[
+\frac{\cos 4 \theta}{4}\left(\lambda_{00}\lambda_{01} - \lambda_{00}\lambda_{10} - \lambda_{01}\lambda_{11} + \lambda_{10}\lambda_{11}\right).
\]
Collecting some terms, recalling that $\sum_{st} \lambda_{st} = 1$, and factoring we obtain
\begin{equation}
\frac{1}{2}\left(1 + (\lambda_{00} - \lambda_{11})^{2} + (\lambda_{01} - \lambda_{10})^{2} + 2 \cos 4\theta(\lambda_{00} - \lambda_{11})(\lambda_{01} - \lambda_{10})\right).
\end{equation}
Substituting in, we find
\begin{equation}
\Lambda_{w} = \frac{1}{4} \left(1 + (-1)^{w} \sqrt{(\lambda_{\phi_{+}} - \lambda_{\psi_{-}})^{2} + (\lambda_{\phi_{-}} - \lambda_{\psi_{+}})^{2} + 2 \cos 4 \phi (\lambda_{\phi_{+}} - \lambda_{\psi_{-}}) (\lambda_{\phi_{-}} - \lambda_{\psi_{+}})} \right).
\end{equation}
Each eigenvalue occurs with multiplicity 2, for the two values of $y$.   Recall that for state $\rho_{YE}$, $H(Y|E) = H(\rho_{YE}) - H(\rho_{Y})$.  We find $H(\sigma_{YE})$ to be $1 + h(\Lambda_{+})$, which is minimized for $\theta = 0$ if the term with $\cos$ is positive, or $\theta = \pi/4$ when the term with $\cos$ is positive.  In these cases we obtain $\Lambda_{+} = \lambda_{\phi_{+}} + \lambda_{\phi_{-}}$.  Meanwhile, the state $\rho_{E}$ has the same eigenvalues as $\rho_{AB}$ since $\rho_{E}$ is the purification.  The eigenvalues are thus given by $\overline{\lambda}$, so $H(E) = h(\overline{\lambda})$.  We obtain
\begin{equation}
 H(Y|E) \geq 1 + h(\lambda_{\phi_{+}} + \lambda_{\phi_{-}}) - h(\overline{\lambda}).
\end{equation}

\subsubsection{Bounding $H(Y|E)$ in terms of $S$}

We finally obtain the bound we seek:

\begin{lemma}\label{lemma:diqkdhye}
Let $\rho_{AB}$ be a 2-qubit state, measured on the $B$ system in the $X,Z$ plane to obtain a system $X$.  Further suppose that $\rho_{AB}$ is purified by system $E$.  Then
\begin{equation}
H(Y|E) \geq 1- h \left(\frac{1 + \sqrt{(S/2)^{2} - 1}}{2} \right)
\end{equation}
\end{lemma}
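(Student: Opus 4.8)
The plan is to obtain this bound by composing three results already established in this section, together with one elementary monotonicity observation; no genuinely new estimate is required. The ingredients are: the Bell-diagonalization reduction of Lemma~\ref{lemma:diqkdbelldiagsuff}, which lets us prove a lower bound of the form $H(Y|E) \ge f(S_{max}(\rho_{AB}))$ only for Bell-diagonal states; the lower bound $H(Y|E) \ge 1 + h(\lambda_{\phi_{+}} + \lambda_{\phi_{-}}) - h(\overline{\lambda})$ for Bell-diagonal $\rho_{AB}$ proved just above; and the entropy inequality of Pironio et al.\ \cite{Pironio:2009:Device-independ} established above, which controls $h(\overline{\lambda}) - h(\lambda_{\phi_{+}} + \lambda_{\phi_{-}})$ in terms of $S_{max}$.

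Concretely, I would argue as follows. By Lemma~\ref{lemma:diqkdbelldiagsuff} it suffices to treat a Bell-diagonal $\rho_{AB}$ with eigenvalues $\overline{\lambda}$, since the claimed inequality has exactly the shape to which that reduction applies, with $B$ measured by an observable in the $X,Z$ plane as required there. For such a state the bound proved immediately above gives $H(Y|E) \ge 1 + h(\lambda_{\phi_{+}} + \lambda_{\phi_{-}}) - h(\overline{\lambda}) = 1 - \left( h(\overline{\lambda}) - h(\lambda_{\phi_{+}} + \lambda_{\phi_{-}}) \right)$. Applying the entropy inequality of Pironio et al.\ to the bracketed difference yields $h(\overline{\lambda}) - h(\lambda_{\phi_{+}} + \lambda_{\phi_{-}}) \le h\!\left( \frac{1 + \sqrt{(S_{max}(\rho_{AB})/2)^2 - 1}}{2} \right)$, and hence $H(Y|E) \ge 1 - h\!\left( \frac{1 + \sqrt{(S_{max}(\rho_{AB})/2)^2 - 1}}{2} \right)$. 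Finally, since the CHSH value $S$ actually observed in the protocol is attained by some (possibly suboptimal) measurement it satisfies $S \le S_{max}(\rho_{AB})$, and the map $S \mapsto h\!\left( \frac{1 + \sqrt{(S/2)^2 - 1}}{2} \right)$ is non-increasing on $[2, 2\sqrt{2}]$, because $S \mapsto \frac{1 + \sqrt{(S/2)^2 - 1}}{2}$ increases from $\frac{1}{2}$ to $1$ there while $h$ is non-increasing on $[\frac{1}{2}, 1]$. Therefore $h\!\left( \frac{1 + \sqrt{(S_{max}(\rho_{AB})/2)^2 - 1}}{2} \right) \le h\!\left( \frac{1 + \sqrt{(S/2)^2 - 1}}{2} \right)$, which gives the stated bound.

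The substantive work is all contained in the lemmas invoked; what needs care here is essentially bookkeeping: checking that Lemma~\ref{lemma:diqkdbelldiagsuff} is applied in the correct direction (it reduces a \emph{lower} bound on $H(Y|E)$ to the Bell-diagonal case) and for the same measured observable, and that the final weakening from $S_{max}(\rho_{AB})$ to $S$ is carried out using the monotonicity in the right direction, so that a larger $S_{max}$ only strengthens the intermediate bound. It is also worth re-examining that the symmetrization steps used inside the Bell-diagonalization argument — mixing the state with its $Y \otimes Y$-image and with its complex conjugate — remain legitimate at the level of an individual qubit strategy, since that is the level at which the de Finetti machinery is ultimately applied. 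One could additionally observe, via Lemma~\ref{lemma:diqkdbelldiagchsh}, that $S_{max}$ of a Bell-diagonal state is itself controlled by $(\lambda_{\phi_{+}} - \lambda_{\psi_{-}})^2 + (\lambda_{\phi_{-}} - \lambda_{\psi_{+}})^2$, but this is not needed for the argument above.
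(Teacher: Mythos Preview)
Your proposal is correct and follows essentially the same route as the paper: the paper presents this lemma without a separate proof, immediately after the three ingredients you invoke (the Bell-diagonal reduction, the bound $H(Y|E) \ge 1 + h(\lambda_{\phi_{+}} + \lambda_{\phi_{-}}) - h(\overline{\lambda})$, and the Pironio et al.\ entropy inequality), so it is clearly intended as their direct combination. Your explicit monotonicity step to pass from $S_{max}(\rho_{AB})$ to the observed $S$ is a useful clarification that the paper leaves implicit; your caution about the inequality direction in Lemma~\ref{lemma:diqkdbelldiagsuff} is also well placed, since as stated there the $\leq$ appears to be a typo for $\geq$.
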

%

\subsection{Security for qubit strategies on symmetric states}
In this section we restrict our attention to the case where the state source emits a pair of qubits and the devices each measure one of these qubits.  Our proof of security is derived from the one given by Renner in \cite{Renner:2005:Security-of-Qua}.  The main difference is in the parameter estimation.  Central to the argument is the finite quantum de Finetti Theorem published in \cite{Renner:2007:Symmetry-of-lar}.
Security for qubit strategies follows from the same proof as Theorem 6.5.1 in \cite{Renner:2005:Security-of-Qua}, with different parameters.  Since the proof is laid out in great detail in \cite{Renner:2005:Security-of-Qua} we will only sketch the proof and indicate the necessary changes.

We begin with a symmetric state $n + m + k$ pairs of qubits, which we purify (according to Lemma 4.2.2 of \cite{Renner:2005:Security-of-Qua}) on Eve's system to a pure symmetric state $\rho$.  Importantly, the purification is symmetric even when considering Eve's systems.  According to the finite quantum de Finetti Theorem (\ref{theorem:definetti}), we may drop $k$ subsystems and obtain
\begin{equation}
\left|\left| \tr[k]{\rho} - \int \rho_{\phi}\nu(\phi)  \right|\right|_{1} \leq \frac{2}{9} \epsilon
\end{equation}
with $\rho_{\phi} \in Sym(H_{2}^{\otimes 4}, \ket{\phi}^{\otimes n+m-r})$ and $r$ depending on $n,m,k,\epsilon$ according to table 6.2 of \cite{Renner:2005:Security-of-Qua}.  We next apply parameter estimation by measuring $m$ systems with measurement settings chosen uniformly at random for Alice and Bob, and determine the number of CHSH successes, $y$.  Then $\frac{y}{m}$ is our estimate of $p$.  If this estimate is below some threshold, $p_{thres} + \mu$ ($p_{thres}$ is used to determine the key rate in the privacy amplification phase) we abort and map the state to 0.  According to Lemma~\ref{lemma:diqkdparameterestimation}, if we choose $\mu$ to be 
\begin{equation}
 \mu = \frac{4r}{m}\sqrt{\left(-\ln \frac{2 \epsilon}{9} - (n+m) h\left(\frac{r}{n+m} \right) \ln 2 \right)(m-r) \cos^{4}\frac{\pi}{8}},
\end{equation}
then the true value of $p$ is lower than $\frac{y}{m} - \mu$, only with probability less than $\frac{2}{9} \epsilon$.  Thus we may apply the parameter estimation to obtain
\begin{equation}
\left|\left| \rho^{PE} - \int_{V} \rho_{\phi}^{PE}\nu(\phi)  \right|\right|_{1} \leq \frac{4}{9} \epsilon
\end{equation}
where we restrict the integral to the set of states $\ket{\phi}$ which have CHSH probability of success $p_{thres}$ or higher (denoted by $V$).  The $PE$ superscripts indicate the application of the parameter estimation protocol.

We now have (if the protocol did not abort) a state $\rho^{PE}$ which is nearly indistinguishable from a mixture of near-product states each with CHSH success probability better than $p_{thres}$.  We may characterize the smooth min entropy of this family of states and apply privacy amplification, deriving a security bound for the finite case.  However, the calculation is essentially the same as it appears in \cite{Renner:2005:Security-of-Qua} and is beyond our scope.  Instead, we will appeal to the final result and calculate the asymptotic key rate.

In \cite{Renner:2005:Security-of-Qua}, Corollary 6.5.2 we find the asymptotic key rate after privacy amplification to be 
\begin{equation}
 \min_{\sigma_{AB}:S_{max}(\sigma_{AB}) \geq S} H(Y|E) - H(Y|X)
\end{equation}
with $H(Y|E)$ and $H(Y|X)$ evaluated for state $\sigma_{AB}$, and $S = 8 p_{thres} - 4$, while $X$ and $Y$ are the classical outcomes for Alice and Bob upon measuring $\sigma_{AB}$.  The system $E$ is Eve's system, which we take to be a purification of $\sigma_{AB}$.  Additionally, we must minimize over measurement strategies of Bob's devices.

By Lemma~\ref{lemma:diqkdhye} the secret key rate is thus bounded below by
\begin{equation}
1- h \left(\frac{1 + \sqrt{(S/2)^{2} - 1}}{2} \right) - h(q)
\end{equation}
where $H(Y|X) = h(q)$ and $q$ is the bit error rate between Alice and Bob's raw keys (Alice error corrects).  This is the same asymptotic rate achieved in \cite{Acin:2007:Device-Independ}.  Note that there is no relationship between $S$ and $q$, since Alice's raw key comes from an unknown measurement.  Her measurement may measure $\rho$ or some other system.  In all cases it is possible for $q$ to range from $0$ to $1$, regardless of the value of $S$.   

\section{Issues with practical implementations and discussion}
Since DIQKD aims to provide higher security in practical implementations than QKD we must carefully examine the extent to which this is true.  In particular we must address issues regarding the selection of measurement settings.  

\subsection{Detector efficiency}

\subsubsection{Detector efficiency loophole}
A common complaint against DIQKD is the \index{detector efficiency loophole}detector efficiency loophole, which was originally studied in the context of Bell inequalities.  In practical optical experiments the detectors in the measurement devices do not always record a photon when it is present.  This means that some trials in a Bell experiment give a ``no outcome'' result.  In a non-adversarial setting this is not problematic, but if we suppose that the measurement devices are adversarial and we discard ``no outcome'' results then the adversary may \emph{selectively} choose a ``no outcome'' result, allowing them to post-select for favourable conditions.  In particular, the devices may post-select for a particular measurement setting, allowing the adversary \emph{de facto} control over the measurement settings.

Consider the following strategy.  Eve determines, for each trial, a measurement setting and outcome for Alice's device and an adaptive strategy for Bob's device.  Alice's device waits for the measurement setting input and if it does not agree with the predetermined setting then the device gives no output, as though the photon was lost.  On the trials for which an output is produced, the adaptive strategy in Bob's device allows for $S=4$ since the measurement and outcome on Alice's side is known.

In the above strategy the detector efficiency of Alice's side is $0.5$ and on Bob's side it is $1$.  By randomly exchange roles, so that Bob's measurement setting is post-selected, we obtain a randomized strategy with efficiency $0.75$ for both Alice and Bob's detectors.  Thus for detector efficiency below $0.75$ there is a local hidden variable model which obtains $S=4$.

Clearly if the observed detector efficiency is too low then DIQKD is not secure; the devices may seem to be lossy, but in fact are implementing a predetermined strategy that allows Eve to know all information.  How efficient must the detectors be to obtain a secure DIQKD implementation?  Currently there has been only minimal study of the robustness of DIQKD against low detector efficiency.  Pironio et al. \cite{Pironio:2009:Device-independ} consider a scenario where ``no outcome'' results are assigned a random outcome.  In this case the inefficiency of the detectors translates directly into effective noise.  Their analysis concludes that a positive key rate is achievable for detector efficiency over $92.4\%$ and they provide a bound for maximum key rate vs. efficiency using this strategy.

A proper analysis of DIQKD accounting for detector efficiency would likely use a modified CHSH inequality.  It is possible to derive Bell type inequalities that account for detector efficiency (see \cite{Eberhard:1993:Background-leve} for one such derivation) which could be used to characterize a set of quantum states which from which secret key may be extracted.  An important roadblock in this program is the fact that Lemma~\ref{lemma:blockdiagonalization} is restricted to observables with two outcomes and no extension is known.

\begin{futurework}
Derive robust security bounds for DIQKD that take into account low detector efficiency.
\end{futurework}

In practical implementations we may make the following assumption:

\begin{assumption}[\index{fair sampling}Fair sampling]
The efficiency on each trial of Alice and Bob's detectors is independent of the measurement setting.
\end{assumption}
\noindent
With this assumption in place there is no post-selection of measurement settings and the problems outlined in this section do not exist.  Of course this assumption is clearly not satisfied in practical settings, as outlined below.  Nevertheless, in security proofs the assumption is often made implicitly.

\subsubsection{Comparison with QKD}
Although the detector efficiency loophole is sometimes given as reason for preferring QKD over DIQKD, this argument does not hold.  In fact, QKD is also susceptible to the detector efficiency loophole.  This point is usually not addressed, since the devices are considered to be under Alice and Bob's control, and hence the efficiency could not vary depending on the measurement setting.  As a concrete counterexample to the validity of this assumption, we consider the class of attacks based on detector efficiency mismatch. 

\index{detector efficiency mismatch}Detector efficiency mismatch attacks were described by Makarov et al. \cite{Makarov:2006:Effects-of-dete} and have been implemented against commercially available QKD systems by Zhao et al. \cite{Zhao:2008:Quantum-hacking}.  Generally speaking, detector efficiency mismatch attacks exploit the fact that different detectors are often used for different measurement settings, and individual detectors differ in their efficiency.  The specific attacks implemented by Zhao et al. are called \index{time shift attacks}time shift attacks.  These attacks are viable against detectors operating in gated mode, where they become sensitive for short periods of time.  The efficiency of the detector thus varies over time in a way that is specific to individual detectors.  The detector for one measurement setting has a relatively high efficiency at a time when the detector for another measurement setting has a low efficiency.  In this case Eve may selectively adjust the arrival time of a photon to match the period where the efficiencies are mismatched, favouring one measurement setting over the other.

\subsection{\index{coincidences}Coincidences and timings}
Since the likelihood of a photon being absorbed within the air or a fibre optic cable is so high, and stray photons and dark counts generate detector clicks when no signal photon is present, it is common to record the timing of detector clicks at both Alice and Bob's labs and look later for coincidences, which are times when a detector clicks for both Alice and Bob.  These coincidences are then used as the raw data for the remainder of the protocol.

Although convenient, finding coincidences may introduce side channels.  For example, if there is also some detector inefficiency mismatch then precise timing data could reveal that a coincidence occurred at a time when the detector is more efficient than another, in which case Eve will learn that one measurement setting or measurement outcome was more likely than another.  Another \index{timing information attack}attack using timing information was described by Lamas-Linares and Kurtsiefer in \cite{Lamas-Linares:07}.  There the specific entangled photon source used has a timing signature that, together with precise coincidence timing data, can leak information to Eve.

The two examples given here show that there may be complex interactions between timing data and other channels which may leak information.  For this reason we make the following assumption (for both DIQKD and typical QKD security models), which is likely to be untrue in many implementations:
\begin{assumption}
There are no correlations between coincidence timing data and the raw key.
\end{assumption}
%

\subsection{The role of DIQKD}
With so many assumptions necessary for security, one may wonder if there is any benefit to DIQKD.  However, these assumptions are also necessary for security in QKD.  Certainly there are fewer assumptions in DIQKD because of the adversarial device model used.  The practical value of DIQKD is limited because of the lower key rates it achieves, but the theoretical value is high because DIQKD places an emphasis on carefully analyzing the assumptions necessary for security and introduces new tools for security analysis.

\chapter{Black box state characterization}
\minitoc

\section{Introduction}
In the previous chapters we have concentrated on using black box devices for specific tasks, which has limited our investigation to particular technical properties that allow us to achieve the goals for each task.  In this chapter we expand the context and consider characterizing a black box state source without reference to its final use.  In particular, we will use robust measures of the quality of the state source with an operational interpretation that will allow us to estimate its behaviour in any context.

\subsection{Reference experiment}
The reference experiment that we will consider is the typical CHSH setup, with a state source emitting a bipartite state, measured by a pair of devices, each with two measurement settings.  The measurements statistics are boiled down into a single number, the CHSH value, as defined in section~\ref{sec:chshinequality}.  Using this value alone we wish to estimate the state emitted by the source, comparing it to an EPR pair, $\ket{\phi_{+}}$.  

The reference experiment consists of an EPR pair, $\ket{\phi_{+}}$ with measurements $A_{0} = X$, $A_{1} = Z$, $B_{0} = \frac{1}{\sqrt{2}}(X + Z)$, and $B_{1} = \frac{1}{\sqrt{2}}(X-Z)$.  The CHSH value for the reference experiment is $2 \sqrt{2}$.

\subsection{Measures of quality}
A major obstacle in this line of research is choosing a suitable measure of quality for the state.  Although our reference state is already chosen, there are many options for how we may compare it to the physical state.  Complicating the picture is the fact that the physical state has an unknown dimension.  Finding a measure that has a reasonable operational interpretation across all possible physical states is challenging.  In this chapter we will develop several measures and compare them.  We will also prove several bounds.  The picture is incomplete, however, as we shall see.  For certain measures proving a bound is problematic, while for less desirable bounds (from an operational perspective) the bounds are reasonably easy to find.

\subsection{Literature review}
There has been limited work in this area to date.  The results in this chapter are all joint work with co-authors Bardyn, Liew, Massar and Scarani and are published in \cite{Bardyn:2009:Device-independ}.  Certain aspects were anticipated by Mayers and Yao \cite{Mayers:2004:Self-testing-qu} in their work on self-testing of EPR pairs discussed in section~\ref{sec:mayersyao}.  Various articles on the CHSH inequality are also relevant, particularly the work by Horodecki et al. \cite{Horodecki:1995:Violating-Bell-}, which may be seen as a non-robust version of the results in this chapter.  Their work shows that any two qubit state that maximally violates the CHSH inequality must be maximally entangled.  Also, the authors developed a method of calculating the maximal CHSH value achievable for any two qubit state, which we used in the proof of Lemma~\ref{lemma:diqkdbelldiagchsh} (although the notation we used was based on \cite{Verstraete:2002:Entanglement-ve}.)

\subsection{Contributions}
In this chapter we develop several measures of quality for an entangled pair source and prove bounds on them.  The definition of these measures presents a significant technical challenge because of the black box nature of the test and the fact that we compare the physical state to a fixed reference state.  These two aspects of the problem mean that the measures must be defined regardless of the dimension of the state and nevertheless have a consistent interpretation that does not depend on the dimension.

We are able to prove bounds for several of the measures of quality that we define.  In particular, we have a complete characterization for qubits:  a lower bound and a set of states that achieve the bound, proving that it is tight.  When we extend the definition to arbitrary dimensions we have some partial results.  For the most restrictive definition ($F_{MY}$, defined below) we have a conjectured lower bound for pure states, and examples that saturate this bound.  For the least restrictive definition ($F_{LOCC}$) we have a lower bound, but no examples that saturate it.  This bound is conjectured to be tight.  Meanwhile, for the final measure ($F_{LO}$), which is bounded above and below by the previous two measures, we have a lower bound but conjecture that it is not tight.

The material in this chapter was published in \cite{Bardyn:2009:Device-independ} and is joint work with the co-authors.

\section{Two qubits}
Although our ultimate goal will be to have no assumptions on the dimension of the physical state, we will first investigate the case of two qubits, both because it is easier to analyze and because the bound we derive will be useful when we move to the case of higher dimensions.  

\subsection{Fidelity and trace distance}
In the case where the physical state is known to be two qubits it is quite easy to measure the quality of an entangled pair.  We simply choose an appropriate distance function on states.  The most meaningful distance, from an operational point of view, is the trace distance given by\footnote{The trace distance is sometimes given without the factor of $\frac{1}{2}$, particularly in Watrous' lecture notes.}
\begin{equation}
||\rho - \sigma||_{Tr} =\frac{1}{2} \tr{|\rho - \sigma|} =  \frac{1}{2}\tr{\sqrt{ (\rho - \sigma)^{2}}}
\end{equation}
which measures how distinguishable two states are by any procedure.  Watrous' lecture notes \cite{Watrous:2008:Operator-decomp} provides a good introduction.  

An easier figure to calculate is the fidelity, given by\footnote{The fidelity is also sometimes given as the square root of this value, particularly in Watrous' lecture notes.}
\begin{equation}
F(\rho, \sigma) = \tr{\sqrt{\rho}\sigma\sqrt{\rho}}.
\end{equation}
Watrous' lecture notes \cite{Watrous:2008:Overview-of-qua} again give a good introduction.  The fidelity is related to the trace distance by
\begin{equation}
1-\sqrt{F(\rho, \sigma)} \leq ||\rho - \sigma||_{Tr} \leq \sqrt{1 - F(\rho, \sigma)}
\end{equation}
(the Fuchs-van de Graaf Inequalities) which is saturated on the right when $\rho$ and $\sigma$ are both pure.  Since the fidelity is easier to calculate, we will use it for the remainder of this chapter.

It is often convenient to calculate the fidelity by means of a purification using the following Lemma (with a proof appearing in Watrous' lecture notes \cite{Watrous:2008:Overview-of-qua}.)

\begin{lemma}[Uhlmann's Theorem]
Let $\rho$ and $\sigma$ be given with $\ket{\psi}$ any purification of $\rho$.  Then
\begin{equation}
F(\rho, \sigma) = \max_{\ket{\phi} \text{ purification of } \sigma}|\braket{\psi}{\phi}|^{2}.
\end{equation}
\end{lemma}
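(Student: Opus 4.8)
The plan is to reduce the identity to the variational characterisation of the trace norm via the operator--vector correspondence (vectorisation). First I would fix a common ambient space: since enlarging an ancilla never decreases $|\braket{\psi}{\phi}|$, I may assume $\rho$ and $\sigma$ both act on a single space $\mathcal{X}$ (padding with zeros if needed) and that the given purification $\ket{\psi}$ of $\rho$, together with every purification of $\sigma$ under consideration, lives in $\mathcal{X}\otimes\mathcal{Y}$ with $\dim\mathcal{Y}\geq\dim\mathcal{X}$. I then invoke the standard fact that any two purifications of the same density operator in the same space differ by a unitary on the ancilla: fixing one purification $\ket{\phi_{0}}$ of $\sigma$, every purification of $\sigma$ has the form $(I_{\mathcal{X}}\otimes U)\ket{\phi_{0}}$ for some unitary $U$ on $\mathcal{Y}$, so the maximum in the lemma equals $\max_{U}|\bra{\psi}(I_{\mathcal{X}}\otimes U)\ket{\phi_{0}}|^{2}$.

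Next I would pass to vectorisation, identifying $\mathcal{X}\otimes\mathcal{Y}$ with the space of operators $\mathcal{Y}\to\mathcal{X}$ via $\operatorname{vec}(\proj{i}{j})=\ket{i}\ket{j}$, under which $\braket{\operatorname{vec}(A)}{\operatorname{vec}(B)}=\tr{A^{\dagger}B}$, $(I_{\mathcal{X}}\otimes U)\operatorname{vec}(B)=\operatorname{vec}(BU^{T})$, and $\tr[\mathcal{Y}]{\operatorname{vec}(A)\operatorname{vec}(A)^{\dagger}}=AA^{\dagger}$. Writing $\ket{\psi}=\operatorname{vec}(X)$ and choosing $\ket{\phi_{0}}=\operatorname{vec}(\sqrt{\sigma})$ (the canonical purification), the purification constraints read $XX^{\dagger}=\rho$ and $\sqrt{\sigma}\sqrt{\sigma}=\sigma$, and the inner product becomes $\bra{\psi}(I_{\mathcal{X}}\otimes U)\ket{\phi_{0}}=\tr{X^{\dagger}\sqrt{\sigma}\,U^{T}}$. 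Since $U^{T}$ ranges over all unitaries as $U$ does, the problem is reduced to evaluating $\max_{V\ \mathrm{unitary}}|\tr{X^{\dagger}\sqrt{\sigma}\,V}|$.

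The analytic core is then the identity $\max_{V\ \mathrm{unitary}}|\tr{MV}|=\|M\|_{1}=\tr{\sqrt{M^{\dagger}M}}$, which I would state and prove from the singular value decomposition of $M$ (the optimiser is assembled from its left and right singular vectors). Taking $M=X^{\dagger}\sqrt{\sigma}$ gives $\|X^{\dagger}\sqrt{\sigma}\|_{1}=\tr{\sqrt{\sqrt{\sigma}\,XX^{\dagger}\,\sqrt{\sigma}}}=\tr{\sqrt{\sqrt{\sigma}\,\rho\,\sqrt{\sigma}}}$. Squaring and using the symmetry of the fidelity, $F(\rho,\sigma)=\left(\tr{\sqrt{\sqrt{\sigma}\rho\sqrt{\sigma}}}\right)^{2}=\left(\tr{\sqrt{\sqrt{\rho}\sigma\sqrt{\rho}}}\right)^{2}$, completes the proof. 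Note that this single exact maximisation simultaneously gives the upper bound on $|\braket{\psi}{\phi}|^{2}$ over all purifications of $\sigma$ and the existence of one that attains it.

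The main obstacle I expect is not conceptual but bookkeeping with dimensions and conventions: making sure one ambient space purifies both states (in particular handling the case where the given $\ket{\psi}$ sits in a smaller ancilla than the canonical purification of $\sigma$, so the ``unitary relating purifications'' is really an isometry padded up to a unitary on a large enough $\mathcal{Y}$), and pinning down the transpose/conjugation convention in $\operatorname{vec}$ so that acting by $I_{\mathcal{X}}\otimes U$ on the physical ancilla corresponds precisely to right multiplication by $U^{T}$. Once these are fixed, the argument is exactly the trace-norm maximisation above.
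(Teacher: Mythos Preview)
The paper does not actually prove this lemma; it only states it and refers the reader to Watrous' lecture notes for a proof. Your proposal is correct and is, in fact, essentially the argument given in those notes: reduce to a single ambient purifying space, use the unitary freedom of purifications to turn the maximum over $\ket{\phi}$ into a maximum over unitaries on the ancilla, pass through the operator--vector correspondence so that the overlap becomes $\tr{X^{\dagger}\sqrt{\sigma}\,V}$, and finish with the variational identity $\max_{V\ \text{unitary}}|\tr{MV}|=\|M\|_{1}$. So there is nothing to compare against here beyond noting that your write-up matches the cited source.

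One minor caution on conventions: the paper's displayed definition of the fidelity reads $F(\rho,\sigma)=\tr{\sqrt{\rho}\sigma\sqrt{\rho}}$, which is evidently a typo for $\left(\tr{\sqrt{\sqrt{\rho}\sigma\sqrt{\rho}}}\right)^{2}$ (as confirmed by the footnote, the Fuchs--van de Graaf bounds, and the pure-state formula immediately below). Your final line uses the correct squared-root-fidelity expression, so you are consistent with the intended convention; just be aware of the discrepancy with the literal formula in the text.
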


As well, if $\rho = \proj{\psi}{\psi}$ is pure (as frequently will be the case in our discussion) then
\begin{equation}
F(\proj{\psi}{\psi}, \sigma) = \bra{\psi} \sigma \ket{\psi}.
\end{equation}
%
\subsection{Measuring the state}
Suppose we have a bipartite state $\rho$ on a pair of qubits.  Our reference state will always be a pair of maximally entangled qubits:  $\ket{\phi_{+}}$.  Ideally we would like to bound the trace distance between these two states:
\begin{equation}
||\rho - \proj{\phi_{+}}{\phi_{+}}||_{Tr}
\end{equation}
but we will settle for estimating the fidelity, $F(\rho, \proj{\phi_{+}}{\phi_{+}})$, which will also provide a bound on the trace distance.  There is some ambiguity, though, since we have no natural local bases when considering $\rho$, and our description of the reference state assumes that the local bases are known.  To this end we define $F_{MY}$ as
\begin{equation}
F_{MY}(\rho) = \max_{U, V \text{ unitary}} F(U \otimes V\rho U^{\dagger} \otimes V^{\dagger}, \proj{\phi_{+}}{\phi_{+}}).
\end{equation}
The maximization takes into account our lack of preference for particular local bases for $\rho$.  The value of $F_{MY}(\rho)$ will vary between $1/4$ for a maximally mixed state and 1 for a maximally entangled state.  

Another measure that we will use allows for arbitrary local operations, rather than just changes of bases.  For pure states this will not offer any advantage, but for mixed states it allows a slightly higher fidelity.  We define this by

\begin{definition}
Let a two qubit state $\rho$ be given.  Then
\begin{equation}
F_{LO}(\rho) = \max_{\Phi_{A}, \Phi_{B}} F(\Phi_{A}Ê\otimes \Phi_{B}(\rho) , \proj{\phi_{+}}{\phi_{+}})
\end{equation}
where $\Phi_{A}$ and $\Phi_{B}$ range over completely positive trace preserving maps taking qubits to qubits.
\end{definition}

By the simple fact that unitary operations are local, we know that $F_{MY} \leq F_{LO}$.  For pure states the two quantities are equal;  if the state is a product state then unitaries will take it to $\ket{00}$ giving the highest possible fidelity for a product state (mixed or not).  If the state is entangled than no operation can increase the entanglement; only a change of basis is required.  

In the case of mixed states, however, we may find strict inequality.  Any separable state may be replaced by a pure product state.  In the case of the completely mixed state, for example, we find $F_{MY}(I/2) = 1/4$ while $F_{LO}(I/2) = 1/2 = F_{LO}(\ket{00})$.

\subsection{Model for measurement operators}
We will model the measurement operators for the CHSH test as two-outcome observables.  Thus, each measurement operator will be Hermition with eigenvalues $\pm 1$.  For qubits an important consideration is whether we allow all eigenvalues to be $1$ or $-1$, that is, whether we allow $I$ or $-I$ as a measurement operator.  More generally speaking, how do we model a fixed outcome?  We may model a fixed outcome by a $I$ or $-I$ as the measurement, or by using eigenvectors of some non-trivial measurement as the state.  

Generally speaking this will not pose a problem.  We will be only interested in cases where $S > 2$ since if $S \leq 2$ then the experiment may be simulated with a local hidden variable model and we have no interest.  For $S > 2$, a non-trivial measurement must be used.

\subsection{Bound for qubits}
We will prove the following Theorem:

\begin{theorem}\label{theorem:chshqubitmax}
Let a two qubit state $\rho$ be given.  Then
\begin{equation}
F_{MY}(\rho) \geq \frac{ 1 + \sqrt{\left(\frac{S_{max}(\rho)}{2}\right)^{2} - 1}}{2} .
\end{equation}
Furthermore, the bound is saturated for $S \geq 2$ by the states $\cos\theta\ket{00} + \sin\theta\ket{11}$.
\end{theorem}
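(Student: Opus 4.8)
The plan is to reduce to a canonical form, reuse the Horodecki--Verstraete--Wolf calculation of $S_{max}$ already recalled in section~\ref{sec:chshinequality}, and then optimize the fidelity over local unitaries. First I would note that since $F_{MY}$ and $S_{max}$ are both invariant under local unitaries, and every two-qubit density operator can be brought by local unitaries into a standard form, I may assume $\rho$ is already in that form. Concretely, using the same machinery as in Lemma~\ref{lemma:diqkdbelldiagchsh}, I would diagonalize the correlation matrix $R'$ (whose entries are $R_{U,V}=\tr{\rho\, U\otimes V}$ for $U,V\in\{X,Y,Z\}$) by local rotations, so that $R'$ becomes diagonal with entries $t_1,t_2,t_3$. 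Then, by the cited result, $S_{max}(\rho)^2 = 4(t_a^2 + t_b^2)$ where $t_a,t_b$ are the two largest (in absolute value) diagonal entries; after a relabelling of axes (another local unitary) I may take these to be $t_1,t_2$, so $S_{max}(\rho)^2/4 = t_1^2 + t_2^2$.

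Next I would lower-bound $F_{MY}(\rho)$. Since $F_{MY}$ maximizes over all local unitaries, it suffices to exhibit one good choice: I would estimate $F(\rho,\proj{\phi_+}{\phi_+})$ directly in the chosen basis. Writing $\ket{\phi_+}$ as a density operator in the Pauli basis, $\proj{\phi_+}{\phi_+} = \tfrac14(I\otimes I + X\otimes X - Y\otimes Y + Z\otimes Z)$, one gets
\begin{equation}
F(\rho,\proj{\phi_+}{\phi_+}) = \bra{\phi_+}\rho\ket{\phi_+} = \tfrac14\left(1 + R_{XX} - R_{YY} + R_{ZZ}\right).
\end{equation}
With $R'$ diagonalized this is $\tfrac14(1 + t_1 - t_{Y} + t_2)$ up to the choice of which eigenvector of the Bell basis we target; by possibly replacing $\ket{\phi_+}$ with another Bell state (equivalently, absorbing signs into the local unitaries) I can arrange the signs so that the relevant combination is $\tfrac14(1 + |t_1| + |t_2| + |t_{\text{third}}|)$ or at least $\tfrac14(1+|t_1|+|t_2|)$ plus a nonnegative term. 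Thus $F_{MY}(\rho) \ge \tfrac14\left(1 + |t_1| + |t_2|\right)$. Now I would finish with the elementary inequality: for the fixed value $|t_1|^2 + |t_2|^2 = S_{max}(\rho)^2/4 \ge 1$, writing $x = |t_1|, y=|t_2|$ with $x^2+y^2 = s$, one checks $x+y \ge 1 + \sqrt{s-1}$ (the minimum of $x+y$ on the arc, subject to $x,y\in[0,1]$, is attained at an endpoint $x=1$, $y=\sqrt{s-1}$). Hence $F_{MY}(\rho) \ge \tfrac14(1 + 1 + \sqrt{s-1}) = \tfrac12(1 + \sqrt{(S_{max}(\rho)/2)^2 - 1})$.

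For tightness, I would simply compute both sides for $\ket{\psi_\theta} = \cos\theta\ket{00} + \sin\theta\ket{11}$ with $S\ge 2$. Its correlation matrix is diagonal with $R_{XX} = \sin 2\theta$, $R_{YY} = -\sin 2\theta$, $R_{ZZ} = 1$, so by the Horodecki--Verstraete--Wolf formula $S_{max} = 2\sqrt{1 + \sin^2 2\theta}$, giving $(S_{max}/2)^2 - 1 = \sin^2 2\theta$; meanwhile $F_{MY}(\proj{\psi_\theta}{\psi_\theta}) = \bra{\phi_+}\psi_\theta\rangle\langle\psi_\theta\ket{\phi_+} = \tfrac12(\cos\theta + \sin\theta)^2 = \tfrac12(1 + \sin 2\theta)$, which equals $\tfrac12(1 + \sqrt{(S_{max}/2)^2-1})$ exactly (using $\sin 2\theta \ge 0$ for $S\ge 2$, i.e. $\theta\in[0,\pi/4]$). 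The main obstacle I anticipate is the careful sign/permutation bookkeeping in the second paragraph: making sure that the local unitaries used to diagonalize $R'$ can be chosen so that the surviving signs in $\bra{\phi_+}\rho\ket{\phi_+}$ line up to give $+|t_1|+|t_2|$ rather than a cancelling combination, and confirming that targeting the appropriate Bell state (which differs from $\ket{\phi_+}$ only by a local unitary, hence is allowed inside the $F_{MY}$ maximization) always recovers the worst-case-optimal combination. Everything else is routine.
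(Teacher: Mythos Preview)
Your chain of inequalities does not reach the stated bound. From $F_{MY}(\rho)\ge\tfrac14(1+|t_1|+|t_2|)$ together with $|t_1|+|t_2|\ge 1+\sqrt{s-1}$ (where $s=(S_{max}/2)^2$) you get only
\[
F_{MY}(\rho)\ \ge\ \tfrac14\bigl(2+\sqrt{s-1}\bigr)\ =\ \tfrac12+\tfrac14\sqrt{(S_{max}/2)^2-1},
\]
which is strictly weaker than the claimed $\tfrac12+\tfrac12\sqrt{(S_{max}/2)^2-1}$ for every $S_{max}>2$; the equality you wrote, $\tfrac14(1+1+\sqrt{s-1})=\tfrac12(1+\sqrt{s-1})$, is an arithmetic slip. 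The loss is exactly the discarded term $|t_3|$: for the saturating states $\ket{\psi_\theta}$ one has $|t_1|=1$ and $|t_2|=|t_3|=\sin 2\theta$, so the third singular value contributes as much as the second and cannot be dropped.

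The missing ingredient, which also settles the sign bookkeeping you flagged, is positivity of $\rho$. After diagonalising $R'$ by $SO(3)\times SO(3)$ to $\mathrm{diag}(t_1,t_2,t_3)$ with $t_1,t_2\ge 0$, the sign of $t_3$ is that of $\det R'$. If $\det R'>0$ (all $t_j>0$), then $\bra{\psi_-}\rho\ket{\psi_-}\ge 0$ forces $t_1+t_2+t_3\le 1$, hence $t_1^2+t_2^2\le t_1+t_2\le 1$ and $S_{max}\le 2$. Thus for $S_{max}>2$ one has $t_3\le 0$, the ``bad'' sign patterns do not occur, and $F_{MY}(\rho)=\tfrac14(1+|t_1|+|t_2|+|t_3|)$. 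Now $\bra{\psi_-}\rho\ket{\psi_-}\ge 0$ reads $1-t_1-t_2-t_3\ge 0$, i.e.\ $|t_3|\ge |t_1|+|t_2|-1$, giving
\[
F_{MY}(\rho)\ \ge\ \tfrac14\bigl(1+|t_1|+|t_2|+|t_1|+|t_2|-1\bigr)\ =\ \tfrac12\bigl(|t_1|+|t_2|\bigr)\ \ge\ \tfrac12\bigl(1+\sqrt{s-1}\bigr),
\]
which is the theorem. The paper instead reduces to Bell-diagonal $\rho$ by an averaging argument (mixing in $Y\otimes Y$ and complex conjugation leaves $F_{MY}$ unchanged and does not decrease $S_{max}$) and then optimises over the Bell eigenvalues; your correlation-matrix route works too, but only once positivity is used as above.
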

There are many ways of approaching this proof.  For a different proof, see Bardyn et al. \cite{Bardyn:2009:Device-independ}.

\begin{proof}
\noindent{\bf Lower bound}

We begin much as in section~\ref{sec:diqkdestent} by reducing to the case of Bell diagonal states.  Suppose that $S_{max}(\rho)$ is achieved for measurements $A_{a}$, and $B_{b}$.  We may choose any basis we like for our discourse, so we suppose that the measurements are all in the $X,Z$ plane.  Note that if we apply $Y \otimes Y$ to $\rho$ then all the outcomes on both sides are flipped, but $S$ is not affected since the only thing that matters is whether the outcomes are the same or different.  As in the proof for Lemma~\ref{lemma:diqkdbelldiagsuff} we may thus construct a new state by applying $YÊ\otimes Y$ with probability $1/2$, and the local bases may be chosen so that the off-diagonal entries of the density matrix in the Bell basis are all imaginary.  As well, complex conjugation does not affect $S$ since the measurements are all in the real plane.  We may thus take an equal mixture of the state and its complex conjugate to obtain a state 
\begin{equation}
\sigma = \frac{1}{4}\left(\rho + \rho^{*} + Y \otimes Y (\rho + \rho^{*}) Y \otimes Y\right)
\end{equation}
with $S_{max}(\sigma) \geq S_{max}(\rho)$.

Now consider $F_{MY}(\sigma)$
\begin{equation}
F_{MY}(\sigma) = \bra{\phi_{+}} \sigma \ket{\phi_{+}}
\end{equation}
\begin{equation}
= \bra{\phi_{+}}\rho\ket{\phi_{+}} + \bra{\phi_{+}} \rho^{*} \ket{\phi_{+}}
+
\bra{\phi_{+}}Y \otimes Y\rho Y \otimes Y\ket{\phi_{+}} + \bra{\phi_{+}}Y \otimes Y \rho^{*}Y \otimes Y \ket{\phi_{+}}.
\end{equation}
Since $\ket{\phi_{+}}$ has all real entries, complex conjugation does not matter.  Also, we may rewrite
\begin{equation}
\bra{\phi_{+}}Y \otimes Y\rho Y \otimes Y\ket{\phi_{+}} = \tr{Y \otimes Y \proj{\phi_{+}}{\phi_{+}} Y \otimes Y \rho}.
\end{equation}
The reader may verify that $Y \otimes Y \ket{\phi_{+}}= - \ket{\phi_{+}}$.  Finally, since complex conjugation and multiplying by $Y \otimes Y$ do not change the state $\ket{\phi_{+}}$, we conclude that the optimal basis for $\rho$ is also the optimal basis for the other states in the mixture as well.  Combining these facts, we obtain
\begin{equation}
F_{MY}(\sigma) = F_{MY}(\rho).
\end{equation}

Let $f$ be defined by
\begin{equation}
f(S) = \frac{ 1 + \sqrt{\left(\frac{S}{2}\right)^{2} - 1}}{2}  .
\end{equation}
Note that $f$ is increasing.  If we find that $F_{MY}(\sigma) \geq f(S)$ then we obtain
\begin{equation}
F_{MY}(\rho) = F_{MY}(\sigma) \geq f(S_{max}(\sigma)) \geq f(S_{max}(\rho)).
\end{equation}
Thus we may restrict ourselves to Bell diagonal states. 

We now assume that $\rho$ is Bell diagonal with eigenvalues $\lambda_{\phi_{+}}, \lambda_{2}, \lambda_{3}, \lambda_{4}$ with largest eigenvalue $\lambda_{\phi_{+}}$ (if this is not the case, then a local change of basis will make it so and keep the state Bell diagonal).  The ordering of the remaining eigenvalues is not important.  Following the proof of Lemma~\ref{lemma:diqkdbelldiagchsh} we find
\begin{equation}
S_{max}(\rho) = 2 \sqrt{2} \sqrt{\left(\lambda_{\phi_{+}}  - \lambda_{2}\right)^{2} + \left(\lambda_{3} - \lambda_{4}\right)^{2}}
\end{equation}
for some ordering of the remaining eigenvalues.  For a particular value of $\lambda_{\phi_{+}}$ the largest possible value of $S_{max}$ occurs when $\lambda_{2} = \lambda_{4} = 0$ and $\lambda_{3} = 1 - \lambda_{\phi_{+}}$, hence
\begin{equation}
\left(\frac{S_{max}}{2 \sqrt{2}}\right)^{2} \leq \lambda_{\phi_{+}}^{2} + (1 - \lambda_{\phi_{+}})^{2} = 2 \lambda_{\phi_{+}}^{2} + 1 - 2 \lambda_{\phi_{+}}
\end{equation}
\begin{equation}
\frac{1}{2}\left[\left(\frac{S_{max}}{2}\right)^{2} - 1 \right] \leq 2(\lambda_{\phi_{+}}^{2} - \lambda_{\phi_{+}} + \frac{1}{4}) = 2 \left(\lambda_{\phi_{+}} - \frac{1}{2}\right)^{2}
\end{equation}
\begin{equation}
\frac{1 + \sqrt{\left(\frac{S_{max}}{2}\right)^{2} - 1 }}{2} \leq \lambda_{\phi_{+}} = F_{MY}(\rho).
\end{equation}

\noindent{\bf Tightness of the bound}

We claim that the states $\ket{\psi} = \cos \theta \ket{00} + \sin \theta \ket{11}$ saturate the bound.  We first refer the reader to the proof of Lemma~\ref{lemma:diqkdbelldiagchsh}.  The matrix $R^{\prime}$ for this state is
\begin{equation}
\left(
	\begin{matrix}
	2\cos \theta \sin \theta & 0 & 0 \\
	0 & - 2\cos \theta \sin \theta & 0 \\
	0 & 0 & 1 \
	\end{matrix}
\right).
\end{equation}
Then $S_{max} = 2 \sqrt{1 + 4\cos^{2}\theta\sin^{2} \theta}$.
Meanwhile, $F_{MY}(\ket{\psi})$ is just $|\braket{\psi}{\phi_{+}}|^{2} = \frac{1}{2}\left(\cos \theta + \sin \theta\right)^{2}$ since it is obvious that no unitary operation can improve this.  (This fact is proven in Lemma~\ref{lemma:fmypurestates} below.)  We then find

\begin{equation}
\frac{ 1 + \sqrt{\left(\frac{S_{max}(\rho)}{2}\right)^{2} - 1}}{2}  = \frac{1 + 2\cos \theta \sin \theta}{2} = F_{MY}(\ket{\psi})
\end{equation}
and the state saturates the bound.

\end{proof}

The above bound immediately extends to $F_{LO}$ if we replace $\rho$ with $\ket{00}$ whenever  $\rho$ has $S_{max}(\rho) < 2$.  Hence we obtain the following Lemma.

\begin{lemma}
Let a bipartite state $\rho$ be given.  Then
\begin{equation}
F_{LO}(\rho) \geq  \frac{ 1 + \sqrt{\left(\frac{\max(S_{max}(\rho),2)}{2}\right)^{2} - 1}}{2}.
\end{equation}
Further, this bound is tight.
\end{lemma}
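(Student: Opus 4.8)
The plan is to derive this from Theorem~\ref{theorem:chshqubitmax} together with a dimension-reduction argument that brings the general (arbitrary-dimensional) bipartite state down to the qubit case.

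First I would handle the trivial regime: if $S_{max}(\rho) \leq 2$, then $\max(S_{max}(\rho),2) = 2$ and the claimed lower bound reads $F_{LO}(\rho) \geq 1/2$. This is immediate, since $\Phi_A$ and $\Phi_B$ may each be the constant map sending everything to $\ket{0}$ (equivalently, trace out and prepare $\ket{00}$), and $F(\proj{00}{00}, \proj{\phi_+}{\phi_+}) = |\braket{00}{\phi_+}|^2 = 1/2$. So for the rest of the argument we may assume $S_{max}(\rho) > 2$ and must show $F_{LO}(\rho) \geq f(S_{max}(\rho))$, where $f(S) = \frac{1 + \sqrt{(S/2)^2 - 1}}{2}$ as in Theorem~\ref{theorem:chshqubitmax}.

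Next I would reduce to qubits. Let $A_0, A_1$ and $B_0, B_1$ be the two-outcome observables (eigenvalues $\pm 1$) on Alice's and Bob's systems that achieve $S_{max}(\rho)$. Apply Lemma~\ref{lemma:blockdiagonalization} (Jordan's lemma) / Corollary~\ref{lemma:qubit_observables_direct_product} to $A_0, A_1$: there is an isometry $F_A$ simultaneously block-diagonalizing them into $2\times 2$ (and $1\times 1$) blocks, so $F_A A_a F_A^\dagger = \sum_z \proj{z}{z}\otimes A^{a,z}$ with each $A^{a,z}$ a qubit observable, and similarly $F_B B_b F_B^\dagger = \sum_w \proj{w}{w}\otimes B^{b,w}$. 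Write $\rho_{zw}$ (with weight $p_{zw}$) for the normalized qubit state obtained by projecting $(F_A\otimes F_B)\rho(F_A\otimes F_B)^\dagger$ onto the $(z,w)$ block and discarding the $z,w$ registers. Because the $\proj{z}{z}$ projectors commute with the $A_a$ and likewise on Bob's side, $S = \sum_{z,w} p_{zw}\, S(\rho_{zw})$ where $S(\rho_{zw})$ is the CHSH value of $\rho_{zw}$ under measurements $A^{a,z}, B^{b,w}$; hence $S_{max}(\rho) \leq \sum_{z,w} p_{zw}\, S_{max}(\rho_{zw})$. Now consider the qubit-to-qubit local channels $\Phi_A, \Phi_B$ that (i) measure $z$, resp.\ $w$ (this can be done via the isometries $F_A,F_B$, which land in a space that locally factors as a classical register tensor a qubit — strictly speaking $F_A$ maps into $\mathcal{H}_n\otimes\mathcal{H}_2$, so one should phrase $\Phi_A$ as: apply $F_A$, measure the first register, apply the local unitary on the qubit register that sends $\rho_{zw}$ toward $\proj{\phi_+}{\phi_+}$), and (ii) rotate the surviving qubit into the basis achieving $F_{MY}(\rho_{zw})$. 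Then $\Phi_A\otimes\Phi_B(\rho) = \sum_{z,w} p_{zw}\, \sigma_{zw}$ with each $\sigma_{zw}$ a qubit state of fidelity $F(\sigma_{zw},\proj{\phi_+}{\phi_+}) = F_{MY}(\rho_{zw}) \geq f(S_{max}(\rho_{zw}))$ by Theorem~\ref{theorem:chshqubitmax}. By joint concavity (indeed linearity in one argument) of fidelity with a fixed pure state, $F(\Phi_A\otimes\Phi_B(\rho),\proj{\phi_+}{\phi_+}) = \sum_{z,w} p_{zw}\, f(S_{max}(\rho_{zw})) \geq f\!\left(\sum_{z,w} p_{zw} S_{max}(\rho_{zw})\right) \geq f(S_{max}(\rho))$, using concavity of $f$ and that $f$ is increasing. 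Taking the max over $\Phi_A,\Phi_B$ gives $F_{LO}(\rho) \geq f(S_{max}(\rho))$.

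The main obstacle I anticipate is twofold and both parts are really bookkeeping rather than deep: first, checking that $f(S) = \frac{1+\sqrt{(S/2)^2-1}}{2}$ is genuinely concave on $[2, 2\sqrt{2}]$ (a one-line second-derivative computation, since $\sqrt{(S/2)^2-1}$ is concave there) so that Jensen applies in the direction used; and second, being careful that the ``measure $z$'' step is implementable as a legitimate qubit-to-qubit CPTP map on Alice's physical Hilbert space — this is exactly the content of Corollary~\ref{lemma:qubit_observables_direct_product}, which guarantees the block structure and hence that the projection onto a block followed by discarding the classical label is a valid local operation on her marginal. Finally, for tightness: the qubit examples $\cos\theta\ket{00}+\sin\theta\ket{11}$ with $S \geq 2$ already saturate the $F_{MY}$ bound in Theorem~\ref{theorem:chshqubitmax}, and since $F_{LO} \leq$ the right-hand side would require the reverse inequality — here I would invoke the general fact that for \emph{any} bipartite $\rho$, $F_{LO}(\rho) \leq f(\max(S_{max}(\rho),2))$, which follows because an LO map cannot increase $S_{max}$ (CPTP maps can only decrease CHSH violation) nor create more fidelity with $\proj{\phi_+}{\phi_+}$ than the qubit bound allows once the output has been projected to two qubits; combined with the qubit saturating examples this shows the bound is tight.
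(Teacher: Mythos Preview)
The lemma sits in the paper's \emph{two-qubit} section, where $F_{LO}$ has only been defined for qubit inputs; the paper's own proof is a one-line corollary of Theorem~\ref{theorem:chshqubitmax}: since local unitaries are local operations, $F_{LO}(\rho) \ge F_{MY}(\rho) \ge f(S_{max}(\rho))$ for $S_{max}(\rho) \ge 2$, and for $S_{max}(\rho) < 2$ one simply replaces $\rho$ by $\ket{00}$ (a legal LO map) to get fidelity $1/2$. Tightness follows because $F_{LO}=F_{MY}$ on pure states and the states $\cos\theta\ket{00}+\sin\theta\ket{11}$ already saturate Theorem~\ref{theorem:chshqubitmax}. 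Your elaborate dimension-reduction argument is not needed for the statement as situated in the paper.

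More importantly, your argument as written does \emph{not} work in higher dimensions, and the gap is precisely the one the paper later identifies when it treats $F_{LO}$ for arbitrary dimension. In step (ii) you say: ``rotate the surviving qubit into the basis achieving $F_{MY}(\rho_{zw})$.'' But the optimal local unitary on Alice's qubit that realises $F_{MY}(\rho_{zw})$ depends in general on \emph{both} indices $z$ and $w$, and Alice only learns $z$ from her local projection while Bob only learns $w$. Choosing Alice's rotation as a function of $w$ is not an LO map; it requires classical communication from Bob. This is exactly why the paper proves the concave bound $f(S)$ only for $F_{LOCC}$ in higher dimensions and obtains a strictly weaker linear bound for $F_{LO}$. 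Concretely, your argument would imply $F_{LO}(\ket{\psi}) \ge f(S_{max}(\ket{\psi}))$ for all pure $\ket{\psi}$; but for $\ket{\psi}=\sqrt{p}\,\ket{00}\ket{00}+\sqrt{1-p}\,\ket{\phi_+}\ket{11}$ one has (using $F_{LO}=F_{MY}$ on pure states and Lemma~\ref{lemma:fmypurestates}) $F_{LO}=p/2+(1-p)$, while $S_{max}=2p+2\sqrt{2}(1-p)$; at $p=1/2$ this gives $F_{LO}=3/4$ but $f(1+\sqrt{2})\approx 0.84$, a contradiction.

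Your tightness argument is also off: you would need an \emph{upper} bound on $F_{LO}$ in terms of $S_{max}$, but the qubit Theorem~\ref{theorem:chshqubitmax} gives a lower bound, and there is no general inequality of the form $F(\sigma,\proj{\phi_+}{\phi_+}) \le f(S_{max}(\sigma))$. The correct tightness argument is simply to exhibit the saturating pure qubit states and invoke $F_{LO}=F_{MY}$ for pure states.
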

Note that the bounds for $F_{MY}$ and $F_{LO}$ coincide for $S \geq 2$.  As mentioned previously, arbitrary operations do not improve on $F_{MY}$ for pure states, so the pure state also saturate this bound for $S \geq 2$.  For $S \leq 2$, any state can be transformed to a pure product state using $LO$ and a deterministic strategy gives $S = 2$, so the bound is saturated for these states as well. 

Later we will use the qubit bound when considering bounds for higher dimensional systems.  In that application we will need to apply convexity arguments and the following Lemma will be useful.

\begin{lemma}\label{lemma:qubitboundconcavedown}
Define $f$ by
\begin{equation}
f(S) = \frac{1 + \sqrt{\left(\frac{S}{2}\right)^{2} - 1 }}{2}.
\end{equation}
Then $f(x)$ is concave down on the domain $2 \leq x \leq 2\sqrt{2}$.
\end{lemma}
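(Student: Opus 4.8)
The plan is to reduce the statement to a sign computation. Writing $g(S) = \sqrt{(S/2)^{2} - 1} = \tfrac{1}{2}\sqrt{S^{2} - 4}$, we have $f(S) = \tfrac{1}{2}\bigl(1 + g(S)\bigr)$, so $f'' = \tfrac{1}{2}g''$ and it is enough to show $g'' < 0$ on the open interval $(2, 2\sqrt{2})$.

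The computation is routine. First I would obtain $g'(S) = \dfrac{S}{2\sqrt{S^{2}-4}}$ and then, by the quotient rule followed by simplification,
\begin{equation}
g''(S) = \frac{-2}{(S^{2}-4)^{3/2}},
\end{equation}
which is strictly negative whenever $S^{2} > 4$, and in particular throughout $(2, 2\sqrt{2})$. Hence $f'' < 0$ there, so $f$ is strictly concave on the open interval.

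To pass to the closed interval $[2, 2\sqrt{2}]$, I would note that $f$ is continuous at both endpoints ($f(2) = \tfrac{1}{2}$, $f(2\sqrt{2}) = 1$), even though $f'(S) \to +\infty$ as $S \to 2^{+}$. Concavity on the closed interval then follows from concavity on the interior together with endpoint continuity: given $x < y$ in $[2, 2\sqrt{2}]$ and $t \in (0,1)$, pick interior sequences $x_{n} \to x$, $y_{n} \to y$, apply the defining concavity inequality at the interior points, and let $n \to \infty$. The only delicate point is thus the failure of differentiability at $S = 2$, which this limiting argument handles. Alternatively, one can bypass derivatives entirely by writing $\sqrt{S^{2}-4} = \sqrt{(S-2)(S+2)}$ as the geometric mean of the nonnegative linear (hence concave) functions $S-2$ and $S+2$ on $[2,2\sqrt{2}]$, and invoking that the geometric mean $(p,q)\mapsto\sqrt{pq}$ is concave and coordinatewise nondecreasing on the nonnegative orthant, so that its composition with concave $p,q$ is concave on the whole interval — including the endpoints — at once.
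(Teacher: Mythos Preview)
Your proof is correct and follows essentially the same approach as the paper: reduce to the inner square-root function and show its second derivative is negative (the paper uses the substitution $x=S/2$ and computes the second derivative of $\sqrt{x^{2}-1}$, which is the same calculation up to rescaling). You are in fact more careful than the paper about the non-differentiability at the left endpoint $S=2$; the paper simply asserts concavity on the closed interval from the sign of the second derivative without addressing this.
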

\begin{proof}
We instead consider $g(x) = \sqrt{x^{2}-1}$ for $1 \leq x \leq \sqrt{2}$.  The second derivative of $g(x)$ is
\begin{equation}
-\frac{\sqrt{x^{2} - 1}}{\left(x^{2}-1\right)^{2}}.
\end{equation}
For $1 \leq x \leq \sqrt{2}$ this is always negative.  Hence $g(x)$ and $f(x)$ are concave down.
\end{proof}
%
\section{Measures of quality for higher dimensions}
For higher dimensional physical systems we run into an important question of definition:  how do we compare a physical state on a state of arbitrary dimension with one of a fixed dimension?  The only reasonable reference state to compare the physical state with is a maximally entangled pair of qubits since these are the only states that maximally violate the CHSH inequality, and without further information about the physical state we cannot rule out the two qubit case either.  We present several approaches to this problem.  The most useful measure will likely depend on the intended application.

\subsection{Mayers-Yao type fidelity}
We begin using the approach begun by Mayers and Yao in \cite{Mayers:2004:Self-testing-qu} and continued by Magniez et al. in \cite{Magniez:2006:Self-testing-of}.  Let $\ket{\psi}_{AB}$ be a bipartite pure physical state.  Mayers and Yao evaluate the physical state by asking whether or not there exists a state of the form $\ket{\psi^{\prime}}_{AB} \otimes \ket{\phi_{+}}_{AB}$ that is equivalent to $\ket{\psi}$ under local unitary transformations.  Later, Magniez et al. make the notion robust by considering the quantity
\begin{equation}
\min_{U,V, \ket{\psi^{\prime}}} || U_{A} \otimes V_{B} \ket{\psi} - \ket{\psi^{\prime}}\ket{\phi_{+}} ||_{1}
\end{equation}
where $U$ and $V$ are unitary transformations.  We may easily transform this to match the fidelity measure we use for qubits as follows (abusing notation a little).
\begin{definition}
Let a pure bipartite physical state $\ket{\psi}$ be given.  Then $F_{MY}(\ket{\psi})$ is defined by
\begin{equation}
F_{MY}(\ket{\psi}) = \max_{U,V, \ket{\psi^{\prime}}} F(U_{A} \otimes V_{B} \ket{\psi}, \ket{\psi^{\prime}}\ket{\phi_{+}}).
\end{equation}
where $U$ and $V$ are unitary transformations and $\ket{\psi^{\prime}}$ is any state on the appropriate space.
\end{definition}
We may extend this definition to mixed states by replacing $\ket{\psi^{\prime}}$ with a mixed state $\rho^{\prime}$.  We need to keep the transformations unitary however, otherwise we could always take $\ket{\psi^{\prime}}$ to be a maximally mixed state and have the transformation completely mix the corresponding portion of the physical state. 

With this definition we are able to provide a complete characterization for pure states in the following Lemma.  This relies on the singular value (Schmidt) decomposition for pure bipartite states.
\begin{lemma}\label{lemma:fmypurestates}
Let bipartite $\ket{\psi}$ be given with
\begin{equation}
\ket{\psi} = \sum_{j} \lambda_{j} \ket{a_{j}}\ket{b_{j}}.
\end{equation}
Then
\begin{equation}
F_{MY}(\ket{\psi})  = \sum_{l} \frac{\left(\lambda_{2l} + \lambda_{2l + 1} \right)^{2}}{2}.
\end{equation}

\end{lemma}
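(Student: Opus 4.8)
The plan is to compute $F_{MY}(\ket{\psi})$ directly from the Schmidt decomposition. Since $F_{MY}(\ket{\psi}) = \max_{U,V,\ket{\psi^\prime}} |\bra{\psi^\prime}\bra{\phi_+} (U_A \otimes V_B)\ket{\psi}|^2$, and applying local unitaries to $\ket{\psi}$ just changes the Schmidt vectors (not the Schmidt coefficients), I would first argue that we may take $U = V = I$ and instead maximize over \emph{all} choices of orthonormal Schmidt bases; equivalently, writing $\ket{\psi} = \sum_j \lambda_j \ket{a_j}\ket{b_j}$, the quantity to maximize is $\max_{\ket{\psi^\prime}} |\bra{\psi^\prime}\bra{\phi_+}\ket{\psi}|^2$ after conjugating by the freedom to relabel/rotate the $\ket{a_j}$ and $\ket{b_j}$ and pick which two-dimensional pair forms the "$\phi_+$ slot." More carefully: the target state $\ket{\psi^\prime}\ket{\phi_+}$ lives on $\mathcal{H}_A \otimes \mathcal{H}_B$, where we single out a qubit on each side to carry $\ket{\phi_+}$. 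A local unitary $U_A \otimes V_B$ followed by projecting onto $\ket{\psi^\prime}$ on the complementary registers is, up to the optimization over $\ket{\psi^\prime}$, the same as: pick a two-dimensional subspace $\mathrm{span}\{\ket{u_0},\ket{u_1}\} \subseteq \mathcal{H}_A$ and $\mathrm{span}\{\ket{v_0},\ket{v_1}\} \subseteq \mathcal{H}_B$, and compute the overlap of $\ket{\psi}$ with $\frac{1}{\sqrt2}(\ket{u_0}\ket{v_0} + \ket{u_1}\ket{v_1})$ tensored with an arbitrary state on the rest — but since $\ket{\psi^\prime}$ is free, the "rest" contributes its full norm, so $F_{MY}(\ket{\psi})$ is the maximum over all such qubit-subspace pairs of $\big\| (\Pi_A \otimes \Pi_B)\ket{\psi}\big\|^2 \cdot \frac{|\langle \phi_+ | \text{(normalized projection)}\rangle|^2}{\cdots}$. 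The clean way to state this: $F_{MY}(\ket{\psi}) = \max \frac{1}{2}|\langle u_0|\langle v_0| + \langle u_1|\langle v_1|\,\ket{\psi}|^2$ over orthonormal pairs $\{\ket{u_0},\ket{u_1}\}$, $\{\ket{v_0},\ket{v_1}\}$, because the optimal $\ket{\psi^\prime}$ aligns with the residual vector.

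Next, I would reduce to the Schmidt basis: writing $\ket{\psi} = \sum_j \lambda_j \ket{a_j}\ket{b_j}$ with $\lambda_j \geq 0$, the overlap $\langle u_0|\langle v_0|\ket{\psi} + \langle u_1|\langle v_1|\ket{\psi} = \sum_j \lambda_j \big(\langle u_0|a_j\rangle\langle v_0|b_j\rangle + \langle u_1|a_j\rangle\langle v_1|b_j\rangle\big)$. A short argument (using that a $2\times n$ isometry $\ket{u_i}$ and $\ket{v_i}$ can only "read off" a rank-$\le 2$ piece) shows the optimum is attained when $\{\ket{u_0},\ket{u_1}\}$ and $\{\ket{v_0},\ket{v_1}\}$ are chosen among the Schmidt vectors themselves, say $\ket{u_i} = \ket{a_{j_i}}$, $\ket{v_i} = \ket{b_{j_i}}$, giving overlap $\lambda_{j_0} + \lambda_{j_1}$ and hence value $\frac{(\lambda_{j_0}+\lambda_{j_1})^2}{2}$. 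To see no off-diagonal choice helps, expand $\frac{1}{2}|\sum_j \lambda_j(\langle u_0|a_j\rangle\langle v_0|b_j\rangle + \langle u_1|a_j\rangle\langle v_1|b_j\rangle)|^2$ and bound it by Cauchy–Schwarz / a convexity argument against the "diagonal" configurations.

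Finally, I would optimize the choice of pairing. Given that each Schmidt index is used at most once (using a rank-2 projector on each side), we want to partition (a subset of) $\{0,1,\dots\}$ into pairs and maximize $\sum_{\text{pairs}} \frac{(\lambda_{j_0}+\lambda_{j_1})^2}{2}$ — but wait, actually $F_{MY}$ only extracts \emph{one} $\ket{\phi_+}$, so only one pair is selected and the residual goes into $\ket{\psi^\prime}$. Hmm — so actually the maximum is just $\max_{j \neq k} \frac{(\lambda_j + \lambda_k)^2}{2}$, which for sorted $\lambda_0 \ge \lambda_1 \ge \cdots$ is $\frac{(\lambda_0+\lambda_1)^2}{2}$, \emph{not} the sum in the statement. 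The discrepancy means I have misread the definition: evidently $\ket{\psi^\prime}\ket{\phi_+}$ must be interpreted so that $\ket{\psi^\prime}$ itself can contain further EPR structure only trivially, OR the lemma intends the physical state to be measured against \emph{many} EPR pairs. Re-examining, the intended reading must be that $F_{MY}$ compares to a single $\ket{\phi_+}$ and the answer $\sum_l \frac{(\lambda_{2l}+\lambda_{2l+1})^2}{2}$ arises because $\ket{\psi^\prime}$ is chosen to be the state obtained by collapsing each consecutive Schmidt pair onto its symmetric combination — i.e. the optimal $\ket{\psi^\prime}$ (with $U,V$ chosen to realize it) is $\sum_l \frac{\lambda_{2l}+\lambda_{2l+1}}{\sqrt2}\,\ket{a_{2l}}\ket{b_{2l}}$ suitably normalized and re-embedded, and the fidelity multiplies out. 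So the real plan: choose $U,V$ to rotate each pair $\{\ket{a_{2l}},\ket{a_{2l+1}}\}$ so that on the "$\phi_+$ qubit" each pair contributes coherently; then $\langle \psi^\prime|\langle\phi_+|(U\otimes V)|\psi\rangle = \sum_l \frac{\lambda_{2l}+\lambda_{2l+1}}{\sqrt2}\,\langle\psi^\prime|\text{(pair-$l$ residual)}\rangle$, and choosing $\ket{\psi^\prime}$ parallel to the residual vector (whose norm-squared is $\sum_l \frac{(\lambda_{2l}+\lambda_{2l+1})^2}{2}$... ) yields the claimed value, while a matching upper bound follows by Cauchy–Schwarz across the pairs together with the constraint that on each 2D block the best overlap with a maximally entangled qubit is $\frac{\lambda_{2l}+\lambda_{2l+1}}{\sqrt2}$.

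The main obstacle will be getting the reduction in the first paragraph exactly right — pinning down precisely how the free parameters $U$, $V$, and $\ket{\psi^\prime}$ interact, and in particular proving the \emph{upper} bound: that no clever entangling-looking choice of local unitaries can beat the "pair up consecutive Schmidt coefficients" strategy. I expect this to go through via a block-decomposition argument: any $U_A$ sends the designated $\phi_+$-qubit's two basis vectors to two orthonormal vectors in $\mathcal{H}_A$, spanning a 2D subspace; decompose $\mathcal{H}_A$ (and $\mathcal{H}_B$) into 2D blocks adapted to the Schmidt decomposition, bound the contribution of each block by $\frac{(\lambda+\lambda^\prime)^2}{2}$ using the two-qubit fact, and sum. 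Making the "sum over blocks" step rigorous — i.e. that the cross-block terms vanish or only hurt — is the delicate point and likely requires the arithmetic-mean/Cauchy–Schwarz inequality $(\sum_l c_l)^2 \le (\text{something})$ applied carefully, together with the observation that $F_{MY}$ with a \emph{single} $\phi_+$ but a free $\ket{\psi^\prime}$ effectively rewards distributing weight across all Schmidt pairs.
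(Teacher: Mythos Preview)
Your proposal has a genuine gap: you never identify the key structural fact that makes the problem tractable, and as a result your upper bound remains unfinished (you yourself flag the ``sum over blocks'' step as the delicate point you have not resolved).

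The missing idea is this: a state of the form $\ket{\psi'}\otimes\ket{\phi_+}$ is characterised, up to local unitaries, precisely by having a Schmidt decomposition whose coefficients come in \emph{equal pairs}, $\mu_{2l}=\mu_{2l+1}$. Once you see this, the optimisation over $U$, $V$, and $\ket{\psi'}$ collapses to: maximise $\bigl|\sum_j \lambda_j \mu_j\bigr|^2$ over all nonnegative sequences $(\mu_j)$ with $\sum_j\mu_j^2=1$ and $\mu_{2l}=\mu_{2l+1}$, after first showing the Schmidt bases may be aligned. The alignment step is handled in the paper by a Birkhoff--von~Neumann argument (the matrix $M_{jk}=|\braket{a_j}{c_k}\braket{b_j}{d_k}|$ is dominated by a doubly stochastic matrix, hence by some permutation, and the identity permutation is optimal when both sequences are sorted). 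The constrained maximisation is then a one-line Cauchy--Schwarz: with $\mu_{2l}=\mu_{2l+1}$ one has $\sum_j\lambda_j\mu_j=\sum_l(\lambda_{2l}+\lambda_{2l+1})\mu_{2l}$, and Cauchy--Schwarz against $\sum_l 2\mu_{2l}^2=1$ gives exactly $\sum_l\tfrac{1}{2}(\lambda_{2l}+\lambda_{2l+1})^2$, with equality for $\mu_{2l}\propto\lambda_{2l}+\lambda_{2l+1}$.

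Your detour through ``pick a 2D subspace on each side'' led you to the wrong intermediate answer $\max_{j\neq k}\tfrac{1}{2}(\lambda_j+\lambda_k)^2$ because it implicitly treats $\ket{\psi'}$ as living on the \emph{complement} of the chosen qubit subspace, which is not how the tensor product works: the Schmidt modes of $\ket{\psi'}\otimes\ket{\phi_+}$ mix the $\ket{\psi'}$-register and the qubit register. Your later block-decomposition repair is heading in a workable direction, but it is substantially more complicated than needed and you have not supplied the argument that cross-block terms cannot help. Recasting everything in terms of the paired-Schmidt-coefficient constraint removes all of that difficulty.
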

\begin{proof}
The closest state of the form $\ket{?} \otimes \ket{\phi_{+}}$ has Schmidt decomposition
\begin{equation}
\ket{\phi} = \sum_{j} \mu_{j} \ket{c_{j}}\ket{d_{j}}
\end{equation}
with $\mu_{2l} = \mu_{2l+1}$.  For concreteness, we may assume that the $\lambda_{j}$s and $\mu_{j}$s are both in decreasing order

We first show that we may take $\ket{c_{j}} = \ket{a_{j}}$ and $\ket{d_{j}} = \ket{b_{j}}$.  Note that
\begin{equation}
|\braket{\psi}{\phi}| \leq \sum_{jk} \lambda_{j} \mu_{k} |\braket{a_{j}}{c_{k}}\braket{b_{j}}{d_{k}}|.
\end{equation}
Let us define the matrix $M$ by 
\begin{equation}
M_{jk} = |\braket{a_{j}}{c_{k}}\braket{b_{j}}{d_{k}}|.
\end{equation}
The values $|\braket{a_{j}}{b_{k}}|$ for various $k$ and fixed $j$ form a vector of norm 1 since $\ket{b_{k}}$ is a basis and $\ket{a_{j}}$ has norm 1.  The same is true for the values $|\braket{b_{j}}{d_{k}}|$ and if we fix $k$ and vary $j$ instead.  Thus columns (and rows) of $M$ are formed by entrywise products of norm 1 vectors and the sum of each row and column of $M$ is at most 1.  This means that we can find a new matrix $N$ with positive entries such that $M + N$ is doubly stochastic.  Note that
\begin{equation}
|\braket{\psi}{\phi}| \leq \sum_{jk} \lambda_{j} \mu_{k} (M+N)_{jk}.
\end{equation}

By the Birkhoff-von Neumann Theorem we may write $M+N$ as a convex combination of permutation matrices, thus
\begin{equation}
M+N = \sum_{m} p_{m} P_{m}
\end{equation}
with $\sum_{m} p_{m} = 1$ and $P_{m}$ permutation matrices.  Since the combination is convex, there exists some $m$ for which
\begin{equation}
|\braket{\psi}{\phi}| \leq \sum_{jk}\lambda_{j} \mu_{k} (P_{m})_{jk}.
\end{equation}
The permutations merely reorder the $\mu_{j}$s and it is easy to prove that the maximum is achieved when the $\lambda_{j}$s and $\mu_{j}$s are both in decreasing order.  Hence $P_{m} = I$ satisfies the above equation.  We may achieve this by choosing the bases $\ket{c_{j}} = \ket{a_{j}}$ and $\ket{d_{j}} = \ket{b_{j}}$, so we need not consider any other bases. 

We now optimize over $\mu_{j}$ subject to the condition $\mu_{2l} = \mu_{2l+1}$.  By the Cauchy-Schwarz inequality we have

\begin{equation*}
|\braket{\psi}{\phi}|^{2} = \left(\sum_{l} (\lambda_{2l} + \lambda_{2l + 1})\mu_{2l}\right)^{2}
\end{equation*}
\begin{equation} \leq 
\left(\sum_{l} (\lambda_{2l} + \lambda_{2l + 1})^{2}\right) \left(\sum_{l} \mu_{2l}^{2} \right)
\end{equation}

with equality when $\mu$ and $\lambda$ are collinear.  Thus we set
\begin{equation}
\mu_{2l} = \mu_{2l+1} = \frac{\lambda_{2l} + \lambda_{2l+1}}{N}
\end{equation}
with $N$ a normalization constant equal to
\begin{equation}
N = \sqrt{2 \sum_{l} (\lambda_{2l} + \lambda_{2l +1})^{2}}.
\end{equation}
With these values, we obtain
\begin{equation}
F_{MY}(\ket{\psi}) = |\braket{\psi}{\phi}|^{2} = \sum_{l} \frac{\left(\lambda_{2l} + \lambda_{2l + 1} \right)^{2}}{2}.
\end{equation}
\end{proof}
%

\subsection{Local operations and LOCC}\label{sec:flofloccdef}
The $F_{LO}$ we defined for qubits could be extended in a number of ways.  In particular, $F_{MY}$ defined in the previous section reduces to $F_{MY}$ in the case where the physical state is a pair of qubits, which is equal to $F_{LO}$ for states with $S > 2$.  Here we will extend it in another way.  We begin with a physical bipartite state $\rho$ and ask the question ``What is the most entangled pair of qubits we can obtain through local operations?''  This gives rise to the following definition:

\begin{definition}
\begin{equation}
F_{LO}(\rho) = \max_{\Phi_{A}, \Phi_{B}} F(\Phi_{A} \otimes \Phi_{B}(\rho), \proj{\phi_{+}}{\phi_{+}})
\end{equation}
with $\Phi_{A}$ and $\Phi_{B}$ ranging over completely positive trace preserving maps from the spaces of the physical state a qubits.
\end{definition}
For qubits this definition is not the same as $F_{MY}$ since arbitrary qubit channels are allowed instead of just unitaries.  In the Lemma below we establish a relationship between $F_{MY}$ and $F_{LO}$.

\begin{lemma}
Let bipartite $\rho$ be given.  Then
\begin{equation}
F_{MY}(\rho) \leq F_{LO}(\rho)
\end{equation}
with equality if $\rho$ is pure.
\end{lemma}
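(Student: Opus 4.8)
The plan is to establish the inequality for arbitrary $\rho$ by noticing that the operations implicit in $F_{MY}$ are a special type of local channel, and then to promote this to an equality for pure states via a Stinespring dilation together with Lemma~\ref{lemma:fmypurestates}.

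\emph{The inequality.} Fix maximizers $U_A$, $V_B$, $\rho'$ for $F_{MY}(\rho)$ and write $\sigma = (U_A \otimes V_B)\rho(U_A^\dagger \otimes V_B^\dagger)$, so that $F_{MY}(\rho) = F(\sigma, \rho' \otimes \proj{\phi_+}{\phi_+})$, where (after enlarging the local spaces by ancillae if necessary, which alters neither quantity) $\proj{\phi_+}{\phi_+}$ sits on a qubit tensor factor $\mathcal{H}_2 \otimes \mathcal{H}_2$ of $\mathcal{Y}_A \otimes \mathcal{Y}_B$ and $\rho'$ on the complementary ``junk'' factors. Let $\Phi_A$ be the channel that applies $U_A$ and then traces out the junk factor of $\mathcal{Y}_A$, and define $\Phi_B$ analogously; these are CPTP maps from the physical spaces to qubits, hence admissible in $F_{LO}$. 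The partial-trace map $\Lambda = \tr[\mathrm{junk}_A]{\cdot} \otimes \tr[\mathrm{junk}_B]{\cdot}$ is CPTP, it satisfies $\Lambda(\sigma) = \Phi_A \otimes \Phi_B(\rho)$ and $\Lambda(\rho' \otimes \proj{\phi_+}{\phi_+}) = \proj{\phi_+}{\phi_+}$, and monotonicity of fidelity under CPTP maps gives
\begin{equation}
F_{LO}(\rho) \geq F\big(\Phi_A \otimes \Phi_B(\rho), \proj{\phi_+}{\phi_+}\big) = F\big(\Lambda(\sigma), \Lambda(\rho' \otimes \proj{\phi_+}{\phi_+})\big) \geq F(\sigma, \rho' \otimes \proj{\phi_+}{\phi_+}) = F_{MY}(\rho).
\end{equation}

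\emph{Equality for pure $\rho = \proj{\psi}{\psi}$.} By the inequality just proved it suffices to show $F_{LO}(\ket{\psi}) \leq F_{MY}(\ket{\psi})$. Let $\Phi_A, \Phi_B$ attain $F_{LO}(\ket{\psi})$ and choose Stinespring isometries $W_A : \mathcal{Y}_A \to \mathcal{H}_2 \otimes \mathcal{E}_A$ and $W_B : \mathcal{Y}_B \to \mathcal{H}_2 \otimes \mathcal{E}_B$ with $\Phi_X(\cdot) = \tr[\mathcal{E}_X]{W_X \cdot W_X^\dagger}$. Put $\ket{\Psi} = (W_A \otimes W_B)\ket{\psi}$, so that $\Phi_A \otimes \Phi_B(\proj{\psi}{\psi}) = \tr[\mathcal{E}_A \mathcal{E}_B]{\proj{\Psi}{\Psi}}$. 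Since $\ket{\phi_+}$ is pure, the purifications of $\proj{\phi_+}{\phi_+}$ on the purifying system $\mathcal{E}_A \otimes \mathcal{E}_B$ are exactly the states $\ket{\phi_+} \otimes \ket{\chi}$, so Uhlmann's theorem yields
\begin{equation}
F_{LO}(\ket{\psi}) = \bra{\phi_+} \tr[\mathcal{E}_A \mathcal{E}_B]{\proj{\Psi}{\Psi}} \ket{\phi_+} = \max_{\ket{\chi}} \big| \bra{\Psi}\big( \ket{\phi_+} \otimes \ket{\chi} \big) \big|^2,
\end{equation}
the maximum over normalized $\ket{\chi} \in \mathcal{E}_A \otimes \mathcal{E}_B$. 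Reading $W_A, W_B$ as local isometries applied to $\ket{\psi}$ and $\ket{\chi}$ as a (possibly entangled) junk state, the right-hand side is precisely a quantity of the form maximized in the definition of $F_{MY}(\ket{\psi})$. By Lemma~\ref{lemma:fmypurestates}, $F_{MY}(\ket{\psi})$ depends only on the Schmidt coefficients of $\ket{\psi}$ and is therefore unaffected by enlarging the local Hilbert spaces; hence $F_{LO}(\ket{\psi}) \leq F_{MY}(\ket{\psi})$, and the two are equal.

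\emph{Main obstacle.} The substantive content is light --- monotonicity of fidelity and one dilation --- so the real care goes into dimension bookkeeping: $F_{MY}$ is written with local \emph{unitaries}, which presupposes a qubit tensor factor, whereas the dilation naturally produces \emph{isometries} into larger spaces, and the optimal junk state $\ket{\chi}$ may be entangled across the two sides. Both points are benign: entangled junk is already permitted in $F_{MY}$ (the proof of Lemma~\ref{lemma:fmypurestates} accounts for it) and padding with ancillae leaves $F_{MY}$ unchanged by that same Lemma. Still, these should be spelled out so that the reduction from $F_{LO}$ to $F_{MY}$ is airtight.
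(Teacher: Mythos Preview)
Your proof is correct and follows essentially the same approach as the paper: for the inequality, both use monotonicity of fidelity under the local partial-trace channel; for the pure-state equality, both dilate the optimal $LO$ channels (you via Stinespring isometries, the paper via ``add ancillae, apply local unitaries, trace out''), invoke Uhlmann's theorem to rewrite the fidelity as an overlap with $\ket{\phi_+}\otimes\ket{\chi}$, and then use Lemma~\ref{lemma:fmypurestates} to conclude that $F_{MY}$ is unchanged by the ancilla enlargement. Your explicit remarks on entangled junk and dimension bookkeeping are a welcome clarification of points the paper leaves implicit.
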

\begin{proof}
Let $\rho$ be given.  Then
\begin{equation}
F_{LO}(\rho) = \max_{\Phi \in LO} F(\Phi(\rho), \proj{\phi_{+}}{\phi_{+}})
\end{equation}
with $LO$ the set of local operations that take the space $AB$ to a pair of qubits.  We may restrict this set to operations which only apply local unitaries and trace out everything but a pair of qubits to obtain
\begin{equation}
F_{LO}(\rho) \geq \max_{U, V} F(\text{tr}_{X} (U \otimes V \rho U^{\dagger} \otimes V^{\dagger}), \proj{\phi_{+}}{\phi_{+}})
\end{equation}
where $\text{tr}_{X}$ means tracing out everything but a pair of qubits.  Since the fidelity cannot decrease when a system is traced out we have
\begin{equation}
F_{LO}(\rho) \geq \max_{U,V} F(U \otimes V \rho U^{\dagger}Ê\otimes V^{\dagger}, \rho^{\prime} \otimes \proj{\phi_{+}}{\phi_{+}})
\end{equation}
for all $\rho^{\prime}$, and in particular for the $\rho^{\prime}$ which maximizes the expression and gives $F_{MY}(\rho)$.  Thus
\begin{equation}
F_{LO}(\rho) \geq F_{MY}(\rho)
\end{equation}

Now suppose that $\rho = \proj{\psi}{\psi}_{AB}$.  We may write an operation in LO as adding a pair of ancillas and a pair of target qubits, applying a pair of unitaries, and tracing out everything but the target qubits.  Thus
\begin{equation}
F_{LO}(\ket{\psi}) = \max_{U, V} F(\text{tr}_{ABX_{a}X_{b}}(U \otimes V\ket{\psi}_{AB} \ket{00}_{X_{a} X_{b}} \ket{00}_{Y_{a}Y_{b}}), \proj{\phi_{+}}{\phi_{+}}_{Y_{a}Y_{b}}) 
\end{equation}
Applying Uhlmann's Theorem, we obtain
\begin{equation}
F_{LO}(\ket{\psi}) = \max_{U, V, \ket{\phi}} \left|\bra{\psi}_{AB} \bra{00}_{X_{a} X_{b}} \bra{00}_{Y_{a}Y_{b}}U^{\dagger} \otimes V^{\dagger} \ket{\phi}\otimes\ket{\phi_{+}}_{Y_{a}Y_{b}}\right|^{2}  
\end{equation}
The right hand side is equal to $F_{MY}(\ket{\psi}\otimes \ket{00} \otimes \ket{00})$ by definition.  This in turn is equal to $F_{MY} (\ket{\psi})$ since the value of $F_{MY}$ for a pure state is only dependent on the Schmidt decomposition, which the product state ancillas do not change.  Thus
\begin{equation}
F_{LO}(\ket{\psi}) = F_{MY}(\ket{\psi}).
\end{equation}
\end{proof}

For mixed states there exist cases with a strict inequality.  For example $F_{MY}(\frac{I}{4}) = \frac{1}{4}$, but $F_{LO}(\frac{I}{4}) = \frac{1}{2}$ since the class $LO$ allows us to replace the state with $\ket{00}$.

The proof of the Lemma implies a construction for the optimal local operations for extract an approximate EPR pair.  First take the singular value decomposition with singular values ordered in a decreasing fashion.  Then pair them up and introduce some new variables to obtain
\begin{equation}
\ket{\psi} = \sum_{j} \sqrt{p_{j}} \left(c_{j} \ket{2j}\ket{2j} + s_{j} \ket{2j +1}\ket{2j+1}\right)
\end{equation}
where $c_{j}^{2} + s_{j}^{2} = 1$ and $\lambda_{2j} = p_{j}c_{j}\, , \,\, \lambda_{2j+1} = p_{j}s_{j}$, further implying that $\sum_{j} p_{j} = 1$.  The decreasing ordering implies that $c_{j}$ and $s_{j}$ are as close together as possible, overall.  We can then think of the state as a direct sum of pairs of qubits, with each pair of qubits close as possible to an EPR pair.  The optimal local operations consist of projecting onto the spaces spanned by $\ket{2j}$ and $\ket{2j+1}$ for various $j$, and then mapping $\ket{2j}$ to $\ket{0}$ and $\ket{2j+1}$ to $\ket{1}$ to obtain a qubit.

\subsubsection{Larger classes of operations}
We may further extend the definition of $F_{MY}$ on qubits using different classes of operations.  Instead of considering local operations of the form $\Phi_{A}\otimes \Phi_{B}$ we may instead consider separable maps or local operations with classical communication (LOCC).  The latter is most interesting to us because they are the largest class of reasonably implementable operations that do not increase entanglement.  (Separable operations are the largest class that do not increase entanglement, but they may require quantum communication to implement.)  Thus we define one more measure of fidelity:

\begin{definition}
\begin{equation}
F_{LOCC}(\rho) = \max_{\Phi} F(\Phi(\rho), \proj{\phi_{+}}{\phi_{+}})
\end{equation}
with $\Phi$ ranging over LOCC maps that take the state $\rho$ to a pair of qubits.
\end{definition}

Since $LOCC$ contains $LO$ we have
\begin{equation}
F_{MY}(\rho) \leq F_{LO} \leq F_{LOCC}.
\end{equation}
%

\section{Bounds for higher dimensions}

\subsection{Bounds for $F_{MY}$}
Recall that for pure states we are able to analytically calculate $F_{MY}$ in terms of the singular values (Schmidt coefficients).  Conveniently, Gisin and Peres  \cite{Gisin:1992:Maximal-violati} have studied the case of determing $S_{max}$ of a pure state in terms of the singular values.  Although they were not able to find an analytic solution, they make the following conjecture

\begin{conjecture}[Gisin, Peres \cite{Gisin:1992:Maximal-violati}]\label{conj:gisinperes}
Let a bipartite state $\ket{\psi}$ be given with singular value decomposition
\begin{equation}
\ket{\psi} = \sum_{j} p_{j} \left(c_{j}\ket{2j}\ket{2j} + s_{j} \ket{2j+1}\ket{2j+1}\right).
\end{equation}
with $p_{0}c_{0} \geq p_{0}s_{0} \geq p_{1}c_{1} \geq p_{1}s_{1} \geq \dots$.  Then
\begin{equation}
S_{max}(\ket{\psi}) = \sum_{j} p_{j} \left(2 \sqrt{1 + 4 c_{j}^{2}s_{j}^{2}}\right).
\end{equation}
\end{conjecture}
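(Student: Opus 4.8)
The plan is to prove the two inequalities separately; the lower bound $S_{max}(\ket{\psi}) \geq \sum_j p_j\, 2\sqrt{1+4c_j^2 s_j^2}$ is routine, and all the content lies in the matching upper bound. For the lower bound, write $\ket{\psi} = \sum_j \sqrt{p_j}\,\ket{\phi_j}$ with $\ket{\phi_j} = c_j\ket{2j}\ket{2j} + s_j\ket{2j+1}\ket{2j+1}$ the normalized component on the $j$-th pair of Schmidt blocks (these blocks are mutually orthogonal on each side, so $\sum_j p_j = 1$; the conjecture's displayed decomposition should read $\sqrt{p_j}$, matching the normalization $c_j^2+s_j^2=1$). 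On Alice's side take $A_a = \bigoplus_j A_a^{(j)}$, where $A_0^{(j)}, A_1^{(j)}$ are the CHSH-optimal observables in the $X$--$Z$ plane of the $j$-th qubit block for the state $\ket{\phi_j}$; by the computation in the proof of Theorem~\ref{theorem:chshqubitmax} (equivalently Lemma~\ref{lemma:diqkdbelldiagchsh}) these achieve the value $2\sqrt{1+4c_j^2 s_j^2}$. Take $B_b = \bigoplus_j B_b^{(j)}$ likewise. Since each $A_a \otimes B_b$ preserves the block $V_j^A \otimes V_j^B$ and kills the cross terms, $\bra{\psi}\mathrm{CHSH}\ket{\psi} = \sum_j p_j\, \bra{\phi_j}\mathrm{CHSH}^{(j)}\ket{\phi_j} = \sum_j p_j\, 2\sqrt{1+4c_j^2 s_j^2}$, giving the bound.

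For the upper bound I would start from Jordan's lemma (Lemma~\ref{lemma:blockdiagonalization}): since $A_0, A_1$ each have eigenvalues $\pm 1$, they are simultaneously block-diagonalizable into blocks of size at most $2$, so Alice's space splits as an orthogonal sum $\bigoplus_i V_i$ of $A_a$-invariant subspaces with $\dim V_i \leq 2$, and Bob's space splits as $\bigoplus_k W_k$. Then $\mathrm{CHSH} = \sum_{ab}(-1)^{ab} A_a\otimes B_b$ is block-diagonal with respect to $\bigoplus_{i,k} V_i \otimes W_k$, so $\bra{\psi}\mathrm{CHSH}\ket{\psi} = \sum_{i,k} q_{ik}\, \bra{\hat\psi_{ik}}\mathrm{CHSH}_{ik}\ket{\hat\psi_{ik}} \leq \sum_{i,k} q_{ik}\, S_{max}(\hat\psi_{ik})$, where $\ket{\psi_{ik}}$ is the unnormalized component of $\ket{\psi}$ in $V_i \otimes W_k$, $q_{ik} = \|\psi_{ik}\|^2$ with $\sum_{ik} q_{ik}=1$, and $\hat\psi_{ik}$ its normalization. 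Each $V_i \otimes W_k$ has dimension at most $4$ and $\hat\psi_{ik}$ has Schmidt rank at most $2$, so by Theorem~\ref{theorem:chshqubitmax} together with the Horodecki/Verstraete--Wolf formula recalled in Section~\ref{sec:chshinequality}, $S_{max}(\hat\psi_{ik}) = 2\sqrt{1 + 4 c_{ik}^2 s_{ik}^2}$ in terms of the two Schmidt coefficients $(c_{ik}, s_{ik})$ of $\hat\psi_{ik}$ (with $s_{ik}=0$ for a product block). Hence $S_{max}(\ket{\psi})$ is bounded above by the maximum of $\sum_{i,k} q_{ik}\, 2\sqrt{1+4c_{ik}^2 s_{ik}^2}$ over all ways of orthogonally partitioning Alice's and Bob's spaces into subspaces of dimension $\leq 2$.

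It then remains to show that this block optimization is solved by the adjacent Schmidt pairing, i.e. equals $\sum_j p_j\, 2\sqrt{1+4c_j^2 s_j^2}$ --- and this is the main obstacle, the step that has resisted an analytic proof since Gisin and Peres, who could only verify it numerically. My line of attack would be an exchange argument: the map $x \mapsto 2\sqrt{1+4x}$ is increasing and concave (Lemma~\ref{lemma:qubitboundconcavedown}), so the objective rewards block restrictions that are as balanced (as close to maximally entangled) as possible, and one would try to show that any configuration whose blocks are not the adjacent Schmidt pairs can be transformed --- by merging or splitting blocks, or rotating the $V_i, W_k$ --- without decreasing the objective, until it becomes the adjacent-pair configuration. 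The technical heart is controlling the Schmidt spectra of the restrictions $\hat\psi_{ik}$ as the subspaces $V_i, W_k$ vary, in particular ruling out partitions that make the block states "more balanced, more often" than the adjacent pairing permits; a clean way to phrase the needed statement is a majorization claim of the form: the weights $\{q_{ik}\}$ placed on the pairs $\{(c_{ik},s_{ik})\}$ are dominated, under every increasing symmetric concave functional of the pair, by the weights $\{p_j\}$ placed on $\{(c_j,s_j)\}$. Establishing such a majorization --- perhaps via Horn-type inequalities for compressions of the reduced density matrix $\rho_A$, or by using the operator identity $\mathrm{CHSH}^2 = 4I - [A_0,A_1]\otimes[B_0,B_1]$ to bound each block contribution more sharply than the worst-case $2\sqrt{2}$ --- is where I expect the genuine difficulty to lie, and is the reason this stands as a conjecture rather than a theorem.
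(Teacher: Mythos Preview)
This statement is presented in the paper as a \emph{conjecture}, not a theorem: the paper offers no proof, only the remark that Gisin and Peres give an explicit construction achieving the displayed value (your lower bound) and that numerical studies by Liang and Doherty support the equality. So there is no ``paper's proof'' to compare against.

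Your lower-bound construction is exactly the Gisin--Peres construction the paper describes: block-diagonal measurements that first project onto the $j$-th Schmidt pair and then apply the optimal two-qubit CHSH observables. That part is correct and complete.

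For the upper bound you have correctly identified both the natural first move (Jordan's lemma on each side to reduce to a convex combination of two-qubit CHSH values) and the genuine obstruction: controlling the Schmidt spectra of the restrictions $\hat\psi_{ik}$ as the Jordan subspaces $V_i, W_k$ vary, and showing that no such partition beats the adjacent-Schmidt pairing. Your proposed majorization/exchange line of attack is a reasonable program, but as you yourself note, it is precisely this step that remains open --- which is why the paper records the statement as a conjecture and uses it only conditionally (Lemma~\ref{lemma:mybound} is stated under the hypothesis that the conjecture holds). In short: your write-up is an accurate assessment of the state of affairs rather than a proof, and it agrees with the paper's treatment.
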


Later numerical studies by Liang and Doherty \cite{Liang:2006:Better-Bell-ine} supported the conjecture.

Gisin and Peres give an explicit construction for the measurement operators which achieve their conjectured value of $S_{max}$.  The general idea is much the same as for the optimal local operations for pure states that achieve $F_{LO}$, as discussed in section~\ref{sec:flofloccdef}.  The state is divided up into a direct sum of pairs of qubits, and the measurements are a projection onto one of the summands followed by the optimal measurement for that pair of qubits.  

We now prove that the Gisin-Peres conjecture implies a bound on $F_{MY}$ in terms of $S_{max}$ for pure states.
\begin{lemma}\label{lemma:mybound}
If conjecture~\ref{conj:gisinperes} holds, then for pure bipartite state $\ket{\psi}_{AB}$ we have
\begin{equation}
F_{MY}(\ket{\psi}) \geq  \frac{S_{max}(\ket{\psi}) + 2 \sqrt{2} - 4}{4\left(\sqrt{2} - 1\right)}.
\end{equation}
Furthermore, this bound is tight.
\end{lemma}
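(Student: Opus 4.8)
The plan is to reduce the higher-dimensional fidelity to a convex combination of the two-qubit situation and then exploit the concavity of the qubit bound. First I would write $\ket{\psi}$ in its Schmidt decomposition and group the Schmidt coefficients into consecutive pairs exactly as in the Gisin--Peres parameterization, $\ket{\psi} = \sum_j \sqrt{p_j}\,\bigl(c_j\ket{2j}\ket{2j} + s_j\ket{2j+1}\ket{2j+1}\bigr)$ with $c_j^2 + s_j^2 = 1$, $c_j,s_j\ge 0$, $\sum_j p_j = 1$. Lemma~\ref{lemma:fmypurestates} then gives $F_{MY}(\ket{\psi}) = \sum_j \tfrac12 p_j (c_j+s_j)^2$. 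Writing $S_j = 2\sqrt{1+4c_j^2 s_j^2} \in [2,2\sqrt2]$ for the CHSH value of the $j$-th qubit block and recalling (this is the elementary identity underlying the saturating states of Theorem~\ref{theorem:chshqubitmax}) that $\tfrac12(c_j+s_j)^2 = \tfrac{1+2c_j s_j}{2} = f(S_j)$ with $f(S) = \tfrac{1+\sqrt{(S/2)^2-1}}{2}$, one gets $F_{MY}(\ket{\psi}) = \sum_j p_j f(S_j)$. On the other hand, Conjecture~\ref{conj:gisinperes} states precisely that $S_{max}(\ket{\psi}) = \sum_j p_j S_j$ --- the \emph{same} convex combination of the same $S_j$.

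Next I would invoke Lemma~\ref{lemma:qubitboundconcavedown}: $f$ is concave on $[2,2\sqrt2]$, hence lies above its secant joining the endpoints $(2,\tfrac12)$ and $(2\sqrt2,1)$, and that secant is exactly $\ell(S) := \tfrac{S + 2\sqrt2 - 4}{4(\sqrt2-1)}$ (a one-line check). Thus $f(S_j)\ge \ell(S_j)$ for each $j$, and since $\ell$ is affine and $\sum_j p_j = 1$,
\begin{equation}
F_{MY}(\ket{\psi}) = \sum_j p_j f(S_j) \;\ge\; \sum_j p_j \ell(S_j) \;=\; \ell\!\left(\sum_j p_j S_j\right) \;=\; \ell\bigl(S_{max}(\ket{\psi})\bigr) = \frac{S_{max}(\ket{\psi}) + 2\sqrt2 - 4}{4(\sqrt2-1)},
\end{equation}
which is the asserted bound.

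For tightness I would trace back when every inequality is an equality. Strict concavity of $f$ on the open interval forces each block with $p_j>0$ to have $S_j\in\{2,2\sqrt2\}$, i.e.\ to be either a product term ($c_js_j=0$) or a maximally entangled pair ($c_j=s_j=\tfrac1{\sqrt2}$); for such ``EPR-pairs-plus-a-product-component'' states the Gisin--Peres measurement construction reduces block by block to the optimal qubit CHSH measurement, so the use of the conjecture is also an equality. Concretely, $\ket{00}$ realizes the bound at $S=2$, $\tfrac1{\sqrt2}(\ket{00}+\ket{11})$ realizes it at $S=2\sqrt2$ (forcing $F_{MY}=1$, consistent with the CHSH maximum implying an embedded EPR pair), and $\sqrt{q/2}\,(\ket{00}+\ket{11}) + \sqrt{1-q}\,\ket{22}$ interpolates for $q$ in the range where the Schmidt ordering demanded by Conjecture~\ref{conj:gisinperes} holds, giving $F_{MY}=\tfrac{1+q}{2}=\ell(S_{max})$.

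The main obstacle is bookkeeping rather than any hard estimate: one must make sure the coefficient pairing implicit in Lemma~\ref{lemma:fmypurestates} is the \emph{same} pairing as the optimal one in Conjecture~\ref{conj:gisinperes} (both are ``sort the Schmidt coefficients decreasingly, then pair consecutively''), so that $F_{MY}$ and $S_{max}$ are genuinely built from the same weights $p_j$ and block values $S_j$; if the pairings differed, the two sums would not share weights and the convexity step would fail. A secondary point worth flagging is that this linear bound is strictly weaker than the pointwise qubit bound $f$ (it meets $f$ only at the two endpoints), so ``tight'' should be understood as ``attained,'' and I would spell out which values of $S_{max}$ the saturating family actually covers.
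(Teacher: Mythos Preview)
Your proposal is correct and follows essentially the same route as the paper: decompose $\ket{\psi}$ into qubit blocks via the ordered Schmidt pairing, use Lemma~\ref{lemma:fmypurestates} and Conjecture~\ref{conj:gisinperes} to write both $F_{MY}$ and $S_{max}$ as convex combinations with the \emph{same} weights $p_j$, and then replace the concave qubit bound $f$ by its secant $\ell$ on $[2,2\sqrt2]$ (Lemma~\ref{lemma:qubitboundconcavedown}). Your explicit flagging that the two pairings must coincide, and that the saturating family only covers the range of $S_{max}$ for which the decreasing Schmidt ordering places the EPR block first, are points the paper leaves implicit; the paper's saturating state $\sqrt{p}\,\ket{00}\ket{00}+\sqrt{1-p}\,\ket{\phi_+}\ket{11}$ is the same example up to relabelling and has the same range restriction.
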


\begin{proof}
We may write $\ket{\psi}$ in the singular value decomposition with decreasing singular values $\sqrt{p_{j}} c_{j}$, $\sqrt{p_{j}} s_{j}$.  Let $\ket{\psi_{j}} = c_{j} \ket{00}_{AB} + s_{j}\ket{11}_{AB}$.  Mapping to a different space (possibly adding one more dimension on each side if the original dimension was not even on both sides) $\ket{\psi} = \sum_{j} \sqrt{p_{j}} \ket{\psi_{j}}_{AB} \ket{jj}_{AB}$.  Then the Gisin-Peres conjecture implies
\begin{equation}
S_{max}(\ket{\psi}) = \sum_{j} p_{j} S_{max}\left(\ket{\psi_{j}}_{AB}\right).
\end{equation}
This may be seen as a result of the fact that the Gisin-Peres construction for the optimal measurement strategy is to project onto $\ket{jj}_{AB}$, obtaining the value $j$ on both sides (with probability $p_{j}$), and implement the optimal measurement strategy for $\ket{\psi_{j}}$.  Also, the fact that $S_{max}(c\ket{00} + s\ket{11}) = 2 \sqrt{1 + 4c^{2}s^{2}}$ completes the argument.  This equation may be obtained using the techniques from the proof of Theorem~\ref{theorem:chshqubitmax}.  We set $S_{j} = S_{max}(\ket{\psi_{j}})$. 

Meanwhile, applying Lemma~\ref{lemma:fmypurestates}, we find that $F_{MY}$ may also be written
\begin{equation}
F_{MY}(\ket{\psi}) = \sum_{j} p_{j} F_{MY}(\ket{\psi_{j}}),
\end{equation}
and we set $F_{j} = F_{MY}(\ket{\psi_{j}})$.

Now we are in a position to prove the bound.  Using the bound for qubits we find
\begin{equation}
F_{MY} = \sum_{j} p_{j} F_{MY}(\ket{\psi_{j}}) =  \sum_{j} p_{j}\frac{ 1 + \sqrt{\left(\frac{S_{max}(\ket{\psi_{j}})}{2}\right)^{2} - 1}}{2}.
\end{equation}
We now know that the convex hull of the qubit bound provides our needed bound.  We find the line connecting the extreme points, namely $S=2, F_{MY} = 1/2$ for the state $\ket{00}$ and $S=2 \sqrt{2}, F_{MY} = 1$ for the state $\ket{\phi_{+}}$.  By Lemma~\ref{lemma:qubitboundconcavedown} this line is always below a convex combination of the qubit bound since the qubit bound is concave down.  Hence
\begin{equation}
F_{MY}(\ket{\psi}) \geq   \frac{S_{max}(\ket{\psi}) + 2 \sqrt{2} - 4}{4\left(\sqrt{2} - 1\right)} .
\end{equation}

Now consider the state 
\begin{equation}
\ket{\psi} = \sqrt{p}\ket{00}_{AB} \ket{00}_{AB} + \sqrt{1-p}\ket{\phi_{+}}_{AB} \ket{11}_{AB}
\end{equation}
for some $p$ with $0 \leq p \leq 1$.  The construction for the CHSH measurement operators given by Gisin and Peres give $S_{max}(\ket{\psi})$ to be $2p + (1-p)2 \sqrt{2}$.  Meanwhile, $F_{MY} = p/2 + (1-p)$.  Thus these states saturate the conjectured lower bound.
\end{proof}

\subsection{Bounds for $F_{LOCC}$}
Our main motivation for introducing $F_{LOCC}$ is that we are able to obtain a tight bound, which we are unable to obtain for $F_{LO}$.  The bound obtained is the same is for $F_{MY}$, and the proof is very similar.

\begin{lemma}
Let a bipartite state $\rho$ be given, then
\begin{equation}
F_{LOCC}(\rho) \geq   \frac{S_{max}(\rho) + 2 \sqrt{2} - 4}{4\left(\sqrt{2} - 1\right)} \end{equation}
\end{lemma}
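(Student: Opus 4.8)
The plan is to mimic closely the proof of Lemma~\ref{lemma:mybound} (the $F_{MY}$ bound), but exploiting the extra freedom that $F_{LOCC}$ allows classical communication between Alice and Bob. First I would observe that, just as in the $F_{MY}$ case, it suffices to treat pure states: given a mixed $\rho$, purify it and note that any LOCC protocol on $\rho$ can be simulated by an LOCC protocol on the purification (Alice or Bob simply holds the purifying register and ignores it), so $F_{LOCC}(\rho) \geq F_{LOCC}(\ket{\psi_{\text{pur}}})$ while $S_{max}(\rho) \leq S_{max}(\ket{\psi_{\text{pur}}})$; since the claimed bound is monotone increasing in $S_{max}$, a bound for pure states gives the bound for all states. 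Actually, one should be a little careful: $S_{max}$ of the purification could exceed $S_{max}(\rho)$, which only helps, so the reduction is in the right direction.

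Next, for a pure state $\ket{\psi}_{AB}$, I would write the Schmidt decomposition and, following Gisin--Peres as quoted before Lemma~\ref{lemma:mybound}, group the Schmidt coefficients into consecutive pairs to express $\ket{\psi} = \sum_j \sqrt{p_j}\,\ket{\psi_j}_{AB}\ket{jj}_{AB}$ with $\ket{\psi_j} = c_j\ket{00} + s_j\ket{11}$. Here is the key point where LOCC buys us something over plain $F_{MY}$ and where we do \emph{not} need the Gisin--Peres conjecture: Alice measures the ``which-block'' register $\{\ket{jj}\}$, obtains outcome $j$, and \emph{communicates $j$ to Bob}. Both parties then hold the qubit pair $\ket{\psi_j}$, to which I apply the (unconditional, already-proven) qubit bound of Theorem~\ref{theorem:chshqubitmax}: $F_{MY}(\ket{\psi_j}) = f(S_{max}(\ket{\psi_j}))$ where $f(S) = \tfrac{1}{2}(1 + \sqrt{(S/2)^2 - 1})$. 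Thus $F_{LOCC}(\ket{\psi}) \geq \sum_j p_j f(S_j)$ where $S_j = S_{max}(\ket{\psi_j}) = 2\sqrt{1 + 4c_j^2 s_j^2}$.

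Finally I would convert the convex combination $\sum_j p_j f(S_j)$ into the affine lower bound. The only remaining input needed about $S_{max}(\ket{\psi})$ is that it is \emph{bounded above} by $\sum_j p_j S_j$ — and this direction is elementary (the measurement need not be block-diagonal, and the CHSH operator is linear), so unlike Lemma~\ref{lemma:mybound} we genuinely avoid assuming the Gisin--Peres conjecture. By Lemma~\ref{lemma:qubitboundconcavedown}, $f$ is concave down on $[2, 2\sqrt{2}]$, so $f$ lies above the chord joining $(2, 1/2)$ and $(2\sqrt{2}, 1)$; hence $f(S_j) \geq \ell(S_j)$ where $\ell$ is that linear function, and therefore $\sum_j p_j f(S_j) \geq \ell\big(\sum_j p_j S_j\big) \geq \ell(S_{max}(\ket{\psi}))$, using that $\ell$ is increasing and $S_{max}(\ket{\psi}) \leq \sum_j p_j S_j$. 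Writing out $\ell$ explicitly gives $\ell(S) = \tfrac{S + 2\sqrt{2} - 4}{4(\sqrt{2}-1)}$, which is exactly the claimed bound. The one step to be careful about — the ``main obstacle,'' though it is more bookkeeping than real difficulty — is justifying that the LOCC protocol sketched above is legitimate when the Schmidt bases of $\ket{\psi}$ are unknown black-box bases: Alice's measurement of the which-block register and the subsequent conditional qubit operations are all local operations supplemented by one classical message, so they lie in LOCC by construction, and the freedom to choose bases is already absorbed into the definition of $F_{LOCC}$ via the maximization over $\Phi$. I would also remark, as the text does for $F_{MY}$, that these bounds coincide and that states of the form $\sqrt{p}\ket{00}\ket{00} + \sqrt{1-p}\ket{\phi_+}\ket{11}$ are natural candidates for tightness, though tightness of $F_{LOCC}$ is only conjectured here.
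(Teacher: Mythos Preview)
There are two genuine gaps. First, the purification reduction points the wrong way: ``any LOCC protocol on $\rho$ can be simulated on the purification'' gives $F_{LOCC}(\ket{\psi_{\text{pur}}}) \geq F_{LOCC}(\rho)$, not the inequality you wrote. Combined with $S_{max}(\ket{\psi_{\text{pur}}}) \geq S_{max}(\rho)$, a pure-state bound $F_{LOCC}(\ket{\psi}) \geq \ell(S_{max}(\ket{\psi}))$ yields no lower bound on $F_{LOCC}(\rho)$ at all. Second, and more seriously, the inequality $S_{max}(\ket{\psi}) \leq \sum_j p_j S_j$ for the Schmidt-paired blocks is \emph{not} the elementary direction. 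What is elementary is that the Gisin--Peres block-diagonal construction \emph{achieves} $\sum_j p_j S_j$, giving $S_{max}(\ket{\psi}) \geq \sum_j p_j S_j$; the reverse --- that no non-block-diagonal measurement can do better --- is precisely the content of the Gisin--Peres conjecture. So your argument, like Lemma~\ref{lemma:mybound}, still depends on it.

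The paper sidesteps both issues by taking the block decomposition from the \emph{optimal CHSH measurements} via Jordan's lemma (Lemma~\ref{lemma:blockdiagonalization}) rather than from the state's Schmidt decomposition. Because the measurements realising $S_{max}(\rho)$ are themselves block-diagonal in this basis, one immediately obtains $S_{max}(\rho) = \sum_{jk} p_{jk}\, S_{jk}^{\text{achieved}} \leq \sum_{jk} p_{jk}\, S_{max}(\rho_{jk})$ (each block sees a valid but possibly suboptimal qubit strategy), with no conjecture needed; and since this works directly on the density operator, no reduction to pure states is required. The LOCC extraction --- project onto block $(j,k)$, communicate $(j,k)$, rotate locally --- and the final convex-hull step via Lemma~\ref{lemma:qubitboundconcavedown} are then exactly as you describe.
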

\begin{proof}
We begin with optimal measurements $A_{a}$ and $B_{b}$ for $a,b = 0,1$.  Applying Lemma~\ref{lemma:blockdiagonalization} twice, we obtain a $2 \times 2$ block diagonalization for both sides, which we may turn into a $4 \times 4$  block diagonalization on the state as a whole.  We project $\rho$ onto the blocks, obtaining a direct sum of four dimensional bipartite states $p_{jk}\rho_{jk}$, which we interpret as  pair of qubits.  We choose $p_{jk}$ so that $\tr{\rho_{jk}} = 1$.  The indices $j$ and $k$ indicate the block on the $A$ and $B$ sides respectively.  Note that $\rho$ may have off-diagonal entries as well, so it may not be the case that $\sum_{jk} p_{jk} \rho_{jk}= \rho$, but because the measurements have the same block structure, $S_{max}$ is unaffected by the projection.

We may interpret the measurement as first projecting onto block $(j,k)$ with probability $p_{jk}$ obtaining reduced state $\rho_{jk}$, followed by a two qubit measurement on $\rho_{jk}$.  The measurement on the $(j,k)$th block must be optimal for $\rho_{jk}$, otherwise we could increase $S_{max}(\rho)$.  This fact allows us to write
\begin{equation}
S_{max}(\rho) = \sum_{j,k} S_{max}(\rho_{jk}).
\end{equation}

Meanwhile, to extract the approximate EPR pair using LOCC operations we first project onto block $(j,k)$ obtaining the state $\rho_{jk}$.  We transmit $j$ and $k$ classically, so that the block is identified on both sides, then apply a local change of bases exactly as in the qubit case, allowing us to extract an EPR pair with fidelity $F_{LO}(\rho_{jk})$.  The combined fidelity is
\begin{equation}
F_{LOCC}(\rho) \geq \sum_{jk} F_{LO}(\rho_{jk}).
\end{equation}

From here we follow the latter half of the proof of Lemma~\ref{lemma:mybound}, which we reproduce here using the current notation.  Using the bound for qubits we find
\begin{equation}
F_{LOCC} \geq \sum_{j,k} p_{jk} F_{LO}(\rho_{jk}) =  \sum_{j,k} p_{jk}\frac{ 1 + \sqrt{\left(\frac{\max(S_{max}(\rho_{jk}),2)}{2}\right)^{2} - 1}}{2}.
\end{equation}
We now know that the convex hull of the qubit bound provides our needed bound.  We find the line connecting the extreme points, namely $S=2, F_{LO} = 1/2$ for the state $\ket{00}$ and $S=2 \sqrt{2}, F_{LO} = 1$ for the state $\ket{\phi_{+}}$.  By Lemma~\ref{lemma:qubitboundconcavedown} this line is always below a convex combination of the qubit bound since the qubit bound is concave down.  Hence
\begin{equation}
F_{LOCC}(\rho) \geq   \frac{S_{max}(\rho) + 2 \sqrt{2} - 4}{4\left(\sqrt{2} - 1\right)} .
\end{equation}
\end{proof}

At first it appears as though the bound is tight.  Consider the states
\begin{equation}
\rho = p \proj{00}{00}_{AB} \otimes \proj{00}{00}_{AB} + (1-p) \proj{\phi_{+}}{\phi_{+}}_{AB} \otimes \proj{11}{11}_{AB}
\end{equation}
It appears that the optimal CHSH measurements would be to identify block (0,0) or (1,1) and apply the optimal CHSH measurement for either $\ket{00}$ in the case of block (0,0), or for $\ket{\phi_{+}}$ in the case of $(1,1)$.  Similarly, the optimal LOCC operations seem to be to identify one of these blocks and map to a pair of qubits.  However, we have no proof that either of these strategies is optimal.  It may be the case that the CHSH measurements are optimal and the LOCC operations are not, in which case the bound would not be tight.

\begin{conjecture}
The bound 
\begin{equation}
F_{LOCC}(\rho) \geq   \frac{S_{max}(\rho) + 2 \sqrt{2} - 4}{4\left(\sqrt{2} - 1\right)} 
\end{equation}
is tight, saturated by the states
\begin{equation}
\rho = p \proj{00}{00}_{AB} \otimes \proj{00}{00}_{AB} + (1-p) \proj{\phi_{+}}{\phi_{+}}_{AB} \otimes \proj{11}{11}_{AB}.
\end{equation}
\end{conjecture}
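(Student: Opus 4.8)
The plan is to pin down the two quantities $S_{max}(\rho)$ and $F_{LOCC}(\rho)$ \emph{exactly} for the conjectured family $\rho=\rho(p):=p\,\proj{00}{00}_{AB}\otimes\proj{00}{00}_{AB}+(1-p)\,\proj{\phi_{+}}{\phi_{+}}_{AB}\otimes\proj{11}{11}_{AB}$, and then observe that the resulting point $(S_{max}(\rho),F_{LOCC}(\rho))$ lies exactly on the line defining the bound. The worry voiced above is that some measurement, or some LOCC protocol, that ignores the block structure might beat the naive ``read the flag, then act on the data'' strategy. The key observation that dispatches this worry is that both $S_{max}$ and $F_{LOCC}$ are \emph{convex} functionals of the state: $S_{max}(\sigma)=\sup_{A_a,B_b}\bigl|\sum_{a,b}(-1)^{ab}\tr{A_a\otimes B_b\,\sigma}\bigr|$ is a supremum of absolute values of linear functionals of $\sigma$, and $F_{LOCC}(\sigma)=\sup_{\Phi\in LOCC}\bra{\phi_{+}}\Phi(\sigma)\ket{\phi_{+}}$ is a supremum of linear functionals of $\sigma$; in both cases a supremum of convex functions is convex. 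Writing $\rho=p\,\rho_{00}+(1-p)\,\rho_{11}$ with $\rho_{00}=\proj{0000}{0000}$ and $\rho_{11}\simeq\proj{\phi_{+}}{\phi_{+}}\otimes(\text{product flags})$, convexity gives $S_{max}(\rho)\le p\,S_{max}(\rho_{00})+(1-p)\,S_{max}(\rho_{11})$ and $F_{LOCC}(\rho)\le p\,F_{LOCC}(\rho_{00})+(1-p)\,F_{LOCC}(\rho_{11})$.

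Next I would evaluate the two building blocks. The state $\rho_{00}$ is a product state across the $A|B$ cut, so the elementary bound gives $S_{max}(\rho_{00})=2$ (realized, e.g., with $A_0=A_1=B_0=Z$, $B_1=-Z$), and any LOCC image of a product state is separable, whence $F_{LOCC}(\rho_{00})=\max_{\sigma\text{ separable}}\bra{\phi_{+}}\sigma\ket{\phi_{+}}=\tfrac12$, attained by mapping everything to $\proj{00}{00}$. The state $\rho_{11}$ is locally unitarily equivalent to $\ket{\phi_{+}}$ tensored with a local product state, so by Tsirelson's bound (and since $M\otimes I$ has eigenvalues $\pm1$ whenever $M$ does) $S_{max}(\rho_{11})=2\sqrt2$, while discarding the flags locally gives $F_{LOCC}(\rho_{11})=1$. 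Since the flags $\ket{00}$ and $\ket{11}$ are perfectly locally distinguishable, the naive ``measure the flag, then act'' strategy is realizable even by $LO$ and attains the convexity upper bounds in both cases. Hence $S_{max}(\rho)=2p+2\sqrt2(1-p)$ and $F_{LO}(\rho)=F_{LOCC}(\rho)=1-p/2$, all exactly.

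Finally I would close with the elementary algebra already implicit in Lemma~\ref{lemma:qubitboundconcavedown}: the function $g(S)=\tfrac{S+2\sqrt2-4}{4(\sqrt2-1)}$ is affine with $g(2)=\tfrac12$ and $g(2\sqrt2)=1$, so $g\bigl(S_{max}(\rho)\bigr)=g\bigl(p\cdot2+(1-p)\cdot2\sqrt2\bigr)=p\cdot\tfrac12+(1-p)\cdot1=1-p/2=F_{LOCC}(\rho)$. Thus each $\rho(p)$ saturates the bound, and as $p$ ranges over $[0,1]$ the value $S_{max}(\rho(p))$ sweeps all of $[2,2\sqrt2]$, establishing tightness across the whole range. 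The step demanding the most care is not the convexity argument itself but making the two building-block estimates genuinely \emph{exact} in the enlarged (flag-carrying) Hilbert spaces — in particular the product-state CHSH equality $S_{max}(\rho_{00})=2$ and the separable-state fidelity equality $\bra{\phi_{+}}\sigma\ket{\phi_{+}}\le\tfrac12$ — since the argument hinges on these being equalities, not merely inequalities; everything else is the convexity observation plus bookkeeping.
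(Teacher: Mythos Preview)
The paper does not prove this statement --- it is explicitly left as a Conjecture, and the paragraph preceding it says ``we have no proof that either of these strategies is optimal.'' So there is no paper proof to compare against.

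Your argument, however, appears to actually resolve the conjecture. The key observation the paper seems to have overlooked is precisely your convexity point: since both $S_{max}(\cdot)$ and $F_{LOCC}(\cdot)$ are suprema, over a fixed index set, of linear functionals of the state (the CHSH operator for fixed measurements, respectively $\sigma\mapsto\bra{\phi_{+}}\Phi(\sigma)\ket{\phi_{+}}$ for fixed $\Phi$), each is convex in $\rho$. On the mixture $\rho=p\rho_{00}+(1-p)\rho_{11}$ this gives the upper bounds $S_{max}(\rho)\le 2p+2\sqrt{2}(1-p)$ and $F_{LOCC}(\rho)\le 1-p/2$, while the flag-conditioned strategy (which is even in $LO$, since the flags are locally readable) attains both. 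The building-block evaluations you flag as needing care are all standard: $S_{max}$ of a product state equals $2$ (the local-hidden-variable upper bound plus an explicit deterministic strategy), $\bra{\phi_{+}}\sigma\ket{\phi_{+}}\le\tfrac12$ for separable $\sigma$ (linearity reduces to pure product states, then Cauchy--Schwarz), and Tsirelson's bound handles $\rho_{11}$. The paper's specific worry that ``the LOCC operations are not [optimal]'' --- i.e., that some cleverer LOCC protocol might push $F_{LOCC}(\rho)$ strictly above $1-p/2$ --- is exactly what your convexity upper bound rules out.
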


\subsection{Bounds for $F_{LO}$}
For local operations there is a significant problem in using the proof techniques developed above.  In particular, the block diagonalization into 2 qubit states is indexed by two variables $j$ and $k$, with $j$ the result of a projection on the $A$ side and $k$ the result of a projection on the $B$ side.  Since the operations necessary to fix the local bases may depend on both $j$ and $k$, classical communication is in general required to control the operations.  When calculating $F_{LO}$ this becomes important since classical communication is not allowed.  Nonetheless, we are able to obtain a bound.

\begin{lemma}
Let a bipartite state $\rho$  be given.  Then
\begin{equation}
F_{LO}(\rho) \geq \frac{S_{max}(\rho) - 2}{2(\sqrt{2} - 1)}.
\end{equation}
\end{lemma}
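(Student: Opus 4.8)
The plan is to mirror the structure of the proof for the $F_{LOCC}$ bound, but handle the lack of classical communication by paying a price: instead of conditioning the recovery operation on the outcomes $(j,k)$ of the two local block-diagonalizing projections, I would fix a single recovery operation that works simultaneously across all blocks, or else absorb the classical communication into a degradation of $S_{max}$. First I would apply Lemma~\ref{lemma:blockdiagonalization} twice, once to Alice's observables $A_0,A_1$ and once to Bob's $B_0,B_1$, obtaining commuting families of projectors $\{\Pi_j^A\}$ and $\{\Pi_k^B\}$ such that the state decomposes as a direct sum $\rho = \bigoplus_{j,k} p_{jk}\rho_{jk}$ (modulo off-diagonal terms that do not affect $S_{max}$, since the measurements respect the block structure), with each $\rho_{jk}$ a two-qubit state. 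As in the $F_{LOCC}$ argument, $S_{max}(\rho) = \sum_{jk} p_{jk} S_{max}(\rho_{jk})$ because the optimal CHSH strategy must be optimal on each block.

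Next, for the $F_{LO}$ recovery, Alice performs only her projection $\{\Pi_j^A\}$ (learning $j$) and Bob only his $\{\Pi_k^B\}$ (learning $k$), and each maps their two-dimensional block to a fixed qubit; but without communication they cannot jointly rotate into the optimal basis of $\rho_{jk}$, which in general depends on both $j$ and $k$. The key observation I would use: after the local projections, on the block labelled $(j,k)$ we have the two-qubit state $\rho_{jk}$, and Alice and Bob can each apply a fixed local unitary depending only on their own label. Since $F_{LO}$ for two qubits already permits arbitrary local CPTP maps with no communication, the relevant quantity is still $F_{LO}(\rho_{jk})$ for each block individually; the subtlety is only that the output of block $(j,k)$ lands in a subspace that, from the other party's viewpoint, is correlated with a label they do not control. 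The honest resolution is that Alice and Bob \emph{can} each learn their own label locally and apply a label-dependent local channel, so the per-block fidelity $F_{LO}(\rho_{jk})$ \emph{is} achievable without communication — what is lost is the ability to choose a \emph{common} target EPR pair across blocks, but since the target $\ket{\phi_+}$ is fixed and the block outputs are orthogonal on neither side, summing $\sum_{jk} p_{jk} F_{LO}(\rho_{jk})$ over-counts. This is where the weaker constant $\frac{1}{2(\sqrt2-1)}$ comes from rather than $\frac{1}{4(\sqrt2-1)}$.

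Concretely, I would argue that a valid (suboptimal) LO strategy achieves $F_{LO}(\rho) \geq \sum_{jk} p_{jk} F_{LO}(\rho_{jk})$ only in a diluted form, or alternatively bound $F_{LO}(\rho)$ below by $\max_{jk}$-type reasoning combined with convexity. The cleanest route: use the qubit bound $F_{LO}(\rho_{jk}) \geq f(\max(S_{max}(\rho_{jk}),2))$ with $f(S) = \tfrac{1+\sqrt{(S/2)^2-1}}{2}$, then invoke Lemma~\ref{lemma:qubitboundconcavedown} (concavity of $f$ on $[2,2\sqrt2]$) to replace $f$ by the chord through $(2,\tfrac12)$ and $(2\sqrt2,1)$, namely the affine function $\ell(S) = \tfrac{S + 2\sqrt2 - 4}{4(\sqrt2-1)}$; averaging gives $\sum p_{jk}\ell(S_{max}(\rho_{jk})) = \ell(S_{max}(\rho))$. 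The discrepancy between this and the claimed bound $\tfrac{S_{max}(\rho)-2}{2(\sqrt2-1)}$ is exactly a factor-of-two slack introduced by the no-communication penalty: one shows $F_{LO}(\rho) \geq \tfrac{1}{2}\big(\ell(S_{max}(\rho)) - \tfrac12\big) + \text{(something)}$, which rearranges to the stated inequality. The main obstacle — and the step I expect to require the most care — is making rigorous precisely how the inability to communicate the block labels degrades the achievable fidelity by no more than this factor of two; the rest (double block-diagonalization, additivity of $S_{max}$ over blocks, the convexity/chord argument) is routine given the lemmas already proved.
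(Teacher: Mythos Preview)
Your proposal contains a genuine gap at exactly the point you yourself flag as ``the main obstacle.'' The claim that ``the per-block fidelity $F_{LO}(\rho_{jk})$ is achievable without communication'' is false: the optimal local channel $\Phi_A$ achieving $F_{LO}(\rho_{jk})$ on Alice's side generally depends on \emph{both} indices $j$ and $k$ (the optimal local basis for Alice's qubit depends on the full two-qubit state $\rho_{jk}$, hence on Bob's block label $k$). Without communication Alice knows only $j$, so she cannot implement this channel. Your subsequent attempt to recover a weaker bound by a ``factor-of-two slack,'' culminating in an expression containing ``$+\text{(something)}$,'' is not a proof; you have correctly located where the difficulty lies but have not resolved it.

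The paper's approach (the proof is omitted in the text, but the hint given is substantive) is different and sidesteps the obstacle entirely. Rather than trying to rotate each block into its fidelity-optimal basis, one uses the \emph{CHSH measurement operators themselves} to fix the local bases. Within Alice's block $j$ the restricted observables $A_0^{j},A_1^{j}$ are genuine qubit observables depending only on $j$; they determine a qubit frame on Alice's side with no reference to $k$ whatsoever, and likewise $B_0^{k},B_1^{k}$ fix Bob's frame from $k$ alone. The resulting map is therefore honestly in $LO$. The price is that the two-qubit state one lands on is expressed in a measurement-determined basis rather than the fidelity-optimal one, so instead of invoking the qubit bound for $F_{MY}$ in terms of $S_{max}$, one must bound $\bra{\phi_+}\sigma\ket{\phi_+}$ directly in terms of the CHSH value achieved by the \emph{specific} transported measurements. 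This weaker per-block estimate is what produces the line through $(2,0)$ and $(2\sqrt{2},1)$ rather than the chord through $(2,\tfrac12)$ and $(2\sqrt{2},1)$; the convexity argument then finishes as in your outline.
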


The proof of this result is due to our coauthor on \cite{Bardyn:2009:Device-independ}, Serge Massar.  We omit the proof, but instead provide this insight:  the main idea is to use the measurements to define local bases for each two qubit block.  Since the measurements are completely local in nature, this allows for only local operations.  

We conjecture that this bound is not tight.  The reason is simply that the conjectured lower bound for $F_{MY}$, which must also be a lower bound on $F_{LO}$ since $F_{MY} \leq F_{LO}$, is higher than the current bound on $F_{LO}$.  Another fact of interest is that, if the conjectured bound on $F_{MY}$ holds, and the bound on $F_{LOCC}$ is tight, then all three measures would have the same tight lower bound since $F_{MY}$ lower bounds them all and $F_{LOCC}$ upper bounds them.

\chapter{Concluding Remarks}

As we have demonstrated, the field of black box quantum computing is varied and fruitful.  The general circuit testing construction of self-testing allows for wide application of the black box methodology, while a more tailored approach allows us to draw more specific conclusions for DIQKD and black box state characterization.  Additionally, techniques developed here have application in other settings such as foundations and complexity.

There remain important challenges as well.  For all the results presented in this thesis we need to make additional assumptions in order to collect statistics about the devices.  Even for DIQKD, we must assume that the devices have no memory.  Finding a mechanism for collecting statistics without assuming independent and identical trials is the most important open problem in this field.

\ifthenelse{\boolean{shortdraft}}{}
{

\addcontentsline{toc}{chapter}{Bibliography}
\bibliography{Global_Bibliography}

\newcommand{\etalchar}[1]{$^{#1}$}
\begin{thebibliography}{AvDK{\etalchar{+}}04}
 \providecommand{\doi}[1]{{\sc doi}:\href{http://dx.doi.org/#1}{#1}}
 \providecommand{\urlprefix}{{\sc url} }
 \providecommand{\eprintprefix}{{\sc eprint} }

\bibitem[ABG{\etalchar{+}}07]{Acin:2007:Device-Independ}
Antonio Acin, Nicolas Brunner, Nicolas Gisin, Serge Massar, Stefano Pironio,
  and Valerio Scarani.
\newblock Device-independent security of quantum cryptography against
  collective attacks.
\newblock {\em Physical Review Letters}, {\bf 98}(23):230501, 2007.
\newblock \doi{10.1103/PhysRevLett.98.230501}.
\newblock
  \eprintprefix\href{http://arxiv.org/abs/quant-ph/0702152}{arXiv:quant-ph/070%
2152}, \urlprefix\url{http://link.aps.org/abstract/PRL/v98/e230501}.

\bibitem[ABOE08]{Aharonov:2008:Interactive-Pro}
Dorit Aharonov, Michael Ben-Or, and Elad Eban.
\newblock Interactive proofs for quantum computations, October 2008.
\newblock \eprintprefix\href{http://arxiv.org/abs/0810.5375}{arXiv:0810.5375}.

\bibitem[Adl95]{Adler:1995:Quaternionic-Qu}
Stephen Adler.
\newblock {\em Quaternionic Quantum Mechanics and Quantum Fields}.
\newblock Oxford University Press, New York, 1995.

\bibitem[AGM06]{Acin:2006:From-Bells-Theo}
Antonio Acin, Nicolas Gisin, and Lluis Masanes.
\newblock From bell's theorem to secure quantum key distribution.
\newblock {\em Phys. Rev. Lett.}, {\bf 97}(12):120405, Sep 2006.
\newblock \doi{10.1103/PhysRevLett.97.120405}.
\newblock
  \eprintprefix\href{http://arxiv.org/abs/quant-ph/0510094}{arXiv:quant-ph/051%
0094}, \urlprefix\url{http://prl.aps.org/abstract/PRL/v97/i12/e120405}.

\bibitem[Alb47]{Albert:1947:Absolute-Valued}
A.~A. Albert.
\newblock Absolute valued real algebras.
\newblock {\em The Annals of Mathematics}, {\bf 48}(2):495--501, 1947.
\newblock \urlprefix\url{http://www.jstor.org/stable/1969182}.

\bibitem[AMP06]{Acin:2006:Efficient-quant}
Antonio Acin, Serge Massar, and Stefano Pironio.
\newblock Efficient quantum key distribution secure against no-signalling
  eavesdroppers.
\newblock {\em New Journal of Physics}, {\bf 8}(8):126, 2006.
\newblock \doi{10.1088/1367-2630/8/8/126}.
\newblock
  \eprintprefix\href{http://arxiv.org/abs/quant-ph/0605246}{arXiv:quant-ph/060%
5246}, \urlprefix\url{http://stacks.iop.org/1367-2630/8/126}.

\bibitem[AvDK{\etalchar{+}}04]{Aharonov:2004:Adiabatic-Quant}
Dorit Aharonov, Wim van Dam, Julia Kempe, Zeph Landau, Seth Lloyd, and Oded
  Regev.
\newblock Adiabatic quantum computation is equivalent to standard quantum
  computation.
\newblock {\em Foundations of Computer Science, Annual IEEE Symposium on}, {\bf
  0}:42--51, 2004.
\newblock \doi{http://doi.ieeecomputersociety.org/10.1109/FOCS.2004.8}.
\newblock
  \eprintprefix\href{http://arxiv.org/abs/quant-ph/0405098}{arXiv:quant-ph/040%
5098}.

\bibitem[BB84]{Bennett:1984:Quantum-cryptog}
C.~H. Bennett and G.~Brassard.
\newblock Quantum cryptography: Public-key distribution and coin tossing.
\newblock In {\em Proceedings of IEEE International Conference on Computers,
  Systems and Signal Processing, Bangalore, India}, pp. 175--179, December
  1984.

\bibitem[BBBW84]{Bennett:1984:Eavesdrop-detec}
C.~H. Bennett, G.~Brassard, S.~Breidbart, and S~Wiesner.
\newblock Eavesdrop-detecting quantum communications channel.
\newblock {\em IBM Technical Disclosure Bulletin}, {\bf 26}(8):4363 -- 4366,
  January 1984.

\bibitem[BBL{\etalchar{+}}06]{Brassard:2006:Limit-on-Nonloc}
Gilles Brassard, Harry Buhrman, Noah Linden, Andre~Allan Methot, Alain Tapp,
  and Falk Unger.
\newblock Limit on nonlocality in any world in which communication complexity
  is not trivial.
\newblock {\em Physical Review Letters}, {\bf 96}(25):250401, 2006.
\newblock \doi{10.1103/PhysRevLett.96.250401}.
\newblock
  \eprintprefix\href{http://arxiv.org/abs/quant-ph/0508042}{arXiv:quant-ph/050%
8042}, \urlprefix\url{http://link.aps.org/abstract/PRL/v96/e250401}.

\bibitem[BC90]{Braunstein:1990:Wringing-out-be}
Samuel~L. Braunstein and Carlton~M. Caves.
\newblock Wringing out better bell inequalities.
\newblock {\em Annals of Physics}, {\bf 202}(1):22 -- 56, 1990.
\newblock \doi{10.1016/0003-4916(90)90339-P}.
\newblock
  \urlprefix\url{http://www.sciencedirect.com/science/article/B6WB1-4DF4X9V-KC%
/2/eb7ed1a32d924d02e5f0ef125885386c}.

\bibitem[BFK08]{Broadbent:2008:Universal-blind}
Anne Broadbent, Joseph Fitzsimons, and Elham Kashefi.
\newblock Universal blind quantum computation, July 2008.
\newblock \eprintprefix\href{http://arxiv.org/abs/0807.4154}{arXiv:0807.4154}.

\bibitem[BHK05]{Barrett:2005:No-Signaling-an}
Jonathan Barrett, Lucien Hardy, and Adrian Kent.
\newblock No signaling and quantum key distribution.
\newblock {\em Phys. Rev. Lett.}, {\bf 95}(1):010503, Jun 2005.
\newblock \doi{10.1103/PhysRevLett.95.010503}.
\newblock
  \eprintprefix\href{http://arxiv.org/abs/quant-ph/0405101}{arXiv:quant-ph/040%
5101}, \urlprefix\url{http://link.aps.org/doi/10.1103/PhysRevLett.95.010503}.

\bibitem[BKP06]{Barrett:2006:Maximally-Nonlo}
Jonathan Barrett, Adrian Kent, and Stefano Pironio.
\newblock Maximally nonlocal and monogamous quantum correlations.
\newblock {\em Phys. Rev. Lett.}, {\bf 97}(17):170409, Oct 2006.
\newblock \doi{10.1103/PhysRevLett.97.170409}.
\newblock
  \eprintprefix\href{http://arxiv.org/abs/quant-ph/0605182}{arXiv:quant-ph/060%
5182}, \urlprefix\url{http://prl.aps.org/abstract/PRL/v97/i17/e170409}.

\bibitem[BLM{\etalchar{+}}05]{Barrett:2005:Nonlocal-correl}
Jonathan Barrett, Noah Linden, Serge Massar, Stefano Pironio, Sandu Popescu,
  and David Roberts.
\newblock Nonlocal correlations as an information-theoretic resource.
\newblock {\em Phys. Rev. A}, {\bf 71}(2):022101, Feb 2005.
\newblock \doi{10.1103/PhysRevA.71.022101}.
\newblock
  \eprintprefix\href{http://arxiv.org/abs/quant-ph/0404097}{arXiv:quant-ph/040%
4097}, \urlprefix\url{http://pra.aps.org/abstract/PRA/v71/i2/e022101}.

\bibitem[BLM{\etalchar{+}}09]{Bardyn:2009:Device-independ}
C.-E. Bardyn, T.~C.~H. Liew, S.~Massar, M.~McKague, and V.~Scarani.
\newblock Device-independent state estimation based on bell's inequalities.
\newblock {\em Physical Review A (Atomic, Molecular, and Optical Physics)},
  {\bf 80}(6):062327, 2009.
\newblock \doi{10.1103/PhysRevA.80.062327}.
\newblock \eprintprefix\href{http://arxiv.org/abs/0907.2170}{arXiv:0907.2170},
  \urlprefix\url{http://link.aps.org/abstract/PRA/v80/e062327}.

\bibitem[Bru98]{Bruss:1998:Optimal-Eavesdr}
Dagmar Bru\ss{}.
\newblock Optimal eavesdropping in quantum cryptography with six states.
\newblock {\em Phys. Rev. Lett.}, {\bf 81}(14):3018--3021, Oct 1998.
\newblock \doi{10.1103/PhysRevLett.81.3018}.

\bibitem[Cho75]{Choi:1975:Completely-posi}
Man-Duen Choi.
\newblock Completely positive linear maps on complex matrices.
\newblock {\em Linear Algebra and its Applications}, {\bf 10}(3):285 -- 290,
  1975.
\newblock \doi{DOI: 10.1016/0024-3795(75)90075-0}.
\newblock
  \urlprefix\url{http://www.sciencedirect.com/science/article/B6V0R-45GMWJF-W/%
2/19c43ca73684ddb298ecf1cf25f8b2c2}.

\bibitem[CHSH69]{Clauser:1969:Proposed-Experi}
John~F. Clauser, Michael~A. Horne, Abner Shimony, and Richard~A. Holt.
\newblock Proposed experiment to test local hidden-variable theories.
\newblock {\em Phys. Rev. Lett.}, {\bf 23}(15):880--884, Oct 1969.
\newblock \doi{10.1103/PhysRevLett.23.880}.
\newblock \urlprefix\url{http://link.aps.org/doi/10.1103/PhysRevLett.23.880}.

\bibitem[Cir80]{Cirelson:1980:Quantum-general}
B.~S. Cirel'son.
\newblock Quantum generalizations of bell's inequality.
\newblock {\em Letters in Mathematical Physics}, {\bf 4}(2):93--100, 03 1980.
\newblock \doi{10.1007/BF00417500}.
\newblock
  \urlprefix\url{http://www.springerlink.com/content/l57053g573430450/}.

\bibitem[Ebe93]{Eberhard:1993:Background-leve}
Philippe~H. Eberhard.
\newblock Background level and counter efficiencies required for a
  loophole-free einstein-podolsky-rosen experiment.
\newblock {\em Phys. Rev. A}, {\bf 47}(2):R747--R750, Feb 1993.
\newblock \doi{10.1103/PhysRevA.47.R747}.

\bibitem[Eke91]{Ekert:1991:Quantum-cryptog}
Artur~K. Ekert.
\newblock Quantum cryptography based on bell\char39{}s theorem.
\newblock {\em Phys. Rev. Lett.}, {\bf 67}(6):661--663, Aug 1991.
\newblock \doi{10.1103/PhysRevLett.67.661}.

\bibitem[Gis07]{Gisin:2007:Bell-inequaliti}
Nicolas Gisin.
\newblock Bell inequalities: many questions, a few answers, 2007.
\newblock
  \eprintprefix\href{http://arxiv.org/abs/quant-ph/0702021}{arXiv:quant-ph/070%
2021v2}.

\bibitem[GP92]{Gisin:1992:Maximal-violati}
N.~Gisin and A.~Peres.
\newblock Maximal violation of bell's inequality for arbitrarily large spin.
\newblock {\em Physics Letters A}, {\bf 162}(1):15--17, 1992.

\bibitem[Ham44]{Hamilton:1844:On-quaternions-}
William~Rowan Hamilton.
\newblock On quaternions, or on a new system of imaginaries in algebra.
\newblock {\em Philosophical Magazine}, {\bf 25}:10--13, 1844.
\newblock \urlprefix\url{http://www.emis.ams.org/classics/Hamilton/OnQuat.pdf}.

\bibitem[HHH95]{Horodecki:1995:Violating-Bell-}
R.~Horodecki, P.~Horodecki, and M.~Horodecki.
\newblock Violating bell inequality by mixed states: necessary and sufficient
  condition.
\newblock {\em Physics Letters A}, {\bf 200}(5):340--344, 1995.
\newblock
  \urlprefix\url{http://www.sciencedirect.com/science/article/B6TVM-3YCF1NN-14%
/1/7e99c8485537e8e8a9378cb2f39ad8a4}.

\bibitem[Hoe63]{Hoeffding:1963:Probability-Ine}
Wassily Hoeffding.
\newblock Probability inequalities for sums of bounded random variables.
\newblock {\em Journal of the American Statistical Association}, {\bf
  58}(301):13--30, March 1963.
\newblock \urlprefix\url{http://www.jstor.org/stable/2282952}.

\bibitem[Jam72]{Jamiolkowski1972275}
A.~Jamiolkowski.
\newblock Linear transformations which preserve trace and positive
  semidefiniteness of operators.
\newblock {\em Reports on Mathematical Physics}, {\bf 3}(4):275 -- 278, 1972.
\newblock \doi{10.1016/0034-4877(72)90011-0}.
\newblock
  \urlprefix\url{http://www.sciencedirect.com/science/article/B6VN0-45DJ285-S/%
2/7c06018aea4e1057a86ebf24bd8c40c8}.

\bibitem[Jor75]{Jordan:1875:Essai-sur-la-ge}
Camille Jordan.
\newblock Essai sur la g{\'e}om{\'e}trie {\`a} $n$ dimensions.
\newblock {\em Bulletin de la Soci{\'e}t{\'e} Math{\'e}matique de France}, {\bf
  3}:103, 1875.
\newblock
  \urlprefix\url{http://archive.numdam.org/article/BSMF_1875__3__103_2.pdf}.

\bibitem[KRBM07]{Konig:2007:Small-Accessibl}
Robert K\"onig, Renato Renner, Andor Bariska, and Ueli Maurer.
\newblock Small accessible quantum information does not imply security.
\newblock {\em Phys. Rev. Lett.}, {\bf 98}(14):140502, Apr 2007.
\newblock \doi{10.1103/PhysRevLett.98.140502}.
\newblock
  \eprintprefix\href{http://arxiv.org/abs/quant-ph/0512021}{arXiv:quant-ph/051%
2021}.

\bibitem[LD06]{Liang:2006:Better-Bell-ine}
Yeong-Cherng Liang and Andrew~C. Doherty.
\newblock Better bell-inequality violation by collective measurements.
\newblock {\em Physical Review A (Atomic, Molecular, and Optical Physics)},
  {\bf 73}(5):052116, 2006.
\newblock \doi{10.1103/PhysRevA.73.052116}.
\newblock \urlprefix\url{http://link.aps.org/abstract/PRA/v73/e052116}.

\bibitem[LLK07]{Lamas-Linares:07}
Ant\'{i}a Lamas-Linares and Christian Kurtsiefer.
\newblock Breaking a quantum key distribution system through a timing side
  channel.
\newblock {\em Opt. Express}, {\bf 15}(15):9388--9393, 2007.
\newblock \eprintprefix\href{http://arxiv.org/abs/0704.3297}{arXiv:0704.3297},
  \urlprefix\url{http://www.opticsexpress.org/abstract.cfm?URI=oe-15-15-9388}.

\bibitem[LPSW07]{Linden:2007:Quantum-Nonloca}
Noah Linden, Sandu Popescu, Anthony~J. Short, and Andreas Winter.
\newblock Quantum nonlocality and beyond: Limits from nonlocal computation.
\newblock {\em Physical Review Letters}, {\bf 99}(18):180502, 2007.
\newblock \doi{10.1103/PhysRevLett.99.180502}.
\newblock
  \eprintprefix\href{http://arxiv.org/abs/quant-ph/0610097}{arXiv:quant-ph/061%
0097}, \urlprefix\url{http://link.aps.org/abstract/PRL/v99/e180502}.

\bibitem[MAS06a]{Makarov:2006:Effects-of-dete}
Vadim Makarov, Andrey Anisimov, and Johannes Skaar.
\newblock Effects of detector efficiency mismatch on security of quantum
  cryptosystems.
\newblock {\em Phys. Rev. A}, {\bf 74}(2):022313, Aug 2006.
\newblock \doi{10.1103/PhysRevA.74.022313}.
\newblock
  \eprintprefix\href{http://arxiv.org/abs/quant-ph/0511032}{arXiv:quant-ph/051%
1032}.

\bibitem[Mas06b]{Masanes:2006:Asymptotic-Viol}
Lluis Masanes.
\newblock Asymptotic violation of bell inequalities and distillability.
\newblock {\em Physical Review Letters}, {\bf 97}(5):050503, 2006.
\newblock \doi{10.1103/PhysRevLett.97.050503}.
\newblock
  \eprintprefix\href{http://arxiv.org/abs/quant-ph/0512153}{arXiv:quant-ph/051%
2153}, \urlprefix\url{http://link.aps.org/abstract/PRL/v97/e050503}.

\bibitem[Mas08]{Masanes:2008:Universally-com}
Lluis Masanes.
\newblock Universally-composable privacy amplification from causality
  constraints, 2008.
\newblock
  \eprintprefix\href{http://arxiv.org/abs/0807.2158v2}{arXiv:0807.2158v2}.

\bibitem[McK09a]{McKague:2009:Device-independ}
Matthew McKague.
\newblock Device independent quantum key distribution secure against coherent
  attacks with memoryless measurement devices.
\newblock {\em New Journal of Physics}, {\bf 11}(10):103037 (11pp), October
  2009.
\newblock \doi{10.1088/1367-2630/11/10/103037}.
\newblock \eprintprefix\href{http://arxiv.org/abs/0908.0503}{arXiv:0908.0503},
  \urlprefix\url{http://stacks.iop.org/1367-2630/11/103037}.

\bibitem[McK09b]{McKague:2009:Quaternionic-qu}
Matthew McKague.
\newblock Quaternionic quantum mechanics allows non-local boxes, November 2009.
\newblock \eprintprefix\href{http://arxiv.org/abs/0911.1761}{arXiv:0911.1761}.

\bibitem[MM07]{Matthew-McKague:2007:Simulating-Quan}
Matthew McKague and Michele Mosca.
\newblock Simulating quantum circuits using real valued unitaries.
\newblock {\em Asian Conference on Quantum Information Science (AQIS'07),
  Kyoto, Japan}, September 2007.

\bibitem[MMG09]{McKague:2009:Simulating-Quan}
Matthew McKague, Michele Mosca, and Nicolas Gisin.
\newblock Simulating quantum systems using real hilbert spaces.
\newblock {\em Physical Review Letters}, {\bf 102}(2):020505, 2009.
\newblock \doi{10.1103/PhysRevLett.102.020505}.
\newblock \eprintprefix\href{http://arxiv.org/abs/0810.1923}{arXiv:0810.1923},
  \urlprefix\url{http://link.aps.org/abstract/PRL/v102/e020505}.

\bibitem[MMMO06]{Magniez:2006:Self-testing-of}
Fr{\'e}d{\'e}ric Magniez, Dominic Mayers, Michele Mosca, and Harold Ollivier.
\newblock Self-testing of quantum circuits.
\newblock In M~et~al. Bugliesi, editor, {\em Proceedings of the 33rd
  International Colloquium on Automata, Languages and Programming}, number 4052
  in Lecture Notes in Computer Science, pp. 72--83, 2006.
\newblock
  \eprintprefix\href{http://arxiv.org/abs/quant-ph/0512111}{arXiv:quant-ph/051%
2111v1 }.

\bibitem[MMar]{McKague:2010:Generalized-sel}
Matthew McKague and Michele Mosca.
\newblock Generalized self-testing and the security of the 6-state protocol.
\newblock {\em 5th Conference on the Theory of Quantum Computation,
  Communication and Cryptography}, 2010 (To appear).

\bibitem[MRW{\etalchar{+}}06]{Masanes:2006:Security-of-key}
Ll~Masanes, R.~Renner, A.~Winter, J.~Barrett, and M.~Christandl.
\newblock Security of key distribution from causality constraints, 2006.
\newblock
  \eprintprefix\href{http://arxiv.org/abs/quant-ph/0606049v3}{arXiv:quant-ph/0%
606049v3}.

\bibitem[MY98]{Mayers:1998:Quantum-Cryptog}
Dominic Mayers and Andrew Yao.
\newblock Quantum cryptography with imperfect apparatus.
\newblock In {\em FOCS}, pp. 503--509, September 1998.
\newblock
  \eprintprefix\href{http://arxiv.org/abs/quant-ph/9809039}{arXiv:quant-ph/980%
9039}.

\bibitem[MY04]{Mayers:2004:Self-testing-qu}
Dominic Mayers and Andrew Yao.
\newblock Self testing quantum apparatus.
\newblock {\em QIC}, {\bf 4}(4):273--286, July 2004.
\newblock
  \eprintprefix\href{http://arxiv.org/abs/quant-ph/0307205}{arXiv:quant-ph/030%
7205}.

\bibitem[Nai40]{Naimark:1940:Spectral-functi}
Mark Naimark.
\newblock Spectral functions of a symmetric operator.
\newblock {\em Izv. Akad. Nauk SSSS, Ser. Mat.}, {\bf 4}:277--318, 1940.

\bibitem[NC00]{Michael-A.-Nielsen:2000:Quantum-Computa}
Michael~A. Nielsen and Isaac~L. Chuang.
\newblock {\em Quantum Computation and Quantum Information}.
\newblock Cambridge University Press, October 2000.

\bibitem[PAB{\etalchar{+}}09]{Pironio:2009:Device-independ}
Stefano Pironio, Antonio Acin, Nicolas Brunner, Nicolas Gisin, Serge Massar,
  and Valerio Scarani.
\newblock Device-independent quantum key distribution secure against collective
  attacks.
\newblock {\em New Journal of Physics}, {\bf 11}(4):045021 (25pp), 2009.
\newblock \doi{10.1088/1367-2630/11/4/045021}.
\newblock \eprintprefix\href{http://arxiv.org/abs/0903.4460}{arXiv:0903.4460},
  \urlprefix\url{http://stacks.iop.org/1367-2630/11/045021}.

\bibitem[PPK{\etalchar{+}}09]{Pawlowski:2009:A-new-physical-}
M.~Pawlowski, T.~Paterek, D.~Kaszlikowski, V.~Scarani, A.~Winter, and
  M.~Zukowski.
\newblock A new physical principle: Information causality.
\newblock {\em Nature}, {\bf 461}:1101--1104, May 2009.
\newblock \doi{10.1038/nature08400}.
\newblock \eprintprefix\href{http://arxiv.org/abs/0905.2292}{arXiv:0905.2292},
  \urlprefix\url{http://www.nature.com/nature/journal/v461/n7267/suppinfo/natu%
re08400_S1.html}.

\bibitem[PR94]{Popescu:1994:Quantum-nonloca}
Sandu Popescu and Daniel Rohrlich.
\newblock Quantum nonlocality as an axiom.
\newblock {\em Foundations of Physics}, {\bf 24}(3):379--385, 03 1994.
\newblock \urlprefix\url{http://dx.doi.org/10.1007/BF02058098}.

\bibitem[PV08]{Pal:2008:Efficiency-of-h}
Karoly~F. P\'{a}l and Tamas V\'{e}rtesi.
\newblock Efficiency of higher-dimensional hilbert spaces for the violation of
  bell inequalities.
\newblock {\em Physical Review A (Atomic, Molecular, and Optical Physics)},
  {\bf 77}(4):042105, 2008.
\newblock \doi{10.1103/PhysRevA.77.042105}.
\newblock \eprintprefix\href{http://arxiv.org/abs/0712.4320}{arXiv:0712.4320v2
  [quant-ph]}.

\bibitem[RB01]{Raussendorf:2001:A-One-Way-Quant}
Robert Raussendorf and Hans~J. Briegel.
\newblock A one-way quantum computer.
\newblock {\em Phys. Rev. Lett.}, {\bf 86}(22):5188--5191, May 2001.
\newblock \doi{10.1103/PhysRevLett.86.5188}.
\newblock
  \eprintprefix\href{http://arxiv.org/abs/quant-ph/0010033}{arXiv:quant-ph/001%
0033}, \urlprefix\url{http://prl.aps.org/abstract/PRL/v86/i22/p5188_1}.

\bibitem[Ren05]{Renner:2005:Security-of-Qua}
Renato Renner.
\newblock {\em Security of Quantum Key Distribution}.
\newblock PhD thesis, Swiss Federal Institute of Technology, 2005.
\newblock \eprintprefix\href{http://arxiv.org/abs/quant-ph/0512258}{
  arXiv:quant-ph/0512258v2}.

\bibitem[Ren07]{Renner:2007:Symmetry-of-lar}
Renato Renner.
\newblock Symmetry of large physical systems implies independence of
  subsystems.
\newblock {\em Nature Physics}, {\bf 3}:645 -- 649, July 2007.
\newblock \doi{10.1038/nphys684}.
\newblock
  \eprintprefix\href{http://arxiv.org/abs/quant-ph/0703069}{arXiv:quant-ph/070%
3069},
  \urlprefix\url{http://www.nature.com/nphys/journal/v3/n9/suppinfo/nphys684_S%
1.html}.

\bibitem[RG02]{Rudolph:2002:A-2-rebit-gate-}
Terry Rudolph and Lov Grover.
\newblock A 2 rebit gate universal for quantum computing, 2002.
\newblock
  \eprintprefix\href{http://arxiv.org/abs/quant-ph/0210187}{arXiv:quant-ph/021%
0187v1}.

\bibitem[Stu60]{Stueckelberg:1960:Quantum-theory-}
E.~C.~G. Stueckelberg.
\newblock Quantum theory in real hilbert space.
\newblock {\em Helvetica Physica Acta}, {\bf 33}:727, 1960.

\bibitem[vD05]{Dam:2005:Implausible-Con}
Wim van Dam.
\newblock Implausible consequences of superstrong nonlocality, January 2005.
\newblock
  \eprintprefix\href{http://arxiv.org/abs/quant-ph/0501159}{arXiv:quant-ph/050%
1159}.

\bibitem[vMMS00]{van-Dam:1999:Self-Testing-of}
Wim {van Dam}, Frederic Magniez, Michele Mosca, and Miklos Santha.
\newblock Self-testing of universal and fault-tolerant sets of quantum gates.
\newblock In {\em STOC '00: Proceedings of the thirty-second annual ACM
  symposium on Theory of computing}, pp. 688--696, New York, NY, USA, 2000.
  ACM.
\newblock \doi{doi:10.1145/335305.335402}.
\newblock
  \eprintprefix\href{http://arxiv.org/abs/quant-ph/9904108}{arXiv:quant-ph/990%
4108 }.

\bibitem[VW02]{Verstraete:2002:Entanglement-ve}
Frank Verstraete and Michael~M. Wolf.
\newblock Entanglement versus bell violations and their behavior under local
  filtering operations.
\newblock {\em Phys. Rev. Lett.}, {\bf 89}(17):170401, Oct 2002.
\newblock \doi{10.1103/PhysRevLett.89.170401}.

\bibitem[Wat08a]{Watrous:2008:Characterizatio}
John Watrous.
\newblock Characterizations of admissible operations, 2008.
\newblock
  \urlprefix\url{http://www.cs.uwaterloo.ca/~watrous/quant-info/lecture-notes/%
05.pdf}.

\bibitem[Wat08b]{Watrous:2008:Operator-decomp}
John Watrous.
\newblock Operator decompositions and norms; operator-vector correspondence,
  2008.
\newblock
  \urlprefix\url{http://www.cs.uwaterloo.ca/~watrous/quant-info/lecture-notes/%
02.pdf}.

\bibitem[Wat08c]{Watrous:2008:Overview-of-qua}
John Watrous.
\newblock Overview of quantum information; purifications and fidelity, 2008.
\newblock
  \urlprefix\url{http://www.cs.uwaterloo.ca/~watrous/quant-info/lecture-notes/%
03.pdf}.

\bibitem[ZFQ{\etalchar{+}}08]{Zhao:2008:Quantum-hacking}
Yi~Zhao, Chi-Hang~Fred Fung, Bing Qi, Christine Chen, and Hoi-Kwong Lo.
\newblock Quantum hacking: Experimental demonstration of time-shift attack
  against practical quantum-key-distribution systems.
\newblock {\em Phys. Rev. A}, {\bf 78}(4):042333, Oct 2008.
\newblock \doi{10.1103/PhysRevA.78.042333}.
\newblock \eprintprefix\href{http://arxiv.org/abs/0704.3253}{arXiv:0704.3253}.

\end{thebibliography}

}

\end{document}